\DeclareFontFamily{U}{mathx}{\hyphenchar\font45}
\DeclareFontShape{U}{mathx}{m}{n}{
      <5> <6> <7> <8> <9> <10>
      <10.95> <12> <14.4> <17.28> <20.74> <24.88>
      mathx10
      }{}
\DeclareSymbolFont{mathx}{U}{mathx}{m}{n}
\DeclareMathSymbol{\bigtimes}{1}{mathx}{"91}
\definecolor{DarkRed}{rgb}{0.5,0.1,0.1}
\definecolor{DarkBlue}{rgb}{0.1,0.1,0.5}
\definecolor{ForestGreen}{rgb}{0.1333,0.5451,0.1333}
\definecolor{Red}{rgb}{0.9,0,0}
\crefname{property}{property}{Property}
\crefname{equation}{eq}{Eq}
\def\BState{\State\hskip-\ALG@thistlm}
\newtheorem{theorem}{Theorem}
\newtheorem{lemma}{Lemma}[section]
\newtheorem{invariant}{Invariant}
\newtheorem{claim}[lemma]{Claim}
\newtheorem*{claim*}{Claim}
\newtheorem*{proposition*}{Proposition}
\newtheorem*{lemma*}{Lemma}
\newtheorem*{problem*}{Problem}
\crefname{lemma}{Lemma}{Lemmas}
\crefname{claim}{Claim}{Claims}
\newtheorem{mdresult}{Result}
\newtheorem*{mdresult*}{Main Result}
\newtheorem{definition}[lemma]{Definition}
\theoremstyle{definition}
\newtheorem{remark}[lemma]{Remark}
\newtheorem{observation}[lemma]{Observation}
\newtheorem{mdinvariant}[lemma]{Lemma}
\theoremstyle{definition}
\newtheorem{mdalg}{Algorithm}
\newenvironment{Algorithm}{\begin{tbox}\begin{mdalg}}{\end{mdalg}\end{tbox}}
\newtheorem{mdimp}{Implementation}
\newenvironment{Implementation}{\begin{tbox}\begin{mdimp}}{\end{mdimp}\end{tbox}}
\renewcommand{\qed}{\nobreak \ifvmode \relax \else
      \ifdim\lastskip<1.5em \hskip-\lastskip
      \hskip1.5em plus0em minus0.5em \fi \nobreak
      \vrule height0.75em width0.5em depth0.25em\fi}
\newcommand{\Qed}[1]{\ensuremath{\qed_{\textnormal{~#1}}}}
\renewcommand{\leq}{\leqslant}
\renewcommand{\geq}{\geqslant}
\newcommand{\eps}{\ensuremath{\varepsilon}}
\newcommand{\card}[1]{\left\vert{#1}\right\vert}
\newcommand{\ceil}[1]{{\left\lceil{#1}\right\rceil}}
\newcommand{\floor}[1]{{\left\lfloor{#1}\right\rfloor}}
\newcommand{\set}[1]{\ensuremath{\left\{ #1 \right\}}}
\newcommand{\poly}{\mbox{\rm poly}}
\newcommand{\etal}{{\it et al.\,}}
\newenvironment{tbox}{\begin{tcolorbox}[
		enlarge top by=5pt,
		enlarge bottom by=5pt,
		boxsep=0pt,
		left=4pt,
		right=4pt,
		top=10pt,
		arc=0pt,
		boxrule=1pt,toprule=1pt,
		colback=white
		]
	}
	{\end{tcolorbox}}
\newcommand{\II}{\ensuremath{\mathbb{I}}}
\newcommand{\mireal}[1][]{
	\ifx\relax#1\relax%
	\II(\mione \,; \mitwo)%
	\else%
	\II(\mione \,; \mitwo\mid #1)%
	\fi
}
\newcommand{\bv}{\mathbf{v}}
\newcommand{\ph}{\phi}
\newcommand{\rank}{\textnormal{{rank}}}
\newcommand{\rmin}{\textnormal{{r-min}}}
\newcommand{\rmax}{\textnormal{{r-max}}}
\newcommand{\g}{g}
\newcommand{\G}{G}
\newcommand{\gs}{g^*}
\newcommand{\Gs}{G^*}
\newcommand{\del}{\Delta}
\newcommand{\St}{S}
\newcommand{\QS}{\ensuremath{\textnormal{\texttt{QS}}}\xspace}
\newcommand{\ei}[1]{\ensuremath{e_{#1}}}
\renewcommand{\bv}{\ensuremath{\textnormal{{b-value}}}}
\newcommand{\Band}{\ensuremath{\textnormal{Band}}}
\newcommand{\To}{type-1\xspace}
\newcommand{\Tt}{type-2\xspace}
\newcommand{\Bt}[1]{\ensuremath{B^{(#1)}}}
\newcommand{\Seg}{\ensuremath{\textnormal{seg}}}
\newcommand{\wt}{w}
\newcommand{\wtsum}[1]{W_{#1}}
\newcommand{\swt}{S_w}
\newcommand{\suf}{\mathsf{Unfold}(S_w)}
\newcommand{\WQS}{\ensuremath{\textnormal{\texttt{WQS}}}\xspace}
\newcommand{\cpy}[2]{#1^{(#2)}}
\newcommand{\tim}[1]{t_{#1}}
\newcommand{\mer}{\text{Merge}\xspace}
\newcommand{\DeleteTime}{\text{DeleteTime}}
\newcommand{\RNum}[1]{\uppercase\expandafter{\romannumeral #1\relax}}
\newcommand{\emm}{k}
\title{Generalizing Greenwald-Khanna Streaming Quantile Summaries for Weighted Inputs}
\author{
Sepehr Assadi\footnote{(\href{mailto:sepehr.assadi@rutgers.edu}{sepehr.assadi@rutgers.edu})
Department of Computer Science, Rutgers University. Research supported in part by the NSF 
CAREER Grant CCF-2047061, and gift from Google Research.} 
\and 
Nirmit 
Joshi\footnote{(\href{mailto:nirmit@u.northwestern.edu}{nirmit@u.northwestern.edu})
	Department of Computer Science, Northwestern University.} 
\and 
Milind Prabhu\footnote{(\href{mailto:milindpr@umich.edu}{milindpr@umich.edu})
	Department of Computer Science and Engineering,\! University of Michigan,\! Ann Arbor.} 
\and
Vihan Shah\footnote{(\href{mailto:vihan.shah98@rutgers.edu}{\text{vihan.shah98@rutgers.edu}}) 
Department of Computer Science, Rutgers University.  Research supported in part by the NSF 
CAREER Grant CCF-2047061. Part of this work was done when the author was an undergraduate 
student at Rutgers University-Camden and was supported in part by the NSF grant CCF-1910565.} 
}
\date{}
\begin{document}
\maketitle

\thispagestyle{empty}
\pagenumbering{roman}

\begin{abstract}
	Estimating quantiles, like the median or percentiles, is a fundamental task in data mining and data science. 
	A (streaming) quantile summary is a data structure that can process a set $S$ of $n$ elements in a streaming fashion and at the end, for any $\phi \in (0,1]$, return 
	a $\phi$-quantile of $S$ up to an $\eps$ error, i.e., return a $\phi'$-quantile  with $\phi' = \phi \pm \eps$. We are particularly interested in comparison-based 
	summaries that only compare elements of the universe under a total ordering and are otherwise completely oblivious of the universe. The best known deterministic quantile summary is the 20-year old Greenwald-Khanna (GK) summary that uses $O((1/\eps) \log{(\eps  n)})$ space [SIGMOD'01]. This bound was recently proved to be optimal
	for all deterministic comparison-based summaries by Cormode and Vesle\'y [PODS'20].
	
	\medskip
	
	 In this paper, we study weighted quantiles, a generalization of the quantiles problem, where each element arrives with a positive integer weight which denotes the number of copies of that element being inserted. The only known method of handling weighted inputs via GK summaries is the naive approach of breaking each weighted element into multiple unweighted items, and feeding them one by one to the summary, which results in a prohibitively large update time (proportional to the maximum weight of input elements).
  
	\medskip
	
	We give the first non-trivial extension of GK summaries for weighted inputs and show that it takes $O((1/\eps) \log{(\eps  n)})$ space and $O(\log(1/\eps)+\log\log (\eps n) )$ update time per element to process a stream of length $n$ (under some quite mild assumptions on the range of weights and $\eps$).
	En route to this, we also simplify the original GK summaries for unweighted quantiles. 

\end{abstract}

\clearpage
%
%
\setcounter{tocdepth}{3}
\tableofcontents

\clearpage

\pagenumbering{arabic}
\setcounter{page}{1}


\section{Introduction}
Given a set $S$ of elements $x_1,\ldots,x_n$ from a totally ordered universe, the rank of an element $x$ in this universe, denoted by $\rank(x)$, is the number of elements $x_j$ in $S$ with $x_j \leq x$. Similarly, 
the $\phi$-quantile of $S$, for any $\phi \in (0,1]$, is the element $x_i \in S$ with $\rank(x_i) = \ceil{\phi \cdot n}$. Computing quantiles is a fundamental problem with a wide range of applications considering they provide a concise representation of the distribution of 
the input elements. 
Throughout this paper, we solely focus on comparison-based algorithms for this problem that can only compare two elements of the universe according to their ordering and are otherwise
completely oblivious to the universe.

We are interested in the quantile estimation problem in the \emph{streaming} model, introduced in the seminal work of Alon, Matias, and Szegedy~\cite{AlonMS96}. In this model, the elements of $S$ arrive one by one in an arbitrary order, and the streaming algorithm 
can make just one pass over this data and use a limited memory and thus cannot simply store $S$ entirely. Already more than four decades ago, Munro and Paterson proved that one cannot solve this problem \emph{exactly} in the streaming model~\cite{MunroP78} and thus the focus has been on finding \emph{approximation} algorithms: Given $\eps > 0$, the algorithm
is allowed to return an $\eps$-approximate $\phi$-quantile, i.e., a $(\phi \pm \eps)$-quantile. 
More formally, we are interested in the following data structure: 

\begin{definition}[\textbf{Quantile Summary}]\label{def:summary}
	An \emph{$\eps$-approximate quantile summary} processes any set of elements in a streaming fashion and at the end finds an $\eps$-approximate $\phi$-quantile for any given quantile $\phi \in (0,1]$, defined
	as any $\phi'$-quantile for $\phi' \in [\phi - \eps, \phi + \eps]$.
\end{definition}

In the absence of the streaming aspect of the problem, one can always compute an $\eps$-approximate quantile summary in $O(1/\eps)$ space; simply store the $\eps$-quantile, $3\eps$-quantile, $5\eps$-quantile and so on from $S$. 
It is easy to see that given any $\phi$, returning the closest stored quantile results in an $\eps$-approximate $\phi$-quantile. It is also easy to see that this space is information-theoretically optimal for the problem. However, this approach cannot be directly 
implemented in the streaming model as a-priori it is not clear how to compute the needed quantiles of $S$ in the first place. 

The first (streaming) $\eps$-approximate quantile summary was proposed by Manku, Rajagopalan, and Lindsay~\cite{RajagopalanML98}. The MRL summary uses $O((1/\eps)\log^2{\!(\eps n)})$ space and requires prior knowledge of the length of the stream. This summary was soon after improved by Greenwald and Khanna~\cite{GreenwaldK01} who proposed the GK summary that uses $O((1/\eps)\log{\!(\eps n)})$ space and no longer
requires knowing the length of the stream. This is the state-of-the-art for deterministic comparison-based summaries. By allowing for randomization, one can further improve upon the space requirement of these algorithms and achieve bounds with no dependence on the length of the stream. The state-of-the-art result for randomized summaries is an algorithm due to Karnin, Lang and Liberty \cite{KarninLL16} which uses $O((1/\eps)\log\log{(1/\eps \delta)})$ space to construct an $\eps$-approximate quantile summary with probability at least $(1-\delta)$. We provide a more detailed discussion of the literature on randomized summaries and non-comparison based summaries in \Cref{sec:related}.

 While using randomization gives streaming algorithms which are more space-efficient, a major 
 drawback of most of these algorithms is that their analysis crucially depends on the assumption that 
 the input stream is independent of the randomness used by the algorithm. This assumption is 
 unrealistic in several settings; for instance, when the future input to the algorithm depends on its 
 previous outputs. Recently, this has invoked an interest in \emph{adversarially robust} algorithms 
 that work even when an adversary is allowed to choose the stream adaptively 
 \cite{mironov2011sketching, gilbert2012recovering, hardt2013robust, alon2021adversarial, 
 gilbert2012reusable, naor2015bloom}. Deterministic algorithms are inherently adversarially robust 
 and therefore understanding them is an interesting goal in itself.
 
 In this paper, we focus on deterministic summaries; specifically on furthering our understanding of GK summaries. Over the years, two important questions have been raised about them: Is it possible to \emph{improve} the space of GK summaries, perhaps even all the way down to the information-theoretic optimal bound of $O(1/\eps)$? And, is it possible to \emph{simplify} GK summaries and their intricate analysis in a way that allows for generalizations of these summaries to be 
more easily proposed and studied? (see Problem~2 of ``List of Open Problems in Sublinear Algorithms''~\cite{sub2} posed by Cormode or~\cite{KarninLL16,CormodeV20,LuoWYC16,AgarwalCHPWY12} for similar variations of this question, for example, when the input items are weighted). 

The first question  was addressed initially by Hung and Ting~\cite{HungT10} who proved an $\Omega((1/\eps)\log{\!(1/\eps)})$ space  lower bound for $\eps$-approximate quantile summaries, improving over the information-theoretic bound. Very recently, this question was fully settled by Cormode and Vesle\'y~\cite{CormodeV20} who proved that in fact GK summaries are asymptotically optimal: $\Omega((1/\eps)\log{\!(\eps n)})$ space is needed for any deterministic (comparison-based) summary. The second question above however is still left without a satisfying resolution. In this paper, we make progress toward answering this question by showing that the GK summary can be generalized to handle weighted inputs. Formally, we present algorithms to construct the following data-structure.

\begin{definition}[\textbf{Weighted Quantile Summary}]\label{def:weighted-summary}
Consider a weighted stream $\swt$ of $n$ updates $(x_i, \wt(x_i))$ for $1 \leq i \leq n$. The $i$-th update denotes the insertion of $\wt(x_i)$ copies of the element $x_i$ (the weight $\wt(x_i)$ is guaranteed to be a positive integer). We define $W_k = \sum_{i = 1}^ {k} \wt(x_i)$ to be the sum of the weights of the first $k$ elements of $\swt$. An \emph{$\eps$-approximate weighted quantile summary} is a data-structure that makes a single pass over $\swt$ and at the end, for any $\phi \in [0,1)$, finds an $x_j$ such that,
\begin{align}\label{eq: weighted-goal}
 \left(\sum\limits_{x_i < x_j} \wt(x_i) ,\;  \wt(x_j) + \sum\limits_{x_i < x_j} \wt(x_i)\right]   \cap \big[(\phi - \eps) \wtsum{n} ,(\phi + \eps) \wtsum{n}\big] \neq \emptyset.
\end{align}
\end{definition}

\subsection{Our Contributions.} 
One approach to construct a weighted quantile summary is to break each weighted item into multiple unweighted items and feed them to an unweighted summary such as the GK summary. However, such algorithms are slow since the time required to process an element is proportional to its weight. As such, it has been asked if faster algorithms exist. We answer this in the affirmative by proposing a fast algorithm for this problem in \Cref{sec: weighted-stream-gk}. In particular, this algorithm uses $O((1/\eps) \log{(\eps  n)})$ space and $O(\log(1/\eps)+\log\log (\eps n) )$ update time per element to process a stream of length $n$, when the weights are $\poly(n)$ and $\eps \geq \frac{1}{n^{1-\delta}}$ for any $\delta \in (0,1)$ (\Cref{thm:weighted-GK}). This matches the space and time complexity of the GK summary when used to summarize a stream of $n$ unweighted items \cite{GreenwaldK01, LuoWYC16}. To our knowledge, this constitutes the first (non-trivial) extension of the GK algorithm for weighted streams.

En route to this, we also present a new description of the GK summaries by simplifying or entirely bypassing several of their more intricate components in~\cite{GreenwaldK01} such as their so-called ``tree representation'' and their complex ``compress'' operations in \Cref{sec:final}. 
As a warm-up to this, we also present a simple and greedy algorithm for unweighted quantiles in \Cref{sec:basic} and show that it requires $O((1/\eps) \log^2 (\eps n))$ space (we also extend this algorithm to the weighted setting in \Cref{sec:wt-greedy}). 
This algorithm, although has a suboptimal space bound, will be useful in motivating and providing intuition for GK summaries. Interestingly, this summary is quite similar (albeit \emph{not} identical) to the so-called GKAdaptive summary~\cite{LuoWYC16} that was already proposed by~\cite{GreenwaldK01} as a more practical variant of their GK summaries (Luo~\etal~\cite{LuoWYC16} further confirmed this by showing that GKAdaptive outperforms GK summaries experimentally). While no theoretical guarantees are known for GKAdaptive, we prove that this slight modification of this algorithm submits to a simple analysis of an $O((1/\eps)\log^2{\!(\eps n)})$ space upper bound (\Cref{thm:greedy}).

We also emphasize that, similar to the original GK summaries, our weighted extension does not need the knowledge of the stream length. This guarantee implies that we can track the quantiles throughout the stream, with error proportional to the current weight of the stream, and not only at the end.

\subsection{Further Related Work}\label{sec:related}

Quantile estimation in the streaming model has been extensively studied from numerous angles; we 
refer the interested reader to~\cite{MunroP78,GuhaM09} for results on multi-pass 
algorithms,~\cite{GuhaM09,ChakrabartiJP08} for random-arrival 
streams,~\cite{CormodeKMS06,GuptaZ03} for biased quantiles,
and~\cite{AgarwalCHPWY12} for mergeable summaries, as some representative examples. In the 
following, we limit our discussion to the most basic variant of this problem: obtaining an 
$\eps$-approximate summary in a single pass over an
adversarially ordered stream. 

We already discussed the deterministic summaries of~\cite{GreenwaldK01,RajagopalanML98} and 
the lower bounds in~\cite{HungT10,CormodeV20} (see also~\cite{LuoWYC16} for an experimental 
study of these results). 
We now discuss the literature on randomized summaries.
A simple application of Chernoff bound shows that sampling $O((1/\eps^2)\log{(1/\eps)})$ elements 
from the stream
results in an $\eps$-approximate quantile summary with high constant probability. Assuming that we 
know the length of the stream, we can sample these elements beforehand and feed them to the GK summary and obtain a randomized algorithm with space $O((1/\eps)\log{(1/\eps)})$ 
(see~\cite{MankuRL99}). 
Felber and Ostrovsky showed in~\cite{FelberO15} how to lift the assumption on the knowledge of 
the length of the stream in this approach. Finally, Karnin, Lang, and Liberty
designed a randomized summary of size $O((1/\eps)\log\log{(1/\eps )})$ for this problem which 
constitutes the state-of-the-art (the algorithm of~\cite{KarninLL16} answers a \emph{single} quantile 
query with a constant probability of success in $O(1/\eps)$ space 
which is optimal; see~\cite{KarninLL16,CormodeV20} for more on lower bounds for randomized 
summaries). An extension of this algorithm for the weighted quantiles problem was proposed in  
\cite{ivkin2019streaming}. This extension uses space $O( (1/\eps) \cdot \sqrt{\log 
\left(1/\eps\right)})$ and  $O(\log \left( 1/\eps \right))$ update time per element to output \textit{all} 
$\eps$-approximate quantiles with high probability.

In parallel to this line of work, researchers have also considered non comparison-based summaries 
over restricted universes (see, e.g.~\cite{CormodeM04,ShrivastavaBAS04,2013munro}). For 
instance, 
Shrivastava~\etal~\cite{ShrivastavaBAS04} designed a simple deterministic $\eps$-approximate 
quantile in $O(\frac{1}{\eps} \log{\card{U}})$ space for a known universe $U$ of elements. 
The space bound of this algorithm is incomparable to that of GK summaries (and our results) 
and the lower bounds of~\cite{HungT10,CormodeV20} no longer hold for these summaries. We note 
that using these non comparison-based summaries requires 
prior knowledge of the universe $U$, making them impractical for tracking quantiles over streams of 
floating point values or strings.


\section{Basic Setup (Unweighted Setting)}\label{sec:basics}

We first present the basic setup of our unweighted quantile summaries and preliminary definitions. Then, in the next two sections, we show how these summaries can be maintained in the streaming model using limited space and fast update time. 
We note that our summary closely mimics that of~\cite{GreenwaldK01}, and the main differences are in the algorithms that maintain the summary during the stream and in the analysis. We may repeat some things in the main paper for the sake of completeness.

\subsection{The Quantile Summary} 
Our summary, denoted by $\QS$, is naturally a collection of elements seen in the stream along with some metadata stored for each one.  We use $s$ to denote the 
number of elements stored in $\QS$ and $\ei{i}$, for any $i \in [s]$, to refer to the $i$-th smallest element in $\QS$ (we use $e$ to refer to an arbitrary element in $\QS$ when its rank is not relevant). 
The metadata stored per element will also fit in $O(1)$ words so that the total size of the summary is $O(s)$ words.
In particular, the main information for each element $e$ in $\QS$ is the following (the extra stored information will be defined in~\Cref{sec:time}): 
\begin{itemize}
	\item $\rmin(e)$ and $\rmax(e)$: lower and upper bounds maintained by $\QS$ on $\rank(e)$ among the elements seen so far in the stream (since we are not storing all elements, we cannot determine the exact rank of a stored element and thus focus on maintaining proper lower and upper bounds). Here, and throughout the paper, we let $\rmin(\ei{0}) = 0$ for notational convenience. 

\end{itemize}

For simplicity of exposition and to remove the corner cases, we insert a $+\infty$ element at the start of the stream, which is considered larger than any other element, and store it in $\QS$ as $\ei{s}$.
The $\rmin$ and $\rmax$ of this element is also always equal to the number of visited elements (including itself). Since $+\infty$ is the largest element, inserting it in $\St$ does not affect the rank of any other element.

During the stream, we need to be able to insert an element into or delete one from the summary, which is done as follows (the main task of our streaming algorithms is to decide which elements to insert and which ones to remove during the stream); see~\Cref{fig:insert-delete} for an illustration. 

\begin{tbox}
	\textbf{Insert($x$).} Inserts a given element $x$ into $\QS$. 
	
	\begin{enumerate}[label=$(\roman*)$]
	\addtolength{\itemindent}{3mm}
		\item Find the smallest element $\ei{i}$ in $\QS$ such that $\ei{i} > x$;
		\item Set $\rmin(x) = \rmin(\ei{i-1})+1$ and $\rmax(x) = \rmax(\ei{i})$; moreover, increase $\rmin(\ei{j})$ and $\rmax(\ei{j})$ by one for all $j \geq i$. Store element $x$ in $\QS$. 
	\end{enumerate}
	
	\textbf{Delete($\ei{i}$).} Deletes the element $\ei{i}$ from $\QS$. 
	
	\begin{enumerate}[label=$(\roman*)$]
	\addtolength{\itemindent}{3mm}
		\item Remove element $\ei{i}$ from $\QS$; keep all remaining $\rmin$, $\rmax$ values unchanged. 
		 
	\end{enumerate}
\end{tbox} 

It is immediate to verify that both these operations maintain $\rmin,\rmax$ values correctly. The following claim now shows that
as long as we can ensure ``proper'' bounds on $\rmin,\rmax$ values, we will be able to answer quantile queries correctly. 

\begin{figure}[t]		
\centering	
 \hspace{1.5cm}\resizebox{0.9\textwidth}{!}{\hspace*{-5em}\begin {tikzpicture}[-latex, auto ,node distance =7mm and 13mm ,on grid ,
semithick ,
state/.style ={ rectangle ,top color =white , bottom color = white ,
draw,black , text=black , minimum width =11 mm},
labelstate/.style ={ rectangle ,top color =white , bottom color = white ,
draw,white , text=black , minimum width =11 mm}]

\node[state] (A){\footnotesize$10$}; 
\node[state] (B) [right =of A] {\footnotesize$21$}; 
\node[state] (C) [right =of B] {\footnotesize$30$}; 
\node[labelstate] (L) [left =20mm of A] {\footnotesize \QS:};

\node[labelstate] (A1) [below =of A] {\footnotesize$(3,5)$}; 
\node[labelstate] (B1) [right =of A1] {\footnotesize$(6,9)$}; 
\node[labelstate] (C1) [right =of B1] {\footnotesize$(13,13)$}; 
\node[labelstate] (L1) [left =20mm of A1] {\footnotesize$( \textnormal{rmin,rmax})$:};

\node[labelstate] (A2) [below =of A1] {\footnotesize $(2,2)$}; 
\node[labelstate] (B2) [right =of A2] {\footnotesize $(3,3)$}; 
\node[labelstate] (C2) [right =of B2] {\footnotesize$(7,0)$}; 
\node[labelstate] (L2) [left =20mm of A2] {\footnotesize $( g,\Delta)$:};

\node[labelstate] (Ins) [above right =of C] {\footnotesize $\textbf{Insert}(25)$}; 

\node[labelstate] (D) [below right =of C] {$\Longrightarrow$}; 
\node[state] (E) [above right =of D] {\footnotesize\textcolor{black}{$10$}}; 
\node[state] (F) [right =of E] {\footnotesize$21$}; 
\node[state] (G) [right =of F] {\footnotesize\textcolor{black}{25}}; 
\node[state] (H) [right =of G] {\footnotesize$30$}; 

\node[labelstate] (E1) [below =of E] {\footnotesize\textcolor{black}{$(3,5)$}}; 
\node[labelstate] (F1) [right =of E1] {\footnotesize$(6,9)$}; 
\node[labelstate] (G1) [right =of F1] {\footnotesize\textcolor{black}{$(7,13)$}}; 
\node[labelstate] (H1) [right =of G1] {\footnotesize$(14,14)$}; 

\node[labelstate] (E2) [below =of E1] {\footnotesize\textcolor{black}{$(2,2)$}}; 
\node[labelstate] (F2) [right =of E2] {\footnotesize$(3,3)$}; 
\node[labelstate] (G2) [right =of F2] {\footnotesize\textcolor{black}{$(1,6)$}}; 
\node[labelstate] (H2) [right =of G2] {\footnotesize$(7,0)$}; 

\node[labelstate] (Del) [above right =of H] {\footnotesize\footnotesize $\textbf{Delete}(10)$}; 
\node[labelstate] (I) [below right =of H] {$\Longrightarrow$}; 
\node[state] (J) [above right =of I] {\footnotesize$21$}; 
\node[state] (K) [right =of J] {\footnotesize$25$}; 
\node[state] (L) [right =of K] {\footnotesize$30$}; 

\node[labelstate] (J1) [below =of J] {\footnotesize$(6,9)$}; 
\node[labelstate] (K1) [right =of J1] {\footnotesize$(7,13)$}; 
\node[labelstate] (L1) [right =of K1] {\footnotesize$(14,14)$}; 

\node[labelstate] (J2) [below =of J1] {\footnotesize$(5,3)$}; 
\node[labelstate] (K2) [right =of J2] {\footnotesize$(1,6)$}; 
\node[labelstate] (L2) [right =of K2] {\footnotesize$(7,0)$}; 

\end{tikzpicture}}	
\caption{An illustration of the update operations in the summary starting from some arbitrary state (the parameters $(g,\Delta)$ in this figure are defined in~\Cref{sec:inv}).}	
\label{fig:insert-delete}	
\end{figure}

\begin{claim}\label{lem:search}
	Suppose $\QS$ is a summary of an unweighted stream of length $n$ and satisfies $\rmax(\ei{i}) - 
	\rmin(\ei{i-1}) \leq \eps n$ for all $i \in [s]$; then $\QS$ is an $\eps$-approximate quantile 
	summary.  
\end{claim}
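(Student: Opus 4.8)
The plan is to answer a query $\phi\in(0,1]$ by pointing to a single stored element $\ei{i}$ and showing that its (true) rank lies in $[(\phi-\eps)n,(\phi+\eps)n]$; this is enough, since any integer $k$ with $(\phi-\eps)n\le k<(\phi+\eps)n+1$ equals $\ceil{\phi'n}$ for a suitable $\phi'\in[\phi-\eps,\phi+\eps]$, so such an $\ei{i}$ is a $\phi'$-quantile and hence an $\eps$-approximate $\phi$-quantile. I will freely use three facts that follow directly from the \textbf{Insert}/\textbf{Delete} operations: $\rmin(\ei{i})\le\rank(\ei{i})\le\rmax(\ei{i})$ for every $i$; $\rmin$ is non-decreasing along $\QS$; and the conventions $\rmin(\ei{0})=\rmax(\ei{0})=0$ together with $\rmin(\ei{s})=\rmax(\ei{s})=n+1$ for the sentinel $+\infty$.

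First I would locate the element to return in one search step: let $i$ be the smallest index with $\rmax(\ei{i})\ge\phi n$; this is well defined because $\rmax(\ei{s})=n+1>\phi n$, and by minimality $\rmax(\ei{i-1})<\phi n$ (for $i=1$ this is simply $\rmax(\ei{0})=0<\phi n$). Applying the hypothesis $\rmax(\ei{i})-\rmin(\ei{i-1})\le\eps n$ on both sides then yields exactly the two bounds needed:
\begin{itemize}
	\item upper side: $\rank(\ei{i})\le\rmax(\ei{i})\le\rmin(\ei{i-1})+\eps n\le\rmax(\ei{i-1})+\eps n<\phi n+\eps n=(\phi+\eps)n$;
	\item lower side: $\rank(\ei{i})\ge\rmin(\ei{i})\ge\rmin(\ei{i-1})\ge\rmax(\ei{i})-\eps n\ge\phi n-\eps n=(\phi-\eps)n$.
\end{itemize}
Hence $\rank(\ei{i})\in[(\phi-\eps)n,(\phi+\eps)n)$, as desired.

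Finally I would check that the element returned is a genuine stream element, i.e.\ that $\ei{i}\neq\ei{s}=+\infty$. If $i=s$ then $\rmax(\ei{s-1})<\phi n$ while the hypothesis at index $s$ forces $\rmin(\ei{s-1})\ge n+1-\eps n$; combining these gives $\phi+\eps>1$. In this (and only this) regime I would instead return $\ei{s-1}$: its rank is at most $n$ and at least $n+1-\eps n$, and a short computation (using $\phi\le1$ and $\phi+\eps>1$) shows it still falls in $[(\phi-\eps)n,(\phi+\eps)n]$.

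The main obstacle is, frankly, not a deep one — this is a warm-up statement whose substance is the two three-term inequality chains above, each a single use of the hypothesis. What needs care is the bookkeeping: (i) matching the floors and ceilings in the definition of a quantile summary, namely the elementary fact that an integer in $[(\phi-\eps)n,(\phi+\eps)n)$ is realized as $\ceil{\phi'n}$ for some admissible $\phi'$; and (ii) handling the sentinel $+\infty$ and the dummy index $\ei{0}$ so that the boundary cases $i=1$ and $i=s$ are covered and the output is always a real element of the stream.
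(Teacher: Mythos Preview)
Your proof is correct and follows essentially the same one-line idea as the paper: locate a single stored element by a threshold search and apply the hypothesis once on each side. The only difference is the search key. You pick the smallest $i$ with $\rmax(\ei{i})\ge\phi n$; the paper instead sets $r=\ceil{\phi n}$ and picks the smallest $j$ with $\rmin(\ei{j})\ge r-\eps n$. Both yield the two required inequalities by a single use of the hypothesis, so the core content is identical.

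The one mild advantage of the paper's choice is that the sentinel is handled for free: from $\rmin(\ei{j-1})<r-\eps n$ and the hypothesis one gets $\rmax(\ei{j})<r\le n$, which automatically rules out $j=s$ (since $\rmax(\ei{s})>n$). Your $\rmax$-based search forces you to treat the $i=s$ case separately, which you do correctly, but it adds the extra paragraph you noted. If you prefer to keep your search key, you could alternatively avoid the case split by observing that whenever your rule selects $i=s$ it also selects $i=s-1$'s successor in a regime where $\ei{s-1}$ already satisfies both bounds, exactly as you argue.
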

\begin{proof}
	
	Given any $\ph$, let $r= \ceil{\phi \cdot n}$. We will show that $\QS$ always contains an element 
	$e$ with rank in $[r-\eps n, r+\eps n]$, namely, $\rmin(e)\geq r-\eps n$ and $\rmax(e)\leq r+\eps 
	n$. Returning this element
	as the answer would thus give us an $\eps$-approximate $\phi$-quantile.

	For any $r\leq \eps n$, we have $\rmin(\ei{1}) \geq 0 \geq r-\eps n$ and $\rmax(\ei{1}) \leq 
	\rmin(\ei{0}) + \eps n  \leq r+ \eps n$ by the lemma's condition and thus
	we can return $\ei{1}$. Otherwise, let $\ei{j}$ be the smallest element with $\rmin(\ei{j}) \geq  
	r-\eps n$, thus $\rmin(\ei{j-1}) < r-\eps n$; again, by the lemma's condition, we have
	\[
	\rmax(\ei{j}) \leq \rmin(\ei{j-1})+ \eps n < r-\eps n +\eps n\leq r+\eps n,
	\] 
	and thus we can return $\ei{j}$ here. 
\end{proof}

The main goal of our algorithms in subsequent sections is to update $\QS$ in a way that the condition of~\Cref{lem:search} is satisfied without having to store too many elements.

\subsection{Time steps and Bands}\label{sec:time}

Another important notion is that of \emph{time steps} and \emph{bands}. 
In this section, we define this for the unweighted setting, and then we build upon it to define them 
for the weighted setting in \Cref{sec: weighted-prelims}.
We measure the time as the number of elements 
appeared in the stream so far in multiples of $\Theta(1/\eps)$. Formally, 
\begin{definition}[\textbf{Time Steps}]\label{def:time}
	Let $\ell := \frac{1}{\eps}$, which we assume is an integer. We partition the stream into consecutive 
	\textbf{chunks} of size $\ell$; the \textbf{time step} $t$ then refers to the $t$-th chunk of 
	elements denoted by $(x^{(t)}_1,\ldots,x^{(t)}_\ell)$ 
	(we assume that the length of the stream is a multiple of $\ell$)\footnote{Both assumptions in this 
	definition are without loss of generality: we can change the value of $\eps$ by an $O(1)$ factor to 
	guarantee the first one and add $O(1/\eps)$ dummy
	elements at the end to guarantee the second one.}. We define $t_0(x)$ as the time step in which 
	$x$ appears in the stream.
\end{definition}
The next important definitions are \emph{band-values} and \emph{bands} borrowed 
from~\cite{GreenwaldK01}. Roughly speaking, we would like to be able to partition 
elements of the stream into a ``small'' number of groups (bands) so that elements within a group 
have ``almost the same'' time of insertion (as a proxy on how accurate our estimate of their 
$\rmin,\rmax$ is) For more details, see \Cref{sec:time}. Formally, 
\begin{definition}[\textbf{Band Values and Bands}]
	\label{def:band}
	For any element $x$ of the unweighted stream $S$, we assign an integer called a 
	\textbf{band-value}, denoted by $\bv(x)$, as follows: 
	\begin{enumerate}[label=$(\roman*)$]
		\item At the time step $t = t_0(x)$, we set $\bv(x) = 0$;
		\item At any time step $t > t_0(x)$, if $t$ is a multiple of $2^{\bv(x)}$, then we increase 
		$\bv(x)$ by one. 
	\end{enumerate}
\end{definition}

\Cref{def:band} is basically a geometric grouping of elements based on the value $t-t_0(x)$ at every 
time step $t$ (see~\Cref{fig:band}). 
The only difference of this grouping from typical geometric grouping ideas
is that it remains ``stable'' over time, in that elements that enter the same band at some point, 
continue to maintain the same band-value from thereon (although possibly with different band-values over time). This is formalized in the following two observations.

\begin{observation}\label{obs:band-length}
	For an element $x$ and at any $t \geq  t_0(x)$, $2^{\bv(x)-1}-2 \leq t-t_0(x) \leq 2^{\bv(x)+1}$. 
\end{observation}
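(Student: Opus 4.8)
The plan is to keep track of the sequence of time steps at which $\bv(x)$ changes. Fix the element $x$, write $t_0 := t_0(x)$, set $T_0 := t_0$, and for $k \geq 1$ let $T_k$ denote the time step at which $\bv(x)$ is incremented from $k-1$ to $k$; by \Cref{def:band} the band value is nondecreasing and rises by exactly one at a time, so these are well defined and $\bv(x) = k$ exactly for the time steps $t$ with $T_k \leq t < T_{k+1}$. The first thing I would establish, directly from part $(ii)$ of \Cref{def:band}, is that $T_k$ is precisely the smallest multiple of $2^{k-1}$ that is strictly larger than $T_{k-1}$: while the band value still equals $k-1$ the divisibility test keeps failing, and it first triggers at that multiple. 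In particular $T_1 = t_0 + 1$, and $T_k$ is a multiple of $2^{k-1}$ for every $k \geq 1$.

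The main step is then to bound $T_k - t_0$. Because $T_k$ is a multiple of $2^{k-1}$, the next multiple of $2^k$ strictly above it is either $T_k + 2^{k-1}$ or $T_k + 2^k$, so $T_{k+1} - T_k \in \{2^{k-1}, 2^k\}$ for every $k \geq 1$; combined with $T_1 - t_0 = 1$ and a telescoping sum of the resulting geometric series, this gives
\[
2^{k-1} \;\leq\; T_k - t_0 \;\leq\; 2^k - 1 \qquad \text{for all } k \geq 1,
\]
the lower bound from $T_{j+1}-T_j \geq 2^{j-1}$ and the upper from $T_{j+1}-T_j \leq 2^j$.

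Finally I would just read off the statement: given $t \geq t_0$, set $k := \bv(x)$ at time $t$, so $T_k \leq t < T_{k+1}$. For $k = 0$ we have $t = t_0$ and the claim reduces to $2^{-1}-2 \leq 0 \leq 2$. For $k \geq 1$, $t \geq T_k$ gives $t - t_0 \geq 2^{k-1} \geq 2^{k-1}-2$, and $t \leq T_{k+1}-1$ gives $t - t_0 \leq (2^{k+1}-1)-1 \leq 2^{k+1}$, which is exactly \Cref{obs:band-length}.

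The one genuinely fiddly point is the parity bookkeeping behind $T_{k+1}-T_k \in \{2^{k-1},2^k\}$: one has to carry along the invariant that $2^{k-1}$ divides $T_k$ while $2^k$ need not, which is what makes the geometric sum collapse, and to treat the first increment ($k=0$, where $T_1 - t_0 = 1$ rather than a power-of-two jump) separately. I expect this, rather than any estimate, to be the only real obstacle; the $-2$ slack in the lower bound (and the gap between the stated $2^{k+1}$ and the true $2^{k+1}-2$ on the upper side) is precisely what lets these off-by-a-little corner cases be ignored.
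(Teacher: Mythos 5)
Your proof is correct and is essentially the paper's argument in different notation: the increments $T_{k+1}-T_k$ you telescope over are exactly the paper's ``number of time steps with band-value $k$,'' bounded between $2^{k-1}$ and $2^k$ and summed as a geometric series. Your version just makes the divisibility bookkeeping (that $T_k$ is a multiple of $2^{k-1}$, hence the next multiple of $2^k$ is at distance $2^{k-1}$ or $2^k$) explicit where the paper merely asserts the per-band duration bounds.
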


This is simply because whenever $\bv(x)$ becomes some $\alpha$, it takes $x$ at least $2^{\alpha-1}$  and at most $2^{\alpha}$ time steps to increase its band-value again by one (by~\Cref{def:band}). Thus, 
\begin{align*}
	t-t_0(x) + 1 &= \sum_{\alpha=0}^{\bv(x)} \text{\# of time steps with band-value $\alpha$}  \geq \sum_{\alpha=1}^{\bv(x)-1} 2^{\alpha-1} = 2^{\bv(x)-1}-1; \\
	t-t_0(x) + 1&= \sum_{\alpha=0}^{\bv(x)} \text{\# of time steps with band-value $\alpha$} \leq \sum_{\alpha=0}^{\bv(x)} 2^{\alpha} = 2^{\bv(x)+1}. 
\end{align*}

A corollary of~\Cref{obs:band-length} is that the \emph{number of band-values} at a time $t$ is $\Bt{t} = O(\log t)$ and that at any point of time, the total number of elements belonging to bands $0$ to $\alpha$ is at most $\ell \cdot 2^{\alpha+1}$ 
 (as for any element $x$ with $\bv(x) \leq \alpha$, $t-t_0(x) \leq 2^{\alpha+1}$ and thus 
only elements arriving in the most recent $2^{\alpha+1}$ steps may belong to bands from $0$ to $\alpha$  and each time step includes $\ell$ elements). We record this in the following equation for future reference: 
\begin{align}
	\text{\# of band values } \Bt{t}=O(\log t) \quad \textnormal{and} \quad \text{$\card{\Band_{\leq \alpha}} \leq \ell \cdot 2^{\alpha+1}$ for all $\alpha \geq 0$}. \label{eq:band-alpha}
\end{align}

\begin{figure}[t]		
\centering	
\subcaptionbox{Progression of the band-values of elements inserted at time step $2$. \label{fig:band-progress}}%
  [0.76\linewidth]{ \resizebox{.75\textwidth}{!}{

\begin{tikzpicture}

\node[ minimum width=12pt, minimum height=20pt, inner sep=0pt] (10) {\scriptsize 3}; 
\draw ($(10)+(0pt,5pt)$) to ($(10) + (0pt,15pt)$); 
\foreach \x in {1,...,15}
{
	\pgfmathtruncatemacro{\prev}{\x - 1}
	\pgfmathtruncatemacro{\name}{1\prev}
	\pgfmathtruncatemacro{\xx}{\x+3}
	\node[minimum width=12pt, minimum height=20pt, inner sep=0pt] (1\x) [right=5pt of \name]{\large \xx}; 
	\draw ($(1\x)+(0pt,5pt)$) to ($(1\x) + (0pt,15pt)$); 
}

\draw[line width=1pt, ->] ($(10) + (-10pt,10pt)$) to ($(115) + (+10pt,10pt)$);

\node at ($(115) + (+30pt,0pt)$) {\normalsize Time};

\node[ minimum width=12pt, minimum height=15pt, inner sep=0pt] (b11) at ($(10) + (-20pt,+20pt)$) {\scriptsize 1}; 
\draw ($(b11)+(5pt,0pt)$) to ($(b11) + (15pt,0pt)$); 
\foreach \x in {2,...,4}
{
	\pgfmathtruncatemacro{\prev}{\x - 1}
	\pgfmathtruncatemacro{\name}{1\prev}
	\node[minimum width=12pt, minimum height=15pt, inner sep=0pt] (b1\x) [above=0pt of b\name]{\large \x}; 
	\draw ($(b1\x)+(5pt,0pt)$) to ($(b1\x) + (15pt,0pt)$); 
	
}

\draw[line width=1pt, ->] ($(10) + (-10pt,10pt)$) to ($(10) + (-10pt,80pt)$); 

\node at ($(10) + (-40pt,80pt)$) {\normalsize Band-value};


\node[circle, draw, fill=black, inner sep=1pt] (c1) [above=7pt of 10]{};

\node[circle,  fill=white, inner sep=1pt] (c) [left=3pt of 10]{};
\node[circle, draw, fill=black, inner sep=1pt] (c0) [above=5.5pt of c]{};

\node[circle, draw, fill=black, inner sep=1pt] (c2) [above=22pt of 11]{};
\node[circle, draw, fill=black, inner sep=1pt] (c3) [above=22pt of 12]{};
\node[circle, draw, fill=black, inner sep=1pt] (c4) [above=22pt of 13]{};
\node[circle, draw, fill=black, inner sep=1pt] (c5) [above=22pt of 14]{};

\node[circle, draw, fill=black, inner sep=1pt] (c6) [above=37pt of 15]{};
\node[circle, draw, fill=black, inner sep=1pt] (c7) [above=37pt of 16]{};
\node[circle, draw, fill=black, inner sep=1pt] (c8) [above=37pt of 17]{};
\node[circle, draw, fill=black, inner sep=1pt] (c9) [above=37pt of 18]{};
\node[circle, draw, fill=black, inner sep=1pt] (c10) [above=37pt of 19]{};
\node[circle, draw, fill=black, inner sep=1pt] (c11) [right=15pt of c10]{};
\node[circle, draw, fill=black, inner sep=1pt] (c12) [right=15pt of c11]{};
\node[circle, draw, fill=black, inner sep=1pt] (c13) [right=15pt of c12]{};

\node[circle, , fill=white, inner sep=1pt] (c14) [right=15pt of c13]{};

\node[circle, draw, fill=black, inner sep=1pt] (c15) [above=15pt of c14]{};
\node[circle, draw, fill=black, inner sep=1pt] (c16) [right=15pt of c15]{};
\node[circle, draw, fill=black, inner sep=1pt] (c17) [right=15pt of c16]{};

\node[circle, , fill=white, inner sep=1pt] (c18) [right=15pt of c17]{};

\draw (c0) to (c1);
\draw (c1) to (c2);
\draw (c2) to (c3);
\draw (c3) to (c4);
\draw (c4) to (c5);
\draw (c5) to (c6);
\draw (c6) to (c7);
\draw (c7) to (c8);
\draw (c8) to (c9);
\draw (c9) to (c10);
\draw (c10) to (c11);
\draw (c11) to (c12);
\draw (c12) to (c13);
\draw (c13) to (c15);
\draw (c15) to (c16);
\draw (c16) to (c17);
\draw (c17) to (c18);

\end{tikzpicture}}}	
\vspace{0.5cm}	

\subcaptionbox{Distribution of band values at $t=15$. \label{fig:dist-bands}}%
 [1\linewidth]{ \resizebox{0.75\textwidth}{!}{

\hspace*{-5em}
\begin {tikzpicture}[-latex, auto ,node distance =8mm and 8mm ,on grid ,
semithick ,
state1/.style ={ rectangle ,top color =white , bottom color = white ,
draw,black , text=black , minimum width =7 mm},
labelstate/.style ={ rectangle ,top color =white , bottom color = white ,
draw,white , text=black , minimum width =7 mm}]

\node[state1] (A1){$15$}; 
\node[state1] (A2) [right =of A1] {$14$}; 
\node[labelstate] (A3) [right =of A2] {}; 
\node[state1] (A4) [right =of A3] {$13$}; 
\node[state1] (A5) [right =of A4] {$12$}; 
\node[state1] (A6) [right =of A5] {$11$}; 
\node[state1] (A7) [right =of A6] {$10$}; 
\node[labelstate] (A8) [right =of A7] {}; 
\node[state1] (A9) [right =of A8] {$9$}; 
\node[state1] (A10) [right =of A9] {$8$}; 
\node[state1] (A11) [right =of A10] {$7$}; 
\node[state1] (A12) [right =of A11] {$6$}; 
\node[state1] (A13) [right =of A12] {$5$}; 
\node[state1] (A14) [right =of A13] {$4$}; 
\node[state1] (A15) [right =of A14] {$3$}; 
\node[state1] (A16) [right =of A15] {$2$}; 
\node[labelstate] (A17) [right =of A16] {}; 
\node[state1] (A18) [right =of A17] {$1$}; 

\node[labelstate] (Blank) [left =of A1] {}; 
\node[labelstate] (Blankx) [below =of A1] {}; 
\node[labelstate] (Blank1) [left =of Blank] {}; 

\node[labelstate] (L) [left =of Blank1] {Chunk number:}; 

\node[labelstate] (L2) [above =of L] {Band number:}; 

\node[labelstate] (B1) [above =of A1] {1}; 
\node[labelstate] (B3) [above =of A4] {2}; 
\node[labelstate] (B6) [above =of A9] {3}; 
\node[labelstate] (B10) [above =of A18] {4};

\end{tikzpicture}}}	
\caption{An illustration of band-values and bands.}	
\label{fig:band}	
\end{figure}

\begin{observation}\label{obs:stable}
	Suppose $\bv(x) \leq \bv(y)$ for elements $x,y$ at time $t$; then, during any time step $t' \geq t$, $\bv(x) \leq \bv(y)$. 
\end{observation}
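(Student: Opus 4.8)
The plan is to prove the statement by induction on the time step $t' \ge t$, showing that $\bv(x) \le \bv(y)$ holds throughout. The base case $t' = t$ is precisely the hypothesis. For the inductive step, I would assume $\bv(x) \le \bv(y)$ at some time step $t' \ge t$ and argue that the inequality survives at time step $t'+1$. The two structural facts I would extract from \Cref{def:band} are: (a) from one time step to the next, an element's band-value either stays the same or increases by exactly one; and (b) for any time step $t'' > t_0(z)$, whether $\bv(z)$ increments at step $t''$ is decided solely by the current value $\bv(z)$ and by $t''$ — namely, whether $2^{\bv(z)}$ divides $t''$ — and not by $t_0(z)$ or any other part of the history. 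Since $\bv(x)$ and $\bv(y)$ are assumed defined at time $t$, we have $t_0(x), t_0(y) \le t$, so $t'+1 > t_0(x)$ and $t'+1 > t_0(y)$, and hence fact (b) applies to both elements at step $t'+1$.

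I would then split the inductive step into two cases. If $\bv(x) < \bv(y)$ at time $t'$, then by fact (a) the value of $\bv(x)$ at time $t'+1$ is at most (its value at $t'$) $+\,1$, which is at most $\bv(y)$ at time $t'$, which in turn is at most $\bv(y)$ at time $t'+1$ since band-values never decrease; so $\bv(x) \le \bv(y)$ at $t'+1$. If instead $\bv(x) = \bv(y) = c$ at time $t'$, then by fact (b) both band-values increment to $c+1$ at step $t'+1$ exactly when $2^c \mid (t'+1)$, and neither increments otherwise; in either subcase $\bv(x) = \bv(y)$ still holds at $t'+1$. Combining the two cases closes the induction and yields the claim for all $t' \ge t$.

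I do not expect a genuine obstacle here; this is an elementary monotonicity argument. The only point that warrants care is the boundary condition — confirming that at step $t'+1$ both $x$ and $y$ have already appeared, so that the ``memoryless'' increment rule of fact (b) is indeed in force for each of them. This is exactly what ensures that two elements sharing a band at some moment move in lockstep from then on, and therefore that the one with the smaller band-value can never overtake the other.
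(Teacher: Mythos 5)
Your proposal is correct and is essentially a careful formalization of the paper's own (one-sentence) argument: the paper simply notes that band-values are updated simultaneously based on the current time step, which is exactly your fact (b), and your induction with the two cases $\bv(x)<\bv(y)$ and $\bv(x)=\bv(y)$ makes the resulting lockstep/no-overtaking claim precise. No gaps; the boundary check that both elements have already appeared by step $t'+1$ is handled correctly.
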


This is simply because band-values of elements are updated \emph{simultaneously} based on the value of the current time step (and not a relative number). 

We will now prove a lemma that allows us to quickly compute $\bv$ of elements.
\begin{lemma}
\label{eq: closed-form}
	For an element $e$ and an integer $\alpha\geq 1$, $\bv(e) = \alpha$ after time step $t$ if and only if it satisfies:
	
	\begin{equation}
	2^{\alpha-1}  + (t \bmod 2^{\alpha-1}) \leq t-t_0(e)  < 2^{\alpha} + (t \bmod 2^{\alpha}).
	\end{equation}
	\end{lemma}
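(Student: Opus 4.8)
The plan is to prove the claimed biconditional by understanding exactly when $\bv(e)$ increments, and then summing up the increment times. Fix an element $e$ with $t_0(e) = t_0$. By \Cref{def:band}, $\bv(e) = 0$ at time $t_0$, and at each later time step $t$ the value $\bv(e)$ increases by one precisely when $t$ is a multiple of $2^{\bv(e)}$. So if we let $\tau_\alpha$ denote the first time step at which $\bv(e) = \alpha$ (for $\alpha \geq 1$), then $\tau_\alpha$ is the smallest time $t > \tau_{\alpha-1}$ that is divisible by $2^{\alpha-1}$ (since between $\tau_{\alpha-1}$ and $\tau_\alpha$ the band-value equals $\alpha-1$, and the increment to $\alpha$ happens at the first multiple of $2^{\alpha-1}$ strictly after $\tau_{\alpha-1}$). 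Moreover, $\bv(e) = \alpha$ holds exactly on the time steps $\tau_\alpha, \tau_\alpha+1, \ldots, \tau_{\alpha+1}-1$; so $\bv(e) = \alpha$ after time step $t$ if and only if $\tau_\alpha \leq t < \tau_{\alpha+1}$. The goal is thus to show that $\tau_\alpha \leq t$ is equivalent to $t - t_0 \geq 2^{\alpha-1} + (t \bmod 2^{\alpha-1})$, and $t < \tau_{\alpha+1}$ is equivalent to $t - t_0 < 2^{\alpha} + (t \bmod 2^{\alpha})$.

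The key step is to get a clean closed form for $\tau_\alpha$. I claim that for every $\alpha \geq 1$,
\[
\tau_\alpha = t_0 + 2^{\alpha-1} - (t_0 \bmod 2^{\alpha-1}),
\]
i.e., $\tau_\alpha$ is obtained by rounding $t_0$ up to the next multiple of $2^{\alpha-1}$, except that if $t_0$ is already a multiple of $2^{\alpha-1}$ one still adds a full $2^{\alpha-1}$ (so the expression $2^{\alpha-1} - (t_0 \bmod 2^{\alpha-1})$ lies in $\{1, \ldots, 2^{\alpha-1}\}$). I would prove this by induction on $\alpha$. For $\alpha = 1$: $\bv(e)$ becomes $1$ at the first time step after $t_0$ that is a multiple of $2^0 = 1$, which is $t_0 + 1$, matching the formula since $t_0 \bmod 1 = 0$. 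For the inductive step, $\tau_{\alpha+1}$ is the smallest multiple of $2^{\alpha}$ strictly greater than $\tau_\alpha$; using the inductive formula for $\tau_\alpha$ and the fact that $t_0 + 2^{\alpha-1} - (t_0 \bmod 2^{\alpha-1})$ is a multiple of $2^{\alpha-1}$, a short case analysis on whether this value is already a multiple of $2^{\alpha}$ yields the formula for $\tau_{\alpha+1}$. (Alternatively one can observe directly that the set of time steps $> t_0$ divisible by $2^{\alpha-1}$, as $\alpha$ ranges, is nested, so $\tau_\alpha$ is simply the smallest element of $t_0 + 2^{\alpha-1}\IN_{\geq 1}$ that is... — but the inductive argument is the safest to write out.)

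Given the closed form for $\tau_\alpha$, the rest is bookkeeping with the division algorithm. For the lower bound: $\tau_\alpha \leq t$ means $2^{\alpha-1} - (t_0 \bmod 2^{\alpha-1}) \leq t - t_0$, i.e. $t - t_0 \geq 2^{\alpha-1} - (t_0 \bmod 2^{\alpha-1})$. I need to reconcile $t_0 \bmod 2^{\alpha-1}$ with $t \bmod 2^{\alpha-1}$: writing $t = t_0 + (t-t_0)$ and reducing mod $2^{\alpha-1}$, one has $t \bmod 2^{\alpha-1} = (t_0 + (t-t_0)) \bmod 2^{\alpha-1}$, and on the threshold $t - t_0 = 2^{\alpha-1} - (t_0 \bmod 2^{\alpha-1})$ we get $t \equiv 0$, so $t \bmod 2^{\alpha-1} = 0$ there; I will argue that the inequality $t - t_0 \geq 2^{\alpha-1} - (t_0\bmod 2^{\alpha-1})$ is equivalent to $t - t_0 \geq 2^{\alpha-1} + (t \bmod 2^{\alpha-1})$ by checking that both sides describe "$t - t_0$ is at least the distance from $t_0$ up to the first multiple of $2^{\alpha-1}$ that is $> t_0$, and $t$ has advanced past it by an amount $t \bmod 2^{\alpha-1}$" — concretely, subtract $t \bmod 2^{\alpha-1}$ from $t-t_0$ to land exactly on a multiple of $2^{\alpha-1}$ offset from $t_0$ by a multiple of $2^{\alpha-1}$, so $t-t_0 \geq 2^{\alpha-1} + (t\bmod 2^{\alpha-1})$ iff $t - t_0 - (t \bmod 2^{\alpha-1}) \geq 2^{\alpha-1}$ iff the nearest multiple of $2^{\alpha-1}$ at-or-below $t$ is at least $t_0 + 2^{\alpha-1} - (t_0 \bmod 2^{\alpha-1}) = \tau_\alpha$ iff $t \geq \tau_\alpha$. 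The upper bound $t < \tau_{\alpha+1}$ unwinds identically with $\alpha$ replaced by $\alpha+1$ (note $2^{(\alpha+1)-1} = 2^{\alpha}$), giving $t - t_0 < 2^{\alpha} + (t \bmod 2^{\alpha})$. Combining the two gives the lemma.

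The main obstacle I anticipate is purely notational: carefully managing the boundary behavior of $\bmod$ (especially the convention that $2^{\alpha-1} - (t_0 \bmod 2^{\alpha-1})$ should be $2^{\alpha-1}$, not $0$, when $t_0$ is divisible by $2^{\alpha-1}$, reflecting that the increment happens \emph{strictly after} $t_0$) and making sure the chain of "iff"s in the last paragraph is genuinely reversible at the threshold. There is no conceptual difficulty — it is the discrete analogue of \Cref{obs:band-length} made exact — but the $\bmod$ arithmetic needs to be written with care to avoid off-by-one errors.
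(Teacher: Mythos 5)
Your overall strategy --- characterize the first time $\tau_\alpha$ at which $\bv(e)=\alpha$ in closed form, then translate $\tau_\alpha \leq t < \tau_{\alpha+1}$ into the modular inequalities --- differs from the paper's (which inducts on $t$, showing the invariant $t - t_0(e) \in I_{\bv(e),t}$ is preserved step by step and getting the converse from disjointness of the intervals), and it could in principle work. But your closed form for $\tau_\alpha$ is wrong, and not merely at a boundary. Take $t_0(e)=2$, as in \Cref{fig:band}: the band value is $1$ at $t=3$, is $2$ for $t=4,\dots,7$, and becomes $3$ at $t=8$, so $\tau_3 = 8$; your formula gives $\tau_3 = 2 + 4 - (2 \bmod 4) = 4$. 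The failure is exactly in the inductive step you wave through as ``a short case analysis'': when $\tau_\alpha$ is itself a multiple of $2^{\alpha}$ (here $\tau_2=4$), the smallest multiple of $2^{\alpha}$ \emph{strictly greater than} $\tau_\alpha$ is $\tau_\alpha + 2^{\alpha}$, which is the \emph{second} multiple of $2^{\alpha}$ above $t_0$, not the first --- so ``round $t_0$ up to the next multiple'' does not survive the induction. The correct closed form is that $\tau_\alpha$ is the smallest multiple of $2^{\alpha-1}$ that is at least $t_0 + 2^{\alpha-1}$ (round $t_0$ \emph{up} to a multiple of $2^{\alpha-1}$ and then add $2^{\alpha-1}$), which gives $8$ in the example.

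The same error reappears in your final chain of equivalences: from $t - t_0 - (t \bmod 2^{\alpha-1}) \geq 2^{\alpha-1}$ you correctly conclude that the largest multiple of $2^{\alpha-1}$ at or below $t$ is at least $t_0 + 2^{\alpha-1}$, but you then silently replace $t_0 + 2^{\alpha-1}$ by $t_0 + 2^{\alpha-1} - (t_0 \bmod 2^{\alpha-1})$; these differ by $t_0 \bmod 2^{\alpha-1}$ and a multiple of $2^{\alpha-1}$ can sit between them. (Again $t=4$, $t_0=2$, $\alpha=3$: the lemma's lower bound $t-t_0 \geq 4$ fails, while your ``$\tau_3 = 4 \leq t$'' holds.) The repair is local: with the corrected $\tau_\alpha$, which is itself a multiple of $2^{\alpha-1}$ and is the least such multiple that is $\geq t_0 + 2^{\alpha-1}$, one genuinely has $t \geq \tau_\alpha$ iff $t - (t \bmod 2^{\alpha-1}) \geq t_0 + 2^{\alpha-1}$ iff $t - t_0 \geq 2^{\alpha-1} + (t \bmod 2^{\alpha-1})$, and the upper bound is the same statement with $\alpha$ replaced by $\alpha+1$. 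As written, however, your argument establishes a false formula and hence does not prove the lemma.
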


	\begin{proof}
	    Define $I_{\alpha,t} = [2^{\alpha-1}  + (t \bmod 2^{\alpha-1}), \, 2^{\alpha} + (t \bmod 2^{\alpha}))$.
		We first show by induction on $t$ that for an element $e$ with $\bv(e) = \alpha$, $t- t_0(e) \in I_{\alpha,t}$. When $e$ is first inserted into the summary, it has $\bv$ $0$ and in the very next time step it is promoted to $\Band_{1}$. If we consider time $t$, which is one time step after the time of insertion of $e$, it is easy to verify that $(t - t_0(e)) = 1$ lies in the interval $I_{1,t_{0}(e)}$. Suppose that the claim is true for $e$ up to time $t$. Let $\bv$ of $e$ at time $t$ be $\alpha$. 
		
		If $(t+1)$ is a multiple of $2^{\alpha}$, $\bv(e)$ increases to $\alpha + 1$. We wish to show that $(t+1) - t_0(e)$ is in the interval $I_{\alpha +1, t + 1}$. We observe that
		\begin{align*}
		(t+1) - t_0(e) &\in [2^{\alpha-1} +1 + (t \bmod 2^{\alpha-1}), \, 2^{\alpha} + 1 + (t \bmod 2^{\alpha})) \tag{Induction hypothesis}\\
		&= [2^\alpha  , 2^{\alpha+1} ) \tag{ $(t+1)$ is a multiple of $2^\alpha$} \\
		&= I_{\alpha +1, t+1}.
		\end{align*}

		If $(t+1)$ is not a multiple of $2^{\alpha}$, $\bv(e)$ does not change. We will show that $(t+1) - t_0(e)$ belongs to $I_{\alpha,t+1}$ in this case. By the induction hypothesis,
		\begin{align*}
	    (t+1) - t_0(e) & \leq  2^{\alpha} + 1 + (t \bmod 2^{\alpha})\\
	     & = 	2^{\alpha} + ((t+1) \bmod 2^{\alpha})\tag{$(t+1)$ is not a multiple of $2^\alpha$.}.
		\end{align*}
		
		To show the lower bound, we note that
		\begin{align*}
		(t+1) - t_0(e) & \geq 2^{\alpha-1} + 1 + (t \bmod 2^{\alpha-1})\\
		& \geq 2^{\alpha-1} + ((t+1) \bmod 2^{\alpha-1}). \tag{$(t \bmod 2^{\alpha-1}) +1 \geq (t+1) \bmod 2^{\alpha-1})$}
		\end{align*}
		
		This shows that $(t+1) - t_0(e)$ belongs to $I_{\alpha,t+1}$.
		
		The other direction of the lemma follows from the disjointness of the intervals $I_{\alpha,t}$ for a fixed $t$ and the fact that $(t-t_0(e))$ always belongs to $I_{\bv(e),t}$.
			\end{proof}
    
    It is easy to see that, using~\Cref{eq: closed-form}, we can efficiently compute the band of an element in $O(1)$ time. We note this observation in the following.
    \begin{observation}\label{obs: bvalue-compute}
There is an $O(1)$ time algorithm to compute $\bv(e)$ at time $t$ given $t_0(e)$.
\end{observation}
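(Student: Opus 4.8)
The plan is to turn the computation of $\bv(e)$ into testing an $O(1)$-size set of candidate band-values against the closed form of \Cref{eq: closed-form}, and to observe that both producing that candidate set and running each test cost $O(1)$ in the standard word-RAM model implicit in the paper's running-time claims (where $\eps$, the stream length, and the time steps all fit in a constant number of machine words, and arithmetic/bit operations on them are unit cost).

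First I would set $d := t - t_0(e)$ and dispose of the boundary case: if $d = 0$ then $\bv(e) = 0$ by \Cref{def:band}, while if $d \geq 1$ then $\bv(e) \geq 1$, since the band-value is incremented already at time step $t_0(e)+1$. So assume $d \geq 1$. Next I would invoke \Cref{obs:band-length}, which gives $2^{\bv(e)-1} - 2 \leq d \leq 2^{\bv(e)+1}$, and hence $\log_2 d - 1 \leq \bv(e) \leq \log_2(d+2) + 1$. Since $d \geq 1$, the right endpoint exceeds the left by at most $\log_2 3 + 2 < 4$, so $\bv(e)$ must lie in the five-element candidate set $C := \{\beta-1,\beta,\beta+1,\beta+2,\beta+3\} \cap \{\alpha : \alpha \geq 1\}$, where $\beta := \floor{\log_2 d}$. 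Computing $\beta$ amounts to locating the most significant set bit of $d$, an $O(1)$ word-RAM operation.

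Then I would pin down the correct candidate using \Cref{eq: closed-form}: $\bv(e) = \alpha$ iff $2^{\alpha-1} + (t \bmod 2^{\alpha-1}) \leq d < 2^{\alpha} + (t \bmod 2^{\alpha})$, and (as already recorded in the proof of that lemma) the intervals $I_{\alpha,t}$ are pairwise disjoint for fixed $t$, while $d$ always lies in $I_{\bv(e),t}$. Consequently exactly one $\alpha \in C$ passes the inequality test, and that $\alpha$ equals $\bv(e)$. Each test is $O(1)$ since $t \bmod 2^{\alpha}$ is just the mask of the low $\alpha$ bits of $t$, so testing all of $C$ — and thus the whole computation — is $O(1)$.

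The only real obstacle is bookkeeping rather than mathematics: one must be willing to charge $O(1)$ to the bit-level primitives (most-significant-bit extraction and masking by a power of two) and to arithmetic on numbers of magnitude $\poly(n)$, which is exactly the word-RAM convention already in force elsewhere in the paper. Beyond that, the argument is a direct combination of \Cref{obs:band-length} (to bound the search range) and \Cref{eq: closed-form} (to identify the unique correct value).
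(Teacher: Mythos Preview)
Your proposal is correct and is essentially a fleshed-out version of what the paper does: the paper's justification is a single sentence stating that the closed form of \Cref{eq: closed-form} makes the computation $O(1)$, without spelling out an algorithm. Your argument supplies the missing details by using \Cref{obs:band-length} to narrow $\bv(e)$ to an $O(1)$-size candidate set and then testing each candidate against the closed form; this is exactly the natural way to turn the paper's assertion into an explicit procedure, so there is no substantive difference in approach.
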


\subsection{Indirect Handling of (\texorpdfstring{$\rmin,\rmax$}{rmin,rmax}) and the Main Invariant}\label{sec:inv}

 It turns out that, as was observed in~\cite{GreenwaldK01}, instead of working with $\rmin,\rmax$ values directly, it would be much easier to work with the following two parameters: 
\begin{align}
	\g_i := \rmin(\ei{i}) - \rmin(\ei{i-1}) \qquad \del_i := \rmax(\ei{i}) - \rmin(\ei{i});
	\label{eq:g-delta}
\end{align}

Clearly, $\rmin(\ei{i})=\sum_{j=1}^{i} \g_j$ and $\rmax(\ei{i})=\del_i+ \sum_{j=1}^{i} \g_j$. Thus storing $(\g, \del)$ values for each
element is equivalent to storing their $\rmin$ and $\rmax$ values for our purpose. The main invariant we require from our algorithms is then the following. 

\begin{invariant}\label{inv:g-delta}
	At any time $t$ and for any element $\ei{i} \in \QS$, we have $g_i + \Delta_i \leq t$. 
\end{invariant}

By~\Cref{eq:g-delta}, $g_i + \Delta_i = \rmax(\ei{i}) - \rmin(\ei{i-1})$. As such, since $t = n/\ell = \eps n$ on a length $n$ stream, any algorithm that maintains~\Cref{inv:g-delta} is an $\eps$-approximate quantile summary 
by~\Cref{lem:search}. 

Let us now briefly point out the benefit of working with $(g,\Delta)$ values. Firstly, the following observation states how these values change by  the  operations of the summary (see~\Cref{fig:insert-delete}).

\begin{observation}\label{obs:g-del}
	In the summary $\QS$: 
	\begin{itemize}
		\item \textbf{Insert($x$):} Sets $\g(x) = 1$ and $\del(x) = \g_i + \del_i - 1$ and keeps the remaining $(\g,\Delta)$ values unchanged (here, ($\g_i,\del_i)$ refer to the corresponding values of $\ei{i}$ defined in the procedure); 
		\item \textbf{Delete($\ei{i}$):} Sets $\g_{i+1} = g_{i+1}+g_i$ and keeps the remaining $(\g,\Delta)$ values unchanged.
	\end{itemize}
\end{observation}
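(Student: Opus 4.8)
Since it was already observed that \textbf{Insert} and \textbf{Delete} maintain the $\rmin,\rmax$ values correctly, the statement is purely a mechanical re-expression of those updates through the definitions $\g_i=\rmin(\ei{i})-\rmin(\ei{i-1})$ and $\del_i=\rmax(\ei{i})-\rmin(\ei{i})$ from \Cref{eq:g-delta}. The plan is to write out the post-operation $\rmin,\rmax$ values explicitly — I would introduce primed symbols $\ei{j}',\rmin',\rmax'$ so that it is unambiguous which stored element occupies rank $j$ \emph{after} the operation — and then substitute. The only points requiring care are the index relabelling (an old element at rank $j$ moves to rank $j+1$ under \textbf{Insert} and to rank $j-1$ under \textbf{Delete}) and the boundary conventions $\rmin(\ei{0})=0$ and ``the sentinel $\ei{s}=+\infty$ is never deleted''.

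For \textbf{Insert($x$)} I would fix the element $\ei{i}$ found in step $(i)$, so that the new sorted order is $\ei{1},\dots,\ei{i-1},x,\ei{i},\dots,\ei{s}$; that is, $\ei{j}'=\ei{j}$ for $j<i$, $\ei{i}'=x$, and $\ei{j}'=\ei{j-1}$ for $j>i$. By step $(ii)$, the values $\rmin',\rmax'$ coincide with $\rmin,\rmax$ on $\ei{1},\dots,\ei{i-1}$; equal $\rmin(\ei{i-1})+1$ and $\rmax(\ei{i})$ on $x$; and equal the old values incremented by one on $\ei{i},\dots,\ei{s}$. (Here I would double-check against \Cref{fig:insert-delete} that the $\rmax(\ei{i})$ handed to $x$ is its value \emph{before} the increment.) Substituting into the definitions, for $j<i$ nothing changes; for $x$ one gets $\g'(x)=(\rmin(\ei{i-1})+1)-\rmin(\ei{i-1})=1$ and $\del'(x)=\rmax(\ei{i})-(\rmin(\ei{i-1})+1)=(\rmax(\ei{i})-\rmin(\ei{i}))+(\rmin(\ei{i})-\rmin(\ei{i-1}))-1=\del_i+\g_i-1$; and for every old element $\ei{j}$ with $j\geq i$, since both its $\rmin$ and $\rmax$ rose by one (and so did its new predecessor's $\rmin$, or that predecessor is now $x$ with $\rmin'(x)=\rmin(\ei{i-1})+1$ in the borderline case $j=i$), both $\g'$ and $\del'$ are unchanged. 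This is the first bullet.

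For \textbf{Delete($\ei{i}$)} we now have $\ei{j}'=\ei{j}$ for $j<i$ and $\ei{j}'=\ei{j+1}$ for $j\geq i$, and no $\rmin,\rmax$ value is altered. Hence every $\del$ is unchanged, $\g'_j=\g_j$ for $j<i$ and $\g'_j=\g_{j+1}$ for $j>i$, and the only genuinely new quantity is the $\g$-value of the element $\ei{i+1}$, which slides into rank $i$ so that its predecessor becomes $\ei{i-1}$: by telescoping, $\g'(\ei{i+1})=\rmin(\ei{i+1})-\rmin(\ei{i-1})=(\rmin(\ei{i+1})-\rmin(\ei{i}))+(\rmin(\ei{i})-\rmin(\ei{i-1}))=\g_{i+1}+\g_i$, which is the second bullet. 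The case $i=1$ is absorbed by $\rmin(\ei{0})=0$, and $\ei{i+1}$ is guaranteed to exist since $\ei{s}=+\infty$ is never deleted.

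I do not anticipate a genuine obstacle here: there is no inequality to establish and no idea needed beyond careful bookkeeping. The two error-prone points — and the ones I would verify last against \Cref{fig:insert-delete} — are (a) the precise semantics of step $(ii)$ of \textbf{Insert}, i.e.\ whether the $\rmax(\ei{i})$ handed to $x$ is read before or after the ``increase by one'' step (this is exactly what makes the additive constant in $\del'(x)$ equal $-1$ rather than $0$), and (b) keeping the rank shift consistent, so that the claim ``an old element retains its $(\g,\del)$'' is asserted for the correct pairing of relabelled element and relabelled predecessor.
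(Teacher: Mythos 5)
Your proposal is correct and is exactly the computation the paper has in mind: the paper dispenses with this observation in one line ("a direct corollary of \Cref{eq:g-delta} and how these operations change $\rmin,\rmax$ values"), and your careful substitution — including the check that $\rmax(\ei{i})$ is read before the increment, which is what produces the $-1$ — fills in that routine verification correctly.
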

The proof of this observation is a direct corollary of~\Cref{eq:g-delta} and how these operations change $\rmin,\rmax$ values. We can now interpret each of $g,\Delta$ values as follows: 

\noindent
\textbf{$\bm{g}$-value.} We set the $g$-value of a newly inserted element to $1$. After that, $g$-value of an element $\ei{i}$ in $\QS$ can only change when $\ei{i-1}$ is deleted, which results in $g_{i} = g_{i}+g_{i-1}$. 
We say that $\ei{i}$ \emph{covers} $\ei{i-1}$ whenever $\ei{i-1}$ is deleted from
	the summary, in which case $\ei{i}$ also covers all elements that $\ei{i-1}$ was covering so far (every element only covers itself upon insertion). We define 
	\begin{itemize}
		\item $C(\ei{i})$: the set of elements covered by $\ei{i}$. By definition, at any point in time, 
		\begin{align} \label{obs: disjoint coverage}
		g_i = \card{C(\ei{i})} 
		\text{ and } C(\ei{i}) \cap C(\ei{j}) = \emptyset
		\end{align}
		for any $\ei{i},\ei{j}$ currently stored in $\QS$. 
	\end{itemize}

	Notice that for any element $\ei{i}$, $g_i$ represents the number of elements that are deleted from the stream and are now being covered by $\ei{i}$. Thus, a ``large'' value of $g_i$ for an element stored in $\QS$, while being a challenge 
	in maintaining~\Cref{inv:g-delta}, necessarily means that many elements of the stream are already deleted in $\QS$. This is a key property that helps us in bounding the size of $\QS$. 

\noindent
\textbf{$\bm{\Delta}$-value.} We set the $\Delta$-value of a newly inserted element to $g_{i}+\Delta_i - 1$ (where $\ei{i}$ is the smallest element of $\QS$ larger than the inserted element). After that, the $\Delta$-value of an element 
can no longer change (unlike $g$-values). Maintaining~\Cref{inv:g-delta} (for the element $\ei{i}$) then allows us to state: 
\begin{align}
	\Delta(x) \leq t_0(x). \label{eq:del-t0} 
\end{align}
As such, the earlier an element is inserted into the stream, the smaller its worst-case $\Delta$-value is. This, combined with the geometric grouping nature of bands we discussed earlier, 
implies that at any time $t$, the number of elements with ``relatively small'' $\Delta$-value (compared to $t$, namely, the current target of~\Cref{inv:g-delta}) is almost exponentially more than the elements with ``larger'' $\Delta$-value. 
This is the second main property that is going to help us in maintaining~\Cref{inv:g-delta} in a small space. 



\section{A Greedy \texorpdfstring{$O\!\left(\frac{1}{\eps} \cdot \log^2{\!(\eps  n)}\right)$ Size}{} Summary}\label{sec:basic}
As a warm-up to our main algorithm, we first present a very simple and greedy way of updating the quantile summary $\QS$ to maintain~\Cref{inv:g-delta} in $O(\frac{1}{\eps} \cdot \log^2{\!(\eps n)})$ space. 
At any step, the algorithm first inserts all arriving elements into $\QS$. Then, in the \emph{deletion step}, while there is an element whose deletion does not violate ~\Cref{inv:g-delta} (and another simple condition on \bv s), the algorithm deletes it from \QS. 


\begin{Algorithm}\label{alg:simple}
	A greedy algorithm for updating the quantile summary.
	
	\medskip
	
	\medskip
	
	For each time step $t$ with arriving items $(x^{(t)}_1,\ldots,x^{(t)}_\ell)$: 
	\begin{enumerate}[label=$(\roman*)$]
	\addtolength{\itemindent}{3mm}
		\item 
		Run $\textbf{Insert}(x^{(t)}_j)$ for each element of the chunk. 
		\item  Repeatedly run $\textbf{Delete}(\ei{i})$ for any (arbitrarily chosen) element $\ei{i}$ in \QS satisfying:
		\[
		(1)~\bv(\ei{i}) \leq \bv(\ei{i+1})\qquad \text{\underline{and}} \qquad (2)~\g_i+\g_{i+1}+\Delta_{i+1} \leq t 
		\]
	\end{enumerate}
\end{Algorithm}

\begin{theorem}\label{thm:greedy}
For any $\eps > 0$ and a stream of length $n$,~\Cref{alg:simple} maintains an $\eps$-approximate quantile summary in $O(\frac{1}{\eps} \cdot \log^2{\!(\eps n)})$ space. Also, there is an implementation of \Cref{alg:simple} that takes $O\big(\log(1/\eps)+ \log \log (\eps n)\big)$ worst-case processing time per element.
Finally, quantile queries can be answered in $O\big(\log(1/\eps)+ \log \log (\eps n)\big)$ worst-case time per query.
\end{theorem}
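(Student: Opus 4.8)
The plan is to prove the three parts in order: correctness, space bound, and running time. Correctness is essentially immediate: I would argue that Condition~(2) in the deletion step guarantees that after a \textbf{Delete}$(\ei{i})$, the new $g_{i+1}$ (which becomes $g_i + g_{i+1}$ by \Cref{obs:g-del}) together with the unchanged $\Delta_{i+1}$ satisfies $g_{i+1} + \Delta_{i+1} \leq t$, and no other $(g,\Delta)$ pairs change. Since \textbf{Insert} sets $g(x)=1$ and $\Delta(x)=g_i+\Delta_i-1$ for the successor $\ei{i}$, and \Cref{inv:g-delta} held for $\ei{i}$ before insertion with the value of $t$ only having grown, the invariant $g+\Delta \leq t$ is preserved for the new element too; all other elements are untouched. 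So \Cref{inv:g-delta} is maintained throughout, and by the remark following \Cref{inv:g-delta} (using $t = \eps n$ at the end) together with \Cref{lem:search}, $\QS$ is an $\eps$-approximate quantile summary.

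The heart of the matter is the space bound $O\big(\frac{1}{\eps}\log^2(\eps n)\big)$. I would fix a time step $t$ and bound the number of stored elements by grouping them according to their band-value: since there are only $\Bt{t} = O(\log t) = O(\log(\eps n))$ distinct bands by~\Cref{eq:band-alpha}, it suffices to show that each band contributes $O\big(\frac{1}{\eps}\log(\eps n)\big)$ elements. Consider the elements of $\QS$ with a fixed band-value $\alpha$, listed in sorted order. I want to charge ``consecutive survivors'' against the invariant: if $\ei{i}$ and $\ei{j}$ are two stored elements with $\ei{i}<\ei{j}$, both in band $\alpha$, with no stored element of band $\leq \alpha$ strictly between them, then $\ei{j}$ failed the deletion test at every time step since they became adjacent in that sense — but a cleaner route is to look at $\ei{i}$ itself: the reason $\ei{i}$ was not deleted is that either $\bv(\ei{i}) > \bv(\ei{i+1})$ (Condition~(1) fails) or $g_i + g_{i+1} + \Delta_{i+1} > t$ (Condition~(2) fails). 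Elements where Condition~(1) fails have a strictly larger band-value than their successor, and one shows there are few such ``band-boundary'' elements — at most $O(\Bt{t})$ per maximal run, giving $O(\log^2(\eps n))$ total across the whole summary, actually $O(\frac1\eps \log(\eps n))$ when combined with the chunk of the most recent time step which always has $\ell$ fresh elements. For elements where Condition~(2) fails, $g_i + g_{i+1} + \Delta_{i+1} > t \geq g_{i+1} + \Delta_{i+1}$, and more usefully $g_i > t - g_{i+1} - \Delta_{i+1}$; chaining this over a block of consecutive same-band survivors and using that the $C(\ei\cdot)$ sets are disjoint (\Cref{obs: disjoint coverage}), so their $g$-values sum to at most $t\ell$ (the total stream length seen), while each relevant $g_i$ is ``large'' — at least roughly $t - \Delta_{i+1} \geq t - O(2^{\alpha+1})$ via~\Cref{eq:del-t0} and~\Cref{obs:band-length} — bounds the number of such elements in band $\alpha$ by $O(\ell)$ per ``scale,'' and summing the geometric-type bound over bands $\alpha$ yields $O\big(\frac1\eps \log(\eps n)\big)$ per band and $O\big(\frac1\eps \log^2(\eps n)\big)$ overall. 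This counting argument — correctly pairing each surviving element with a witness to its non-deletion and then aggregating the two failure modes without double-counting — is the step I expect to be the main obstacle, since it must faithfully mirror (but simplify) the original GK analysis.

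For the running time, I would describe the data-structural implementation promised in \Cref{sec:time}: store $\QS$ as a balanced search tree keyed by element value so that \textbf{Insert}'s step~(i) (locating the successor $\ei{i}$) costs $O(\log s) = O\big(\log(\frac1\eps) + \log\log(\eps n)\big)$ since $s = O(\frac1\eps \log^2(\eps n))$; the bulk $\rmin,\rmax$ updates ``increase for all $j \geq i$'' are not done explicitly but lazily via an offset/prefix mechanism (or by the standard trick of maintaining $g$-values locally and a global counter), so the visible cost of \textbf{Insert} and \textbf{Delete} is $O(\log s)$ each. The amortized number of deletions is $O(1)$ per inserted element (total deletions $\leq$ total insertions), but to get a \emph{worst-case} bound one spreads the deletion work: since each chunk has exactly $\ell$ arrivals, one can perform a bounded number of deletion-test-and-delete operations per arrival, each test costing $O(1)$ given that $\bv$ is computable in $O(1)$ time from $t_0$ by~\Cref{obs: bvalue-compute} and an auxiliary priority queue keyed by the slack $t - g_i - g_{i+1} - \Delta_{i+1}$ lets us find a deletable element in $O(\log s)$ time, with the queue updated in $O(\log s)$ time per structural change. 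Quantile queries walk the tree once guided by the (implicitly stored) $\rmin$ values exactly as in the proof of~\Cref{lem:search}, costing $O(\log s) = O\big(\log(\frac1\eps) + \log\log(\eps n)\big)$. I would note that $\log s = \Theta\big(\log(1/\eps) + \log\log(\eps n)\big)$ because $\log\log^2(\eps n) = \Theta(\log\log(\eps n))$, which is where the stated time bound comes from.
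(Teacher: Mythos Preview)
Your correctness argument is fine and matches the paper. The space analysis, however, has a genuine gap. You propose to bound the elements where Condition~(2) fails by arguing that each such $g_i$ is ``large'' --- at least roughly $t - \Delta_{i+1} \geq t - O(2^{\alpha+1})$. This inequality is in the wrong direction: $\Delta_{i+1}$ is \emph{not} bounded by $O(2^{\alpha+1})$; it is bounded by $t_0(\ei{i+1})$ via~\Cref{eq:del-t0}, which can be close to $t$. What \emph{is} true (and what the paper uses in~\Cref{clm:lower-g}) is that $g_i + g_{i+1} > t - \Delta_{i+1} \geq t - t_0(\ei{i+1}) \geq 2^{\bv(\ei{i+1})-1}-2$ by~\Cref{obs:band-length}. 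More seriously, you then sum $g$-values against the \emph{global} budget $t\ell$, which is far too loose: with $g_i + g_{i+1} \gtrsim 2^{\alpha}$ and total $g$-mass $t\ell$, you would only get $O(t\ell/2^{\alpha})$ elements per band, which sums to $O(t\ell)$ overall --- useless. The crucial ingredient you are missing is the role of Condition~(1): it guarantees (\Cref{clm:cover}) that elements in $\Band_{\leq\alpha}$ only ever cover elements in $\Band_{\leq\alpha}$, so the $g$-values summed over $\QS\cap\Band_{\leq\alpha}$ are at most $|\Band_{\leq\alpha}| \leq \ell\cdot 2^{\alpha+1}$, not $t\ell$. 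This per-band budget is what yields $|X_\alpha|=O(\ell)$ and hence $O(\ell\log t)$ type-2 elements. Your bound on the type-1 (Condition~(1)-fails) elements is also off: they are not $O(\log^2(\eps n))$ in absolute number; rather, any maximal run of them has length $O(\log t)$ and terminates at a type-2 element, so their count is $O(\log t)$ \emph{times} the type-2 count (\Cref{lem:type-1}). This multiplicative factor is exactly where the second $\log$ in $\log^2(\eps n)$ comes from.

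For the running time, your priority-queue-on-slack idea is workable in principle but more delicate than you suggest: the key $g_i+g_{i+1}+\Delta_{i+1}$ changes whenever a neighbor is inserted or deleted, and Condition~(1) involves band-values that evolve with $t$, so the queue needs nontrivial maintenance. The paper instead takes a much simpler route: it \emph{delays} the deletion step, running it only once every $\lceil\log^2 t\rceil$ time steps (\Cref{rem: Delaying-Deletions-simple} shows this costs only an additive $O(\ell\log^2 t)$ in space). Each deletion step then does a single linear scan of the summary in $O(s)$ time, and since there are only $O(\eps n/\log^2(\eps n))$ such steps, the total scan cost is $O(n)$. Worst-case per-element time is obtained by spreading each deletion step uniformly over the next $\lceil\log^2 t\rceil/2$ chunks while buffering arrivals.
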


\Cref{alg:simple} maintains~\Cref{inv:g-delta} since it may only delete an element $\ei{i}$ if $g_i + g_{i+1} + \Delta_{i+1} \leq t$, which by~\Cref{obs:g-del} implies that $g_{i+1} + \Delta_{i+1} \leq t$ \emph{after} the deletion. The other $(g,\Delta)$-values remain
unchanged. As argued in~\Cref{sec:inv}, maintaining~\Cref{inv:g-delta} directly implies that $\QS$ is an $\eps$-approximate quantile summary throughout the stream. We now bound the size of $\QS$ under
this algorithm, taking a step towards proving~\Cref{thm:greedy}. Then, in ~\Cref{subsec: fast-imp-simple}, we show how this algorithm can be implemented quickly.

\begin{remark}
	The difference between the GKAdaptive algorithm mentioned in \cite{LuoWYC16} and \Cref{alg:simple} is that GKAdaptive checks only condition $(2)$ (see \Cref{alg:simple}) when it tries to delete elements.
\end{remark}

\subsection{The Space Analysis} \label{sec:greedy-space}

In this subsection, we prove a bound on the space used by \Cref{alg:simple}. Formally, we have the following:
\begin{lemma}\label{lem:greedy-space}
For any $\eps > 0$ and a stream of length $n$,~\Cref{alg:simple} maintains an $\eps$-approximate quantile summary in $O(\frac{1}{\eps} \cdot \log^2{\!(\eps n)})$ space. 
\end{lemma}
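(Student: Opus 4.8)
The plan is to dispatch correctness immediately and spend the real effort on the space bound. Correctness is already observed right after \Cref{thm:greedy}: \Cref{alg:simple} maintains \Cref{inv:g-delta}, so \Cref{lem:search} certifies that \QS is an $\eps$-approximate quantile summary at every point of the stream. Hence only the $O((1/\eps)\log^2(\eps n))$ space bound remains. I would argue at an arbitrary time $t$, just after the deletion loop of that time step has run to completion, and show $s = |\QS| = O(\ell \log^2 t)$ where $\ell = 1/\eps$; since $t \leq \eps n$ throughout, this gives the statement (metadata is $O(1)$ words per element, so the word count is $O(s)$).

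Since the deletion loop is exhausted, every $\ei{i}$ with $i<s$ fails both deletion conditions, so (using \Cref{obs:g-del} to identify the surviving case of condition $(2)$) it is of at least one of two types: \emph{rigid}, meaning $\bv(\ei{i}) > \bv(\ei{i+1})$; or \emph{full}, meaning $\g_i + \g_{i+1} + \del_{i+1} > t$. The first step is a ``right-chasing'' reduction bounding the rigid elements by the full ones: from a rigid $\ei{i}$, move to $\ei{i+1}$ and repeat while the current element is rigid. Along such a chase the band-values strictly decrease, so after at most $\Bt{t} = O(\log t)$ moves (by \Cref{eq:band-alpha}) we reach either the sentinel $\ei{s}$ or a full element. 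For a fixed full $\ei{q}$, the rigid elements whose chase ends at $\ei{q}$ are exactly a contiguous block $\ei{q-1},\dots,\ei{q-k}$ with strictly increasing band-values read leftward, hence $k \leq \Bt{t}$. Therefore $\#\{\textnormal{rigid}\} \leq \Bt{t}\cdot(\#\{\textnormal{full}\}+1)$ and so $s = O(\log t)\cdot(\#\{\textnormal{full}\}+1)$.

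It then remains to show $\#\{\textnormal{full}\} = O(\ell \log t)$, which is the crux. The ingredients I would assemble are: (i) a full $\ei{q}$ has $\bv(\ei{q}) \leq \bv(\ei{q+1})$, so by \Cref{obs:band-length} and the relation $\del_{q+1}\leq t_0(\ei{q+1})$ from \Cref{eq:del-t0}, if $\bv(\ei{q})=\beta$ then $\g_q + \g_{q+1} \geq t - \del_{q+1} \geq 2^{\beta-1}-2$; (ii) the covered sets are pairwise disjoint (\Cref{obs: disjoint coverage}) and together comprise all elements seen so far, hence $\sum_j \g_j \leq \ell t + O(1)$; (iii) the population bound $\card{\Band_{\leq\beta}\cap\QS}\leq \ell 2^{\beta+1}$ from \Cref{eq:band-alpha}. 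Combining (i) and (ii) (each $\g_j$ appears in at most two consecutive pair-sums) bounds the number of full elements of band $\beta$ by $O(\ell t/2^{\beta})$, while (iii) bounds it by $O(\ell 2^{\beta})$, and one would sum the minimum of these over all $O(\log t)$ bands.

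The main obstacle is precisely this last summation. The straightforward static combination only yields $\#\{\textnormal{full}\} = O(\ell\sqrt t)$ — the two bounds cross over around $2^{\beta}\approx\sqrt t$ — which gives the too-weak bound $O((1/\eps)\sqrt{\eps n}\log(\eps n))$. To reach $O(\ell\log^2 t)$ one has to squeeze the ``middle'' bands much harder, and I expect the real work there is to use that the deletion loop is run to exhaustion at \emph{every} time step, not just in the snapshot at time $t$: a full element that survives while its band-value climbs from $\beta$ to $\beta+1$ must, over the $\Theta(2^{\beta})$ intervening steps, either be deleted or accumulate enough coverage to keep condition $(2)$ violated against the steadily growing threshold $t$. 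Formalizing this as an amortized/potential argument along the lifetime of each surviving element (equivalently, a global charging of deleted elements to the survivors they pass their coverage to), rather than the static counting above, is the technically delicate part. Granting that sharpened bound on the full elements, the two displayed inequalities combine to $s = O((1/\eps)\log^2(\eps n))$, uniformly in $t$, which completes the proof.
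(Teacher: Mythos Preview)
Your setup matches the paper: the type-1/type-2 (your ``rigid''/``full'') dichotomy and the $O(\log t)$ bound on each maximal run of rigid elements is exactly \Cref{lem:type-1}. The gap is in the crux, and you identify it yourself: you do not actually prove $\#\{\textnormal{full}\}=O(\ell\log t)$, you only reach $O(\ell\sqrt t)$ and then speculate that an amortized argument over the history of deletions is needed, ``granting'' the real bound.

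No temporal argument is needed; the static snapshot suffices once you exploit condition~(1) of the deletion rule, which you never use beyond the rigid/full split. The point of condition~(1) is that an element is only ever deleted into a neighbor of at least as large a band-value; combined with \Cref{obs:stable}, this means every $\ei{j}$ currently in $\Band_{\leq\alpha}$ has $C(\ei{j})\subseteq\Band_{\leq\alpha}$. Hence
\[
\sum_{\ei{j}\in\QS\cap\Band_{\leq\alpha}} g_j \;=\; \sum_{\ei{j}\in\QS\cap\Band_{\leq\alpha}} |C(\ei{j})| \;\leq\; |\Band_{\leq\alpha}| \;\leq\; \ell\cdot 2^{\alpha+1},
\]
which is the paper's \Cref{clm:cover}. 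This is strictly stronger than your (iii), which only bounds the \emph{number} of summary elements in $\Band_{\leq\alpha}$, not the \emph{sum of their $g$-values}. Now partition full elements by $\alpha=\bv(\ei{i+1})$ (not $\bv(\ei{i})$): for each such $\ei{i}$ both $\ei{i}$ and $\ei{i+1}$ lie in $\Band_{\leq\alpha}$, so your lower bound (i) together with the displayed inequality gives $|X_\alpha|\cdot(2^{\alpha-1}-2)\leq 2\cdot\ell\cdot 2^{\alpha+1}$, i.e.\ $|X_\alpha|=O(\ell)$ per band and $O(\ell\log t)$ in total. Your bound (ii), the global $\sum_j g_j\leq\ell t$, is too coarse precisely because it ignores this band-locality of coverage.
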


After performing the deletion step at time $t$, any element $\ei{i}$ present in $\QS$  either satisfies $\bv(\ei{i}) > \bv(\ei{i+1})$ or  $\g_i+\g_{i+1}+\Delta_{i+1} > t$; otherwise~\Cref{alg:simple} would have deleted this element. 
We refer to the elements in $\QS$ satisfying the former condition as \emph{\To} elements and the ones satisfying {only} the latter condition as \emph{\Tt} elements. Thus, each element is exactly one of the two types (except only $\ei{s}=+\infty$ which we can ignore). 

We first prove that the  number of \To elements cannot be much larger than the \Tt ones. This is simply because the band-values of consecutive \To elements strictly decreases from one element to the next and 
thus we cannot have many \To elements next to each other. 

\begin{lemma}\label{lem:type-1}
	After the deletion step at time step $t$, the number of \To elements stored in $\QS$ is $O(\log{t})$ times larger than the \Tt elements.
\end{lemma}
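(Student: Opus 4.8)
The plan is to exploit the defining property of \To elements. Recall from the discussion preceding the lemma that, after the deletion step at time $t$, every stored element $\ei{i}$ with $i < s$ is exactly one of two types: it is \To if $\bv(\ei{i}) > \bv(\ei{i+1})$, and otherwise it is \Tt and satisfies $\g_i + \g_{i+1} + \Delta_{i+1} > t$ (the element $\ei{s} = +\infty$ is set aside). The crucial point is that the \To condition forces band-values to strictly decrease along any run of consecutive \To elements, and there are only $\Bt{t} = O(\log{t})$ distinct band-values at time $t$ by \eqref{eq:band-alpha}; so such runs must be short. It then remains to charge each run to a \Tt element.

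Concretely, I would partition the prefix $\ei{1}, \ei{2}, \ldots, \ei{s-1}$ into maximal contiguous blocks of \To elements, separated by non-\To elements. Consider one such block $\ei{a}, \ei{a+1}, \ldots, \ei{b}$. Applying the \To condition to each of $\ei{a}, \ldots, \ei{b}$ yields the strictly decreasing chain $\bv(\ei{a}) > \bv(\ei{a+1}) > \cdots > \bv(\ei{b}) > \bv(\ei{b+1})$. Since every band-value at time $t$ lies among only $\Bt{t} = O(\log{t})$ possible values, this chain, and in particular the block, contains at most $\Bt{t} = O(\log{t})$ elements.

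Next I would show that every maximal \To block is immediately followed by a \Tt element, which implies that the number of blocks is at most the number of \Tt elements. Indeed, by maximality the element $\ei{b+1}$ just after the block is not \To. It cannot be $\ei{s}$ either, since that would force $\ei{b} = \ei{s-1}$ to be \To, i.e.\ $\bv(\ei{s-1}) > \bv(\ei{s})$; but this is impossible because $\ei{s} = +\infty$ is inserted in the very first time step, so it is the oldest stored element, and hence by \Cref{obs:stable} (which applies since $\bv(\ei{s-1}) = 0 \leq \bv(\ei{s})$ held at the moment $\ei{s-1}$ was inserted) we have $\bv(\ei{s}) \geq \bv(e)$ for every currently stored $e$. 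Therefore $b+1 < s$ and $\ei{b+1}$, being a non-\To real element, is \Tt. Since distinct blocks are trailed by distinct \Tt elements, the number of blocks is at most the number of \Tt elements. Combining with the per-block bound, the number of \To elements is at most $\Bt{t} = O(\log{t})$ times the number of \Tt elements, which is exactly the claim.

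The only genuinely delicate step is ruling out a trailing \To block at the end of the summary with no \Tt element following it; this is where the special status of $\ei{s} = +\infty$ as the oldest, and hence largest-band-value, stored element is essential. Everything else is a straightforward counting argument over the strictly decreasing band-value chains.
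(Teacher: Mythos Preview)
Your proof is correct and takes essentially the same approach as the paper: bound the length of each maximal run of \To elements by the number of band-values $\Bt{t}=O(\log t)$, then charge each run to the \Tt element that follows it. You are in fact more careful than the paper, which glosses over the edge case of a trailing \To run; your argument that $\ei{s-1}$ cannot be \To (since $\ei{s}=+\infty$ is oldest and hence has maximal band-value) cleanly disposes of this.
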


\begin{proof}
	Let $\ei{i},\ei{i+1},\ldots,\ei{j}$ be a sequence of \emph{consecutive} elements in $\QS$ which are all \To. By definition of \To elements, $\bv(\ei{i}) > \bv(\ei{i+1}) > \ldots > \bv(\ei{j})$. As the number of band values at time $t$ is $\Bt{t} = O(\log{t})$ by~\Cref{eq:band-alpha}, we have that length of this sequence can only be $O(\log{t})$. Thus, for every $O(\log{t})$ \To elements, there is at least one \Tt element which immediately implies the lemma. 
\end{proof}

\Cref{lem:type-1} allows us to focus solely on bounding the number of \Tt elements in the following, which is the main part of the proof in this section. 

\begin{lemma}\label{lem:type-2}
	After the deletion step at time step $t$, the number of \Tt elements in $\QS$ is $O(\ell \cdot \log{t})$. 
\end{lemma}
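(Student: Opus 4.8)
The plan is to bound the \Tt elements by grouping them according to the band-value of their \emph{right neighbor}: for a \Tt element $\ei{i}$ write $\beta := \bv(\ei{i+1})$, and let $m_\beta$ be the number of \Tt elements with that value of $\beta$. I will show $m_\beta = O(\ell)$ for every $\beta$; since $\beta$ ranges over only $\Bt{t} = O(\log t)$ values by~\Cref{eq:band-alpha}, summing then gives the claimed $O(\ell\log t)$ bound (the unique element $\ei{s-1}$, whose right neighbor is the sentinel $+\infty$, contributes at most $1$ and is absorbed into the $O(\ell)$ term). The main tool is a ``coverage bound'': for every $\alpha\ge 0$,
\[
\sum_{\ei{j}\in\QS\,:\,\bv(\ei{j})\le\alpha} g_j \;\le\; \ell\cdot 2^{\alpha+1}.
\]
To prove it, recall that $g_j=\card{C(\ei{j})}$ and the coverage sets of distinct stored elements are pairwise disjoint (see~\eqref{obs: disjoint coverage}); moreover \Cref{alg:simple} only ever lets $\ei{j}$ absorb an element into $C(\ei{j})$ through a deletion that requires condition $(1)$, i.e.\ the covered element has band-value $\le\bv(\ei{j})$ \emph{at that moment}. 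Since band-values are stable (\Cref{obs:stable}), every element of $C(\ei{j})$ still has band-value $\le \bv(\ei{j})$ at time $t$, so $\bigcup_{\bv(\ei{j})\le\alpha} C(\ei{j})\subseteq \Band_{\le\alpha}$, and the bound follows from $\card{\Band_{\le\alpha}}\le\ell\cdot 2^{\alpha+1}$ in~\Cref{eq:band-alpha}.

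Next I derive the key inequality. For any \Tt element $\ei{i}$ with $\beta=\bv(\ei{i+1})$, combine its defining property $g_i+g_{i+1}+\Delta_{i+1}>t$ with $\Delta_{i+1}\le t_0(\ei{i+1}) \le t - 2^{\beta-1}+2$, where the first inequality is~\Cref{eq:del-t0} and the second is~\Cref{obs:band-length} applied to $\ei{i+1}$; this yields $g_i+g_{i+1} > 2^{\beta-1}-2$. Now sum this over all $m_\beta$ such elements. The left-hand side is $\sum g_i + \sum g_{i+1}$, and I bound each summand separately using the coverage bound with $\alpha=\beta$: the elements $\ei{i}$ are distinct and, being \Tt, satisfy $\bv(\ei{i})\le\bv(\ei{i+1})=\beta$, so $\sum g_i \le \ell\cdot 2^{\beta+1}$; the neighbors $\ei{i+1}$ are distinct and all have band-value exactly $\beta$, so $\sum g_{i+1}\le \ell\cdot 2^{\beta+1}$. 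Hence $m_\beta(2^{\beta-1}-2) < \ell\cdot 2^{\beta+2}$, which for $\beta\ge 3$ gives $m_\beta = O(\ell)$. For $\beta\le 2$ the neighbor $\ei{i+1}$ lies in $\Band_{\le 2}$, of which there are at most $\card{\Band_{\le 2}}=O(\ell)$, so again $m_\beta=O(\ell)$. Summing over the $O(\log t)$ values of $\beta$ completes the proof.

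The step I expect to be the crux is choosing the \emph{right} grouping. If one instead groups \Tt elements by their own band-value $\bv(\ei{i})$, the term $\sum g_{i+1}$ resists control, because $\ei{i+1}$ may have a much larger band-value and cover many old elements; grouping by $\bv(\ei{i+1})$ is precisely what makes the single coverage bound simultaneously dominate $\sum g_i$ \emph{and} $\sum g_{i+1}$, and it is also what forces $g_i+g_{i+1}$ to be large relative to the band it is charged to (via~\Cref{eq:del-t0}). The remaining points needing care are routine: verifying that the band-value inequality established at deletion time survives to the current time $t$ by stability of bands, the small-$\beta$ base case, and the bookkeeping for the sentinel element $+\infty$.
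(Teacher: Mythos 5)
Your proof is correct and follows essentially the same route as the paper's: the same partition of \Tt elements by the band-value of the right neighbor, the same lower bound $g_i+g_{i+1}>2^{\bv(\ei{i+1})-1}-2$ derived from \Cref{eq:del-t0} and \Cref{obs:band-length}, and the same coverage bound $\sum_{\ei{j}\in\QS\cap\Band_{\leq\alpha}}g_j\leq\ell\cdot 2^{\alpha+1}$ (the paper's \Cref{clm:cover}). The only cosmetic difference is that you bound $\sum g_i$ and $\sum g_{i+1}$ separately rather than jointly by $2\sum_{\Band_{\leq\alpha}}g_j$, which is the same estimate.
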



\begin{proof}
The first step of the proof is to simply consider any \Tt element $\ei{i}$ and lower bound $g_{i} + g_{i+1}$; intuitively, this number should be ``large'' considering
that the only reason we did not delete $\ei{i}$ was the condition (2) of the deletion step in~\Cref{alg:simple}. 
\begin{claim}\label{clm:lower-g}
After the deletion step at time step $t$, for any \Tt element $\ei{i}$,
		$ g_i + g_{i+1} \geq 2^{\bv(\ei{i+1})-1}-~2$.
\end{claim}
\begin{proof}
	As $\ei{i}$ is \Tt, the only reason for $\ei{i}$  to not be deleted from $\QS$ is $g_i + g_{i+1} + \Delta_{i+1} > t$. Moreover, recall that $\Delta_{i+1} \leq t_0(\ei{i+1})$ by~\Cref{eq:del-t0}, and hence, 
	\begin{align*}
		g_i + g_{i+1} > t - t_0(\ei{i+1}) \geq 2^{\bv(\ei{i+1})-1}-2,
	\end{align*}
	where the second inequality is by~\Cref{obs:band-length} which states that it takes exponential time in $\alpha$ for an element to reach $\Band_{\alpha}$. \Qed{\Cref{clm:lower-g}}
	
\end{proof}

\Cref{clm:lower-g} bounds the $g$-value of each \Tt element $\ei{i}$ as a function of the $g$-value and band-value of the \emph{next} element $\ei{i+1}$. Based on this, let us partition the \Tt elements 
	into $\Bt{t}$ (number of possible bands) sets $X_0,\ldots,X_{\Bt{t}}$ where for any band-value $\alpha$:
	\[
		X_{\alpha} := \set{\ei{i} \in \QS \mid \text{$\ei{i}$ is \Tt and $\bv(\ei{i+1}) = \alpha$}}.
	\] 
	Moreover, for any $\ei{i} \in X_{\alpha}$, since $\ei{i}$ is a \Tt element, 
	$\bv(\ei{i}) \leq \bv(\ei{i+1}) = \alpha$. We can use this and~\Cref{clm:lower-g} to bound the size of each $X_{\alpha}$ as follows:
	\begin{align}
		\card{X_{\alpha}} \cdot (2^{\alpha-1}-2) &\leq \sum_{\ei{i} \in X_{\alpha}} g_i + g_{i+1} \leq 2 \hspace{-15pt}\sum_{\ei{j} \in \QS \cap \Band_{\leq \alpha}} \hspace{-15pt} g_j, \label{eq:lem-type-2}
	\end{align}
	 where the second inequality is because both $\ei{i}$ and $\ei{i+1}$ belong to $\Band_{\leq \alpha}$ as argued above. The final step of the proof is then 
	to bound the sum of all $g$-values for elements in $\QS \cap \Band_{\leq \alpha}$. This can be done easily because $g$-values track  the number of covered elements and elements 
	of $\Band_{\leq \alpha}$ can  only cover elements in $\Band_{\leq \alpha}$ by the construction of the algorithm (rule $(1)$ in the deletion step) and \Cref{obs:stable}; the upper bound on the size of $\Band_{\leq \alpha}$ in~\Cref{eq:band-alpha}, then allows us to prove the following claim.

\begin{claim}\label{clm:cover}
	After the deletion step at time $t$ and for any $\alpha$, 
	$
	\sum_{\ei{i} \in \QS \cap \Band_{\leq \alpha}} g_i \leq \ell \cdot 2^{\alpha+1}.
	$ 
\end{claim}
\begin{proof}
	Fix an element $\ei{i}$ in $\QS \cap \Band_{\leq \alpha}$ and recall the definition of covered elements $C(\ei{i})$ in~\Cref{sec:inv}. In~\Cref{alg:simple}, we may only delete an element $\ei{i-1}$ 
	if $\bv(\ei{i-1}) \leq \bv(\ei{i})$ at the time of deletion. By~\Cref{obs:stable}, this continues to be the case throughout, implying that all of $C(\ei{i})$ belong to $\Band_{\leq \alpha}$ (where $\alpha$ is the current band-value of $\ei{i}$). We now have, 
	\begin{align*}
		\sum_{\ei{i} \in \QS \cap \Band_{\leq \alpha}} g_i &= \sum_{\ei{i} \in \QS \cap \Band_{\leq \alpha}} \card{C(\ei{i})} \tag{as $g_i = \card{C(\ei{i})}$ by~\Cref{obs: disjoint coverage}} \\
		&= |\bigcup_{\ei{i} \in \QS \cap \Band_{\leq \alpha}} C(\ei{i})| \tag{as $C(\ei{i})$'s for $\ei{i}$ in $\QS$ are disjoint by~\Cref{obs: disjoint coverage}} \\
		&\leq \card{\Band_{\leq \alpha}} \tag{as elements of $C(\ei{i})$'s belong to $\Band_{\leq \alpha}$} \\
		&\leq \ell \cdot 2^{\alpha+1}, \tag{by the bound in~\Cref{eq:band-alpha}} 
	\end{align*}
	finalizing the proof. \Qed{\Cref{clm:cover}} 
	
\end{proof}
By plugging the bounds in~\Cref{clm:cover} into~\Cref{eq:lem-type-2}, we have that, 
\[
	\card{X_{\alpha}} \cdot (2^{\alpha-1}-2) \leq \ell \cdot 2^{\alpha+2},
\]
which in turn implies $\card{X_{\alpha}} = O(\ell)$ when $\alpha>2$ (and for $\alpha \leq 2$ the total number of elements are at most $8 \cdot \ell = O(\ell)$ by \Cref{eq:band-alpha} anyway). Considering there are only $O(\log{t})$ choices of sets $X_{\alpha}$ (since the number of different bands is $O(\log{t})$ by~\Cref{eq:band-alpha}), 
we obtain the final result. 
\end{proof}

~\Cref{lem:type-1,lem:type-2} imply that after the deletion step at time $t$, the size of \QS is $O(\ell \cdot \log^2 t)$. 
This implies that the total size of the summary would be $O(\frac{1}{\eps} \cdot \log^2{\!(\eps n)})$ and concludes the proof of \Cref{lem:greedy-space} since $t = O(\eps n)$ and $\ell = O(1/\eps)$ (\Cref{def:time}). 
An important remark is in order.
\begin{remark}
\label{rem:greedy}
\emph{
In the space analysis, we bounded the number of \Tt elements in the summary after the deletion step by $O(\ell \cdot \log{t}) = O((1/\eps)\cdot\log{\!(\eps n)})$, which is quite efficient on is own. However, in the worst case, there can be  $O(\log t)$ \To elements for every \Tt element as shown in \Cref{fig:tightness}. Thus, \Cref{alg:simple} may end up storing as many as $O(\ell \cdot \log^2t) = O((1/\eps)\cdot\log^2{\!(\eps n)})$ \To elements in the summary, leading to its sub-optimal space requirement.   
}
\end{remark}

\begin{figure}[!ht]
\centering
\hspace{-1cm}\resizebox{1\textwidth}{!}{\tikzset{every picture/.style={line width=0.75pt}} 

\begin{tikzpicture}[x=0.75pt,y=0.75pt,yscale=-1,xscale=1]

\draw [line width=1.5]    (138.21,348.43) -- (653.11,348.43) (165.21,347.43) -- (165.21,349.43)(192.21,347.43) -- (192.21,349.43)(219.21,347.43) -- (219.21,349.43)(246.21,347.43) -- (246.21,349.43)(273.21,347.43) -- (273.21,349.43)(300.21,347.43) -- (300.21,349.43)(327.21,347.43) -- (327.21,349.43)(354.21,347.43) -- (354.21,349.43)(381.21,347.43) -- (381.21,349.43)(408.21,347.43) -- (408.21,349.43)(435.21,347.43) -- (435.21,349.43)(462.21,347.43) -- (462.21,349.43)(489.21,347.43) -- (489.21,349.43)(516.21,347.43) -- (516.21,349.43)(543.21,347.43) -- (543.21,349.43)(570.21,347.43) -- (570.21,349.43)(597.21,347.43) -- (597.21,349.43)(624.21,347.43) -- (624.21,349.43)(651.21,347.43) -- (651.21,349.43) ;
\draw [shift={(657.11,348.43)}, rotate = 180] [fill={rgb, 255:red, 0; green, 0; blue, 0 }  ][line width=0.08]  [draw opacity=0] (11.61,-5.58) -- (0,0) -- (11.61,5.58) -- cycle    ;
\draw [line width=1.5]    (138.21,199.24) -- (138.21,348.43) (139.21,223.24) -- (137.21,223.24)(139.21,247.24) -- (137.21,247.24)(139.21,271.24) -- (137.21,271.24)(139.21,295.24) -- (137.21,295.24)(139.21,319.24) -- (137.21,319.24)(139.21,343.24) -- (137.21,343.24) ;
\draw [shift={(138.21,195.24)}, rotate = 90] [fill={rgb, 255:red, 0; green, 0; blue, 0 }  ][line width=0.08]  [draw opacity=0] (11.61,-5.58) -- (0,0) -- (11.61,5.58) -- cycle    ;
\draw  [fill={rgb, 255:red, 232; green, 232; blue, 232 }  ,fill opacity=1 ] (146.52,217.68) -- (162.2,217.68) -- (162.2,226.64) -- (146.52,226.64) -- cycle ;
\draw  [fill={rgb, 255:red, 232; green, 232; blue, 232 }  ,fill opacity=1 ] (169.92,239.29) -- (185.61,239.29) -- (185.61,248.25) -- (169.92,248.25) -- cycle ;
\draw  [fill={rgb, 255:red, 232; green, 232; blue, 232 }  ,fill opacity=1 ] (219.43,281.6) -- (235.12,281.6) -- (235.12,290.56) -- (219.43,290.56) -- cycle ;
\draw  [fill={rgb, 255:red, 232; green, 232; blue, 232 }  ,fill opacity=1 ] (242.84,302.3) -- (258.52,302.3) -- (258.52,311.26) -- (242.84,311.26) -- cycle ;
\draw  [fill={rgb, 255:red, 232; green, 232; blue, 232 }  ,fill opacity=1 ] (292.35,216.78) -- (308.03,216.78) -- (308.03,225.74) -- (292.35,225.74) -- cycle ;
\draw  [fill={rgb, 255:red, 232; green, 232; blue, 232 }  ,fill opacity=1 ] (315.76,238.39) -- (331.44,238.39) -- (331.44,247.35) -- (315.76,247.35) -- cycle ;
\draw  [fill={rgb, 255:red, 232; green, 232; blue, 232 }  ,fill opacity=1 ] (365.27,280.7) -- (380.95,280.7) -- (380.95,289.66) -- (365.27,289.66) -- cycle ;
\draw  [fill={rgb, 255:red, 232; green, 232; blue, 232 }  ,fill opacity=1 ] (388.67,301.4) -- (404.36,301.4) -- (404.36,310.36) -- (388.67,310.36) -- cycle ;
\draw  [fill={rgb, 255:red, 232; green, 232; blue, 232 }  ,fill opacity=1 ] (476.89,216.78) -- (492.58,216.78) -- (492.58,225.74) -- (476.89,225.74) -- cycle ;
\draw  [fill={rgb, 255:red, 232; green, 232; blue, 232 }  ,fill opacity=1 ] (500.3,238.39) -- (515.98,238.39) -- (515.98,247.35) -- (500.3,247.35) -- cycle ;
\draw  [fill={rgb, 255:red, 232; green, 232; blue, 232 }  ,fill opacity=1 ] (549.81,280.7) -- (565.49,280.7) -- (565.49,289.66) -- (549.81,289.66) -- cycle ;
\draw  [fill={rgb, 255:red, 232; green, 232; blue, 232 }  ,fill opacity=1 ] (573.21,301.4) -- (588.9,301.4) -- (588.9,310.36) -- (573.21,310.36) -- cycle ;
\draw  [fill={rgb, 255:red, 232; green, 232; blue, 232 }  ,fill opacity=1 ] (145,376.12) -- (162.02,376.12) -- (162.02,386.03) -- (145,386.03) -- cycle ;
\draw  [color={rgb, 255:red, 92; green, 92; blue, 92 }  ,draw opacity=1 ][fill={rgb, 255:red, 130; green, 130; blue, 130 }  ,fill opacity=1 ] (268.94,324.81) -- (284.63,324.81) -- (284.63,333.77) -- (268.94,333.77) -- cycle ;
\draw  [color={rgb, 255:red, 92; green, 92; blue, 92 }  ,draw opacity=1 ][fill={rgb, 255:red, 130; green, 130; blue, 130 }  ,fill opacity=1 ] (414.78,323.91) -- (430.46,323.91) -- (430.46,332.87) -- (414.78,332.87) -- cycle ;
\draw  [color={rgb, 255:red, 92; green, 92; blue, 92 }  ,draw opacity=1 ][fill={rgb, 255:red, 130; green, 130; blue, 130 }  ,fill opacity=1 ] (599.32,323.91) -- (615,323.91) -- (615,332.87) -- (599.32,332.87) -- cycle ;
\draw  [color={rgb, 255:red, 92; green, 92; blue, 92 }  ,draw opacity=1 ][fill={rgb, 255:red, 130; green, 130; blue, 130 }  ,fill opacity=1 ] (226.41,377.11) -- (243.42,377.11) -- (243.42,387.02) -- (226.41,387.02) -- cycle ;
\draw   (100.91,219.61) .. controls (96.24,219.61) and (93.91,221.94) .. (93.91,226.61) -- (93.91,268.45) .. controls (93.91,275.12) and (91.58,278.45) .. (86.91,278.45) .. controls (91.58,278.45) and (93.91,281.78) .. (93.91,288.45)(93.91,285.45) -- (93.91,330.3) .. controls (93.91,334.97) and (96.24,337.3) .. (100.91,337.3) ;
\draw    (284.53,341.3) .. controls (313.51,358.24) and (336.51,378.24) .. (402.51,384.24) ;
\draw    (504.51,391.24) .. controls (555.51,386.24) and (584.51,358.24) .. (603.51,336.24) ;
\draw    (419.51,336.24) .. controls (422.51,367.24) and (432.51,375.24) .. (443.51,382.24) ;
\draw  [dash pattern={on 0.84pt off 2.51pt}]  (483.53,387.9) .. controls (492.07,389.91) and (547.51,368.24) .. (560.51,334.24) ;
\draw  [dash pattern={on 0.84pt off 2.51pt}]  (468.51,382.24) .. controls (490.51,380.24) and (502.51,362.24) .. (509.51,334.24) ;

\draw (422.34,255.06) node [anchor=north west][inner sep=0.75pt]  [font=\huge,xscale=1.3,yscale=1.3] [align=left] {$\displaystyle \dotsc $};
\draw (192.87,247.93) node [anchor=north west][inner sep=0.75pt]  [font=\large,rotate=-45,xscale=1.3,yscale=1.3] [align=left] {$\displaystyle \dotsc $};
\draw (341.7,246.03) node [anchor=north west][inner sep=0.75pt]  [font=\large,rotate=-45,xscale=1.3,yscale=1.3] [align=left] {$\displaystyle \dotsc $};
\draw (525.24,245.03) node [anchor=north west][inner sep=0.75pt]  [font=\large,rotate=-45,xscale=1.3,yscale=1.3] [align=left] {$\displaystyle \dotsc $};
\draw (614.83,364.8) node [anchor=north west][inner sep=0.75pt]  [font=\footnotesize,xscale=1.3,yscale=1.3] [align=left] {{\footnotesize Position}};
\draw (29.8,267.07) node [anchor=north west][inner sep=0.75pt]  [font=\footnotesize,xscale=1.3,yscale=1.3] [align=left] {{\footnotesize $\displaystyle O(\log t)$}\\{\footnotesize bands}};
\draw (384.51,395.24) node [anchor=north west][inner sep=0.75pt]  [font=\footnotesize,xscale=1.3,yscale=1.3] [align=left] {\begin{minipage}[lt]{77.67844000000001pt}\setlength\topsep{0pt}
	\begin{center}
	{\footnotesize $\displaystyle O(\log t)$ type-2 \ elements}
	\end{center}
	
	\end{minipage}};
\draw (165.56,374.16) node [anchor=north west][inner sep=0.75pt]  [font=\footnotesize,xscale=1.3,yscale=1.3] [align=left] {{\footnotesize type-1}};
\draw (247.94,374.16) node [anchor=north west][inner sep=0.75pt]  [font=\footnotesize,xscale=1.3,yscale=1.3] [align=left] {{\footnotesize type-2}};
\draw (40,189.7) node [anchor=north west][inner sep=0.75pt]  [font=\footnotesize,xscale=1.3,yscale=1.3] [align=left] {{\footnotesize Band-values}};

\end{tikzpicture}}
\caption{ Each block in the figure represents an element stored in \QS. The ranks of elements increase along the horizontal axis. The figure illustrates why \Cref{alg:simple} might end up storing $O(\ell \log^2 t)$ elements in \QS.  By \Cref{lem:type-2}, there could be as many as  $O(\ell \cdot \log t)$ \Tt elements in \QS. Each of these \Tt elements could be preceded by a sequence of $O(\log t)$ \To elements (since there are $O(\log t)$ bands). }
\label{fig:tightness}
\end{figure}

As we say in~\Cref{rem:greedy}, one source of sub-optimality of~\Cref{alg:simple} was the large number of \To elements stored in the summary compared to the \Tt ones. A way to improve this is to \emph{actively} try to decrease  the number of stored \To elements. Roughly speaking, this is done by deleting \Tt elements from the summary only if it does 
not contribute to creating a long sequence of \To elements (e.g., as in~\Cref{fig:tightness}). Note that our \Cref{alg:weighted-GK} was precisely doing this by only deleting an element only if its entire segment can be deleted along with it.

We conclude this section with the following remark; this will be helpful for us in implementing the algorithm efficiently. 

\begin{remark} [Delaying Deletions]
\label{rem: Delaying-Deletions-simple}
Suppose in \Cref{alg:simple}, instead of running the \textbf{deletion steps} in Line (ii) after each chunk, we  run them after inserting $k$ consecutive chunks for some integer $k > 1$; then, the space complexity of the algorithm only increases by an $O(k \cdot \ell)$ additive term.
\end{remark}
Note that this is simply because after running the deletion step at any time $t$, the number of elements reduces to $O(\ell \cdot \log^2 t)$ as proved earlier. Thus, the extra space is only due to storing the extra $O(k \cdot \ell)$ elements in \QS that are inserted but not deleted yet. 

\subsection{An Efficient Implementation of \texorpdfstring{\Cref{alg:simple}}{the greedy algorithm} }\label{subsec: fast-imp-simple}

In this section, we present an efficient implementation of \Cref{alg:simple}. This is very similar to the implementation of \Cref{alg:weighted-GK} described in \Cref{subsec: fast-imp-weighted-GK}. Formally, we show the following:
\begin{lemma}\label{lem:greedy-time}
For any $\eps > 0$ and a stream of length $n$, there is an implementation of \Cref{alg:simple} that takes $O\big(\log(1/\eps)+ \log \log (\eps n)\big)$ worst-case processing time per element.
\end{lemma}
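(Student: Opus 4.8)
The plan is to store $\QS$ in a balanced binary search tree keyed by element value in which every node, besides the element and its $(g,\Delta,t_0)$ data, also stores the sum of the $g$-values of its subtree; alongside it I keep $\QS$ as a sorted doubly linked list so that the predecessor and successor of any element are available in $O(1)$. With the subtree-sum augmentation, the predecessor search in step $(i)$ of \textbf{Insert}, the tree updates in \textbf{Insert} and \textbf{Delete}, and the prefix-sum computation of $\rmin,\rmax$ of any stored element all cost $O(\log\card{\QS})$. By \Cref{lem:greedy-space} we have $\card{\QS}=O((1/\eps)\log^2(\eps n))$, hence $\log\card{\QS}=O(\log(1/\eps)+\log\log(\eps n))$, so it is enough to show that processing an arriving element causes only a constant number of tree operations in the worst case. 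Insertions are then immediate --- one predecessor query, one tree insertion, and the $O(1)$ arithmetic of \Cref{obs:g-del}, using that the band value of any element at the current time is $O(1)$-computable from $t_0$ by \Cref{obs: bvalue-compute} --- and a quantile query is a single augmented-tree search exactly as in \Cref{lem:search}. The whole difficulty is executing the deletion step of Line $(ii)$ of \Cref{alg:simple} cheaply.

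For one execution of the deletion step I would run the greedy ``delete to a fixed point'' process with a worklist instead of rescanning $\QS$: seed a worklist with all current elements, then repeatedly pop an element $\ei{i}$, test the two deletion conditions in $O(1)$ from its two list neighbours and their stored $(g,\Delta,t_0)$, and, if $\ei{i}$ is deletable, call \textbf{Delete}$(\ei{i})$ in $O(\log\card{\QS})$, splice it out of the list, and re-push its (at most two) neighbours $\ei{i-1},\ei{i+1}$ --- the only elements whose conditions can have changed. Each element is seeded once and re-pushed only when a neighbour is deleted, so one execution makes $O(\card{\QS})$ pops, of which only the (at most $\card{\QS}$) deleting ones cost more than $O(1)$; thus a single execution takes $O(\card{\QS}\cdot\log\card{\QS})$ time, and it is correct because reaching \emph{any} state with no deletable element is all the analysis of \Cref{lem:greedy-space} uses.

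To get a worst-case per-element bound I would de-amortize. Group the chunks into phases, phase $r$ consisting of $K_r=\Theta(\card{\QS}/\ell)=\Theta(\log^2(\eps n))$ consecutive chunks (with $\card{\QS}$ measured at the phase's start); during phase $r$ we perform the $K_r\ell$ insertions one by one and, interleaved at a fixed budget of $O(1)$ worklist pops per arriving element, carry out one execution of the above deletion process on the current $\QS$. Since an execution costs $O(\card{\QS}\log\card{\QS})$ and the phase spans $K_r\ell=\Theta(\card{\QS})$ insertions, the $O(1)$-pops-per-element budget suffices to finish it inside the phase, giving worst-case time $O(\log\card{\QS})=O(\log(1/\eps)+\log\log(\eps n))$ per element. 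For space and correctness, letting the deletion step lag by (up to) one phase is the setting of \Cref{rem: Delaying-Deletions-simple} with lag $O(K_r)=O(\log^2(\eps n))$ chunks: within a phase $\QS$ grows by only $K_r\ell=\Theta(\card{\QS})$ before the pending execution catches up, so it never exceeds $O((1/\eps)\log^2(\eps n))$, and since \Cref{inv:g-delta} is preserved by every \textbf{Insert} and \textbf{Delete}, $\QS$ stays an $\eps$-approximate summary throughout.

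The step I expect to be the main obstacle is exactly this de-amortization: one must rule out a feedback loop in which the temporarily larger $\QS$ produces proportionally more deletable elements and a growing backlog. What breaks the loop is that the backlog consists only of elements \emph{already counted in} $\QS$ and that the lag $K_r$ is pinned to $\Theta(\card{\QS}/\ell)$, so the surplus accumulated per phase is a constant factor of $\card{\QS}$ itself; making this quantitative --- fixing the constants in $K_r$ and in the pop budget and re-deriving the size bound of \Cref{lem:greedy-space} in the presence of the lag --- is the technical heart, and the same scheme will underlie the implementation of \Cref{alg:weighted-GK}.
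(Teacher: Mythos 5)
Your overall route is the same as the paper's: store $\QS$ in a balanced BST so that \textbf{Insert}/\textbf{Delete} cost $O(\log s)=O(\log(1/\eps)+\log\log(\eps n))$, run the deletion step only once every $\Theta(\log^2(\eps n))$ chunks (invoking \Cref{rem: Delaying-Deletions-simple} so the space bound survives), observe that each execution of the deletion step costs $O(s\log s)$, and spread that cost over the $\Theta(s)$ insertions before the next execution. Your worklist with neighbour re-pushing is a harmless variant of the paper's single backward scan of the sorted list (the paper's scan from largest to smallest already reaches a fixed point in one pass, since a deletion only increases the $g$-value of the \emph{successor}, which has been processed already), and your subtree-$g$-sum augmentation is an alternative to the paper's $O(s)$ query-time preprocessing; neither changes the bounds.

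The one place your argument is incomplete is exactly the place you flag: de-amortizing while \emph{interleaving live insertions into $\QS$} during the spread-out deletion pass. This is not just a matter of fixing constants. A newly arrived element is inserted between two stored elements and changes the successor (hence the condition-(1) and condition-(2) data) of an element your worklist may already have discharged, and elements popped early in the phase are tested against a smaller value of $t$ than holds when the phase ends; so at the end of the phase you cannot directly assert that every surviving element is \To or \Tt at the current time, which is what the proof of \Cref{lem:greedy-space} needs. The paper sidesteps all of this with a simpler device (see the proof of \Cref{lem: worst case greedy}): during the first half of the window it runs the deletion pass on a \emph{static} snapshot of $\QS$ and diverts arriving elements into a buffer $B$ of size $O(s)$ (a constant-factor space overhead), and during the second half it drains $B$ by inserting two elements per time step, each in $O(\log s)$. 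This removes the feedback loop you worry about entirely, because insertions and the deletion pass never touch the structure concurrently. I would recommend replacing your interleaved scheme with this buffer-and-drain scheme; with that substitution your proof matches the paper's.
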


\paragraph*{\textbf{Part \RNum{1}:  Storing \QS:}}
We store our summary $\QS$ as a balanced binary search tree (BST), where each node contains an element of \QS along with its metadata. For each element $e$ we store $g(e), \Delta(e)$ and $t_0(e)$. The sorting key of the BST is the value of the elements. \textbf{Insert} and \textbf{Delete} respectively, insert and delete elements from the BST.

\paragraph*{\textbf{Part \RNum{2}: Performing a Deletion Step:}} We perform a deletion step efficiently (in time proportional to the summary size) using the following algorithm.
\begin{tbox}
\textbf{Algorithm. Performing a deletion step efficiently:}
\label{tbox:deletion-step-greedy}
\begin{enumerate}
    \item Perform an inorder traversal of $\QS$ (which is a BST) to obtain a temporary (doubly-linked) list of elements sorted by value. 
    \item Compute $\bv$ of all elements of \QS using \Cref{obs: bvalue-compute}. 
    \item Traverse the list from larger elements to smaller ones. For each element $\ei{i}$, delete it from BST (as well as the list), if it satisfies both the deletion conditions mentioned in \Cref{alg:simple}. 
\end{enumerate}
\end{tbox}
  First, we give an algorithm with a fast amortized update time and then using standard techniques, show how it can also be implemented to have the same asymptotic worst-case update time.

\begin{Implementation}\label{implementation: greedy-simple}
\textbf{Efficient Implementation of}~\Cref{alg:simple}

	\begin{itemize}
	\item Initialize $\QS$ to be an empty balanced binary search tree.
	
	\item $\DeleteTime \leftarrow 2$.

	\item For each time step $t$ with arriving items $(x^{(t)}_1,\ldots,x^{(t)}_\ell)$: 
	\begin{enumerate}[label=$(\roman*)$]
		\item 
		Run $\textbf{Insert}(x^{(t)}_j)$ for each element of the chunk. 
		\item If $(t = \DeleteTime)$:
		\begin{itemize}
		    \item 	Execute the deletion step and update $\DeleteTime \leftarrow \DeleteTime + \ceil{\log^2t}$.
		\end{itemize}
	
	\end{enumerate}
	\end{itemize}
\end{Implementation}

\paragraph*{Space Analysis.} 
The space complexity of \Cref{implementation: greedy-simple} is $O(\frac{1}{\eps} \log^2(\eps n))$. This is simply because, after the deletion step at any time $t$, we delay the deletion by $\ceil{\log^2 t}$ time steps. This, by~\Cref{rem: Delaying-Deletions-simple}, implies that the space used only increases by an additive term of $O(\ell \log ^2 t)=O(\frac{1}{\eps} \log^2(\eps n))$, as $t = O(\eps n)$ and $\ell = O(1/\eps)$ (\Cref{def:time}).
 
 \paragraph*{Time Analysis.}
Since \QS is stored as a BST, performing an $\textbf{Insert}$ or $\textbf{Delete}$ operation on \QS takes $O(\log s)$, where $s$ is the size of \QS. By the space analysis presented above, $s = O((1/\eps) \cdot \log^2{\!(\eps n)})$. This leads to the following observation:

\begin{observation} \label{obs: ins del time greedy}
Over a stream of length $n$, the total time taken by \Cref{implementation: greedy-simple} to perform all \textbf{Insert} and \textbf{Delete} operations is $O( n \cdot (\log (1/\eps) + \log \log (\eps n)))$. 
\end{observation}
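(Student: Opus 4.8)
The plan is to bound the total number of \textbf{Insert} and \textbf{Delete} calls made by \Cref{implementation: greedy-simple} over the entire stream, and multiply by the worst-case cost of a single such call on the BST. For the per-operation cost: since $\QS$ is maintained as a balanced BST, locating the relevant node and splicing it in or out (together with rebalancing) costs $O(\log s)$, where $s = |\QS|$ at the time of the operation. By the space analysis of \Cref{implementation: greedy-simple} established immediately above, $s = O\big((1/\eps)\log^2(\eps n)\big)$ at every point in the stream, so every \textbf{Insert} or \textbf{Delete} costs
\[
O\!\left(\log\!\Big(\tfrac{1}{\eps}\log^2(\eps n)\Big)\right) \;=\; O\!\big(\log(1/\eps) + \log\log(\eps n)\big),
\]
using $\log(ab)=\log a + \log b$ and $\log\log^2 x = O(\log\log x)$.

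Next I would count the operations. There is exactly one \textbf{Insert} per arriving element, and together with the single $+\infty$ sentinel and the $O(1/\eps)$ padding elements of \Cref{def:time}, the number of \textbf{Insert} calls is $O(n)$ (in the standard regime $1/\eps \le n$). For \textbf{Delete}, the key observation is that a \textbf{Delete} can only remove a node that is currently present in $\QS$, and every element enters $\QS$ exactly once — via its unique \textbf{Insert} — and is never re-inserted. Hence each element is deleted at most once over the whole stream, so the total number of \textbf{Delete} calls is at most the number of \textbf{Insert} calls, i.e. $O(n)$ as well. (The cost of the inorder traversals and of the $O(1)$-time band-value computations performed during deletion steps, via \Cref{obs: bvalue-compute}, is not an \textbf{Insert}/\textbf{Delete} cost and is accounted for separately.)

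Combining the two bounds, the total time spent on \textbf{Insert} and \textbf{Delete} operations is $O(n)\cdot O\big(\log(1/\eps)+\log\log(\eps n)\big) = O\big(n(\log(1/\eps)+\log\log(\eps n))\big)$, as claimed. I do not anticipate a real obstacle here; the only mildly delicate point is the bound on the number of deletions, where it is essential that deletions cannot ``reappear'' — guaranteed because \textbf{Insert} is invoked exactly once per arriving element and \textbf{Delete} merely removes an already-present node — while the rest is a routine substitution of the space bound into $\log s$.
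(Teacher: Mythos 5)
Your proposal is correct and follows exactly the paper's argument: each element is inserted and deleted at most once, and each BST operation costs $O(\log s) = O(\log(1/\eps) + \log\log(\eps n))$ by the space bound $s = O((1/\eps)\log^2(\eps n))$. The extra care you take in counting deletions and separating out the traversal/band-value costs is consistent with, but not beyond, what the paper does.
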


The observation follows from the fact that each element is inserted and deleted at most once from \QS and each insertion or deletion takes $O(\log s) = O\big( \log (1/\eps)+ \log \log (\eps n) \big)$ time. The only time taken by \Cref{implementation: greedy-simple}, \textbf{not} taken into account in~\Cref{obs: ins del time greedy} is the part that determines \emph{which elements} to delete, which we will bound below.

\begin{claim}\label{lem: time to decide elements to delete-greedy}
Over a stream of length $n$, the total time taken by \Cref{implementation: greedy-simple} to decide which elements need to be deleted over all the executed deletion steps is $O(n)$. 
\end{claim}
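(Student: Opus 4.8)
The plan is to bound, for each deletion step executed at some time $t$, the total work done by the three-part algorithm in the box, and then sum over all deletion steps. A single deletion step does the following: (1) an inorder traversal of the BST $\QS$, (2) computing $\bv$ of each element of $\QS$ using \Cref{obs: bvalue-compute}, and (3) a single backward pass over the resulting sorted list, where for each element we test the two deletion conditions and possibly delete it. Steps (1) and (3) are linear in the current summary size $s$, and step (2) is also linear in $s$ since each $\bv$ computation is $O(1)$ by \Cref{obs: bvalue-compute} — crucially, the deletion conditions (1) $\bv(\ei{i}) \le \bv(\ei{i+1})$ and (2) $g_i + g_{i+1} + \Delta_{i+1} \le t$ can each be checked in $O(1)$ time given the precomputed band-values and the stored $g,\Delta$ values of an element and its successor in the list, and the actual removal of an element from the doubly-linked list is also $O(1)$ (the BST deletions themselves are already accounted for in \Cref{obs: ins del time greedy}, so here we only count the list-level bookkeeping, which is $O(1)$ per element). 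Hence a deletion step executed when the summary has size $s$ costs $O(s)$ time outside of the BST operations already charged elsewhere.

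Next I would use the space bound. By the space analysis of \Cref{implementation: greedy-simple} (or directly \Cref{lem:greedy-space} together with \Cref{rem: Delaying-Deletions-simple}), at every point during the stream $s = O(\ell \cdot \log^2(\eps n)) = O((1/\eps)\log^2(\eps n))$; in particular this holds right before every deletion step. So each deletion step costs $O((1/\eps)\log^2(\eps n))$ time. It remains to count how many deletion steps are performed over a stream of length $n$. By the update rule $\DeleteTime \leftarrow \DeleteTime + \ceil{\log^2 t}$, the gaps between consecutive deletion steps grow like $\log^2 t$, and since the stream lasts for $t = O(\eps n)$ time steps, a telescoping/counting argument shows the number of deletion steps is $O\!\left(\frac{\eps n}{\log^2(\eps n)}\right)$. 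Multiplying the per-step cost by the number of steps gives total time $O\!\left(\frac{\eps n}{\log^2(\eps n)} \cdot \frac{\log^2(\eps n)}{\eps}\right) = O(n)$, as claimed.

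The main subtlety — and the step I would be most careful about — is the counting of deletion steps, because the gap $\ceil{\log^2 t}$ varies with $t$; the clean way is to observe that after the deletion step at time $t$ the next one occurs no earlier than $t + \log^2 t$, so at most $O\!\left(\frac{T}{\log^2 T}\right)$ deletion steps occur among the first $T = O(\eps n)$ time steps (one can make this rigorous by splitting the timeline into dyadic ranges $[2^k, 2^{k+1})$, within which the gap is $\Theta(k^2)$ so there are $O(2^k/k^2)$ steps, and summing the geometric-type series $\sum_k 2^k/k^2 = O(2^{\log T}/(\log T)^2)$). A second point to get right is that in step (3), deleting an element can change the $g$-value of its successor (by \Cref{obs:g-del}, $g_{i+1} \leftarrow g_{i+1} + g_i$), so when we later test the successor's condition we must use its updated $g$-value; since we traverse from larger to smaller elements and each deletion only affects the already-visited successor's stored value in $O(1)$ time, this does not affect the linear running time and the conditions remain correctly evaluated. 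Everything else is routine, so stitching these together yields the $O(n)$ bound.
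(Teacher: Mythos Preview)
Your proposal is correct and follows essentially the same approach as the paper: bound the cost of a single deletion step by $O(s) = O((1/\eps)\log^2(\eps n))$, bound the number of deletion steps by $O(\eps n/\log^2(\eps n))$ via the dyadic decomposition of the timeline (which is exactly what the paper does in its \Cref{clm:number-of-merges }), and multiply. Your additional remarks about separating BST-deletion cost from list bookkeeping and about the backward pass correctly handling updated $g$-values are sound refinements that the paper leaves implicit.
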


\begin{proof}

The time taken to decide which elements need to be deleted inside a deletion step is $O(s)=O(\frac{1}{\eps} \log^2(\eps n))$. This is because: creating a linked list, followed by computation of $\bv$ of all elements can be computed in $O(s)$ time. Then, making a linear pass over the list from the largest to the smallest element (to check if the deletion conditions hold) requires $O(s)$ time.

Thus, we now focus on bounding the number of deletion steps that we may execute over the entire stream. Formally,
\begin{claim}\label{clm:number-of-merges }
Over a stream of length $n$, the number of deletion steps performed by \Cref{implementation: greedy-simple} is $O(\eps n/ \log^2 (\eps n))$.
\end{claim}
\begin{proof}
We partition all time steps into intervals of type $[2^i, 2^{i+1})$, for $1\leq i\leq \ceil{\log (\eps n)}$. Since we wait for $\ceil{\log^2 t }$ time steps after performing the deletion step at time step $t$, there are at least $i^2$ time steps between two consecutive deletion steps that happen in the time interval $[2^i, 2^{i+1})$. Given that the length of the interval is $2^i$, the number of times we perform a deletion step in the time interval $[2^i,2^{i+1})$, is at most $1+\frac{2^i}{i^2}$. Thus,
\[
\# \text{ deletion steps performed} \leq \sum_{i=1}^{\ceil{\log(\eps n)}}1+\frac{2^i}{i^2}=O\bigg(\,\frac{\eps n}{ \log^2 (\eps n)}\,\bigg). \Qed{\Cref{clm:number-of-merges }}
\] 
By \Cref{clm:number-of-merges }, we perform $O\big(\frac{\eps n}{\log^2(\eps n)}\big)$ deletion steps, and in each one we spend $O\big(\frac{1}{\eps} \log^2(\eps n)\big)$ time to decide which elements need to be deleted, and the lemma follows.
\end{proof} 

\end{proof}
\Cref{obs: ins del time greedy,lem: time to decide elements to delete-greedy} clearly imply that the total time taken by \Cref{implementation: greedy-simple} over a stream of length $n$ is $O\big(n \cdot(\log(1/\eps)+\log\log(\eps n)\big)$. Thus, the amortized update time per element is $O\big(\log(1/\eps)+\log\log(\eps n)\big)$.

\paragraph*{Worst-case update time.}We now look at the worst-case update time for any element. We might have to delete $O(s)$ elements from \QS in the worst case when a deletion step is performed which would take time $O(s \log s)$. To reduce the worst-case update time, we propose a minor modification to \Cref{implementation: greedy-simple}. We notice that a deletion step is next called after $\ceil{\log^2 t}$ time steps. What we do to reduce the worst-case time is spread the time it takes to perform a deletion step over all elements before the next deletion step. Formally, we have the following:
\begin{claim} \label{lem: worst case greedy}
There is an implementation of \Cref{alg:simple} that has worst-case update time $O( \log \frac{1}{\eps} + \log \log (\eps n))$ per element, without affecting its asymptotic space complexity.
\end{claim}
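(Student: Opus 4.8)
The plan is to keep \Cref{implementation: greedy-simple}, which already achieves the desired \emph{amortized} bound, and to de-amortize it by spreading the cost of each deletion step evenly over the element arrivals that occur before the next deletion step is triggered. Recall that \Cref{implementation: greedy-simple} performs a deletion step at time $t$ and schedules the next one $\ceil{\log^2 t}$ chunks later, i.e.\ after $\ell\cdot\ceil{\log^2 t}$ further arrivals, while a deletion step touches at most $O(\ell\log^2 t)$ elements of $\QS$ at cost $O(\log s)$ each (computing $\bv$ of a visited element is $O(1)$ by \Cref{obs: bvalue-compute} and does not dominate). Hence performing only $O\!\big(\ell\log^2 t\big)/\big(\ell\ceil{\log^2 t}\big)=O(1)$ of these deletions per arriving element is enough to finish one deletion step before the next is due. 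Concretely, I would store $\QS$ as a balanced BST augmented with the sorted doubly linked list of its elements (both maintained in $O(\log s)$ time by \textbf{Insert}/\textbf{Delete}), together with a ``scan pointer'' into that list; on each arriving element $x$ the implementation $(i)$ runs \textbf{Insert}$(x)$, and $(ii)$ advances the scan pointer by a constant number of positions from larger to smaller elements, and for each visited $\ei{i}$ deletes it (from the BST and the list) whenever it \emph{currently} satisfies conditions $(1)$ and $(2)$ of \Cref{alg:simple} evaluated at the current time step and against its current neighbours; when the pointer reaches the smallest element it is reset to the largest, which marks the end of one incrementally executed deletion step.

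The time bound is then immediate: each arrival costs one \textbf{Insert} plus $O(1)$ \textbf{Delete}s, so $O(\log s)$ worst-case time; and by the space bound argued below $s=O\!\big(\tfrac1\eps\log^2(\eps n)\big)$, so this is $O(\log(1/\eps)+\log\log(\eps n))$ per element, as claimed.

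For correctness, observe first that every individual deletion performed is legal in the sense of \Cref{alg:simple}: we delete $\ei{i}$ only when $g_i+g_{i+1}+\Delta_{i+1}\le t$ holds at that moment, which by \Cref{obs:g-del} yields $g_{i+1}+\Delta_{i+1}\le t$ afterwards, so \Cref{inv:g-delta} is preserved and, by \Cref{lem:search}, $\QS$ remains an $\eps$-approximate quantile summary throughout. Likewise, because we delete $\ei{i}$ only when $\bv(\ei{i})\le\bv(\ei{i+1})$ at deletion time, \Cref{obs:stable} still applies and the band-containment property underlying \Cref{clm:cover} (every covered element lies in the band of its coverer) continues to hold, so the size analysis of \Cref{lem:type-1,lem:type-2} applies verbatim to the elements examined during the current sweep. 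For the space bound I would argue exactly as for \Cref{rem: Delaying-Deletions-simple}: at any time $t$ the elements of $\QS$ split into those examined during the current (possibly unfinished) sweep that survived it --- these are \To/\Tt elements with respect to a time step in $[\,t-\ceil{\log^2 t},\,t\,]$, hence $O(\ell\log^2 t)$ of them by \Cref{lem:type-1,lem:type-2} --- and those inserted since the current sweep began, of which there are at most $\ell\cdot\ceil{\log^2 t}=O(\ell\log^2 t)$; together this gives $|\QS|=O(\ell\log^2 t)=O\!\big(\tfrac1\eps\log^2(\eps n)\big)$, i.e.\ the space complexity is unaffected. (The same bound at the start of each sweep also justifies the $O(1)$-per-arrival rate, since $s$ can only grow by a constant factor during a sweep of length $\ceil{\log^2 t}$ chunks; this is an easy induction on sweeps.)

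The step I expect to be the main obstacle is precisely this interaction between the background deletion scan and the insertions happening concurrently with it: one must check that re-evaluating the deletion conditions against the live summary (rather than a frozen snapshot) never invalidates a deletion and never inflates $\QS$, that fresh elements inserted ``behind'' the scan pointer are simply deferred to the next sweep without harming the \To/\Tt accounting, and that a constant number of deletions per arrival genuinely suffices to complete a sweep within the $\ceil{\log^2 t}$-chunk budget despite $s$ growing during the sweep. Once these bookkeeping points are verified, the worst-case bound follows.
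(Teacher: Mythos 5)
Your de-amortization strategy differs from the paper's in one crucial respect: you keep inserting arriving elements into the live summary while the deletion scan is in progress, and you re-evaluate the deletion conditions against the mutating summary. The paper instead \emph{freezes} $\QS$ for the duration of the spread-out deletion step: arrivals during the first $\ceil{\log^2 t}/2$ time steps of the window are diverted to a side buffer $B$ of size $O(s)$ while the scan is executed piecemeal on the unchanging tree, and $B$ is then drained alongside the new arrivals during the second half of the window, so that it is empty before the next deletion step is due. Because the scan therefore operates on exactly the snapshot that the batch deletion step of \Cref{alg:simple} would see, its outcome is \emph{identical} to that of the batch step, and the space bound of \Cref{lem:greedy-space} together with \Cref{rem: Delaying-Deletions-simple} applies with no further argument.

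The step you flag as ``the main obstacle'' is, in your version, a genuine gap rather than a bookkeeping check. \Cref{lem:type-1,lem:type-2} are proved for a consistent snapshot: every surviving element fails condition $(1)$ or $(2)$ at the \emph{same} time $t$ and with respect to its successor in the \emph{final} summary. In your scheme an element is certified only against its successor at the moment the scan pointer passed it; afterwards fresh elements may be inserted next to it (a fresh successor has $\bv=0$ and $\Delta$ close to $t$, which flips the element between \To and \Tt and breaks the strictly-decreasing-band chain used in \Cref{lem:type-1}), band values drift upward, and an element that has since become deletable is not reconsidered until the next sweep. The set of survivors is thus neither a subset of what the batch step would retain nor directly classifiable by the $X_\alpha$/$Y_\alpha$ partitions as written, so the assertion that the counting ``applies verbatim'' is unsubstantiated; and since the $O(1)$-deletions-per-arrival budget itself relies on the space bound, the time bound inherits the gap. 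The argument may be repairable by charging each survivor to the configuration at the instant it was examined, but this requires reproving the counting lemmas in that non-snapshot setting; the paper's buffer-and-freeze device removes the difficulty entirely.
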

\begin{proof}
We know that inserting an element can be done in $O(\log s)$ time since \QS is a BST. Thus, the only step that takes more than $O(\log s)$ time is performing a deletion step. Let $t$ be a time step when a deletion step is performed. The next deletion step is at time step $t + \ceil{\log^2 t}$. So we perform a deletion step uniformly across $\ceil{\log^2 t}/2$ time steps and store the incoming elements of the stream in a buffer $B$. The update time for each element is $O(\log s)$ since $O(s)$ elements are inserted in $B$, and the deletion step takes $O(s \log s)$ time. Note that we equally space out the insertions, so the worst-case update time per element is $O(\log s)$. Now in the remaining $\ceil{\log^2 t}/2$ time steps, we insert two elements per time step which takes $O(\log s)$ worst-case update time per element. Also, at time $t + \ceil{\log^2 t}$, the buffer $B$ is empty since $2 \cdot \ceil{\log^2 t}/2 = \ceil{\log^2 t}$ elements are inserted in \QS. Thus, we get that the worst case update time for any element is $O(\log s)= O( \log \frac{1}{\eps} + \log \log \eps n)$.
Moreover, $B$ has size $O(s)$, thus the space only increases by a constant factor.
\end{proof}

This concludes the proof of \Cref{lem:greedy-time}. We now show that the time to answer quantile queries is $O(\log s)$.

\begin{claim}\label{clm:quantile_queries-greedy-unwt}
	The summary created by \Cref{implementation: greedy-simple}, after using $O(s)$ pre-processing time, can answer a sequence of quantile queries using $O(\log s)$ time per query where $s$ is the size of the summary.
\end{claim}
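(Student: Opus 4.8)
The plan is to reduce a quantile query to a single ``search by prefix sum'' on the balanced binary search tree that already stores $\QS$, after augmenting that tree with subtree sums of $g$-values.

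Recall the query rule implicit in the proof of~\Cref{lem:search}: on input $\phi$ one sets $r := \ceil{\phi \cdot n}$ (the stream length $n$ is known at query time, either from a counter maintained during the stream or as $\rmin(\ei{s})-1$ for the $+\infty$ sentinel $\ei{s}$); if $r \le \eps n$ one returns $\ei{1}$, and otherwise one returns the smallest stored element $\ei{j}$ with $\rmin(\ei{j}) \ge r - \eps n$. Since \Cref{implementation: greedy-simple} maintains \Cref{inv:g-delta}, and --- as noted in \Cref{sec:inv} --- that invariant is precisely the hypothesis $\rmax(\ei{i}) - \rmin(\ei{i-1}) \le \eps n$ of \Cref{lem:search} once $t = \eps n$ at the end of the stream, the element produced by this rule is a valid $\eps$-approximate $\phi$-quantile; so it only remains to implement the rule in $O(\log s)$ time. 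By \Cref{eq:g-delta}, $\rmin(\ei{j}) = \sum_{i \le j} g_i$, so the rule asks for the smallest prefix of the value-sorted list of stored elements whose $g$-values sum to at least $\tau := r - \eps n$.

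For the $O(s)$ pre-processing I would run one post-order traversal and store at every node $v$ the quantity $\G(v) := \sum_{e \in \mathrm{subtree}(v)} g(e)$, each computed in $O(1)$ from the values at the children; this takes $O(s)$ time and is done once, before the whole query sequence (queries do not modify $\QS$). A query then descends from the root maintaining an accumulator $a$ equal to the total $g$-value of all stored elements lying strictly to the left of the current subtree (initially $a=0$): at a node $v$ with left subtree $L$, if $a + \G(L) \ge \tau$ recurse into $L$ and leave $a$ unchanged; else if $a + \G(L) + g(v) \ge \tau$ return $v$; else set $a \leftarrow a + \G(L) + g(v)$ and recurse into the right subtree. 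A short induction along the descent path shows that when $v = \ei{k}$ one has $a + \G(L) = \rmin(\ei{k-1})$ and $a + \G(L) + g(v) = \rmin(\ei{k})$, so the descent returns exactly the required $\ei{j}$; it visits $O(\mathrm{height}) = O(\log s)$ nodes (the tree is balanced) with $O(1)$ word-RAM work each, for $O(\log s)$ time per query. The short-circuit for $r \le \eps n$ is an $O(\log s)$ walk to $\ei{1}$, and arithmetic and comparisons on the $O(\log n)$-bit quantities $r,\tau$ cost $O(1)$ in the word-RAM model, so neither affects the bound.

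There is essentially no real obstacle here beyond bookkeeping; the two points worth care are (i) matching the corner cases of \Cref{lem:search} exactly as there --- the $r \le \eps n$ case, and the fact that the $+\infty$ sentinel is never returned, which follows since $\Delta_s = 0$ forces $g_s \le t = \eps n$ by \Cref{inv:g-delta}, hence $\rmin(\ei{s-1}) \ge n - \eps n \ge \tau$ so the descent never runs off the right end into $\ei{s}$ --- and (ii) noting that for the statement as phrased the augmentation $\G(\cdot)$ is computed once in $O(s)$ time and shared across the query sequence; if one instead wanted it live during the stream it could be maintained in $O(\log s)$ extra time inside each \textbf{Insert}/\textbf{Delete}, leaving the update-time bound of \Cref{lem:greedy-time} unchanged.
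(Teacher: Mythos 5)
Your proposal is correct and is essentially the paper's argument: both reduce the query to finding the smallest stored element with $\rmin \geq \ceil{\phi n} - \eps n$ (justified by \Cref{lem:search} and \Cref{inv:g-delta}) and exploit the fact that the BST's value order coincides with the $\rmin$ order to answer in $O(\log s)$ time. The only difference is cosmetic: the paper's $O(s)$ pre-processing simply computes and stores the explicit $\rmin,\rmax$ values by one inorder traversal of \Cref{eq:g-delta} and then binary-searches on $\rmin$ directly, whereas you store subtree sums of $g$-values and recover the same prefix sums during the descent.
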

\begin{proof}
	Using \Cref{eq:g-delta}, the $\rmin$ and $\rmax$ values of all elements in $\QS$ can be computed by a single inorder traversal. This is the pre-processing step which requires $O(s)$ time. To answer a $\phi$-quantile query, it is sufficient to find an element whose $\rmin$ and $\rmax$ values both lie in the interval $[(\phi - \eps)n, (\phi + \eps)n]$. Such an element is guaranteed to exist by \Cref{lem:search} and can be found in $O(\log s)$ time as $\QS$ is stored as a BST sorted by value of elements and hence by $\rmin$ values.
\end{proof}
\Cref{clm:quantile_queries-greedy-unwt} along with \Cref{lem:greedy-space} and \Cref{lem:greedy-time} proves \Cref{thm:greedy} since $s=O(\frac{1}{\eps} \cdot \log^2{\!(\eps n)})$.

\section{The Final \texorpdfstring{$O(\frac{1}{\eps} \cdot \log{(\eps  n)})$ Size}{} Summary}\label{sec:final}


 In this section, we give our description of GK summaries. As we discussed in \Cref{rem:greedy}, in this algorithm, we actively try to avoid creation of long sequences of \To elements (see \Cref{fig:tightness}). Roughly speaking, while there is an element whose deletion \emph{together} with its entire segment doesn't violate \Cref{inv:g-delta} (and the same condition on $\bv$s), we delete the element and its entire segment. Let $\gs_i$ denote the sum of $\g$-values of elements in $\Seg(\ei{i})$. Below is a formal description of the algorithm.

\begin{Algorithm}\label{alg:GK}
	An improved algorithm for updating the quantile summary. 
	\medskip

	\medskip 
	For each time step $t$ with arriving items $(x^{(t)}_1,\ldots,x^{(t)}_\ell)$: 
	\begin{enumerate}[label=$(\roman*)$]
	\addtolength{\itemindent}{3mm}
		\item Run $\textbf{Insert}(x^{(t)}_j)$ for each element of the chunk.
		
		    \item While there exists an element $\ei{i}$ in \QS satisfying:  	
		\[
		(1)~\bv(\ei{i}) \leq \bv(\ei{i+1})\qquad \text{\underline{and}} \qquad (2)~\gs_i+\g_{i+1}+\Delta_{i+1} \leq t 
		\]
		run $\textbf{Delete}(\ei{k})$ for $\ei{k}$ in $\set{\ei{i}} \cup \Seg(\ei{i})$.
	\end{enumerate}
\end{Algorithm}
The main difference between \Cref{alg:simple} and \Cref{alg:GK} is that in the latter, we consider an element and its segment as ``one unit" when trying to delete them, i.e., we either delete an element together with its entire segment or we do not delete that element at all (note that it is possible for elements of this segment to be deleted on their own). We shall note that even though our description of~\Cref{alg:GK} varies from the presentation of GK summaries in~\cite{GreenwaldK01}, the two algorithms behave in an 
almost identical way. 

\begin{theorem}\label{thm:GK}
	For any $\eps > 0$ and a stream of length $n$,~\Cref{alg:GK} maintains an $\eps$-approximate quantile summary in $O(\frac{1}{\eps} \cdot \log{\!(\eps n)})$ space. Also, there is an implementation of~\Cref{alg:GK} that takes $O(\log(\frac{1}{\eps})+\log \log(\eps n))$ worst case update time per element.
\end{theorem}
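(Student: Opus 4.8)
The plan is to follow the same three stages used for \Cref{thm:greedy} — first show \Cref{alg:GK} preserves \Cref{inv:g-delta}, then bound the size of $\QS$, then give a fast implementation — but to run the size argument on \emph{segments} rather than on individual elements, so that the extra $\log$ factor lost by \Cref{alg:simple} on \To elements (see \Cref{rem:greedy}) is saved. Correctness is immediate and mirrors the remark following \Cref{thm:greedy}: \Cref{alg:GK} deletes $\set{\ei{i}}\cup\Seg(\ei{i})$ only when $\gs_i+g_{i+1}+\Delta_{i+1}\le t$, and (by the segment version of \Cref{obs:g-del}) this deletion merges the segment's entire $\g$-mass $\gs_i$ into the next surviving element while leaving every other $(\g,\Delta)$ value and every $\Delta$ value unchanged; hence afterward that element's $g$-value plus its (unchanged) $\Delta$-value is at most $t$, so \Cref{inv:g-delta} holds at all times and, as explained in \Cref{sec:inv} via \Cref{lem:search}, $\QS$ is an $\eps$-approximate quantile summary throughout the stream.

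For the size bound I would again classify the elements surviving a deletion step at time $t$ into \To elements ($\bv(\ei{i})>\bv(\ei{i+1})$) and \Tt elements (the rest), set $\ell = 1/\eps$ and use $t = O(\eps n)$. The number of \Tt elements is $O(\ell\log t)$ exactly as in \Cref{lem:type-2}: the segment analogue of \Cref{clm:lower-g} — using $\Delta_{i+1}\le t_0(\ei{i+1})$ from \Cref{eq:del-t0} and the exponential lower bound of \Cref{obs:band-length} — gives $\gs_i+g_{i+1}\ge 2^{\bv(\ei{i+1})-1}-2$ for a surviving \Tt element, and partitioning these elements by $\bv(\ei{i+1})=\alpha$, summing, and invoking the coverage bound of \Cref{clm:cover} (whose proof carries over since rule $(1)$ is unchanged in \Cref{alg:GK}) caps each class at $O(\ell)$, hence the \Tt count at $O(\ell\log t)$. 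The new ingredient — the technical heart of the theorem — is that \To elements no longer accumulate: a surviving \To element survives only as part of a segment $\Seg(\ei{i})$ whose head $\ei{i}$ could not be deleted, i.e.\ $\gs_i+g_{i+1}+\Delta_{i+1}>t$, so the segment's $\g$-mass is already counted inside $\gs_i$; moreover, since $\Seg(\ei{i})$ is built only by merging equal-or-smaller-band elements (rule $(1)$ and \Cref{obs:stable}), all of it lies in $\Band_{\le \bv(\ei{i})}$. Running the same charging with $\gs_i$ in place of $g_i+g_{i+1}$, and using the coverage bound to limit the total $\g$-mass of these disjoint segments inside any $\Band_{\le\alpha}$ by $\ell\cdot 2^{\alpha+1}$, then bounds the number of surviving elements per band value by $O(\ell)$, and hence the total size of $\QS$ by $O(\ell\log t)=O((1/\eps)\log(\eps n))$.

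The efficient implementation follows the template of \Cref{lem:greedy-time}: store $\QS$ as a balanced BST keyed by value, augmented so that for each element $\g$, $\Delta$, $t_0$, and its current segment and $\gs$ are available in $O(\log s)$ time; carry out a deletion step by an inorder traversal into a doubly-linked list, computing all band-values with \Cref{obs: bvalue-compute} and then sweeping (in the spirit of GK's \emph{compress}, where one pass per band attains the fixpoint) to find and perform the segment deletions in $O(s)$ time plus $O(\log s)$ per BST update; and, as for \Cref{alg:simple}, trigger a deletion step only every $\ceil{\log t}$ time steps, buffering the intervening arrivals and spreading the step's work across them for a worst-case guarantee. By the segment analogue of \Cref{rem: Delaying-Deletions-simple} this delay inflates the space only by an additive $O(\ell\log t)$, keeping it $O((1/\eps)\log(\eps n))$; the number of deletion steps over the stream is $O(\eps n/\log(\eps n))$, each costing $O(s)=O((1/\eps)\log(\eps n))$ for the decision and $O(\log s)$ per actual deletion, for $O(n)$ total decision time and $O(n(\log(1/\eps)+\log\log(\eps n)))$ total BST time — i.e.\ $O(\log(1/\eps)+\log\log(\eps n))$ worst-case update time per element after spreading; quantile queries are then answered as in \Cref{clm:quantile_queries-greedy-unwt}.

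The step I expect to be the main obstacle is the segment part of the size analysis: working carefully with the definition of $\Seg(\ei{i})$, showing that after a deletion step the surviving \To elements really do organize into disjoint segments each attached to a surviving head, that such a surviving segment carries $\gs$-mass exponential in its band value, and that it stays confined to the corresponding $\Band_{\le\alpha}$ — this is precisely the work that GK's tree representation and \emph{compress} operation package away, and the point of \Cref{alg:GK} is to perform this charging directly on segments. A secondary subtlety is implementational: the deletion loop is phrased as a fixpoint, so one must argue that a bounded number of sweeps attains it and that the segment and $\gs$ metadata can be refreshed cheaply after each batch of deletions.
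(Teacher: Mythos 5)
Your correctness argument, your treatment of \Tt elements, and your implementation sketch all line up with the paper's proof (for the \Tt count you would still need to handle the fact that segments of distinct \Tt elements in the same class $X_\alpha$ can overlap, which the paper resolves in \Cref{clm: gs-bound-GK} by splitting $X_\alpha$ into $X_\alpha\cap\Band_\alpha$ and $X_\alpha\cap\Band_{\leq\alpha-1}$, but that is a repairable detail). The genuine gap is in the part you yourself flag as the heart of the theorem: the bound on \To elements. Your charging scheme assigns each surviving \To element to the segment $\Seg(\ei{j})$ containing it and then charges the head's $\gs_j$-mass against the coverage bound. Even granting that every such head fails condition $(2)$ and that its segment is confined to $\Band_{\leq\bv(\ei{j})}$, this argument only bounds the \emph{number of surviving segment heads} per band class by $O(\ell)$ — it says nothing about how many \To elements a single surviving segment contains. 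By \Cref{def:segment} a segment is just a maximal run of consecutive elements of smaller band-value, so one undeletable head can shelter an entire decreasing staircase of $\Theta(\log t)$ \To elements exactly as in \Cref{fig:tightness}. Your argument therefore yields only $O(\ell\log t)$ segments times $O(\log t)$ elements each, i.e.\ the same $O(\ell\log^2 t)$ bound as \Cref{alg:simple}, not the claimed $O(\ell\log t)$. (There is also a corner case: the outermost segment containing a \To element may itself have a \To head, for which no lower bound on $\gs$ is available.)

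The paper's \Cref{lem:type-1-GK} closes this gap with two ingredients missing from your proposal. First, an \emph{injective} map: each \To element $\ei{i}$, classified by $\alpha=\bv(\ei{i+1})$, is sent to the first later element $\ei{j}$ with $\bv(\ei{j})>\alpha$; injectivity holds because any earlier element of $Y_\alpha$ itself has band-value exceeding $\alpha$ and would block the map, so $\card{Y_\alpha}=\card{T_\alpha}$ and bounding terminals really does bound \To elements. Second, a time-differencing argument to bound $\card{T_\alpha}$: since $\ei{j-1}$ is necessarily \Tt, one has $\gs_{j-1}+\g_j+\Delta_j>t$, and applying \Cref{inv:g-delta} to $\ei{j}$ at the earlier time $t'=t_0(\ei{i+1})$ gives $\g_j'+\Delta_j\leq t'$; subtracting yields $\gs_{j-1}+(\g_j-\g_j')>t-t'\geq 2^{\alpha-1}-2$, and it is the combination of the segment mass $\gs_{j-1}$ with the coverage $\g_j-\g_j'$ accumulated by the terminal \emph{after} time $t'$ (both shown to live in $\Band_{\leq\alpha}$, and disjoint across terminals) that is charged against $\card{\Band_{\leq\alpha}}$. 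Neither the injective reduction to terminals nor the $\g_j-\g_j'$ device appears in your sketch, and without something playing their role the improved space bound does not follow.
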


\Cref{alg:GK} maintains~\Cref{inv:g-delta} since it may delete an element $\ei{i}$ (along with $\Seg(\ei{i}))$ only if $\gs_i + g_{i+1} + \Delta_{i+1} \leq t$, which by~\Cref{obs:g-del} implies that $g_{i+1} + \Delta_{i+1} \leq t$ \emph{after} the deletion (and all other $(g,\Delta)$-values remain
unchanged). As argued in~\Cref{sec:inv}, maintaining~\Cref{inv:g-delta} directly implies that $\QS$ is an $\eps$-approximate quantile summary throughout the stream. Thus, we focus on bounding the size of $\QS$ under this algorithm in~\Cref{subsec:GK-space}, and giving an efficient implementation of the same in \Cref{subsec: fast-imp-GK} to prove \Cref{thm:GK}.

\subsection{Space Analysis}
\label{subsec:GK-space}

In this subsection, we prove a bound on the space used by \Cref{alg:GK}. Formally, we have the following:
\begin{lemma}\label{lem:GK-space}
	For any $\eps > 0$ and a stream of length $n$,~\Cref{alg:GK} maintains an $\eps$-approximate quantile summary in $O(\frac{1}{\eps} \cdot \log{\!(\eps n)})$ space. 
\end{lemma}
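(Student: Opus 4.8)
The plan is the following. As noted just above, \Cref{alg:GK} preserves Invariant~\ref{inv:g-delta}, so by \Cref{lem:search} the summary is always an $\eps$-approximate quantile summary; it therefore suffices to bound the number of elements stored right after the deletion step performed at a time $t$ by $O(\ell\cdot\log t)$, since $t=O(\eps n)$ and $\ell=1/\eps$ then give the claimed bound. I would reuse the dichotomy of the warm-up: after the deletion step, every surviving element $\ei{i}$ (other than $\ei{s}=+\infty$) is either \To, i.e.\ $\bv(\ei{i})>\bv(\ei{i+1})$, or \Tt, in which case the deletion rule was blocked only by condition~$(2)$, so $\gs_i+g_{i+1}+\Delta_{i+1}>t$, where $\gs_i$ is the total $g$-value of $\ei{i}$ together with its segment $\Seg(\ei{i})$.

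First I would bound the \Tt elements, essentially repeating the argument behind \Cref{lem:type-2} but with $\gs_i$ in place of $g_i$. Fixing a band value $\alpha$, consider the \Tt elements $\ei{i}$ with $\bv(\ei{i+1})=\alpha$. From $\Delta_{i+1}\le t_0(\ei{i+1})$ (\Cref{eq:del-t0}) and \Cref{obs:band-length} one gets $\gs_i+g_{i+1}\ge 2^{\alpha-1}-2$; since $\gs_i$ is the total $g$-value of the \emph{whole} segment, and rule~$(1)$ of the deletion step together with \Cref{obs:stable} keeps all of $\{\ei{i}\}\cup\Seg(\ei{i})$ and $\ei{i+1}$ inside $\Band_{\le\alpha}$, summing these inequalities and plugging in the disjoint-coverage estimate $\sum_{\ei{j}\in\QS\cap\Band_{\le\alpha}}g_j\le\ell\cdot 2^{\alpha+1}$ (the analogue of \Cref{clm:cover}) yields $O(\ell)$ such elements per band and $O(\ell\log t)$ in total.

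The crux is bounding the \To elements, which is exactly where \Cref{alg:GK} improves on \Cref{alg:simple}. The structural fact I would prove is that a deletion step never creates a new \To element: deleting an element always removes with it all of its consecutive strictly-smaller-band left-neighbors, so the element that is left exposed at the boundary was already adjacent to a strictly-smaller-band element (using that deletions are permitted only when condition~$(1)$ holds for the deleted element) and hence was already \To. Combined with \Cref{obs:stable}, this lets me argue that every surviving \To element either was inserted during the current chunk ($\le\ell$ of those per chunk) or lies in the segment $\Seg(\ei{j})$ of some surviving \Tt element $\ei{j}$ --- concretely, of its nearest \Tt ancestor in the implicit segment structure --- so that its $g$-value is already counted in the corresponding $\gs_j$. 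Bookkeeping these \To elements by the band $\alpha=\bv(\ei{j+1})$ of that \Tt sponsor and charging them against the same disjoint coverage as above would give $O(\ell)$ \To elements per band and $O(\ell\log t)$ in total --- rather than the $O(\ell\log^2 t)$ of \Cref{alg:simple}, which is precisely the blow-up of \Cref{fig:tightness}, there arising because those \To elements are not absorbed into any segment.

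Summing the two estimates gives $|\QS|=O(\ell\log t)$ right after every deletion step, as desired. The step I expect to be the real obstacle is the \To bound: pinning down the exact definition of $\Seg(\cdot)$ and checking that \textbf{Insert} and \textbf{Delete} maintain the segment (tree) structure correctly; proving the ``no deletion ever creates a \To element'' invariant; and, most delicately, controlling the double counting when a surviving element lies in the segments of several nested \Tt ancestors, together with accounting for \To elements that appear not from insertions but from band-value increments flipping a \Tt element to a \To one --- this last piece of amortization is what here replaces the ``tree representation'' and ``compress'' machinery of \cite{GreenwaldK01}.
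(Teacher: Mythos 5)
Your overall frame (maintaining \Cref{inv:g-delta}, the \To/\Tt dichotomy after the deletion step, and reducing everything to an $O(\ell\log t)$ count) matches the paper, and your \Tt bound is essentially the paper's \Cref{lem:type-2-GK}. One small point there: you sum $\gs_i$ over all \Tt elements of $X_\alpha$ and compare directly against $\sum_{\ei{j}\in\QS\cap\Band_{\le\alpha}}g_j$, but the segments of distinct elements of $X_\alpha$ can be nested/overlapping, so the sum is not automatically bounded by the total $g$-mass of $\Band_{\le\alpha}$; the paper fixes this (\Cref{clm: gs-bound-GK}) by splitting $X_\alpha$ into $X_\alpha\cap\Band_\alpha$ and $X_\alpha\cap\Band_{\le\alpha-1}$, within each of which segments are pairwise disjoint, losing only a factor $2$. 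This is repairable.

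The \To bound, however, has a genuine gap, and it is exactly the crux of the lemma. Your plan is to charge each surviving \To element to a \Tt ``sponsor'' $\ei{j}$ whose segment contains it, and then appeal to the coverage bound $\sum_{\ei{j}\in\QS\cap\Band_{\le\alpha}}g_j\le\ell\cdot2^{\alpha+1}$. But that bound controls total $g$-\emph{mass}, not the \emph{number} of elements: a \To element sitting inside $\Seg(\ei{j})$ contributes only its own $g$-value to $\gs_j$, and that $g$-value can be $1$. A single \Tt sponsor in $X_\alpha$ can therefore carry $\Theta(\gs_j)=\Theta(2^\alpha)$ \To elements in its segment (e.g.\ alternating band values), so your charging yields $O(\ell\cdot2^\alpha)$ \To elements per band, which summed over bands is $O(\ell t)$, not $O(\ell\log t)$. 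The ``no deletion creates a new \To element'' invariant (which is true at the exposed boundary, by maximality of the deleted segment) does not close this either, since \To elements are created by every insertion (the left neighbour of a fresh band-$0$ element) and by band increments, i.e.\ $\Theta(\ell)$ per chunk, and you give no mechanism forcing most of them to be deleted.

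The paper's \Cref{lem:type-1-GK} uses a different and essential idea that your proposal is missing. Partition the \To elements into $Y_\alpha$ by $\bv(\ei{i+1})=\alpha$ and map each $\ei{i}\in Y_\alpha$ \emph{injectively} to the first $\ei{j}>\ei{i}$ with $\bv(\ei{j})>\alpha$ (injectivity holds because each $\ei{i}\in Y_\alpha$ itself has band $>\alpha$). For each such $\ei{j}$, the element $\ei{j-1}$ is \Tt, so $\gs_{j-1}+g_j+\Delta_j>t$; comparing this with \Cref{inv:g-delta} applied to $\ei{j}$ at the time $t'=t_0(\ei{i+1})$ (when $\ei{j}$ was already present, by \Cref{obs:stable}) gives the per-target lower bound $\gs_{j-1}+(g_j-g'_j)\ge 2^{\alpha-1}-2$. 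The quantities $\gs_{j-1}$ and $g_j-g'_j$ count weight confined to $\Band_{\le\alpha}$ --- the latter via \Cref{claim: coverage-time}, which is where the ``delete the whole segment'' rule is actually used --- and are disjoint across distinct targets, so $|Y_\alpha|=|T_\alpha|=O(\ell)$ by \Cref{eq:band-alpha}. Without some analogue of this ``large charge per \To element'' step, your argument cannot beat the $O(\ell\log^2 t)$ bound of \Cref{alg:simple}.
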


Recall that a simple property of~\Cref{alg:simple} was that the elements in $\Band_{\leq \alpha}$ could only cover the elements from $\Band_{\leq \alpha}$ as well. We argue that this continues to be the case in~\Cref{alg:GK}.  

\begin{observation}\label{obs: deleted-band-inv}
Elements from $\Band_{\leq \alpha}$ in \QS only cover elements of $\Band_{\leq \alpha}$ at any time.
\end{observation}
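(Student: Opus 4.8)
The plan is to follow the template of the proof of \Cref{clm:cover}, adapted to the block-deletion of \Cref{alg:GK}. I will prove a slightly stronger statement by induction over the operations (\textbf{Insert}s, \textbf{Delete}s, and advances of the time step) executed by \Cref{alg:GK}: \emph{at every time $t$, if an element $e'$ currently in $\QS$ covers an element $e$, then $\bv(e) \leq \bv(e')$ at time $t$}. This is equivalent to the observation, since $e \in \Band_{\leq \alpha}$ at time $t$ just means $\bv(e)\leq\alpha$ at time $t$. The base case is immediate: at insertion each element covers only itself, and $\bv(e)\leq\bv(e)$.

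Two of the three kinds of steps are easy. An \textbf{Insert}$(x)$ only adds $x$ to $C(x)$ and, by \Cref{obs:g-del}, leaves all other cover sets unchanged, so the invariant is preserved. When the time step advances, all band-values are updated simultaneously as a function of the current time step, so by \Cref{obs:stable} every inequality $\bv(e)\leq\bv(e')$ that held before continues to hold; hence the invariant survives the passage of time (this ``stability'' of bands is precisely why we work with them here).

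The substantive case is a deletion. When \Cref{alg:GK} deletes $\set{\ei{i}}\cup\Seg(\ei{i})$, condition $(1)$ gives $\bv(\ei{i})\leq\bv(\ei{i+1})$, and by the defining property of segments every $e\in\Seg(\ei{i})$ has $\bv(e)\leq\bv(\ei{i})\leq\bv(\ei{i+1})$ (the segment of $\ei{i}$ is, by construction, a contiguous block of elements whose band-values are dominated by $\bv(\ei{i})$). Processing the block's deletions from right to left --- equivalently, viewing $\ei{i+1}$ as the single new coverer of the whole block --- the surviving element $\ei{i+1}$ ends up with cover set $C(\ei{i+1}) \cup \bigcup_{e\in\set{\ei{i}}\cup\Seg(\ei{i})} C(e)$, while all other cover sets are unchanged by \Cref{obs:g-del}. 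By the induction hypothesis $C(e)\subseteq\Band_{\leq\bv(e)}\subseteq\Band_{\leq\bv(\ei{i+1})}$ for each such $e$, and $C(\ei{i+1})\subseteq\Band_{\leq\bv(\ei{i+1})}$; hence the new cover set of $\ei{i+1}$ still lies in $\Band_{\leq\bv(\ei{i+1})}$, completing the induction.

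I expect the one delicate point to be quoting the structural facts about $\Seg(\ei{i})$ correctly: namely that every element of $\Seg(\ei{i})$ has band-value at most $\bv(\ei{i})$, and --- if one insists on tracking the transient states of $\QS$ between the individual \textbf{Delete} calls inside a single deletion step --- that the deletions can be ordered (right to left within the block) so that each deleted element is absorbed by $\ei{i+1}$, keeping every intermediate coverage consistent with the invariant. Everything else is bookkeeping: \Cref{obs: disjoint coverage} keeps the cover sets disjoint, and \Cref{obs:stable} carries all the band-value inequalities forward in time.
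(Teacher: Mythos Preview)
Your proof is correct and follows essentially the same approach as the paper's: both argue that a deletion in \Cref{alg:GK} only lets $\ei{i+1}$ absorb elements with $\bv$ at most $\bv(\ei{i+1})$ (via condition~$(1)$ and \Cref{def:segment}), and then invoke \Cref{obs:stable} to carry the inequality forward in time. The paper's justification is a two-sentence sketch, whereas you have unpacked it into an explicit induction over operations --- but the underlying logic is identical.
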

This is because: when $\ei{i}$ and $\Seg(\ei{i})$ get covered by $\ei{i+1}$, \Cref{alg:GK} ensures that $\bv(\ei{i})\leq \bv(\ei{i+1})$. From \Cref{def:segment}, $\Seg(\ei{i})$ contains elements with $\bv$ less than $\bv(\ei{i})$. Thus, $C(\ei{i+1})$ contains elements with $\bv$ at most $\bv(\ei{i+1})$ and this continues to be the case at a later time by \Cref{obs:stable}. 

Having made the above observation, we now proceed to show bounds on the size of $\QS$. We borrow the definitions of \To and \Tt elements from \Cref{sec:greedy-space} with some modifications. After the deletion step at time $t$, an element $\ei{i}$ in $\QS$ is called a \emph{\To} element if it satisfies the condition $\bv(\ei{i}) > \bv(\ei{i+1})$ and is called a \emph{\Tt} element if it only satisfies the condition $\gs_i+\g_{i+1}+\Delta_{i+1} > t$.
After the deletion step, any element in \QS satisfies the first or the second condition (except the last element $\ei{s}=+\infty$); otherwise~\Cref{alg:GK} would have deleted this element. 

We first bound the number of \To elements. Unlike \Cref{alg:simple}, here we can show that the number of \To elements is $O(\ell \cdot \log t)$ at any time $t$, which is a factor of $O(\log t)$ smaller than before; this is exactly what improves the space of 
\Cref{alg:GK}. The following lemma is the heart of the proof. 

\begin{lemma}\label{lem:type-1-GK}
	After the deletion step at time $t$, the number of \To elements in \text{\QS} is $O(\ell \cdot \log t)$.
\end{lemma}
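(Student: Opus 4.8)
The plan is to group the \To elements according to their band value: since \Cref{eq:band-alpha} guarantees that only $O(\log t)$ distinct band values are present at time $t$, it suffices to show that, for every band value $\alpha$, at most $O(\ell)$ \To elements have band value $\alpha$. Before charging, I would record the structural picture. Because a \To element $\ei{i}$ has $\bv(\ei{i})>\bv(\ei{i+1})$, the \To elements fall into maximal runs $\ei{a},\ei{a+1},\ldots,\ei{b}$ along which band values strictly decrease; such a run therefore has length at most $\bv(\ei{a})$, and the stored element $\ei{a-1}$ immediately preceding it (there is at most one run with no such predecessor, namely one starting at $\ei{1}$) is not \To, hence is \Tt. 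The workhorse is the \Tt condition together with \Cref{eq:del-t0} and \Cref{obs:band-length}: a \Tt element $\ei{i}$ has $\gs_i+\g_{i+1}+\Delta_{i+1}>t$, and since $\Delta_{i+1}\le t_0(\ei{i+1})$ and $t-t_0(\ei{i+1})\ge 2^{\bv(\ei{i+1})-1}-2$, we get $\gs_i+\g_{i+1}>2^{\bv(\ei{i+1})-1}-2$. The point that separates \Cref{alg:GK} from \Cref{alg:simple} is that here $\gs_i$ is the total $\g$-mass of the \emph{whole} segment of $\ei{i}$ rather than just $\g_i$, so every surviving \Tt element certifies a full block of $\g$-mass of size about $2^{\bv(\ei{i+1})}$ sitting just to its left inside $\Band_{\le\bv(\ei{i+1})}$.

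With this in hand, I would charge each maximal run of \To elements, with top element $\ei{a}$ of band value $\alpha$, to the block $B=\Seg(\ei{a-1})\cup\{\ei{a}\}$ (and everything these elements cover). By the displayed inequality applied to the \Tt element $\ei{a-1}$, the total $\g$-mass of $B$ exceeds $2^{\alpha-1}-2$ (and in fact $2^{L-1}-2$, where $L$ is the run length, using $\alpha\ge L$); and since every element of $\Seg(\ei{a-1})$ has band value below $\bv(\ei{a-1})\le\alpha$ while $\ei{a}$ has band value exactly $\alpha$, \Cref{obs: deleted-band-inv} shows this $\g$-mass lies entirely inside $\Band_{\le\alpha}$, whose total $\g$-mass is at most $\card{\Band_{\le\alpha}}\le\ell\cdot 2^{\alpha+1}$ by \Cref{eq:band-alpha}. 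The geometric heart of the argument is that among runs whose top element has the same band value $\alpha$ these blocks are pairwise disjoint: a later run's block is a contiguous stretch of stored elements that can only reach back over elements of band value strictly less than $\bv(\ei{a'-1})\le\alpha$, hence cannot cross the band-$\alpha$ element $\ei{a}$ of any earlier run, and the covered-element sets of distinct stored elements are disjoint by \Cref{obs: disjoint coverage}. Dividing the $\Band_{\le\alpha}$ budget by the per-block lower bound bounds the number of runs with top band value $\alpha$ by $O(\ell)$; combining this with the length bound and summing over the $O(\log t)$ band values should give $O(\ell\log t)$ \To elements in total.

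The step I expect to be the real obstacle is extracting $O(\ell\log t)$ rather than $O(\ell\log^2 t)$ from this last combination: the crude estimate ``(number of runs) $\times$ (maximum run length) $= O(\ell\log t)\cdot O(\log t)$'' only reproduces the bound of \Cref{alg:simple}. To do better one must account for the run lengths against the per-band $\g$-mass budget more carefully — in particular, handling the case where the large certified $\g$-mass actually sits in the stored element just past a run (whose band value may exceed $\alpha$), and making sure such heavy elements are not over-counted when the argument is run simultaneously over all band-value thresholds. This is exactly where the atomic deletion of entire segments must be used in full force: it is what forces the segment of every surviving \Tt element to be genuinely heavy, and hence what lets each run be charged to a block of size $\Omega(2^{\alpha})$ inside $\Band_{\le\alpha}$ rather than merely to a single element. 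I would organize the final write-up as a double-counting argument that plays the upper bound $\card{\Band_{\le\alpha}}\le\ell\cdot 2^{\alpha+1}$ against the exponential lower bound $\Omega(2^{\alpha})$ on each charged block, with the disjointness of the blocks verified directly from the definition of $\Seg$.
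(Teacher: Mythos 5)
Your proposal assembles the right ingredients (the \Tt inequality for the predecessor of a run, \Cref{eq:del-t0}, \Cref{obs:band-length}, the $\Band_{\leq\alpha}$ budget from \Cref{eq:band-alpha}, and disjointness of coverage), but the charging scheme is per \emph{run} of consecutive \To elements rather than per \To \emph{element}, and this is exactly why it stalls at $O(\ell\log^2 t)$: bounding the number of runs with top band value $\alpha$ by $O(\ell)$ gives $O(\ell\log t)$ runs, each of length up to $O(\log t)$, and you cannot recover the factor of $\log t$ by summing the certified masses over all $\alpha$ simultaneously, because the block $\Seg(\ei{a-1})\cup\set{\ei{a}}$ of a run with small top band value can be contained in the segment charged by a run with larger top band value, so the masses are only disjoint within a fixed $\alpha$, not globally. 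You correctly flag this as ``the real obstacle,'' but you do not resolve it, so the proof as written establishes only the bound already achieved by \Cref{alg:simple}.

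The paper closes the gap with a genuinely different, per-element charging. It partitions the \To elements $\ei{i}$ by $\alpha=\bv(\ei{i+1})$ (the band value of the \emph{next} element, not of $\ei{i}$ itself, so the elements of a single run land in distinct classes), and maps each $\ei{i}$ injectively to the first stored $\ei{j}>\ei{i}$ with $\bv(\ei{j})>\alpha$. The mass extracted per charged element is $\gs_{j-1}+(\g_j-\g'_j)>t-t_0(\ei{i+1})\geq 2^{\alpha-1}-2$, obtained by subtracting \Cref{inv:g-delta} applied to $\ei{j}$ at the earlier time $t'=t_0(\ei{i+1})$ (where $\g'_j$ is the $g$-value of $\ei{j}$ at $t'$) from the \Tt inequality for $\ei{j-1}$. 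The growth term $\g_j-\g'_j$ is the idea your proposal is missing: it lets every individual \To element, not just every run, claim $\Omega(2^{\alpha})$ of coverage mass sitting in $\Band_{\leq\alpha}$ (\Cref{claim: coverage-time}, which itself uses the segment-atomic deletion), with disjointness across the class $T_{\alpha}$. Dividing the budget $\ell\cdot 2^{\alpha+1}$ by $2^{\alpha-1}-2$ then gives $O(\ell)$ per class and $O(\ell\log t)$ overall. Without some such per-element device, your run-based accounting cannot reach the stated bound.
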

\begin{proof}
Let us partition the \To elements into $\Bt{t}$ sets $Y_0,Y_1,\ldots,Y_{{\Bt{t}}-1}$ where for any band-value $\alpha$:
	\[
		Y_{\alpha} := \set{\ei{i} \in \QS \mid \text{$\ei{i}$ is \To and $\bv(\ei{i+1}) = \alpha$}};
	\] 
	(notice that elements in $Y_{\alpha}$ are such that band-value of their next element is $\alpha$, not themselves; see~\Cref{foot1}). This partitioning is analogous to the one done in \Cref{lem:type-2} but now for \To elements. We will show that the size of any set $Y_\alpha$ is $O(\ell)$. To do this, we map each element $\ei{i}$ of $Y_{\alpha}$ to the first element $\ei{j}> \ei{i}$ with $\bv$ greater than $\alpha$; see \Cref{fig:type-1-notation} for an illustration. Let $T_{\alpha}$ be the set of all such elements $\ei{j}$. For any $\ei{j}$ in $T_{\alpha}$, we can say that
\begin{equation} \label{eq: lem To terminal}
\gs_{j-1} + \g_{j} + \Delta_j > t,
\end{equation}
since $e_j$ was the \emph{first} element with $\bv$ greater than $\alpha$ (and thus $e_{j-1}$ is a \Tt element).
It is easy to observe that no two elements in $Y_\alpha$ get mapped to the same element in $T_{\alpha}$. Thus, $\card{Y_{\alpha}}=\card{T_{\alpha}}$ and thus it is enough to upper bound $\card{T_{\alpha}}$.

Since $\bv(\ei{j})$ is greater than $\bv(\ei{i+1})=\alpha$, \Cref{obs:stable} ensures that $\ei{j}$ was present in \QS at time $t^\prime=t_0(\ei{i+1})$. Let ${\g}^{\prime}_j$ be the $\g$-value of $\ei{j}$ at $t^\prime$. Since $\Delta_j$ is not a function of time, \Cref{inv:g-delta} at time $t^\prime$ implies: 
\begin{equation} \label{eq: lem To intertion time}
\g^\prime_j+\Delta_j \leq t^\prime.
\end{equation}
Subtracting \Cref{eq: lem To intertion time} from \Cref{eq: lem To terminal} and using the bounds from~\Cref{obs:band-length} we conclude that,
\begin{equation}\label{eq: gk-To-main}
\gs_{j-1}+ (\g_j-\g^\prime_j)>t-t^\prime\geq2^{\alpha-1}-2.
\end{equation}

\noindent
 In the above equation: 
 \begin{enumerate}[label=$(\roman*)$]
 \item The term $(\g_j-\g^\prime_j)$ counts the elements covered by $\ei{j}$ after time $t^{\prime}$ by \Cref{obs: disjoint coverage}. We will show in \Cref{claim: coverage-time} that after time $t'$, $\ei{j}$ covers only elements from $\Band_{\leq \alpha}$. 
 
 \item The term $\gs_{j-1}$ counts the elements covered by $\ei{j-1}$ and $\Seg(\ei{j-1})$.  By \Cref{def:segment}, $\ei{j-1}$ and all elements in $\Seg(\ei{j-1})$ have $\bv\leq \alpha$. \Cref{obs: deleted-band-inv} allows us to conclude that the elements counted by $\gs_{j-1}$ are in $\Band_{\leq \alpha}$ as well. 
 
 \item Finally, by \Cref{obs: disjoint coverage},  for distinct $\ei{j_1}$ and $\ei{j_2}$ in $T_{\alpha}$, the set of elements covered by $\ei{j_1-1}$ and its segment is disjoint from the set of elements covered by $\ei{j_2-1}$ and its segment (as can be observed in \Cref{fig:type-1-segment}).
 \end{enumerate}
This implies that the LHS of~\Cref{eq: gk-To-main}, summed over all $T_{\alpha}$, is proportional to the total number of elements in $\Band_{\leq \alpha}$ which is $O(\ell \cdot 2^{\alpha})$ by~\Cref{eq:band-alpha}. Formally, 
   \begin{align}
		\card{T_{\alpha}} \cdot (2^{\alpha-1}-2) &\leq \sum_{\ei{j} \in T_{\alpha}} \gs_{j-1} + \sum_{\ei{j} \in T_{\alpha}} (\g_j-\g^\prime_j) \leq \card{\Band_{\leq \alpha}}+\card{\Band_{\leq \alpha}} \leq \ell \cdot 2^{\alpha+2}.
	\end{align}
This implies that $\card{T_{\alpha}}=\card{Y_{\alpha}}=O(\ell)$ as desired. 
\begin{claim}\label{claim: coverage-time}
 All the elements covered by $\ei{j}$ after $t^\prime$ have $\bv$ at most $\alpha$ at time $t$.\footnote{Note that \Cref{claim: coverage-time} does not hold for \Cref{alg:simple} and thus we can not carry out a similar analysis for it. The reason for the better space bound here is deleting the segment along with the element.}
\end{claim}
\begin{proof}
Let us assume that there exists an element with $\bv>\alpha$ which gets covered by $\ei{j}$ after time $t^\prime$. 
All such elements are less than $\ei{j}$ and greater than $\ei{i+1}$ since they get covered by $\ei{j}$.
Now consider the smallest such element $e$. Clearly, $\ei{i+1}$ belongs to the segment of $e$ at $t^\prime$. Since $e$ does not belong to the summary at time $t$, it must have been deleted. This implies that its segment, which contained $\ei{i+1}$,  got deleted. This means that $\ei{i+1}$ is also deleted, a contradiction. \Qed{\Cref{claim: coverage-time}}

\end{proof}

We showed that for any $\alpha$, $\card{Y_{\alpha}}=O(\ell)$. Since there are $O(\log t)$ possible values of $\alpha$ (\Cref{eq:band-alpha}), there are $O(\ell \cdot \log t)$ \To elements, concluding the proof of \cref{lem:type-1-GK}. \qedhere
\end{proof}

Now we will bound the number of \Tt elements in \QS. The analysis is almost identical to the proof of~\Cref{lem:type-2} and is provided for completeness.

\begin{lemma}\label{lem:type-2-GK}
	After the deletion step at time $t$, the number of \Tt elements in \text{\QS} is $O(\ell \cdot \log t)$.
\end{lemma}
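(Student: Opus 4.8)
The plan is to transcribe the proof of \Cref{lem:type-2} almost verbatim, replacing $\g_i$ by $\gs_i$ wherever the deletion rule of \Cref{alg:GK} is invoked and letting \Cref{obs: deleted-band-inv} take over the bookkeeping done by hand inside \Cref{clm:cover}. First I would record the exact analogue of \Cref{clm:lower-g}: if $\ei{i}$ survives the deletion step at time $t$ as a \Tt element, the only reason \Cref{alg:GK} did not delete $\ei{i}$ together with $\Seg(\ei{i})$ is that $\gs_i + \g_{i+1} + \Delta_{i+1} > t$; since $\Delta_{i+1} \leq t_0(\ei{i+1})$ by \Cref{eq:del-t0} and $t - t_0(\ei{i+1}) \geq 2^{\bv(\ei{i+1})-1}-2$ by \Cref{obs:band-length}, this gives
\[
\gs_i + \g_{i+1} \;>\; t - t_0(\ei{i+1}) \;\geq\; 2^{\bv(\ei{i+1})-1}-2 .
\]
Then, exactly as in \Cref{lem:type-2}, I would partition the \Tt elements into the $\Bt{t}$ sets $X_\alpha := \set{\ei{i}\in\QS \mid \text{$\ei{i}$ is \Tt and $\bv(\ei{i+1}) = \alpha$}}$. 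For $\ei{i}\in X_\alpha$, being \Tt (and not \To) forces $\bv(\ei{i}) \leq \bv(\ei{i+1}) = \alpha$, and by \Cref{def:segment} every element of $\Seg(\ei{i})$ has band-value below $\bv(\ei{i})$; hence $\set{\ei{i}}\cup\Seg(\ei{i})\cup\set{\ei{i+1}} \subseteq \Band_{\leq\alpha}$.

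Next I would sum the displayed inequality over $X_\alpha$, obtaining $\card{X_\alpha}\cdot(2^{\alpha-1}-2) \leq \sum_{\ei{i}\in X_\alpha} \gs_i + \sum_{\ei{i}\in X_\alpha} \g_{i+1}$. Both sums on the right count only elements covered by summary elements lying in $\Band_{\leq\alpha}$ (namely $\ei{i+1}$ for the $\g_{i+1}$ term, and $\ei{i}$ together with $\Seg(\ei{i})$ for the $\gs_i$ term), so by \Cref{obs: deleted-band-inv} every element counted itself lies in $\Band_{\leq\alpha}$; and, as in \Cref{lem:type-2}, each element of $\QS\cap\Band_{\leq\alpha}$ is counted only $O(1)$ times in total. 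Hence, by \Cref{obs: disjoint coverage} and \Cref{clm:cover},
\[
\card{X_\alpha}\cdot(2^{\alpha-1}-2) \;\leq\; O(1)\cdot\sum_{\ei{k}\in\QS\cap\Band_{\leq\alpha}}\g_k \;\leq\; O\big(\ell\cdot 2^{\alpha}\big),
\]
which gives $\card{X_\alpha} = O(\ell)$ for every $\alpha > 2$; for $\alpha\leq 2$ one has $X_0\cup X_1\cup X_2 \subseteq \Band_{\leq 2}$, which contains at most $8\ell = O(\ell)$ elements by \Cref{eq:band-alpha}. Summing over the $O(\log t)$ possible values of $\alpha$ (again by \Cref{eq:band-alpha}) then yields $O(\ell\cdot\log t)$ \Tt elements, which is the claim.

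The only step above that is not literally the same as in \Cref{lem:type-2} is the ``$O(1)$-counting'' bound for $\sum_{\ei{i}\in X_\alpha}\gs_i$, and I expect this to be the only (mild) obstacle. Because \Cref{alg:GK} removes an element together with its entire segment, $\gs_i$ now aggregates $\g$-values over a whole segment rather than a single element, so one must check that the segments of distinct elements of $X_\alpha$ do not stack up on a common element. This is the same disjointness bookkeeping already carried out in the proof of \Cref{lem:type-1-GK} (point $(iii)$ there): using the laminar (``subtree'') structure of segments from \Cref{def:segment} together with the constraint $\bv(\ei{i+1}) = \alpha$ defining $X_\alpha$, a short case analysis shows that any fixed summary element can belong to the segments of at most two elements of $X_\alpha$ (and those two must have consecutive indices), which is all the counting needs. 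Everything else carries over from \Cref{lem:type-2} unchanged.
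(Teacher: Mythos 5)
Your proposal is correct and follows essentially the same route as the paper: the same lower bound $\gs_i+\g_{i+1}>t-t_0(\ei{i+1})\geq 2^{\bv(\ei{i+1})-1}-2$, the same partition into the sets $X_\alpha$, and the same reduction to bounding $\sum_{\ei{j}\in\QS\cap\Band_{\leq\alpha}}\g_j$ via \Cref{clm:cover-gk}. The one step you flag as needing care is exactly the paper's \Cref{clm: gs-bound-GK}; the paper establishes the same multiplicity bound of $2$ by splitting $X_\alpha$ into $X_\alpha\cap\Band_{\alpha}$ and $X_\alpha\cap\Band_{\leq\alpha-1}$ and noting that segments within each part are pairwise disjoint (being separated by an element of $\Band_\alpha$), which is an equivalent packaging of your ``each fixed element lies in at most two segments of elements of $X_\alpha$'' count.
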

\begin{proof}
	Any \Tt element $\ei{i}$ in \QS, has the property that $\gs_i+\g_{i+1}+\Delta_i>t$. We will use this fact to conclude that the elements $\ei{i}$, its segment and $\ei{i+1}$ cover a ``large" number of elements. The next claim is an analog of~\Cref{clm:lower-g} of \Cref{alg:simple}.

	\begin{claim}\label{clm:lower-g-GK}
		After the deletion step at time $t$, for any \Tt element $\ei{i}$, 
		$ \gs_i + \g_{i+1} \geq 2^{\bv(\ei{i+1})-1}~-~2$.
	\end{claim}
	\begin{proof}
		As $\ei{i}$ is a \Tt element, $\gs_i + \g_{i+1} + \Delta_{i+1} > t$. By~\Cref{eq:del-t0}, $\Delta_{i+1} \leq t_0(\ei{i+1})$  and therefore, 
	\begin{align*}
	\gs_i + \g_{i+1} > t - t_0(\ei{i+1}) \geq 2^{\bv(\ei{i+1})-1}-2,
	\end{align*}
	where the second inequality is by~\Cref{obs:band-length}. \Qed{\Cref{clm:lower-g-GK}}
	
\end{proof}
\Cref{clm:lower-g-GK} gives us a lower bound on the $\gs$-value of each \Tt element $\ei{i}$ as a function of the $g$-value and band-value of the \emph{next} element $\ei{i+1}$. Therefore, we partition the \Tt elements 
into $\Bt{t}$ (number of possible bands) sets $X_0,X_1,\ldots,X_{\Bt{t}}$ where for any band-value $\alpha$:
\[
X_{\alpha} := \set{\ei{i} \in \QS \mid \text{$\ei{i}$ is \Tt and $\bv(\ei{i+1}) = \alpha$}}.
\] 
Moreover, for any $\ei{i} \in X_{\alpha}$, since $\ei{i}$ is a \Tt element, 
	$\bv(\ei{i}) \leq \bv(\ei{i+1}) = \alpha$.  We sum up the inequality of~\Cref{clm:lower-g-GK} for each element in $X_\alpha$ to obtain:
	\begin{align} \label{eq:lem-type-2-GK}
		\card{X_{\alpha}} \cdot (2^{\alpha-1}-2) &\leq \sum_{\ei{i} \in X_{\alpha}} \gs_i + g_{i+1}.
	\end{align}
We now try to upper bound the summations in the above equation by some function of $\alpha$.

\begin{claim} \label{clm: gs-bound-GK}
After the deletion step at time $t$, for any $\alpha \geq 0$, 
	$\sum \limits_{\ei{i} \in X_{\alpha}} \gs_i  \hspace{3mm} \leq \hspace{3mm} 2 \hspace{-15pt}\sum \limits_{\ei{j} \in \QS \cap \Band_{\leq \alpha}} \hspace{-15pt} \g_j.$
\end{claim}
\begin{proof}


 There is no direct way to bound this summation as the segments of elements in $X_{\alpha}$ may overlap with each other. To overcome this issue, we partition $X_\alpha$ into two disjoint sets $X_\alpha \cap \Band_{\alpha}$ and $X_\alpha \cap \Band_{\leq \alpha-1}$ such that two elements from any  one of these sets have disjoint segments. 

Any two distinct elements from $X_{\alpha} \cap \Band_\alpha$ have disjoint segments. This is because their segments, which contain elements from bands less than $\alpha$, are separated by at least one element from $\Band_{\alpha}$. Therefore,
	\[ \sum \limits_{\ei{i} \in X_{\alpha} \cap \Band_{\alpha}} \hspace{-10pt} \gs_i  \hspace{15pt}\leq \sum \limits_{\ei{j} \in \text{\QS} \cap \Band_{\leq \alpha}} \hspace{-15pt}\g_j .\]
	
	By definition of $X_\alpha$, between any two distinct elements in $X_{\alpha} \cap \Band_{\leq \alpha-1}$, there is at least one element from $\Band_\alpha$. Thus, the elements of $X_{\alpha} \cap \Band_{ \leq \alpha-1}$ also  have disjoint segments containing elements from $\Band{\leq \alpha}$. Therefore,
	\[ 
	\sum \limits_{\ei{i} \in X_{\alpha} \cap \Band_{\leq \alpha -1}} \hspace{-15pt} \gs_i  \hspace{15pt}\leq \sum \limits_{\ei{j} \in \text{\QS} \cap \Band_{\leq \alpha}} \hspace{-15pt}\g_j, 
	\]
implying the claim. \Qed{\Cref{clm: gs-bound-GK}}

\end{proof}
The following claim bounds the sum of $g$-values of the elements in \QS from $\Band_{\leq \alpha}$ by $\ell \cdot 2^{\alpha+1}$. Its proof is 
identical to \Cref{clm:cover} and is thus omitted. 
\begin{claim}\label{clm:cover-gk}
	At any time $t$ and for any $\alpha$, 
	$
	\sum_{\ei{i} \in \QS \cap \Band_{\leq \alpha}} g_i \leq \ell \cdot 2^{\alpha+1}.
	$ 
\end{claim}

\noindent
By plugging the bounds of \Cref{clm: gs-bound-GK} and \Cref{clm:cover-gk} in \Cref{eq:lem-type-2-GK}, we have that,
\[
	\card{X_{\alpha}} \cdot (2^{\alpha-1}-2)\leq \sum_{\ei{i} \in X_{\alpha}} \gs_i \hspace{5pt}+ \hspace{-10pt}\sum_{\ei{j} \in \QS \cap \Band_{\leq \alpha}} \hspace{-15pt} g_j \hspace{3pt} \leq 3 \hspace{-15pt}\sum_{\ei{j} \in \QS \cap \Band_{\leq \alpha}} \hspace{-15pt} g_j \leq  3\cdot \ell \cdot 2^{\alpha+1},
\]
which implies $\card{X_{\alpha}} = O(\ell)$. There are $O(\log t)$ sets $X_{\alpha}$ since there are $O(\log t)$ bands from \Cref{eq:band-alpha}, thus the lemma follows.
 \end{proof}

We have now shown that, after performing the deletion step at time $t$, the number of \To elements in \QS is $O(\ell \cdot \log t)$ by \Cref{lem:type-1-GK} and the number of \Tt elements in \QS is $O(\ell \cdot \log t)$ by \Cref{lem:type-2-GK}. Since each element in \QS (other than $+\infty$) 
is either \To or \Tt, the total number of elements in \QS is $O(\ell \cdot \log t)$.


This finalizes the proof of \Cref{lem:GK-space} since $t=O(\eps n)$ and $\ell=O(\frac{1}{\eps})$ (\Cref{def:time}). We conclude the discussion of the space complexity with the following remark; this is analogous to \Cref{rem: Delaying-Deletions-simple} for~\Cref{alg:simple}.

\begin{remark} [Delaying Deletions]
\label{rem: Delaying-Deletions-GK}
Suppose in \Cref{alg:GK}, instead of running the \textbf{deletion steps} in Line (ii) after each chunk, we run them after inserting $k$ consecutive chunks for some integer $k > 1$; then, the space complexity of the algorithm only increases by an $O(k \cdot \ell)$ additive term.
\end{remark}
The argument is the same as the one in \Cref{rem: Delaying-Deletions-simple}; after running the deletion step at any time $t$, the number of elements reduces to $O(\ell \cdot \log t)$ as proved earlier. Thus, the extra space is only due to storing the additional $O(k \cdot \ell)$ elements that are inserted in \QS.

\subsection{An Efficient Implementation of \texorpdfstring{\Cref{alg:GK}}{the algorithm} }\label{subsec: fast-imp-GK}

We use a similar strategy as in \Cref{subsec: fast-imp-simple} to implement \Cref{alg:GK} efficiently.
Formally, we show the following:
\begin{lemma}\label{lem:GK-time}
	For any $\eps > 0$ and a stream of length $n$, there is an implementation of \Cref{alg:GK} that takes $O(\log(\frac{1}{\eps})+\log \log(\eps n))$ worst case update time per element.
\end{lemma}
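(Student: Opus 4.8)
The plan is to reuse the machinery of \Cref{subsec: fast-imp-simple} almost verbatim, adding only what is needed to handle the fact that \Cref{alg:GK} deletes an element \emph{together with its whole segment}. We store $\QS$ as a balanced binary search tree keyed by element value, each node holding $g(e),\Delta(e),t_0(e)$; a single \textbf{Insert} or \textbf{Delete} then costs $O(\log s)$ where $s=|\QS|$. A deletion step is executed by an $O(s)$-time routine analogous to the boxed deletion-step routine of \Cref{subsec: fast-imp-simple}: (1) an inorder traversal of $\QS$ produces a sorted doubly-linked list; (2) $\bv$ of every element is computed via \Cref{obs: bvalue-compute}; (3) a left-to-right monotone-stack scan locates, for every $\ei{i}$, the nearest element on its left with band-value at least $\bv(\ei{i})$ — which delimits $\Seg(\ei{i})$ (see \Cref{def:segment}) — and, together with prefix sums of the $g$-values, records $\gs_i$; (4) a right-to-left pass which, for each surviving $\ei{i}$, checks the two deletion conditions of \Cref{alg:GK} using the recorded $\gs_i$, the current $g$-value of $\ei{i}$'s successor, and the two band-values, and if both hold splices $\set{\ei{i}}\cup\Seg(\ei{i})$ out of the list while adding $\gs_i$ to the successor's $g$-value (as dictated by \Cref{obs:g-del}); the spliced-out nodes are then removed from the BST.

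Two observations make this routine correct and keep it linear. First, if $\ei{i}$ is still present when the right-to-left pass reaches it, then nothing to the left of $\ei{i}$ has been deleted so far: a deleted element to the left of $\ei{i}$ could only be part of $\Seg(\ei{j})$ for an already-processed $\ei{j}$ with $j>i$, but then $\ei{i}$ (having smaller band-value than $\ei{j}$) would itself lie in $\Seg(\ei{j})$ and be deleted too. Hence $\Seg(\ei{i})$ and $\gs_i$ in the current list agree with the precomputed values, so the test in step (4) is valid; and since the elements actually spliced out pay for the splicing, the whole routine runs in $O(s)$. Second, a single right-to-left pass faithfully realizes the (non-deterministic) ``while'' loop of \Cref{alg:GK}: after $\ei{i}$ has been passed over, every subsequent change is a deletion strictly to its left, and such a deletion can only enlarge $\Seg(\ei{i})$ (and never shrinks $\gs_i$, nor changes $\ei{i}$'s successor or the successor's $g$-value), so condition $(2)$ only gets harder; thus an element declared ineligible stays ineligible and the pass leaves no eligible element behind.

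For the amortized guarantee we delay deletion steps as in \Cref{implementation: greedy-simple}: a variable $\DeleteTime$ is initialized to $2$, a deletion step is run when $t=\DeleteTime$, and then $\DeleteTime\leftarrow\DeleteTime+\ceil{\log t}$. Because the post-deletion size of $\QS$ is only $O(\ell\log t)$ by \Cref{lem:GK-space} — a factor $\log t$ smaller than in the greedy case, which is exactly why a delay of $\ceil{\log t}$ rather than $\ceil{\log^2 t}$ suffices here — \Cref{rem: Delaying-Deletions-GK} keeps the space at $O(\ell\log(\eps n))=O((1/\eps)\log(\eps n))$, so $s=O((1/\eps)\log(\eps n))$ throughout. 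Partitioning the (at most) $\eps n$ time steps into dyadic blocks $[2^i,2^{i+1})$ and arguing exactly as for \Cref{clm:number-of-merges } — two consecutive deletion steps inside block $i$ are at least $i$ apart — the number of deletion steps is $\sum_i(1+2^i/i)=O(\eps n/\log(\eps n))$. Each one spends $O(s)=O((1/\eps)\log(\eps n))$ on the routine above, hence $O(n)$ in total; adding the $O(n\log s)$ spent on all \textbf{Insert}/\textbf{Delete} operations on the BST (each stream element is inserted once and deleted at most once), the total running time is $O\!\big(n(\log(1/\eps)+\log\log(\eps n))\big)$, i.e.\ $O(\log(1/\eps)+\log\log(\eps n))$ amortized per element.

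Finally, to get the worst-case bound we de-amortize just as in \Cref{lem: worst case greedy}. A deletion step costs $O(s\log s)$ in the worst case (up to $s$ BST deletions); we spread this work evenly over the first half of the $\ceil{\log t}$ time steps preceding the next deletion step, buffering the $O(\ell\log t)=O(s)$ elements that arrive meanwhile, and then empty the buffer over the second half by inserting two elements per step. Since $s=O(\ell\log t)$, each arriving element absorbs only $O\!\big(s\log s/(\ell\log t)\big)=O(\log s)$ of the deletion-step cost on top of its own $O(\log s)$ insertion, the buffer is empty again by the next deletion step, and it increases the space by only a constant factor; hence the worst-case update time is $O(\log s)=O(\log(1/\eps)+\log\log(\eps n))$. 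The one genuinely new ingredient beyond \Cref{subsec: fast-imp-simple}, and the place I expect to need the most care, is the segment bookkeeping in steps $(3)$–$(4)$: computing $\Seg(\ei{i})$ and $\gs_i$ in $O(s)$ and, above all, verifying that the splices performed earlier in the same right-to-left pass never invalidate the precomputed $\gs$-values, so that the pass still reproduces \Cref{alg:GK} exactly.
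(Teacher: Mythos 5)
Your proposal is correct and follows essentially the same route as the paper: a balanced BST for $\QS$, a linear-time deletion step that computes band-values and $\gs$-values with a stack before a single largest-to-smallest pass, delayed deletion steps with gap $\ceil{\log t}$, the dyadic-interval count of $O(\eps n/\log(\eps n))$ deletion steps, and the standard de-amortization by spreading each deletion step's work over the following chunks. The extra care you take in justifying that one right-to-left pass faithfully simulates the nondeterministic while-loop (ineligibility is monotone, and a surviving element's precomputed segment and $\gs$-value are still current when the pass reaches it) is a detail the paper leaves implicit, but it does not change the approach.
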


\paragraph*{\textbf{Part \RNum{1}:  Storing \QS:}}
The summary \QS is stored as a balanced binary search tree, exactly as in  \Cref{implementation: greedy-simple}.

\paragraph*{\textbf{Part \RNum{2}: Performing a Deletion Step:}} The way in which the deletion step is performed will also be very similar to the fast implementation of \Cref{alg:simple}. The only change is that the deletion conditions of \Cref{alg:GK} will be checked while deciding which elements to delete. This, however, requires the computation of $\gs$ values of elements of the summary which can be done similar to the computation of $\Gs$ value shown in \Cref{subsec:comp_gs_val}.
We now present an implementation of \Cref{alg:GK} with fast amortized update time per element. Using techniques, described in \Cref{subsec: fast-imp-weighted-GK} the same bound can be shown on the worst case update time per element.

\begin{Implementation}\label{implementation: gk}
\textbf{Efficient Implementation of}~\Cref{alg:GK}

	\begin{itemize}
	\item Initialize $\QS$ to be an empty balanced binary search tree.
	
	\item $\DeleteTime \leftarrow 2$.

	\item For each time step $t$ with arriving items $(x^{(t)}_1,\ldots,x^{(t)}_\ell)$: 
	\begin{enumerate}[label=$(\roman*)$]
		\item 
		Run $\textbf{Insert}(x^{(t)}_j)$ for each element of the chunk. 
		\item If $(t = \DeleteTime)$:
		\begin{itemize}
		    \item 	Execute the deletion step and update $\DeleteTime \leftarrow \DeleteTime + \ceil{\log t}$.
		\end{itemize}
	
	\end{enumerate}
	\end{itemize}
\end{Implementation}
\paragraph*{Space Analysis} 
The space complexity of the fast implementation is still $O(\frac{1}{\eps} \log(\eps n))$. This is simply because after the deletion step at any time $t$, we delay the deletion by $\ceil{\log t}$ time steps. This, by~\Cref{rem: Delaying-Deletions-GK}, implies that the space used only increases by an additive term of $O(\ell \log  t)=O(\frac{1}{\eps} \log(\eps n))$, as $t = O(\eps n)$ and $\ell = O(1/\eps)$ (\Cref{def:time}).

\paragraph*{Time Analysis} The arguments used to bound the time complexity of the implementation described  will be very similar to the ones in \Cref{subsec: fast-imp-simple}. 

We use the fact that $\QS$ is a BST with $O(\frac{1}{\eps} \log (\eps n)$ elements to make the following observation.
\begin{observation} \label{obs: ins del time gk}
Over a stream of length $n$, the total time taken by \Cref{implementation: gk} to perform all \textbf{Insert} and \textbf{Delete} operations is $O( n \cdot (\log (1/\eps) + \log \log (\eps n)))$. Moreover, this is the total runtime of \Cref{implementation: gk}, excluding the time required to decide which elements to delete in the deletion steps.
\end{observation}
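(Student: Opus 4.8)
The plan is to follow the same two-step bookkeeping as in the proof of \Cref{obs: ins del time greedy}: first show that every individual \textbf{Insert} and \textbf{Delete} on \QS costs only $O(\log s)$, and then count how many such operations \Cref{implementation: gk} performs over the stream. For the per-operation cost, the key point is that we maintain $(\g,\Delta)$ values rather than $\rmin,\rmax$ directly. By \Cref{obs:g-del}, \textbf{Insert}$(x)$ merely has to locate the successor $\ei{i}$ of $x$ in the BST, read off $(\g_i,\del_i)$, and create a new node for $x$ with $\g(x)=1$, $\del(x)=\g_i+\del_i-1$, and $t_0(x)=t$; no other stored value changes. Similarly \textbf{Delete}$(\ei{i})$ only sets $\g_{i+1}\leftarrow\g_{i+1}+\g_i$ and removes $\ei{i}$. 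Since \QS is kept as a balanced BST keyed by element value (Part~\RNum{1} of the implementation), a successor search, a single BST insertion or deletion, and $O(1)$ metadata updates each cost $O(\log s)$, where $s$ is the current number of stored elements. Invoking the space bound --- \Cref{lem:GK-space} together with the delayed-deletion estimate \Cref{rem: Delaying-Deletions-GK}, which is exactly what the schedule $\DeleteTime\leftarrow\DeleteTime+\ceil{\log t}$ in \Cref{implementation: gk} realizes --- we get $s=O((1/\eps)\log(\eps n))$ throughout, hence $\log s=O(\log(1/\eps)+\log\log(\eps n))$.

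Next I would count operations: over a stream of length $n$, each chunk triggers exactly one \textbf{Insert} per element, so $n$ \textbf{Insert}s in all, and since each element is removed from \QS at most once there are at most $n$ \textbf{Delete}s. Multiplying the $O(n)$ operations by the $O(\log(1/\eps)+\log\log(\eps n))$ cost of each yields the claimed bound. For the ``moreover'' clause, the only remaining work of \Cref{implementation: gk} is (a) the $O(1)$-per-chunk bookkeeping of incrementing $t$, testing $t=\DeleteTime$, and updating $\DeleteTime$, which totals $O(\eps n)$ and is absorbed, and (b) the deletion step's internal decision of which elements to delete (the inorder traversal, computing $\bv$ via \Cref{obs: bvalue-compute}, computing the $\gs$ values as in \Cref{subsec:comp_gs_val}, and the linear scan checking conditions $(1)$ and $(2)$) --- which is precisely what the statement sets aside.

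I do not expect a genuine obstacle here. The only things to be careful about are the two places where earlier structure does the work: the $(\g,\Delta)$ reformulation is what keeps \textbf{Insert}/\textbf{Delete} at $O(\log s)$ rather than $\Theta(s)$, and it must be confirmed that the delay schedule of \Cref{implementation: gk} indeed falls under \Cref{rem: Delaying-Deletions-GK}, so that the $O((1/\eps)\log(\eps n))$ space bound --- and hence the $O(\log s)$ per-operation bound --- genuinely holds at every point in the stream.
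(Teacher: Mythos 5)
Your proof is correct and follows essentially the same route as the paper: the paper likewise argues that each element is inserted and deleted at most once, that each BST operation costs $O(\log s)$ with $s = O((1/\eps)\log(\eps n))$ by the space analysis, and multiplies. Your extra care in noting that the $(\g,\Delta)$ representation keeps each \textbf{Insert}/\textbf{Delete} to $O(1)$ metadata updates (rather than $\Theta(s)$ updates of $\rmin,\rmax$) is a point the paper leaves implicit but is exactly the right thing to check.
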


Next, we bound the time spent across all deletion steps in deciding which elements to delete. This will complete our analysis.
\begin{claim}\label{lem: time to decide elements to delete-gk}
Over a stream of length $n$, the total time taken by \Cref{implementation: gk} to decide which elements need to be deleted over all the executed deletion steps is $O(n)$. 
\end{claim}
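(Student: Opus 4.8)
The plan is to follow the same two–part structure used for \Cref{lem: time to decide elements to delete-greedy}: first bound the work performed \emph{inside} a single deletion step, then bound the \emph{number} of deletion steps executed over the whole stream, and finally multiply the two bounds. By the space analysis of \Cref{implementation: gk}, the size $s$ of $\QS$ stays $O(\frac{1}{\eps}\log(\eps n))$ at all times (the delay of $\ceil{\log t}$ adds only $O(\ell\log t)$ extra elements), so a per–step bound of $O(s)$ will be uniform throughout.

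First I would show that the decision part of a single deletion step runs in $O(s)$ time, not counting the $O(\log s)$–per–operation cost of the actual BST deletions, which is already charged in \Cref{obs: ins del time gk}. As in the greedy implementation (the box before \Cref{implementation: greedy-simple}), one performs an inorder traversal of $\QS$ to build a doubly–linked list of the stored elements sorted by value ($O(s)$ time) and computes $\bv(e)$ for each stored $e$ via \Cref{obs: bvalue-compute} ($O(1)$ each). The one genuinely new ingredient is that condition $(2)$ of \Cref{alg:GK} involves $\gs_i$, the sum of $\g$–values over $\Seg(\ei{i})$; I would precompute all the $\gs$–values in a single $O(s)$ sweep in exactly the same way the $\Gs$–values are computed in \Cref{subsec:comp_gs_val}. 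Equipped with the list annotated by $\bv$ and $\gs$, one traverses it from the largest element to the smallest; whenever the current $\ei{i}$ satisfies both deletion conditions, one removes $\ei{i}$ together with $\Seg(\ei{i})$ (a contiguous block of smaller elements) from the list and from the BST. Since each removed element is touched $O(1)$ times during the sweep, the total decision work per step is $O(s) = O(\frac{1}{\eps}\log(\eps n))$, and one verifies, as for \Cref{alg:simple}, that this single right–to–left pass correctly realizes the ``while there exists'' loop of \Cref{alg:GK}.

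Second I would bound the number of deletion steps. Since \Cref{implementation: gk} sets $\DeleteTime \leftarrow \DeleteTime + \ceil{\log t}$ after a deletion step at time $t$, I partition the time steps into dyadic intervals $[2^i,2^{i+1})$ for $1 \le i \le \ceil{\log(\eps n)}$. Every $t$ in such an interval satisfies $\ceil{\log t} \ge i$, so consecutive deletion steps inside the interval are at least $i$ apart, and hence at most $1 + 2^i/i$ of them fall in it. Summing over all intervals gives
\[
\#\{\text{deletion steps}\} \;\le\; \sum_{i=1}^{\ceil{\log(\eps n)}}\Bigl(1 + \frac{2^i}{i}\Bigr) \;=\; O\!\left(\frac{\eps n}{\log(\eps n)}\right),
\]
since the geometric-type sum is dominated by its last term. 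Combining the two bounds, the total time spent deciding which elements to delete is $O\!\bigl(\tfrac{\eps n}{\log(\eps n)}\bigr)\cdot O\!\bigl(\tfrac{1}{\eps}\log(\eps n)\bigr) = O(n)$, proving the claim.

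The main obstacle I anticipate is entirely in the first part: making precise that the cascading nature of \Cref{alg:GK} — where deleting one element together with its segment can make a neighboring element newly deletable — is still captured by a single $O(s)$–time right–to–left pass, and that the $\gs$–values required by the condition checks can be maintained (or recomputed locally) within that budget rather than being recomputed from scratch after each segment deletion. Everything else (building the sorted list, computing $\bv$, and counting deletion steps) is routine and parallels the greedy case verbatim.
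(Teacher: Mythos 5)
Your proposal is correct and follows essentially the same route as the paper: bound the decision work per deletion step by $O(s)=O(\frac{1}{\eps}\log(\eps n))$ (with the $\gs$-values computed by the same stack-based sweep used for the $\Gs$-values), bound the number of deletion steps by $O(\eps n/\log(\eps n))$ via the dyadic-interval count, and multiply. The paper leaves both sub-steps as routine adaptations of the greedy case, so your filled-in details (including the observation that each element is touched $O(1)$ times in the right-to-left pass even with segment deletions) simply make explicit what the paper asserts.
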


\begin{proof}

It is easy to verify that the time taken to decide which elements need to be deleted inside a deletion step is $O(s)=O(\frac{1}{\eps} \log(\eps n))$. Therefore, it is enough to bound  the number of deletion steps that we may have to execute over the entire stream. Formally,
\begin{claim}
\label{clm: number-of-merges-gk}
Over a stream of length $n$, the number of deletion steps performed by \Cref{implementation: gk} is $O\Big(\eps n/ \log (\eps n)\Big)$.
\end{claim}
The proof of the claim is omitted as it is very similar to that of \Cref{clm:number-of-merges }.
\Qed{\Cref{lem: time to decide elements to delete-gk}}

\end{proof}
\Cref{obs: ins del time gk,lem: time to decide elements to delete-gk} imply that the total time taken by \Cref{implementation: gk} over a stream of length $n$ is $O\Big(n \cdot\big(\log(1/\eps)+\log\log(\eps n)\big)\Big)$. Thus, the amortized update time per element is $O\big(\log(1/\eps)+\log\log(\eps n)\big)$. This implementation can be modified to decrease the worst case per element processing time by spreading out the deletions over many steps. Formally, we have:
\begin{claim}\label{lem: worst case time gk}
There is an implementation of \Cref{alg:GK} that has worst case update time $O( \log \frac{1}{\eps} + \log \log (\eps n))$ per element, without affecting its asymptotic space complexity.
\end{claim}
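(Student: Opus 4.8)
The plan is to de-amortize \Cref{implementation: gk} using the same spreading trick already carried out in \Cref{lem: worst case greedy} for \Cref{implementation: greedy-simple}; the only things that change are the length of the gap between consecutive deletion steps and the fact that the deletion step must now also maintain $\gs$-values. Recall that \QS is kept as a balanced BST with $s = O((1/\eps)\log(\eps n))$ nodes by \Cref{lem:GK-space}, so every \textbf{Insert}, \textbf{Delete}, and BST search costs $O(\log s) = O(\log(1/\eps)+\log\log(\eps n))$. The only operation in \Cref{implementation: gk} whose cost can exceed $O(\log s)$ per element is a deletion step: it does an inorder traversal and computes $\bv$- and $\gs$-values in $O(s)$ time, and then performs up to $O(s)$ \textbf{Delete} operations, for a total of $O(s\log s)$. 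Since \Cref{obs: ins del time gk,lem: time to decide elements to delete-gk} already bound the total work over the stream by $O(n(\log(1/\eps)+\log\log(\eps n)))$, all that remains is to smooth out a single deletion step over time.

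First I would use that in \Cref{implementation: gk} a deletion step at time $t$ is followed by the next one only at time $t+\ceil{\log t}$. When a deletion step becomes due at time $t$, rather than running it at once I would run it incrementally over the next $\ceil{\log t}/2$ time steps, doing an equal share of the traversal, the $\bv$/$\gs$ computation, and the \textbf{Delete} calls at each arriving element, while routing every new arrival into an auxiliary buffer $B$ instead of inserting it into the BST. Because $s = O(\ell\ceil{\log t})$ throughout this window (by \Cref{lem:GK-space} and \Cref{rem: Delaying-Deletions-GK}), spreading the $O(s\log s)$ work over these $\approx \ell\ceil{\log t}/2$ arrivals costs $O(\log s)$ per element. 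Since new elements are only buffered during this window, the deletion step sees a fixed snapshot of \QS and thus produces exactly the state that the deletion step of \Cref{alg:GK} at time $t$ would produce, provided we use $t$ (not the current time) in the condition $\gs_i + g_{i+1} + \Delta_{i+1} \le t$; this is only more conservative, so \Cref{inv:g-delta} continues to hold. Over the remaining $\ceil{\log t}/2$ time steps I would process each arriving chunk at double rate — its $\ell$ new elements plus $\ell$ elements pulled from $B$ — so that $B$, which absorbed $\ell\ceil{\log t}/2$ elements, is exactly drained by the time the next deletion step comes due at $t+\ceil{\log t}$. Every step of this scheme touches the BST $O(1)$ times and spends $O(\log s)$ on deletion-step work per arrival, so the worst-case update time is $O(\log s) = O(\log(1/\eps)+\log\log(\eps n))$.

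For space, note $|B| \le \ell\ceil{\log t}/2 = O(\ell\log t) = O((1/\eps)\log(\eps n)) = O(s)$, so the buffer blows up the space by only a constant factor — this is exactly the $O(k\cdot\ell)$ additive term of \Cref{rem: Delaying-Deletions-GK} with $k = O(\log t)$. The only genuine difference from the greedy case is that the incremental deletion step must also keep the $\gs$-values of the elements it scans up to date; this is done exactly as the $\Gs$-value computation of \Cref{subsec:comp_gs_val} and costs $O(s)$ in total, so it folds into the $O(\log s)$-per-element budget. I do not expect a real obstacle — this is a routine de-amortization — but the point to be careful about is snapshot consistency: because \Cref{alg:GK} deletes an element together with its whole $\Seg(\ei{i})$, and later arrivals are buffered, one must verify that executing the conceptually atomic deletion step in small pieces across several time steps never leaves \QS violating \Cref{inv:g-delta}; this holds since each individual \textbf{Delete} only merges a $g$-value into the next element, and every deletion was checked against the smaller value $t$.
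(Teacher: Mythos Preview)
Your proposal is correct and follows exactly the approach the paper intends: the paper's own proof of this claim is a one-line pointer to \Cref{lem: worst case greedy}, saying only that the strategy is to spread the deletions of a deletion step over the time steps that occur before the next deletion step. You have supplied precisely this de-amortization, correctly adapting the gap length from $\ceil{\log^2 t}$ to $\ceil{\log t}$ and noting the $\gs$-value computation (which the paper already handles as in \Cref{subsec:comp_gs_val}); the snapshot-consistency remark is a nice extra sanity check but not something the paper spells out.
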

 The strategy is to spread the deletions of a deletion step over the time steps that occur before the next deletion step.
 This concludes the proof of \Cref{lem:GK-time}.

\section{Basic Setup (Weighted Setting)}\label{sec: weighted-prelims}

We now present the basic setup of our quantile summary and preliminary definitions for the 
weighted setting.
We start with an alternate equivalent formulation of the problem defined in \Cref{def:weighted-summary} in terms of the unweighted quantiles problem for which we first define the notion of \emph{unfolding} streams.

\paragraph*{Unfolding Streams.}For the weighted stream $\swt$, we define its corresponding unfolded stream $\suf$ to be the stream which contains $\wt(x_i)$ copies of $x_i$ for  $1 \leq i \leq n$. More explicitly,
\begin{align*}
    \suf := \langle \cpy{x_1}{1}, \cpy{x_1}{2}, \dots, \cpy{x_1}{\wt(x_1)}, \dots \cpy{x_n}{1}, \cpy{x_n}{2}, \dots, \cpy{x_n}{\wt(x_n)} \rangle
\end{align*}
where $\cpy{x_i}{j}$ is the $j$-th copy of element $x_i$.

It is easy to verify that the goal of the problem, as stated in \Cref{def:weighted-summary}, is equivalent to creating an $\eps$-approximate quantile summary of $\suf$. Note that  to break ties while assigning ranks to equal elements, we will assume that elements that appeared earlier in $\suf$ have lower ranks. 
 As a side note we would like to point out here that although the algorithm we present does not ``unfold'' the stream, we will continue working with $\suf$ to present the analysis of the algorithm.

We use $\WQS$ to denote the summary of $\swt$ that our algorithm creates. $\WQS$ will consist of a subset of the elements of the stream along with some auxiliary metadata about the stored elements.   We use $e_i$  to denote the $i$-th largest element of the stream stored in $\WQS$.  We use $\cpy{e_i}{j}$ to refer to the $j$-th copy of $e_i$ in $\suf$, for $1\leq j\leq \wt(e_i)$. We also use $e$ to refer to an arbitrary element of the summary (when the rank is not relevant). The number of elements of the stream stored in $\WQS$ shall be denoted by $s$.
For each element $e$, $\WQS$ stores $\wt(e)$.  The other main information we store for each element $e$ are its $\rmin$ and $\rmax$ values, which we now define:
\begin{itemize}
\item $\rmin(e)$ and $\rmax(e)$: are lower and upper bounds maintained by \WQS on the rank of $\cpy{e}{1}$ (the first copy of $e$ to appear in $\suf$). Since we are not storing all elements, we cannot determine the exact rank of a stored element, and thus focus on maintaining proper lower and upper bounds.
\end{itemize}
 To handle corner cases that arise later, we assume that $\WQS$ contains a sentinel element $e_0$ and define $\rmin(e_0) =  \rmax(e_0) = 0$ and $\wt(e_0) = 1$. Also, we insert a $+\infty$ element at the start of the stream which is considered larger than any other element and store it in $\WQS$ as $\ei{s}$.
The $\rmin$ and $\rmax$ of this element is also always equal to the weight of inserted elements (including itself). Since $+\infty$ is the largest element, inserting it in $\swt$ does not affect the rank of any other element.

\begin{observation}
$(\rmin(e)+j-1)$ and $(\rmax(e)+j-1)$ are lower and upper bounds on the rank of $\cpy{e}{j}$. 
\end{observation}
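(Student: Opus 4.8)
The plan is to reduce the claim to the fact (already used to \emph{define} $\rmin(e)$ and $\rmax(e)$) that these quantities bracket $\rank(\cpy{e}{1})$, the rank of the first copy of $e$ in $\suf$. Concretely, it suffices to prove the exact identity
\[
\rank(\cpy{e}{j}) = \rank(\cpy{e}{1}) + (j-1) \qquad \text{for every } 1 \le j \le \wt(e),
\]
since then adding $(j-1)$ to each side of $\rmin(e) \le \rank(\cpy{e}{1}) \le \rmax(e)$ immediately yields $\rmin(e)+j-1 \le \rank(\cpy{e}{j}) \le \rmax(e)+j-1$, which is exactly the asserted pair of bounds.

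To establish the identity I would first recall from the definition of the unfolded stream that the copies $\cpy{e}{1}, \cpy{e}{2}, \dots, \cpy{e}{\wt(e)}$ all originate from a single weighted update and therefore occupy \emph{consecutive} positions of $\suf$. I would then compare, using the tie-breaking convention that elements appearing earlier in $\suf$ receive smaller ranks, the multiset of stream elements counted towards $\rank(\cpy{e}{j})$ with that counted towards $\rank(\cpy{e}{1})$: every element of value strictly less than $e$ is counted for both; every element of value strictly greater than $e$ is counted for neither; every copy of value $e$ occurring (strictly) before $\cpy{e}{1}$ in $\suf$ is counted for both; and the only elements counted for $\cpy{e}{j}$ but not for $\cpy{e}{1}$ are the intermediate copies $\cpy{e}{2}, \dots, \cpy{e}{j}$, which lie between $\cpy{e}{1}$ and $\cpy{e}{j}$ in $\suf$ (here consecutiveness is what rules out any \emph{other} element sneaking in between them). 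That is exactly $j-1$ additional elements, which gives the identity.

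I do not expect a genuine obstacle here; the statement is essentially bookkeeping. The only point that needs a little care is the last step of the comparison above — verifying that no element counted towards $\rank(\cpy{e}{1})$ is lost when passing to $\cpy{e}{j}$ and that precisely the $j-1$ intermediate copies are gained — which follows cleanly once the consecutiveness of the copies in $\suf$ and the tie-breaking rule are combined. With that in hand the observation is immediate.
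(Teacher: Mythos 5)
Your argument is correct and is exactly the intended justification: the paper states this observation without proof, treating it as immediate from the fact that the $\wt(e)$ copies of $e$ occupy consecutive positions in $\suf$ and the tie-breaking rule forces $\rank(\cpy{e}{j}) = \rank(\cpy{e}{1}) + (j-1)$, after which one adds $j-1$ to the defining bounds $\rmin(e) \le \rank(\cpy{e}{1}) \le \rmax(e)$. Your careful accounting of which elements are counted for $\cpy{e}{j}$ versus $\cpy{e}{1}$ fills in the bookkeeping the paper leaves implicit; there is no gap.
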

During the stream, we insert and delete elements from the summary. This changes the rank of the elements so we have to update \WQS to reflect the changes. The procedure used to update the r-min and r-max values of elements is described below: 
\begin{tbox}
	\textbf{Insert($x, \wt(x)$).} Inserts a given element $x$ with weight $\wt(x)$ into $\WQS$. 
	
	\begin{enumerate}[label=$(\roman*)$]
	\addtolength{\itemindent}{3mm}
	    \item Store the element $x$ along with its weight $\wt(x)$ in $\WQS$. 
		\item Find the smallest element $\ei{i}$ in $\WQS$ such that $\ei{i} > x$;
		\item Set $\rmin(x) = \rmin(\ei{i-1})+ \wt(\ei{i-1})$ and $\rmax(x) = \rmax(\ei{i})$; moreover, increase $\rmin(\ei{j})$ and $\rmax(\ei{j})$ by $\wt(x)$ for all $j \geq i$. 
	\end{enumerate}
	
	\textbf{Delete($\ei{i}$).} Deletes the element $\ei{i}$ from $\WQS$. 
	
	\begin{enumerate}[label=$(\roman*)$]
	\addtolength{\itemindent}{3mm}
		\item Remove element $\ei{i}$ from $\QS$; keep all remaining $\rmin$, $\rmax$ values unchanged. 
		
	\end{enumerate}
\end{tbox} 

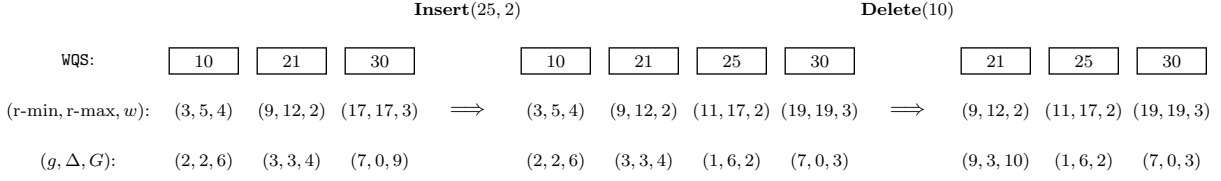
\begin{figure}[t]		
\centering	
 \hspace{1.5cm}\resizebox{0.9\textwidth}{!}{\hspace*{-5em}\begin {tikzpicture}[-latex, auto ,node distance =8mm and 14mm ,on grid ,
semithick ,
state/.style ={ rectangle ,top color =white , bottom color = white ,
draw,black , text=black , minimum width =11 mm},
labelstate/.style ={ rectangle ,top color =white , bottom color = white ,
draw,white , text=black , minimum width =11 mm}]

\node[state] (A){\footnotesize$10$}; 
\node[state] (B) [right =of A] {\footnotesize$21$}; 
\node[state] (C) [right =of B] {\footnotesize$30$}; 
\node[labelstate] (L) [left =20mm of A] {\footnotesize \WQS:};

\node[labelstate] (A1) [below =of A] {\footnotesize$(3,5,4)$}; 
\node[labelstate] (B1) [right =of A1] {\footnotesize$(9,12,2)$}; 
\node[labelstate] (C1) [right =of B1] {\footnotesize$(17,17,3)$}; 
\node[labelstate] (L1) [left =20mm of A1] {\footnotesize$( \rmin,\rmax,w)$:};

\node[labelstate] (A2) [below =of A1] {\footnotesize $(2,2,6)$}; 
\node[labelstate] (B2) [right =of A2] {\footnotesize $(3,3,4)$}; 
\node[labelstate] (C2) [right =of B2] {\footnotesize$(7,0,9)$}; 
\node[labelstate] (L2) [left =20mm of A2] {\footnotesize $( g,\Delta,G)$:};

\node[labelstate] (Ins) [above right =of C] {\footnotesize $\textbf{Insert}(25,2)$}; 

\node[labelstate] (D) [below right =of C] {$\Longrightarrow$}; 
\node[state] (E) [above right =of D] {\footnotesize\textcolor{black}{$10$}}; 
\node[state] (F) [right =of E] {\footnotesize$21$}; 
\node[state] (G) [right =of F] {\footnotesize\textcolor{black}{25}}; 
\node[state] (H) [right =of G] {\footnotesize$30$}; 

\node[labelstate] (E1) [below =of E] {\footnotesize\textcolor{black}{$(3,5,4)$}}; 
\node[labelstate] (F1) [right =of E1] {\footnotesize$(9,12,2)$}; 
\node[labelstate] (G1) [right =of F1] {\footnotesize\textcolor{black}{$(11,17,2)$}}; 
\node[labelstate] (H1) [right =of G1] {\footnotesize$(19,19,3)$}; 

\node[labelstate] (E2) [below =of E1] {\footnotesize\textcolor{black}{$(2,2,6)$}}; 
\node[labelstate] (F2) [right =of E2] {\footnotesize$(3,3,4)$}; 
\node[labelstate] (G2) [right =of F2] {\footnotesize\textcolor{black}{$(1,6,2)$}}; 
\node[labelstate] (H2) [right =of G2] {\footnotesize$(7,0,3)$}; 

\node[labelstate] (Del) [above right =of H] {\footnotesize\footnotesize $\textbf{Delete}(10)$}; 
\node[labelstate] (I) [below right =of H] {$\Longrightarrow$}; 
\node[state] (J) [above right =of I] {\footnotesize$21$}; 
\node[state] (K) [right =of J] {\footnotesize$25$}; 
\node[state] (L) [right =of K] {\footnotesize$30$}; 

\node[labelstate] (J1) [below =of J] {\footnotesize$(9,12,2)$}; 
\node[labelstate] (K1) [right =of J1] {\footnotesize$(11,17,2)$}; 
\node[labelstate] (L1) [right =of K1] {\footnotesize$(19,19,3)$}; 

\node[labelstate] (J2) [below =of J1] {\footnotesize$(9,3,10)$}; 
\node[labelstate] (K2) [right =of J2] {\footnotesize$(1,6,2)$}; 
\node[labelstate] (L2) [right =of K2] {\footnotesize$(7,0,3)$}; 

\end{tikzpicture}}	
\caption{An illustration of the update operations in the summary starting from some arbitrary state (the parameters $(g,\Delta, G)$ in this figure are defined in~\Cref{sec: weighted g delta}).}	
\label{fig:wt_insert-delete}	
\end{figure}
We now justify that after the above operations are performed, for each element $e$ in the summary, its $\rmin$ and $\rmax$ values are valid lower and upper bounds on the rank of $\cpy{e}{1}$. Suppose that a new element $x$ satisfying $e_{i-1} < x < e_{i}$ is inserted into $\WQS$. The rank of $x$ is at least one more than the rank of the last copy of $e_{i-1}$. Therefore, $\rmin(x)$, which is set to $(\rmin(e_{i-1}) + \wt(e_{i-1}) -1) + 1 = \rmin(e_{i-1}) + \wt(e_{i-1})$, is a valid lower bound on the rank of $\cpy{e}{1}$. The rank of $x$ is at most equal to the rank of the first copy of $e_i$. Therefore, setting $\rmax(x)$ equal to $\rmax(e_i)$ makes it a valid upper bound. After the insertion of $x$, the ranks of all elements in the summary larger than $x$ increase by $\wt(x)$ and hence their $\rmin$ and $\rmax$ values need to be updated. The ranks of elements smaller than $x$ do not change. Also, deleting an element from the summary does not change the bounds on the ranks of other elements in the summary.

 The following claim shows that if a certain condition on  $\rmin$ and $\rmax$ values of the elements in $\WQS$ is maintained, we can guarantee that $\WQS$ will be an $\eps$-approximate summary of $\suf$.  
\begin{claim}\label{lem:search-weighted}
 Suppose in $\WQS$ over a length $n$ stream, $\rmax(e_i) - (\rmin(e_{i-1}) + \wt(e_{i-1})-1) \leq \floor{\eps \wtsum{n}}$; then $\WQS$ is an $\eps$-approximate quantile summary of $\suf$.
\end{claim}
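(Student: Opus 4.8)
The plan is to prove this exactly as \Cref{lem:search} was proved, after translating from the language of $\WQS$ (which records bounds only on the rank of the \emph{first} copy $\cpy{e}{1}$ of each stored element) into the language of the unfolded stream $\suf$. Write $N:=\wtsum{n}$ for the length of $\suf$; as already noted above, making $\WQS$ an $\eps$-approximate summary of $\swt$ in the sense of \Cref{def:weighted-summary} is the same as making it an $\eps$-approximate summary of $\suf$. So fix a query $\phi$; the goal is to produce a stored element $e$ one of whose copies in $\suf$ has rank in $[(\phi-\eps)N,(\phi+\eps)N]$. The copies of $e$ occupy a block of $\wt(e)$ consecutive ranks, and by the Observation preceding the claim the first copy's true rank lies in $[\rmin(e),\rmax(e)]$ while the last copy's true rank lies in $[\rmin(e)+\wt(e)-1,\rmax(e)+\wt(e)-1]$. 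Hence it suffices to find a stored $e$ with
\[
\rmax(e)\le(\phi+\eps)N\qquad\text{and}\qquad \rmin(e)+\wt(e)-1\ge(\phi-\eps)N ,
\]
because then the block of ranks of $e$'s copies has left endpoint at most $(\phi+\eps)N$ and right endpoint at least $(\phi-\eps)N$, so it intersects $[(\phi-\eps)N,(\phi+\eps)N]$ --- which is exactly the requirement of \Cref{def:weighted-summary}.

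First I would dispose of the case $\phi\le\eps$ by returning $\ei{1}$: here $\rmin(\ei{1})+\wt(\ei{1})-1\ge 0\ge(\phi-\eps)N$ (weights are positive integers and $\rmin$ values are nonnegative), and the hypothesis at index $i=1$, together with the sentinel values $\rmin(\ei{0})=0$, $\wt(\ei{0})=1$, gives $\rmax(\ei{1})\le\rmin(\ei{0})+\wt(\ei{0})-1+\floor{\eps N}=\floor{\eps N}\le\eps N\le(\phi+\eps)N$. Otherwise $\phi>\eps$, and I would mirror the second half of the proof of \Cref{lem:search}: let $\ei{j}$ be the smallest stored element with $\rmin(\ei{j})+\wt(\ei{j})-1\ge(\phi-\eps)N$. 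Such an $\ei{j}$ exists and is a \emph{real} element (not $+\infty$), since the hypothesis at index $s$, using $\rmax(\ei{s})\ge N$, forces $\rmin(\ei{s-1})+\wt(\ei{s-1})-1\ge N-\floor{\eps N}>(\phi-\eps)N$ whenever $\phi<1$, so $\ei{s-1}$ already qualifies. By minimality of $j$, the preceding stored element satisfies $\rmin(\ei{j-1})+\wt(\ei{j-1})-1<(\phi-\eps)N$ --- this holds when $j\ge 2$, and also when $j=1$, in which case $\ei{j-1}=\ei{0}$ and the left side equals $0<(\phi-\eps)N$. Now the hypothesis at index $j$ yields
\[
\rmax(\ei{j})\le\bigl(\rmin(\ei{j-1})+\wt(\ei{j-1})-1\bigr)+\floor{\eps N}<(\phi-\eps)N+\eps N=\phi N\le(\phi+\eps)N ,
\]
so $\ei{j}$ meets both displayed conditions and is a valid answer to the query, finishing the proof.

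I do not expect any genuine obstacle here: the structural content is identical to that of \Cref{lem:search}, and the only thing that needs care is the bookkeeping of the weight shifts --- passing from the stored bounds on $\cpy{e}{1}$ to bounds on the whole block of $\wt(e)$ ranks via the Observation, and keeping the ``$+\wt(\cdot)-1$'' terms and the ``$-1$'' inside $\rmin(\ei{i-1})+\wt(\ei{i-1})-1$ consistent with both the form of the hypothesis and the interval written in \Cref{def:weighted-summary} --- together with the two degenerate cases, which the $\ei{0}$ and $+\infty$ sentinels are designed to absorb, and the harmless estimate $\floor{\eps N}\le\eps N$.
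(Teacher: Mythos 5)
Your proof is correct and follows essentially the same route as the paper: both pass to the unfolded stream $\suf$, use the fact that the $j$-th copy of $e_i$ has rank bracketed by $\rmin(e_i)+j-1$ and $\rmax(e_i)+j-1$, and exploit the hypothesis exactly as the gap bound between consecutive copies. The only difference is presentational — the paper verifies the hypothesis of \Cref{lem:search} for the virtual summary containing all copies and invokes that claim as a black box, whereas you inline its proof (the $\phi\le\eps$ case and the minimal-$j$ case) with the weight shifts carried through explicitly.
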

\begin{proof}
	First, note that for each $e$, $\WQS$ stores enough information to reconstruct $\rmin$ and 
	$\rmax$ 
	values of all copies of $e$ in $\suf$. Therefore, we assume in this proof that all copies of $e$ are 
	actually present in $\WQS$. 
	
	Due to \Cref{lem:search}, it is enough to show that for any two consecutive elements $e$ and 
	$e'$ of $\WQS$ satisfying $e \geq e'$, we have $\rmax(e) - \rmin(e') \leq \floor{\eps \wtsum{n}}$. 
	If $e$ and $e'$ are equal, i.e., $e = \cpy{e_i}{j}$ and $e' = \cpy{e_i}{j-1}$ for some $i \in [s]$ and 
	$2 \leq j \leq \wt(e_{i})$, then we have, 
	\begin{align*}
		\rmax(e) - \rmin(e') 
		& = (\rmax(e_i)+j-1)- (\rmin(e_i)+j-2)\\
		& = \rmax(e_i) - (\rmin(e_i) -1)\\
		& \leq \rmax(e_i) - (\rmin(e_{i-1} )+ \wt(e_{i-1}) - 1) \tag{From $\rmin(e_i) \geq \rmin(e_{i-1} )+ 
		\wt(e_{i-1})$}\\
		&\leq \floor{\eps \wtsum{n}}.\\
	\end{align*}
	The remaining case is when $e = \cpy{e_i}{1}$ and $e' = \cpy{e_{i-1}}{\wt(e_{i-1})}$ for some $i\in 
	[s]$.
	$$\rmax(e) - \rmin(e') = \rmax(e_i) -(\rmin(e_{i-1}) + \wt(e_{i-1}) -1) \leq \floor{\eps \wtsum{n}}. 
	\qedhere$$
\end{proof}

\subsection{Time Steps and Bands}

We define an equivalent notion of time steps for weighted streams. We say that $\tim{\emm} = 
\floor{\eps \wtsum{\emm}}$ time steps have elapsed after the arrival of $\emm$ elements in $\swt$. 
Intuitively, a chunk of total weight $\ell:= \frac{1}{\eps}$ arrives in the stream in a single time step. 
For each element $x_{\emm}$ in $\swt$, we define its insertion time step $t_0(x_{\emm}) = \floor{\eps (\wtsum{\emm-1}+1)}$.
 The band value of an element $x$ of $S_w$ is the band value assigned to the first copy of $x$ in $\suf$ by \Cref{def:band}.

We would also have a formal equivalent definition of bands for the weighted setting that will allow us to compute them in $O(1)$ time.  To see the equivalence, we refer the reader to \Cref{eq: closed-form} in \Cref{sec:basic}.
\begin{definition}[\textbf{Band-Values and Bands}]
	\label{eq:closed-form-weighted}
	When $\emm$ elements of $\swt$ have been inserted, for any element $x$ of the stream, $\bv(x)$ is $\alpha$ if and only if the following inequality is satisfied,
	\begin{align*}
	2^{\alpha-1}  + (\tim{\emm} \bmod 2^{\alpha-1}) \leq 
	\tim{\emm}-t_0(x)  < 2^{\alpha} + (\tim{\emm} \bmod 2^{\alpha}).
	\end{align*}

For any integer $\alpha \geq 0$, we refer to the set of all elements $x$ with $\bv(x) = \alpha$ as the \textbf{band} $\alpha$, denoted by $\Band_{\alpha}$; we also use $\Band_{\leq \alpha}$ to
denote the union of bands $0$ to $\alpha$. 		
\end{definition}

A corollary of \Cref{eq:closed-form-weighted} is that \emph{number of band-values} after seeing $\emm$ elements is $ \Bt{\emm} =O(\log \tim{\emm}) = O(\log (\eps \wtsum{\emm}))$. We also note that at any point, the sum of weights of all the elements belonging to bands $0$ to $\alpha$ is at most $O(\ell \cdot 2^{\alpha+1})$ because all the copies of all these elements belong to $\Band{\leq \alpha}$ for $\suf$. We note these facts below: 
\begin{align}
	\text{\# of b-values } \Bt{\emm}=O(\log \eps \wtsum{\emm} ) \quad \textnormal{and} \quad \text{$\sum_{x\in \Band{\leq \alpha}} w(x)\leq O(\ell \cdot 2^{\alpha+1})$ for all $\alpha \geq 0$}. \label{eq: weighted band-alpha}
\end{align}
We now make the following observation:
\begin{observation}\label{obs:weighted-stability}
    At any point in time, if $\bv(x) \leq \bv(y)$ for elements $x$ and $y$, then at any point after this, $\bv(x) \leq \bv(y)$.
\end{observation}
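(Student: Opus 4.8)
The plan is to reduce the statement to \Cref{obs:stable}, which is the corresponding stability fact for the unweighted unfolded stream $\suf$. Recall two pieces of the dictionary set up in this subsection: first, the band value $\bv(x)$ of an element $x$ of $\swt$ is, by definition (equivalently, by \Cref{eq:closed-form-weighted}), the band value that \Cref{def:band} assigns to its first copy $\cpy{x}{1}$ in $\suf$; second, a ``point in time'' in the weighted setting is the moment right after some prefix $x_1,\ldots,x_m$ of $\swt$ has been processed, at which moment the relevant time step is $\tim{m} = \floor{\eps \wtsum{m}}$ --- exactly the time step used by \Cref{eq:closed-form-weighted}, and the one reached in $\suf$ after its first $\wtsum{m}$ copies. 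So ``at the point after $x_m$, $\bv(x)\le \bv(y)$'' is literally the statement ``at time step $\tim{m}$ in $\suf$, $\bv(\cpy{x}{1})\le \bv(\cpy{y}{1})$''.

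Two small things need to be checked before invoking \Cref{obs:stable}. First, the time steps are monotone along the weighted stream: since every weight is a positive integer, $\wtsum{m}$ is strictly increasing in $m$, hence $\tim{m'} = \floor{\eps\wtsum{m'}} \ge \floor{\eps\wtsum{m}} = \tim{m}$ whenever $m' \ge m$. So a later point of $\swt$ corresponds to a weakly later time step of $\suf$. Second, if $x = x_j$ and $y = x_k$ with $j,k \le m$, then both $\cpy{x}{1}$ and $\cpy{y}{1}$ have already appeared in $\suf$ by time $\tim{m}$: indeed $t_0(x_j) = \floor{\eps(\wtsum{j-1}+1)} \le \floor{\eps\wtsum{m}} = \tim{m}$ using $\wtsum{j-1}+1 \le \wtsum{j}\le \wtsum{m}$, and similarly for $y$. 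Thus \Cref{def:band} genuinely assigns $\cpy{x}{1}$ and $\cpy{y}{1}$ band values at every time step in $[\tim{m}, \tim{m'}]$.

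Given these, the conclusion is immediate: from $\bv(\cpy{x}{1})\le \bv(\cpy{y}{1})$ at time step $\tim{m}$ in $\suf$, \Cref{obs:stable} (applied to $\suf$ with $t = \tim{m}$ and $t' = \tim{m'}\ge \tim{m}$) gives $\bv(\cpy{x}{1})\le \bv(\cpy{y}{1})$ at time step $\tim{m'}$, i.e.\ $\bv(x)\le \bv(y)$ at the later point after $x_{m'}$. There is no real obstacle here; the only care needed is in lining up the notion of ``time'' between $\swt$ and $\suf$, which the definitions of this subsection were arranged to make transparent. (Alternatively, one can give a self-contained proof mirroring the one-step argument behind \Cref{obs:stable}: over a single time step, if $\bv(x) < \bv(y)$ then $\bv(x)$ rises by at most one and $\bv(y)$ never falls, while if $\bv(x) = \bv(y)$ then both are promoted in exactly the same time steps by \Cref{def:band}, so equality is preserved; iterating over the time steps that elapse between the two points yields the claim.)
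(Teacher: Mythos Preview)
Your proposal is correct and follows essentially the same approach as the paper: both reduce to the unweighted stability fact by using that $\bv(x)$ in the weighted stream is defined as the band value of $\cpy{x}{1}$ in $\suf$, so the simultaneous-update property of \Cref{def:band} (equivalently, \Cref{obs:stable}) carries over directly. Your write-up is considerably more careful than the paper's one-line justification in checking the alignment of time steps and the well-definedness of band values, but the underlying idea is identical.
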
	
This is simply because, in the unweighted setting, band-values of elements are updated \emph{simultaneously} based on the value of the current time step (\Cref{def:band}). Thus, the $\bv$ of the first copies of each element in a band is also updated simultaneously in $\suf$. Thus, \Cref{obs:weighted-stability} is true.

\subsection{Indirect handling of \texorpdfstring{$\rmin$}{rmin} and \texorpdfstring{$\rmax$}{rmax} values}\label{sec: weighted g delta}
To describe our algorithm, it is better to store the $\rmin$ and $\rmax$ values indirectly as $g$ and $\Delta$ values which we define for the weighted algorithm as follows. For any element $e_i$ in  $\WQS$,
\begin{align}\label{eq:g-delta-weighted}
g_i = \rmin(e_i) - (\rmin(e_{i-1}) + \wt(e_{i-1}) - 1), \qquad \Delta_i = \rmax(e_i) - \rmin(e_{i});
\end{align}

The $g$ value can be interpreted to be the difference between the minimum possible rank of $e_i$ 
and the minimum possible rank of the last copy of $e_{i-1}$. The $\Delta$ value is the difference 
between the $\rmax$ and $\rmin$ values of the first copy of $e_i$. The $\rmin$ and $\rmax$ values 
can be recovered given the $g$-values,  $\Delta$-values and the weights of all elements in $\WQS$ 
as follows:
\begin{align*}
    \rmin(e_i) = g_i + \sum\limits_{j = 1}^{i-1} (g_j+\wt(e_{j}) -1), \qquad \rmax(e_i) = \Delta_i + g_i +  \sum\limits_{j = 1}^{i-1} (g_j+\wt(e_{j}) -1).
\end{align*}

 This motivates the definition of the quantity $G_i$, for each element $\ei{i}$ in \WQS: 
\begin{align}\label{eq:G-weighted}
\G_i = \g_i+\wt(\ei{i})-1.
\end{align}
We will soon see that the $\G$-value has a nice property that will prove useful in the analysis of the algorithms that we propose. We now use the $g$ and $\Delta$ values defined to state the invariant that we maintain to ensure that $\WQS$ is an $\eps$-approximate quantile summary.
\begin{invariant}\label{inv: g-delta-weighted}
	After seeing $\emm$ elements of $\swt$, each element $e_i \in \WQS$ satisfies ${g_i + \Delta_i \leq \tim{\emm} }$.
\end{invariant}
From \Cref{eq:g-delta-weighted} we note that $g_i + \Delta_i = \rmax(e_i) - (\rmin(e_{i-1}) + \wt(e_{i-1})-1)$. Also, $\tim{\emm} = \floor{\eps \wtsum{\emm}}$ by definition. Therefore, if $\WQS$ maintains \Cref{inv: g-delta-weighted}, ~\Cref{lem:search-weighted} implies that it is an $\eps$-approximate quantile summary of $\swt$. 

The following observation now describes how $g$ and $\Delta$ values of elements are updated during $\textbf{Insert}$ and $\textbf{Delete}$ operations (see \Cref{sec: weighted-prelims}). 
\begin{observation}\label{obs: g-delta-update-weighted}
    In the summary $\WQS$:
    \begin{itemize}
        \item $\textbf{Insert}(x,w(x)):$ Sets $g(x) = 1$ and $\Delta(x) = g_i + \Delta_i -1$ and keeps the remaining $(g, \Delta)$ values unchanged.
        \item $\textbf{Delete}(e_i):$ Sets $g_{i+1}$ to equal $ g_{i+1}+  \G_i=g_{i+1} + (g_i + \wt(e_i) - 1) $, and keeps the remaining $(g,\Delta)$ values unchanged.
    \end{itemize}
\end{observation}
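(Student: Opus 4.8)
The plan is to verify both bullet points by direct substitution into the definitions~\eqref{eq:g-delta-weighted} and~\eqref{eq:G-weighted}, while carefully tracking (i) which $\rmin,\rmax$ values each operation changes and (ii) how the indices of the stored elements are relabeled. The only subtlety is to keep the \emph{pre-operation} and \emph{post-operation} values of $g_i,\Delta_i$ distinct whenever both occur in the same identity; everything else is routine.

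First I would handle \textbf{Insert}$(x,\wt(x))$. Let $e_i$ be the smallest stored element with $e_i>x$ at the moment of insertion, so $e_{i-1}<x<e_i$; the procedure inserts $x$ between $e_{i-1}$ and $e_i$ and increases $\rmin(e_j),\rmax(e_j)$ by $\wt(x)$ for every stored $e_j\ge e_i$, leaving all smaller elements untouched. For the new element $x$, substituting $\rmin(x)=\rmin(e_{i-1})+\wt(e_{i-1})$ and $\rmax(x)=\rmax(e_i)$ into~\eqref{eq:g-delta-weighted} makes the $\wt(e_{i-1})$ terms cancel up to the $-1$, giving $g(x)=1$, and $\Delta(x)=\rmax(x)-\rmin(x)=\rmax(e_i)-(\rmin(e_{i-1})+\wt(e_{i-1})-1)-1=g_i+\Delta_i-1$ with $g_i,\Delta_i$ the \emph{old} values of $e_i$. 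For any other stored element I would argue invariance by cases: elements $\le e_{i-1}$ keep their $\rmin,\rmax,\wt$ and predecessor, hence their $g,\Delta$; for an element $e\ge e_i$, both its own $\rmin$ and the predecessor-contribution $\rmin(\cdot)+\wt(\cdot)-1$ that appears in its $g$-value increase by exactly $\wt(x)$ --- for $e_i$ this uses $\rmin(x)+\wt(x)-1=\rmin(e_{i-1})+\wt(e_{i-1})+\wt(x)-1$, i.e.\ the old predecessor-contribution plus $\wt(x)$ --- so the $g$-value is unchanged, and its $\Delta$-value is the difference of $\rmax$ and $\rmin$ of a single element, both of which shift by the same $\wt(x)$.

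Next I would handle \textbf{Delete}$(e_i)$. Here no surviving element's $\rmin$ or $\rmax$ changes, so every $\Delta$-value is trivially preserved and every $g$-value is preserved except that of $e_{i+1}$, whose predecessor becomes $e_{i-1}$. Writing the new $g_{i+1}$ from~\eqref{eq:g-delta-weighted} and inserting $\pm(\rmin(e_i)+\wt(e_i)-1)$ telescopes it through the removed element into the old $g_{i+1}$ plus $g_i+\wt(e_i)-1$, i.e.\ $g_{i+1}+\G_i$ by~\eqref{eq:G-weighted}. (It may help to cross-check these two updates against the example in~\Cref{fig:wt_insert-delete}.) I do not expect a genuine obstacle here; the only thing that needs care is the index bookkeeping and the old-versus-new distinction, which is why I would spell out the telescoping step explicitly rather than leave it implicit.
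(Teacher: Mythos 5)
Your proposal is correct and follows exactly the route the paper takes: the paper simply asserts that the observation ``follows from \Cref{eq:g-delta-weighted} and the way in which $\rmin$ and $\rmax$ values change'' under \textbf{Insert} and \textbf{Delete}, and your substitution and telescoping computations are precisely the details being elided there. All of your individual calculations (the cancellation giving $g(x)=1$, the identification $\Delta(x)=g_i+\Delta_i-1$ with the old values of $e_i$, the invariance of the other $(g,\Delta)$ values under the uniform shift by $\wt(x)$, and the $\pm(\rmin(e_i)+\wt(e_i)-1)$ telescoping for deletion) check out.
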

The correctness of \Cref{obs: g-delta-update-weighted} follows from \Cref{eq:g-delta-weighted} and the way in which $\rmin$ and $\rmax$ values change when these operations are performed. As promised, we present useful properties of $\G$ and $\Delta$ values.

\noindent
\textbf{$\bm{\G}$-value.} To understand this, we define the notion of \emph{coverage} of any element in \WQS. We say that $\ei{i}$ \emph{covers} $\ei{i-1}$ whenever $\ei{i-1}$ is deleted from
	the summary, in which case $\ei{i}$ also covers all elements that $\ei{i-1}$ was covering so far (every element only covers itself upon insertion). We define 
	\begin{itemize}
		\item $C(\ei{i})$: the set of elements covered by $\ei{i}$. By definition, at any point of time 
		\begin{align} \label{obs:weighted-disjoint coverage}
		\G_i = \sum_{x\in C(\ei{i})} w(x)
		\text{ and } C(\ei{i}) \cap C(\ei{j}) = \emptyset
		\end{align}
		for any $\ei{i},\ei{j}$ currently stored in $\QS$. 
	\end{itemize}

 We claim that $\G_i$ equals the sum of weights of the elements in the coverage of $e_i$. This is easy to verify by induction. When we insert an element, we set its $g$ value to be 1 and the element only covers itself, thus its $G$ value is equal to its weight by \Cref{eq:G-weighted}. In the way $G$-value is updated upon deletions, according to \Cref{obs: g-delta-update-weighted}, this continues to be the case throughout the algorithm.

\noindent
\textbf{$\Delta$}\textbf{-value.} The $\Delta$-value of an element is a measure of the error with which we know its rank. 
We can use \Cref{inv: g-delta-weighted} to deduce the following upper bound on the $\Delta$ value of an element $x$ in terms of its insertion time $t_0(x)$.  
\begin{align}\label{eq:del-t0-weighted}
 \Delta(x) \leq t_0(x). 
\end{align}

The intuition here is that after seeing $k$ elements of the stream, the maximum possible difference in possible ranks is bounded by $t_k$ if \Cref{inv: g-delta-weighted} is maintained. Hence, the error in the rank of a newly inserted element is also upper bounded by $t_k$. Formally, suppose that $x$ is the $j$-th element of the stream and satisfies ${e_{i-1} < x < e_i}$ at the time of insertion into $\WQS$. When $x$ is inserted into $\WQS$, we set $\Delta(x) = g_i + \Delta_i -1$. \Cref{inv: g-delta-weighted} implies that $\Delta(x) \leq \floor{\eps \wtsum{j-1}} - 1 \leq \floor{\eps (\wtsum{j-1}+1)} = t_0(x) $.

\section{A non-trivial extension of GK algorithm for weighted streams}\label{sec: weighted-stream-gk}

In this section, we present our extension of the GK algorithm for weighted streams. 
Recall the discussion in \Cref{sec:greedy-space} where we shed light on some counter-intuitive 
choices in GK summaries which turn out to be a basis for their 
tighter $O(\frac{1}{\eps} \log (\eps n))$ space. 
In particular, we motivate the following definition:
\begin{definition}[\textbf{Segment}]\label{def:segment}
 The \textbf{segment} of an element $\ei{i}$ in \WQS, denoted by $\Seg(\ei{i})$, is defined as the maximal set of consecutive elements ${\ei{j},\ei{j+1} ,\cdots ,\ei{i-1}}$ in \WQS with $\bv$ strictly less than $\bv(\ei{i})$. We let $\Gs_i$ be the sum
  of the $\G$-values of $\ei{i}$ and its segment, i.e., $\Gs_i=\G_i \hspace{2pt}+ \hspace{-7pt}\sum \limits_{\ei{k}\in \Seg(\ei{i})} \hspace{-2pt}G_k$.
\end{definition}

See \Cref{figs:segment} below for an illustration. 
\begin{figure}[!ht]
\centering
\hspace{-1cm}\resizebox{0.6\textwidth}{!}{		\tikzset{every picture/.style={line width=0.75pt}} 
		\resizebox{12cm}{!}{
		\begin{tikzpicture}[x=0.75pt,y=0.75pt,yscale=-1,xscale=1]

		\draw    (132,219) -- (503.11,219) (174,215) -- (174,223)(216,215) -- (216,223)(258,215) -- (258,223)(300,215) -- (300,223)(342,215) -- (342,223)(384,215) -- (384,223)(426,215) -- (426,223)(468,215) -- (468,223) ;
		\draw [shift={(506.11,219)}, rotate = 180] [fill={rgb, 255:red, 0; green, 0; blue, 0 }  ][line width=0.08]  [draw opacity=0] (8.93,-4.29) -- (0,0) -- (8.93,4.29) -- cycle    ;
		\draw    (132,12.42) -- (132,219) (136,47.42) -- (128,47.42)(136,82.42) -- (128,82.42)(136,117.42) -- (128,117.42)(136,152.42) -- (128,152.42)(136,187.42) -- (128,187.42) ;
		\draw [shift={(132,9.42)}, rotate = 90] [fill={rgb, 255:red, 0; green, 0; blue, 0 }  ][line width=0.08]  [draw opacity=0] (8.93,-4.29) -- (0,0) -- (8.93,4.29) -- cycle    ;
		\draw   (156,35.57) -- (186.51,35.57) -- (186.51,53) -- (156,53) -- cycle ;
		\draw  [fill={rgb, 255:red, 225; green, 225; blue, 225 }  ,fill opacity=1 ] (198,108.57) -- (228.51,108.57) -- (228.51,126) -- (198,126) -- cycle ;
		\draw  [fill={rgb, 255:red, 160; green, 160; blue, 160 }  ,fill opacity=1 ] (243,140.57) -- (273.51,140.57) -- (273.51,158) -- (243,158) -- cycle ;
		\draw  [fill={rgb, 255:red, 160; green, 160; blue, 160 }  ,fill opacity=1 ] (285,174.57) -- (315.51,174.57) -- (315.51,192) -- (285,192) -- cycle ;
		\draw  [fill={rgb, 255:red, 225; green, 225; blue, 225 }  ,fill opacity=1 ] (326,107.57) -- (356.51,107.57) -- (356.51,125) -- (326,125) -- cycle ;
		\draw   (364,70.57) -- (394.51,70.57) -- (394.51,88) -- (364,88) -- cycle ;
		\draw   (411,138.57) -- (441.51,138.57) -- (441.51,156) -- (411,156) -- cycle ;
		\draw   (451,101.57) -- (481.51,101.57) -- (481.51,119) -- (451,119) -- cycle ;
		\draw   (342.51,87.72) .. controls (342.51,83.05) and (340.18,80.72) .. (335.51,80.72) -- (288.01,80.72) .. controls (281.34,80.72) and (278.01,78.39) .. (278.01,73.72) .. controls (278.01,78.39) and (274.68,80.72) .. (268.01,80.72)(271.01,80.72) -- (220.51,80.72) .. controls (215.84,80.72) and (213.51,83.05) .. (213.51,87.72) ;
		\draw   (302.51,136.36) .. controls (302.51,131.69) and (300.18,129.36) .. (295.51,129.36) -- (291.82,129.36) .. controls (285.15,129.36) and (281.82,127.03) .. (281.82,122.36) .. controls (281.82,127.03) and (278.49,129.36) .. (271.82,129.36)(274.82,129.36) -- (268.51,129.36) .. controls (263.84,129.36) and (261.51,131.69) .. (261.51,136.36) ;
		
		\draw (80,100) node [anchor=north west, rotate=90][inner sep=0.75pt]  [font=\footnotesize,xscale=1.3,yscale=1.3] [align=left] {Band-value};
		\draw (500,230) node [anchor=north west][inner sep=0.75pt]  [font=\footnotesize,xscale=1.3,yscale=1.3] [align=left] {Position };
		\draw (251,56.08) node [anchor=north west][inner sep=0.75pt]  [font=\scriptsize,xscale=1.3,yscale=1.3] [align=left] {seg($\displaystyle e_{6}$)};
		\draw (107,185) node [anchor=north west][inner sep=0.75pt]  [font=\footnotesize,xscale=1.3,yscale=1.3] [align=left] {$\displaystyle 1$};
		\draw (107,150) node [anchor=north west][inner sep=0.75pt]  [font=\footnotesize,xscale=1.3,yscale=1.3] [align=left] {$\displaystyle 2$};
		\draw (108,115) node [anchor=north west][inner sep=0.75pt]  [font=\footnotesize,xscale=1.3,yscale=1.3] [align=left] {$\displaystyle 3$};
		\draw (109,80) node [anchor=north west][inner sep=0.75pt]  [font=\footnotesize,xscale=1.3,yscale=1.3] [align=left] {$\displaystyle 4$};
		\draw (110,45) node [anchor=north west][inner sep=0.75pt]  [font=\footnotesize,xscale=1.3,yscale=1.3] [align=left] {$\displaystyle 5$};
		\draw (169,231.36) node [anchor=north west][inner sep=0.75pt]  [font=\footnotesize,xscale=1.3,yscale=1.3] [align=left] {1};
		\draw (212,232.36) node [anchor=north west][inner sep=0.75pt]  [font=\footnotesize,xscale=1.3,yscale=1.3] [align=left] {2};
		\draw (253,232.36) node [anchor=north west][inner sep=0.75pt]  [font=\footnotesize,xscale=1.3,yscale=1.3] [align=left] {3};
		\draw (295,232.36) node [anchor=north west][inner sep=0.75pt]  [font=\footnotesize,xscale=1.3,yscale=1.3] [align=left] {4};
		\draw (338,232.36) node [anchor=north west][inner sep=0.75pt]  [font=\footnotesize,xscale=1.3,yscale=1.3] [align=left] {5};
		\draw (380,232.36) node [anchor=north west][inner sep=0.75pt]  [font=\footnotesize,xscale=1.3,yscale=1.3] [align=left] {6};
		\draw (422,231.36) node [anchor=north west][inner sep=0.75pt]  [font=\footnotesize,xscale=1.3,yscale=1.3] [align=left] {7};
		\draw (463,231.36) node [anchor=north west][inner sep=0.75pt]  [font=\footnotesize,xscale=1.3,yscale=1.3] [align=left] {8};
		\draw (250,100) node [anchor=north west][inner sep=0.75pt]  [font=\scriptsize,xscale=1.3,yscale=1.3] [align=left] {seg($\displaystyle e_{5}$)};
		\draw (369,73) node [anchor=north west][inner sep=0.75pt]  [xscale=1.3,yscale=1.3] [align=left] {$\displaystyle e_{6}$};
		\draw (329.51,110) node [anchor=north west][inner sep=0.75pt]  [xscale=1.3,yscale=1.3] [align=left] {$\displaystyle e_{5}$};
		\draw (288.51,176) node [anchor=north west][inner sep=0.75pt]  [xscale=1.3,yscale=1.3] [align=left] {$\displaystyle e_{4}$};
		\draw (247.51,142) node [anchor=north west][inner sep=0.75pt]  [xscale=1.3,yscale=1.3] [align=left] {$\displaystyle e_{3}$};
		\draw (202.51,109) node [anchor=north west][inner sep=0.75pt]  [xscale=1.3,yscale=1.3] [align=left] {$\displaystyle e_{2}$};
		\draw (160.51,37) node [anchor=north west][inner sep=0.75pt]  [xscale=1.3,yscale=1.3] [align=left] {$\displaystyle e_{1}$};
		\draw (415.51,139) node [anchor=north west][inner sep=0.75pt]  [xscale=1.3,yscale=1.3] [align=left] {$\displaystyle e_{7}$};
		\draw (456.51,103) node [anchor=north west][inner sep=0.75pt]  [xscale=1.3,yscale=1.3] [align=left] {$\displaystyle e_{8}$};

		\end{tikzpicture}
		}}
\caption{An illustration of \Cref{def:segment}. The ranks of elements increase along the horizontal axis. The segment of the element $\ei{5}$ contains $\ei{3}$ and $\ei{4}$. The segment of $\ei{6}$ contains  $\ei{2},\ei{3},\ei{4}$ and $\ei{5}$. }
\label{figs:segment}
\end{figure}
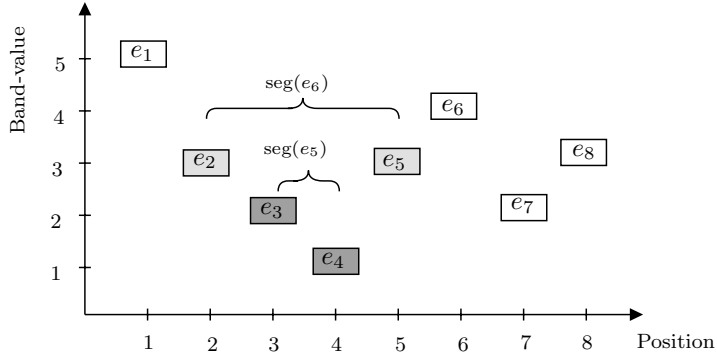

 At any step, the algorithm first inserts the arriving element into $\WQS$; we call this the \emph{insertion step}. It then only deletes an element from \WQS if it can be deleted \emph{together} with its entire segment without violating \Cref{inv: g-delta-weighted}.  While there is any such element whose deletion (along with its segment) does not violate~\Cref{inv: g-delta-weighted} (and another simple but important condition on \bv), the algorithm deletes it from \WQS; we call this the \emph{deletion step}. We now give a formal description of the algorithm.

\begin{Algorithm}\label{alg:weighted-GK}
	A generalization of the GK algorithm for weighted streams:
	\medskip
	
	For each arriving item $(x_j,\wt(x_j))$:
	\begin{enumerate}[label=$(\roman*)$]
	\addtolength{\itemindent}{3mm}
		\item Run $\textbf{Insert}(x_j, \wt(x_j))$:
		\item While there exists an element $\ei{i}$ in \QS satisfying:  	
		\[
		(1)~\bv(\ei{i}) \leq \bv(\ei{i+1})\qquad \text{\underline{and}} \qquad (2)~\Gs_i+\g_{i+1}+\Delta_{i+1} \leq \tim{j} 
		\]
		 run $\textbf{Delete}(\ei{k})$ for $\ei{k}$ in $\set{\ei{i}} \cup \Seg(\ei{i})$.
	\end{enumerate}
\end{Algorithm}
\begin{theorem}\label{thm:weighted-GK}
	For any $\eps > 0$ and a weighted stream of length $n$ with total weight $\wtsum{n}$,~\Cref{alg:weighted-GK} maintains an $\eps$-approximate quantile summary in $O(\frac{1}{\eps} \cdot \log{\!(\eps \wtsum{n})})$ space. Also, there is an implementation of ~\Cref{alg:weighted-GK} that takes $O\big(\log(1/\eps)+\log \log (\eps \wtsum{n}) + \frac{\log^2(\eps \wtsum{n})}{\eps n}\big)$ worst-case update time per element.
\end{theorem}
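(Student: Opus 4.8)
The plan is to run the space analysis of \Cref{alg:GK} essentially verbatim after substituting the weighted analogues of each ingredient ($\g\to\G$, $\gs\to\Gs$, ``number of covered elements'' $\to$ ``total weight of covered elements'', and the unweighted band facts $\to$ \Cref{obs:weighted-stability}, \Cref{obs:weighted-disjoint coverage}, \Cref{eq: weighted band-alpha}), and then to redo the time analysis with the one genuinely new feature, namely that the number of time steps $\tim{n}=\floor{\eps\wtsum{n}}$ can be far larger than the number $n$ of stream elements. I first record correctness: \Cref{alg:weighted-GK} deletes $\ei{i}$ together with $\Seg(\ei{i})$ only when $\Gs_i+\g_{i+1}+\Delta_{i+1}\le \tim{j}$, and by \Cref{obs: g-delta-update-weighted} the net effect of deleting $\set{\ei{i}}\cup\Seg(\ei{i})$ is that $\g_{i+1}$ increases by exactly $\Gs_i$ while every other $(\g,\Delta)$ value is unchanged; hence $\g_{i+1}+\Delta_{i+1}\le\tim{j}$ afterwards, so \Cref{inv: g-delta-weighted} is preserved, and by \Cref{lem:search-weighted} the summary $\WQS$ is an $\eps$-approximate quantile summary throughout. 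It remains to bound the space and to give a fast implementation.

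For the space bound I would classify, after a deletion step at time $t$, each element $\ei{i}$ (other than $+\infty$) as \To if $\bv(\ei{i})>\bv(\ei{i+1})$ and as \Tt if it only satisfies $\Gs_i+\g_{i+1}+\Delta_{i+1}>t$, so that every element is one of the two, and bound each class by $O(\ell\log t)$. The \Tt bound mirrors \Cref{lem:type-2-GK}: from \Cref{eq:del-t0-weighted} and \Cref{obs:band-length} one gets $\Gs_i+\g_{i+1}\ge 2^{\bv(\ei{i+1})-1}-2$ for every \Tt element, partitions the \Tt elements into $O(\log t)$ classes $X_\alpha$ by the band value $\alpha$ of the next element, and bounds $\sum_{\ei{i}\in X_\alpha}\Gs_i$ and $\sum_{\ei{i}\in X_\alpha}\g_{i+1}$ by the total weight of $\Band_{\le\alpha}$, which is $O(\ell\,2^{\alpha})$ by \Cref{eq: weighted band-alpha}; the only point needing care is that segments of distinct elements of $X_\alpha$ overlap, which is handled exactly as in \Cref{clm: gs-bound-GK} by splitting $X_\alpha$ into $X_\alpha\cap\Band_\alpha$ and $X_\alpha\cap\Band_{\le\alpha-1}$, each of which has pairwise disjoint segments. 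The \To bound mirrors \Cref{lem:type-1-GK}: partition the \To elements into classes $Y_\alpha$ by the band value of the next element, map $\ei{i}\in Y_\alpha$ to the first $\ei{j}>\ei{i}$ with $\bv(\ei{j})>\alpha$ (so $\ei{j-1}$ is \Tt, giving $\Gs_{j-1}+\g_j+\Delta_j>t$), note that $\ei{j}$ is present at $t'=t_0(\ei{i+1})$ (its later presence with $\bv(\ei{j})>\bv(\ei{i+1})$ forces $t_0(\ei{j})\le t_0(\ei{i+1})$, as in \Cref{lem:type-1-GK}), so \Cref{inv: g-delta-weighted} at time $t'$ gives $\g_j'+\Delta_j\le t'$ for the $\g$-value $\g_j'$ of $\ei{j}$ at $t'$, and subtract to get $\Gs_{j-1}+(\g_j-\g_j')>t-t'\ge 2^{\alpha-1}-2$. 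Here $\g_j-\g_j'=\G_j-\G_j'$ is the total weight covered by $\ei{j}$ after $t'$, which lies in $\Band_{\le\alpha}$ by the argument of \Cref{claim: coverage-time}; $\Gs_{j-1}$ is the total weight covered by $\ei{j-1}$ and its segment, which lies in $\Band_{\le\alpha}$ by the weighted analogue of \Cref{obs: deleted-band-inv}; and by \Cref{obs:weighted-disjoint coverage} these sets are pairwise disjoint. Summing over $Y_\alpha$ and over the $O(\log t)$ choices of $\alpha$ gives $O(\ell\log t)$ \To elements, hence $\card{\WQS}=O(\ell\log \tim{n})=O(\tfrac{1}{\eps}\log(\eps\wtsum{n}))$; as for \Cref{alg:GK}, this is robust to delaying the deletion steps (the weighted analogue of \Cref{rem: Delaying-Deletions-GK}: delaying by $k$ insertions costs an additive $O(k\ell)$ in space).

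For the implementation I would store $\WQS$ as a balanced BST keyed by element value, with $\wt,\g,\Delta,t_0$ at each node, so $\textbf{Insert}$ and $\textbf{Delete}$ cost $O(\log s)$; band values are computed in $O(1)$ from \Cref{eq:closed-form-weighted}, and a deletion step is carried out in $O(s)$ time by an inorder traversal to a doubly linked list, a stack pass computing $\bv$ and then $\Gs$ (the same stack construction used for $\Gs$ in the fast variant of \Cref{alg:GK}), and a single right-to-left scan that deletes each qualifying element with its segment (deleting an element to the left only tightens the deletion condition of the elements to its right, so one scan suffices). The key difference from \Cref{implementation: gk} is that deletion steps must be scheduled by \emph{number of inserted elements} rather than by time steps --- scheduling by time would force up to $\Theta(\eps\wtsum{n}/\log(\eps\wtsum{n}))$ deletion steps, which is hopeless when the weights are large --- so I would run a deletion step once every $\Theta(\ell\log(\eps\wtsum{n}))$ insertions (using the current time estimate, since $\wtsum{n}$ need not be known in advance), keeping space at $O(\tfrac{1}{\eps}\log(\eps\wtsum{n}))$ by the delayed-deletion remark and producing $O\big(1+n\eps/\log(\eps\wtsum{n})\big)$ deletion steps, each costing $O(s)$ to decide and $O(s\log s)$ to execute in the BST. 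Adding the $O(n\log s)$ spent on all $\textbf{Insert}/\textbf{Delete}$ operations (with $\log s=O(\log\tfrac{1}{\eps}+\log\log(\eps\wtsum{n}))$), this yields the claimed amortized update time, and the worst case is obtained by the usual trick of spreading each deletion step's work uniformly over the insertions before the next one while buffering incoming elements; the residual term $\frac{\log^2(\eps\wtsum{n})}{\eps n}$ is the per-element share, over all $n$ elements, of a single deletion step whose cost is bounded crudely by $O(s\log s)=O\big(\tfrac{1}{\eps}\log^2(\eps\wtsum{n})\big)$ and which, when $n$ is small relative to $\ell\log(\eps\wtsum{n})$, cannot be amortized against a full block of subsequent insertions.

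The routine substitutions in the space analysis I expect to go through with no surprises, precisely because the weighted bands in \Cref{eq:closed-form-weighted} are defined so that \Cref{obs:band-length} and the band-size bound of \Cref{eq: weighted band-alpha} hold with the weighted time steps in place of the unweighted ones. The main obstacle is the time analysis: decoupling the deletion schedule from the time steps and still arguing the space bound and the deletion-step count simultaneously, and in particular getting a \emph{worst-case} (not merely amortized) per-element bound when the stream is short relative to $\ell\log(\eps\wtsum{n})$ or contains a few extremely heavy elements --- this is exactly where the extra additive term in the theorem originates and where the bookkeeping is most delicate.
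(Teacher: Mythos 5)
Your proposal is correct and follows essentially the same route as the paper: the same type-1/type-2 decomposition with the $Y_\alpha \to T_\alpha$ mapping and the subtraction of \Cref{inv: g-delta-weighted} at time $t_0(\ei{i+1})$ for the type-1 bound, the same $\Gs$/$\G$ substitutions backed by \Cref{obs:weighted-disjoint coverage} and \Cref{eq: weighted band-alpha}, and the same BST implementation with deletion steps scheduled by insertion count rather than by (weighted) time. The only minor looseness is your count of $O(1+n\eps/\log(\eps\wtsum{n}))$ deletion steps, which ignores the extra $O(\log(\eps\wtsum{n}))$ steps arising early when the waiting period is still short; the paper's interval-by-interval accounting ($d(i)\leq n(i)/(\ell i)+1$, each step in interval $i$ costing $O(\ell i)$) absorbs these into the same $O(n+\tfrac{1}{\eps}\log^2(\eps\wtsum{n}))$ total, so your final bound is unaffected.
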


We remark here that as long as the weights are $\poly(n)$ bounded and $\eps \geq {1}/{n^{1-\delta}}$ for any fixed $\delta \in (0,1)$, the space used by the algorithm will be $O((1/\eps)\log(\eps n))$ and its update time will be $O(\log (1/\eps) + \log \log (\eps n))$. This matches the space and time complexities of the implementation of GK summary described in \cite{LuoWYC16}. Note that the interesting regime for $\eps$ is at least a small constant, because when $\eps < 1/n^{1-\delta}$, the information-theoretic lower bound of $(1/2\eps)$ on the summary size already implies that we need to store $\Omega(n ^{1-\delta})$ elements even for original GK summaries on unweighted inputs, which is prohibitive for most applications. 

 \Cref{alg:weighted-GK} maintains a valid $\eps$-approximate summary  since it may only delete an element $\ei{i}$ along with its segment if the condition $(ii)$: $\Gs_i+g_{i+1}+\Delta_{i+1} \leq \tim{k}$ is satisfied. Thus, \Cref{inv: g-delta-weighted} is satisfied for the element $\ei{i+1}$ after the deletion of $\ei{i}$ (other $g$ and $\Delta$ values are unaffected by this). We now focus on the bounding the space used by the algorithm in the following. Then, in \Cref{subsec: fast-imp-weighted-GK}, we give an efficient implementation to finalize the proof of \Cref{thm:weighted-GK}.
\subsection{Space Analysis}\label{subsec: space analysis wt gk}

In this subsection, we prove a bound on the space used by \Cref{alg:weighted-GK}. Formally, we have the following:
\begin{lemma}\label{lem:GK-space-weighted}
For any $\eps > 0$ and a stream of length $n$ with the total weight $\wtsum{n}$,~\Cref{alg:weighted-GK} maintains an $\eps$-approximate quantile summary in $O(\frac{1}{\eps} \cdot \log{\!(\eps \wtsum{n})})$ space. 
\end{lemma}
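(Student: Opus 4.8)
The plan is to mirror the unweighted space analysis of \Cref{subsec:GK-space} (the proofs of \Cref{lem:type-1-GK,lem:type-2-GK}), systematically replacing the $\g$-values that appeared there by the weighted coverage quantities $\G_i=\g_i+\wt(\ei{i})-1$ and $\Gs_i$, and replacing the counting bound on bands by the weight bound of \Cref{eq: weighted band-alpha}. Since maintaining \Cref{inv: g-delta-weighted} already guarantees (through \Cref{lem:search-weighted}) that $\WQS$ is an $\eps$-approximate summary, and deletion condition $(2)$ of \Cref{alg:weighted-GK} preserves the invariant, it suffices to bound the number $s$ of stored stream elements after a deletion step at the current time $t$ by $O(\ell\cdot\log(\eps\wtsum{n}))$; the result then follows from $t=O(\eps\wtsum{n})$ and $\ell=O(1/\eps)$. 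Three preparatory facts are needed. First, the weighted analogue of \Cref{obs: deleted-band-inv}: an element of $\Band_{\leq\alpha}$ in $\WQS$ covers only elements of $\Band_{\leq\alpha}$ (same argument: a deletion occurs only when $\bv(\ei{i})\leq\bv(\ei{i+1})$, $\Seg(\ei{i})$ consists of strictly smaller band-values, and band comparisons are stable by \Cref{obs:weighted-stability}); combined with \Cref{obs:weighted-disjoint coverage}, this gives the weighted \Cref{clm:cover-gk}, $\sum_{\ei{i}\in\WQS\cap\Band_{\leq\alpha}}\G_i\leq\sum_{x\in\Band_{\leq\alpha}}\wt(x)\leq O(\ell\cdot 2^{\alpha+1})$. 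Second, \Cref{obs:band-length} carries over to the weighted setting as a statement about $\suf$: since $\bv(x)$ is the band-value of the first copy of $x$ in $\suf$ and $t_0(x)$ is that copy's time step, we have $2^{\bv(x)-1}-2\leq t-t_0(x)\leq 2^{\bv(x)+1}$. Third, as in the unweighted case, after a deletion step every stored element except $\ei{s}=+\infty$ is either \To ($\bv(\ei{i})>\bv(\ei{i+1})$) or \Tt (only $\Gs_i+\g_{i+1}+\Delta_{i+1}>t$), so it remains to bound each class by $O(\ell\cdot\log(\eps\wtsum{n}))$.

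For the \Tt elements I would run the proof of \Cref{lem:type-2-GK} essentially verbatim. From $\Gs_i+\g_{i+1}+\Delta_{i+1}>t$ together with $\Delta_{i+1}\leq t_0(\ei{i+1})$ (\Cref{eq:del-t0-weighted}) and \Cref{obs:band-length} one gets the weighted \Cref{clm:lower-g-GK}, namely $\Gs_i+\g_{i+1}\geq 2^{\bv(\ei{i+1})-1}-2$. Partitioning the \Tt elements by band-value of the successor, $X_\alpha:=\{\ei{i}\text{ \Tt}:\bv(\ei{i+1})=\alpha\}$, and summing, $\card{X_\alpha}\cdot(2^{\alpha-1}-2)\leq\sum_{\ei{i}\in X_\alpha}\Gs_i+\sum_{\ei{i}\in X_\alpha}\g_{i+1}$. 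The first sum is at most $2\sum_{\ei{j}\in\WQS\cap\Band_{\leq\alpha}}\G_j$ by the argument of \Cref{clm: gs-bound-GK} (split $X_\alpha$ into $X_\alpha\cap\Band_\alpha$ and $X_\alpha\cap\Band_{\leq\alpha-1}$; within each part the sets $\{\ei{i}\}\cup\Seg(\ei{i})$ are pairwise disjoint and lie in $\Band_{\leq\alpha}$, and $\Gs_i$ counts their covered weight). The second sum is at most $\sum_{\ei{j}\in\WQS\cap\Band_{\leq\alpha}}\G_j$ because $\g_{i+1}\leq\G_{i+1}$, the $\ei{i+1}$'s are distinct elements of $\Band_\alpha$, and their coverages are disjoint. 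Plugging in the weighted \Cref{clm:cover-gk} gives $\card{X_\alpha}=O(\ell)$ for $\alpha>2$ (and the elements with $\alpha\leq 2$ number $O(\ell)$ in total since $\Band_{\leq 2}$ has weight $O(\ell)$). Summing over the $\Bt{j}=O(\log(\eps\wtsum{n}))$ values of $\alpha$ (\Cref{eq: weighted band-alpha}) yields the bound on \Tt elements.

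For the \To elements I would reproduce the proof of \Cref{lem:type-1-GK}: partition them into $Y_\alpha:=\{\ei{i}\text{ \To}:\bv(\ei{i+1})=\alpha\}$, map each $\ei{i}\in Y_\alpha$ to the first $\ei{j}>\ei{i}$ with $\bv(\ei{j})>\alpha$, and let $T_\alpha$ be the (injective) image, so $\card{Y_\alpha}=\card{T_\alpha}$. For $\ei{j}\in T_\alpha$ the predecessor $\ei{j-1}$ is \Tt, so $\Gs_{j-1}+\g_j+\Delta_j>t$. Since $\bv(\ei{j})>\alpha=\bv(\ei{i+1})$, \Cref{obs:weighted-stability} forces $\ei{j}$ to have been present already at $t':=t_0(\ei{i+1})$, and \Cref{inv: g-delta-weighted} at time $t'$ gives $\g_j'+\Delta_j\leq t'$, where $\g_j'$ denotes the value of $\g_j$ at $t'$. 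Subtracting and using \Cref{obs:band-length} yields $\Gs_{j-1}+(\g_j-\g_j')>t-t'\geq 2^{\alpha-1}-2$. Here $\g_j-\g_j'=\G_j-\G_j'$ (because $\wt(\ei{j})$ is fixed and, by \Cref{obs: g-delta-update-weighted}, $\g_j$ and $\G_j$ increase by the same amount on each deletion) equals the total weight of elements newly covered by $\ei{j}$ after $t'$; $\Gs_{j-1}$ is the total weight covered by $\ei{j-1}$ and $\Seg(\ei{j-1})$; and over $\ei{j}\in T_\alpha$ all these pieces are pairwise disjoint by \Cref{obs:weighted-disjoint coverage} and lie in $\Band_{\leq\alpha}$. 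Summing gives $\card{T_\alpha}\cdot(2^{\alpha-1}-2)\leq 2\sum_{x\in\Band_{\leq\alpha}}\wt(x)\leq \ell\cdot 2^{\alpha+2}$, hence $\card{Y_\alpha}=O(\ell)$, and summing over $\alpha$ gives $O(\ell\cdot\log(\eps\wtsum{n}))$ \To elements. Adding the two bounds completes the proof.

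The step I expect to be the main obstacle is re-establishing the weighted analogue of \Cref{claim: coverage-time}: every element covered by $\ei{j}$ after time $t'$ has band-value at most $\alpha$ at time $t$. I would argue exactly as in the unweighted case — such a covered element lies strictly between $\ei{i+1}$ and $\ei{j}$, and if the smallest counterexample $e$ had $\bv(e)>\alpha$, then (by \Cref{obs:weighted-stability} and minimality of $e$) $\ei{i+1}$ would belong to $\Seg(e)$ at the time $e$ is deleted; but \Cref{alg:weighted-GK} deletes $e$ only together with all of $\Seg(e)$, so $\ei{i+1}$ would have been deleted too, contradicting that $\ei{i+1}$ is still stored. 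The remaining delicate point is purely bookkeeping: the deletion conditions are phrased with $\g_{i+1}$ while coverage is measured by $\G_{i+1}=\g_{i+1}+\wt(\ei{i+1})-1\geq\g_{i+1}$, and one must keep track that $\g$ and $\G$ evolve by equal increments, which is precisely what makes all the $\g\to\G$ substitutions above sound.
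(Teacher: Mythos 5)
Your proposal is correct and follows essentially the same route as the paper's proof: preservation of \Cref{inv: g-delta-weighted}, the \To/\Tt decomposition, the $Y_\alpha/T_\alpha$ mapping with the subtraction of the invariant at the insertion time of $\ei{i+1}$ for \To elements, the $X_\alpha$ partition with the segment-disjointness argument for \Tt elements, and the weighted coverage bound $\sum_{\ei{i}\in\WQS\cap\Band_{\leq\alpha}}\G_i\leq O(\ell\cdot 2^{\alpha+1})$ in place of the unweighted counting bound. The points you flag as delicate — the weighted analogue of \Cref{claim: coverage-time} and the fact that $\g$ and $\G$ evolve by equal increments so that $\g_j-\g'_j$ measures newly covered weight — are handled exactly as in the paper (\Cref{claim:weighted-coverage-time} and \Cref{obs: g-delta-update-weighted}).
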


We first make a critical observation.
\begin{observation}\label{obs: deleted-band-inv-weighted}
Elements from $\Band_{\leq \alpha}$ in \WQS only cover elements of $\Band_{\leq \alpha}$ at any time.
\end{observation}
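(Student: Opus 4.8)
The plan is to reprise, essentially verbatim, the argument used for the unweighted \Cref{obs: deleted-band-inv}, since \Cref{alg:weighted-GK} manipulates coverage sets in exactly the same way as \Cref{alg:GK} (weights enter only into how $\G_i$ aggregates $C(\ei{i})$, not into which elements land in $C(\ei{i})$). Concretely, I would prove the slightly stronger invariant that for every element $\ei{i}$ currently stored in $\WQS$, $C(\ei{i}) \subseteq \Band_{\leq \bv(\ei{i})}$, where $\bv(\ei{i})$ is interpreted as the band-value of $\ei{i}$ at the present step. The observation then follows at once: if $\ei{i} \in \Band_{\leq \alpha}$, i.e.\ $\bv(\ei{i}) \le \alpha$, then $C(\ei{i}) \subseteq \Band_{\leq \bv(\ei{i})} \subseteq \Band_{\leq \alpha}$.

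First I would establish the invariant by induction, tracking both the operations of \Cref{alg:weighted-GK} and the passage of time. The base case is an $\textbf{Insert}$: a newly inserted element covers only itself, so the inclusion is trivial. For a $\textbf{Delete}$ step, \Cref{alg:weighted-GK} removes an element $\ei{i}$ together with $\Seg(\ei{i})$ only when condition $(1)$ holds, namely $\bv(\ei{i}) \le \bv(\ei{i+1})$; by \Cref{def:segment}, every element of $\Seg(\ei{i})$ has band-value strictly below $\bv(\ei{i})$, hence $\le \bv(\ei{i+1})$. By the induction hypothesis, $C(\ei{i}) \subseteq \Band_{\leq \bv(\ei{i})} \subseteq \Band_{\leq \bv(\ei{i+1})}$, and likewise $C(\ei{k}) \subseteq \Band_{\leq \bv(\ei{i+1})}$ for each $\ei{k} \in \Seg(\ei{i})$. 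By the update rule in \Cref{obs: g-delta-update-weighted}, the coverage set of $\ei{i+1}$ after the deletion is its old coverage set augmented by $\{\ei{i}\} \cup C(\ei{i})$ and by $\{\ei{k}\} \cup C(\ei{k})$ over $\ei{k}\in \Seg(\ei{i})$; combining the displayed inclusions, this union still lies in $\Band_{\leq \bv(\ei{i+1})}$, so the invariant is preserved.

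Finally, I would handle the fact that band-values drift upward as the stream progresses — here \Cref{obs:weighted-stability} does all the work: since band-values are well defined for all elements of $\swt$ via $\suf$ and their pairwise order is monotone over time, the relation $\bv(x) \le \bv(\ei{i})$ continues to hold for every $x \in C(\ei{i})$ at every later step, so $C(\ei{i}) \subseteq \Band_{\leq \bv(\ei{i})}$ survives even though the band index $\bv(\ei{i})$ itself may have grown. I expect the only point requiring any care to be precisely this one: phrasing the invariant with respect to the \emph{current} band-value and invoking \Cref{obs:weighted-stability} at exactly the right moment so that the reindexing of bands over time does not break the inclusion; everything else is routine bookkeeping identical to the unweighted case.
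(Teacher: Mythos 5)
Your proposal is correct and follows essentially the same route as the paper: the paper's (inline) justification likewise observes that a deletion only occurs when $\bv(\ei{i}) \le \bv(\ei{i+1})$, that $\Seg(\ei{i})$ consists of elements with strictly smaller band-value, and that \Cref{obs:weighted-stability} preserves these relations over time. Your version merely makes the underlying induction and the invariant $C(\ei{i}) \subseteq \Band_{\leq \bv(\ei{i})}$ explicit, which the paper leaves implicit.
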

This is because when $\ei{i}$ and $\Seg(\ei{i})$ get covered by $\ei{i+1}$, \Cref{alg:weighted-GK} ensures that $\bv(\ei{i})\leq \bv(\ei{i+1})$. From \Cref{def:segment}, $\Seg(\ei{i})$ contains elements with $\bv$ less than $\bv(\ei{i})$. Thus, $C(\ei{i+1})$ contains elements with $\bv$ at most $\bv(\ei{i+1})$ and this continues to be the case at a later time by \Cref{obs:weighted-stability}. 

Another important observation is that, after executing a deletion step after $k$ insertions, an element $\ei{i}$ present in $\WQS$  either satisfies $\bv(\ei{i}) > \bv(\ei{i+1})$ or  $\Gs_i+\g_{i+1}+\Delta_{i+1} > \tim{k}$; otherwise~\Cref{alg:weighted-GK} would have deleted this element. We refer to the elements in $\WQS$ satisfying the former condition as \emph{\To} elements and the ones satisfying {only} the latter condition as \emph{\Tt} elements. Thus, each element is exactly one of the two types (except only $\ei{s}=+\infty$ which we can ignore). It will therefore suffice to obtain a bound on the number of \To and \Tt elements to bound the space complexity of $\WQS$. Let us first bound the number of \To elements in the following lemma.

\begin{lemma}\label{lem: type-1 weighted-GK}
After the deletion step when $\emm$ elements have been seen, the number of \To elements stored in \WQS is $O\big(\ell \cdot \log \tim{\emm} \big)$.
\end{lemma}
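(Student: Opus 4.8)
The plan is to follow the proof of the unweighted statement \Cref{lem:type-1-GK} almost verbatim, replacing the ``$\g$-coverage'' accounting there by the weighted ``$\G$-coverage'' accounting of \Cref{sec: weighted g delta}; the only genuinely new ingredient is re-establishing the weighted analogue of \Cref{claim: coverage-time}. Concretely, I would partition the \To elements into at most $\Bt{\emm}$ groups $Y_0,\ldots,Y_{\Bt{\emm}-1}$ with $Y_\alpha := \set{\ei{i}\in\WQS \mid \ei{i}\text{ is \To and }\bv(\ei{i+1})=\alpha}$, and since $\Bt{\emm}=O(\log\tim{\emm})$ by \Cref{eq: weighted band-alpha} it suffices to prove $\card{Y_\alpha}=O(\ell)$ for each $\alpha$. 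Mapping each $\ei{i}\in Y_\alpha$ to the first $\ei{j}>\ei{i}$ with $\bv(\ei{j})>\alpha$, and letting $T_\alpha$ be the image: because each $\ei{i}\in Y_\alpha$ is \To and hence $\bv(\ei{i})>\bv(\ei{i+1})=\alpha$, this map is injective, so $\card{Y_\alpha}=\card{T_\alpha}$ and it is enough to bound $\card{T_\alpha}$.

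Fix $\ei{j}\in T_\alpha$, arising from $\ei{i}\in Y_\alpha$. Its predecessor $\ei{j-1}$ has $\bv(\ei{j-1})\leq\alpha<\bv(\ei{j})$, so it satisfies condition $(1)$ of \Cref{alg:weighted-GK} but was not deleted; hence condition $(2)$ must fail for it, i.e.\ $\Gs_{j-1}+\g_j+\Delta_j>\tim{\emm}$. Since $\bv(\ei{j})>\alpha=\bv(\ei{i+1})$, \Cref{obs:weighted-stability} shows $\ei{j}$ was already in $\WQS$ at time $t^\prime:=t_0(\ei{i+1})$; writing $\g^\prime_j$ (and $\G^\prime_j=\g^\prime_j+\wt(\ei{j})-1$) for its values then, \Cref{inv: g-delta-weighted} at $t^\prime$ gives $\g^\prime_j+\Delta_j\leq t^\prime$ (using that $\Delta_j$ never changes). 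Subtracting the two inequalities, and using the identity $\G_j-\G^\prime_j=\g_j-\g^\prime_j$ (both sides equal the total weight covered by $\ei{j}$ after $t^\prime$, by \Cref{obs:weighted-disjoint coverage}) together with the weighted analogue of \Cref{obs:band-length}, I obtain
\[
\Gs_{j-1}+(\G_j-\G^\prime_j) \;>\; \tim{\emm}-t^\prime \;\geq\; 2^{\alpha-1}-2 .
\]

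The last step is to charge, over all $\ei{j}\in T_\alpha$, both summands to the total weight of $\Band_{\leq\alpha}$, which is $O(\ell\cdot2^{\alpha+1})$ by \Cref{eq: weighted band-alpha}. The term $\Gs_{j-1}$ counts the weight covered by $\ei{j-1}$ and $\Seg(\ei{j-1})$; by \Cref{def:segment} all these elements have $\bv\leq\alpha$, so by \Cref{obs: deleted-band-inv-weighted} the elements they cover lie in $\Band_{\leq\alpha}$. The term $\G_j-\G^\prime_j$ counts the weight $\ei{j}$ covers \emph{after} $t^\prime$; here I would re-prove the weighted analogue of \Cref{claim: coverage-time}: if $\ei{j}$ covered some $e$ with $\bv(e)>\alpha$ after $t^\prime$, then for the smallest such $e$ the element $\ei{i+1}$ lies in $\Seg(e)$ at $t^\prime$, so deleting $e$ (which \Cref{alg:weighted-GK} only does together with $\Seg(e)$) would have deleted $\ei{i+1}$, contradicting $\ei{i+1}\in\WQS$ --- hence this weight is also inside $\Band_{\leq\alpha}$. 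Since coverage sets of distinct summary elements are disjoint (\Cref{obs:weighted-disjoint coverage}) and for distinct $\ei{j_1},\ei{j_2}\in T_\alpha$ the blocks $\set{\ei{j_1},\ei{j_1-1}}\cup\Seg(\ei{j_1-1})$ and $\set{\ei{j_2},\ei{j_2-1}}\cup\Seg(\ei{j_2-1})$ are disjoint (the higher-band endpoints $\ei{j}$ separate the lower-band segments), every element of $\Band_{\leq\alpha}$ is counted at most twice, giving
\[
\card{T_\alpha}\cdot(2^{\alpha-1}-2) \;\leq\; \sum_{\ei{j}\in T_\alpha}\Gs_{j-1}+\sum_{\ei{j}\in T_\alpha}(\G_j-\G^\prime_j) \;\leq\; 2\!\!\sum_{x\in\Band_{\leq\alpha}}\!\!\wt(x) \;=\; O(\ell\cdot2^{\alpha+1}),
\]
so $\card{Y_\alpha}=\card{T_\alpha}=O(\ell)$ for $\alpha>2$, while for $\alpha\leq2$ the weight (and hence count) of $\Band_{\leq2}$ is $O(\ell)$ directly. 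Summing over the $O(\log\tim{\emm})$ values of $\alpha$ yields the claimed $O(\ell\cdot\log\tim{\emm})$ bound.

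I expect the coverage-time step to be the main obstacle: it is exactly the place where the ``delete the whole segment'' rule of \Cref{alg:weighted-GK} (as opposed to the purely greedy \Cref{alg:simple}) is essential, and one must argue carefully both about which elements $\ei{j}$ can acquire between $t^\prime$ and $\tim{\emm}$ and about the disjointness of the segment-plus-successor blocks. Everything else is a mechanical translation of \Cref{lem:type-1-GK}, once one observes that the $\G$-value increase of any element always equals its $\g$-value increase, so that \Cref{inv: g-delta-weighted}, stated in terms of $\g$, feeds directly into the $\G$-based count.
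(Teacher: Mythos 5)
Your proposal is correct and follows the paper's own proof essentially verbatim: the same partition into $Y_\alpha$, the same injective map to $T_\alpha$, the same subtraction of \Cref{inv: g-delta-weighted} at $t_0(\ei{i+1})$ from the failed deletion condition at $\ei{j-1}$, the same weighted coverage-time claim proved by the same segment-deletion contradiction, and the same disjointness-based charging to the total weight of $\Band_{\leq\alpha}$. The identity $\G_j-\G^\prime_j=\g_j-\g^\prime_j$ that you make explicit is used implicitly in the paper, so there is no substantive difference.
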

\begin{proof}
    We first partition the \To elements into $\Bt{\emm}$ sets $Y_0,\ldots,Y_{\Bt{\emm}}$ where for any band-value $\alpha$:
	\[
		Y_{\alpha} := \set{\ei{i} \in \WQS \mid \text{$\ei{i}$ is \To and $\bv(\ei{i+1}) = \alpha$}};
	\] 
	(notice that elements in $Y_{\alpha}$ are such that band-value of their next element is $\alpha$, not themselves\footnote{\label{foot1}While this may sound counter-intuitive at first glance, recall that the criteria for defining the type of an element is a function of both this element and the next one; this definition allows us to take this into account.})
	We will show that the size of any set $Y_\alpha$ is at most $O(\ell)$. We map each element of $\ei{i}$ to the smallest element $\ei{j}$ with $\bv$ greater than $\alpha$; see \Cref{fig:type-1-notation} for an illustration. Let $T_{\alpha}$ be the set of all such elements $\ei{j}$. Also, it is easy to see that the mapping from $Y_\alpha$ to $T_\alpha$ is one to one; giving us $\card{Y_{\alpha}}=\card{T_{\alpha}}$. Note that $\ei{j-1}$ must be a \Tt element. Hence,
\begin{equation} \label{eq: weighted-lem To terminal}
\Gs_{j-1} + \g_{j} + \Delta_j > \tim{k}.
\end{equation}

\begin{figure}[t]		
\centering	

\subcaptionbox{ The shaded blocks are elements of $Y_4$. The arrows indicate the mapping from elements in $Y_4$ to elements in $T_4$. Each element $\ei{i}$ in $Y_4$ is mapped to the first larger element $\ei{j}$ with a band-value higher than $4$. \label{fig:type-1-notation}}%
  [1\linewidth]{ \resizebox{0.6\textwidth}{!}{\tikzset{every picture/.style={line width=0.75pt}} 

\begin{tikzpicture}[x=0.75pt,y=0.75pt,yscale=-1,xscale=1]

\draw  [fill={rgb, 255:red, 222; green, 222; blue, 222 }  ,fill opacity=1 ] (164,60) -- (199.5,60) -- (199.5,80.29) -- (164,80.29) -- cycle ;
\draw  [fill={rgb, 255:red, 222; green, 222; blue, 222 }  ,fill opacity=1 ] (253,60) -- (288.5,60) -- (288.5,80.29) -- (253,80.29) -- cycle ;
\draw   (297,60) -- (332.5,60) -- (332.5,80.29) -- (297,80.29) -- cycle ;
\draw   (209,60) -- (244.5,60) -- (244.5,80.29) -- (209,80.29) -- cycle ;
\draw   (340,60) -- (375.5,60) -- (375.5,80.29) -- (340,80.29) -- cycle ;
\draw   (429,60) -- (464.5,60) -- (464.5,80.29) -- (429,80.29) -- cycle ;
\draw   (473,60) -- (508.5,60) -- (508.5,80.29) -- (473,80.29) -- cycle ;
\draw   (385,60) -- (420.5,60) -- (420.5,80.29) -- (385,80.29) -- cycle ;
\draw    (181.5,81.44) .. controls (211.88,96.14) and (243.22,97.4) .. (271.76,83.33) ;
\draw [shift={(273.5,82.44)}, rotate = 512.65] [color={rgb, 255:red, 0; green, 0; blue, 0 }  ][line width=0.75]    (10.93,-3.29) .. controls (6.95,-1.4) and (3.31,-0.3) .. (0,0) .. controls (3.31,0.3) and (6.95,1.4) .. (10.93,3.29)   ;
\draw    (269,59.44) .. controls (313.82,27.92) and (460.99,39.09) .. (492.16,57.59) ;
\draw [shift={(493.5,58.44)}, rotate = 214.16] [color={rgb, 255:red, 0; green, 0; blue, 0 }  ][line width=0.75]    (10.93,-3.29) .. controls (6.95,-1.4) and (3.31,-0.3) .. (0,0) .. controls (3.31,0.3) and (6.95,1.4) .. (10.93,3.29)   ;
\draw  [dash pattern={on 0.84pt off 2.51pt}]  (130,70.89) -- (153.5,70.89) ;
\draw  [dash pattern={on 0.84pt off 2.51pt}]  (519,68.89) -- (542.5,68.89) ;

\draw (174,62.44) node [anchor=north west][inner sep=0.75pt]  [font=\footnotesize,xscale=1.3,yscale=1.3] [align=left] {$\displaystyle 6$};
\draw (220,62.44) node [anchor=north west][inner sep=0.75pt]  [font=\footnotesize,xscale=1.3,yscale=1.3] [align=left] {$\displaystyle 4$};
\draw (264,63.44) node [anchor=north west][inner sep=0.75pt]  [font=\footnotesize,xscale=1.3,yscale=1.3] [align=left] {$\displaystyle 5$};
\draw (351,63.44) node [anchor=north west][inner sep=0.75pt]  [font=\footnotesize,xscale=1.3,yscale=1.3] [align=left] {$\displaystyle 2$};
\draw (309,62.44) node [anchor=north west][inner sep=0.75pt]  [font=\footnotesize,xscale=1.3,yscale=1.3] [align=left] {$\displaystyle 4$};
\draw (396,63.44) node [anchor=north west][inner sep=0.75pt]  [font=\footnotesize,xscale=1.3,yscale=1.3] [align=left] {$\displaystyle 2$};
\draw (483.5,63.29) node [anchor=north west][inner sep=0.75pt]  [font=\footnotesize,xscale=1.3,yscale=1.3] [align=left] {$\displaystyle 7$};
\draw (440.5,63.29) node [anchor=north west][inner sep=0.75pt]  [font=\footnotesize,xscale=1.3,yscale=1.3] [align=left] {$\displaystyle 3$};
\draw (258,89.98) node [anchor=north west][inner sep=0.75pt]  [font=\footnotesize,xscale=1.3,yscale=1.3] [align=left] {$ $$\displaystyle e_{i}$};
\draw (297,89.99) node [anchor=north west][inner sep=0.75pt]  [font=\footnotesize,xscale=1.3,yscale=1.3] [align=left] {$ $$\displaystyle e_{i+1}$};
\draw (481,85.99) node [anchor=north west][inner sep=0.75pt]  [xscale=1.3,yscale=1.3] [align=left] {$ $$\displaystyle e_{j}$};
\draw (430,89.99) node [anchor=north west][inner sep=0.75pt]  [font=\footnotesize,xscale=1.3,yscale=1.3] [align=left] {$ $$\displaystyle e_{j-1}$};

\end{tikzpicture}}}	
\bigskip
\vspace{0.25cm}	
\subcaptionbox{ Each dark gray block represents an element $\ei{j}$ in $T_4$. All elements which are either $\ei{j-1}$ or are in the $\Seg(\ei{j-1})$ are shaded light gray. \label{fig:all-bands} }%
 [1\linewidth]{ \resizebox{0.6\textwidth}{!}{\tikzset{every picture/.style={line width=0.75pt}} 

\begin{tikzpicture}[x=0.75pt,y=0.75pt,yscale=-1,xscale=1]

\draw   (161,31) -- (196.5,31) -- (196.5,51.29) -- (161,51.29) -- cycle ;
\draw  [fill={rgb, 255:red, 177; green, 177; blue, 177 }  ,fill opacity=1 ] (250,31) -- (285.5,31) -- (285.5,51.29) -- (250,51.29) -- cycle ;
\draw   (294,31) -- (329.5,31) -- (329.5,51.29) -- (294,51.29) -- cycle ;
\draw  [fill={rgb, 255:red, 224; green, 224; blue, 224 }  ,fill opacity=1 ] (206,31) -- (241.5,31) -- (241.5,51.29) -- (206,51.29) -- cycle ;
\draw  [fill={rgb, 255:red, 222; green, 222; blue, 222 }  ,fill opacity=1 ] (337,31) -- (372.5,31) -- (372.5,51.29) -- (337,51.29) -- cycle ;
\draw  [fill={rgb, 255:red, 224; green, 224; blue, 224 }  ,fill opacity=1 ] (426,31) -- (461.5,31) -- (461.5,51.29) -- (426,51.29) -- cycle ;
\draw  [fill={rgb, 255:red, 177; green, 177; blue, 177 }  ,fill opacity=1 ] (470,31) -- (505.5,31) -- (505.5,51.29) -- (470,51.29) -- cycle ;
\draw  [fill={rgb, 255:red, 222; green, 222; blue, 222 }  ,fill opacity=1 ] (382,31) -- (417.5,31) -- (417.5,51.29) -- (382,51.29) -- cycle ;
\draw   (352,59.89) .. controls (352,64.56) and (354.33,66.89) .. (359,66.89) -- (392,66.89) .. controls (398.67,66.89) and (402,69.22) .. (402,73.89) .. controls (402,69.22) and (405.33,66.89) .. (412,66.89)(409,66.89) -- (445,66.89) .. controls (449.67,66.89) and (452,64.56) .. (452,59.89) ;
\draw  [dash pattern={on 0.84pt off 2.51pt}]  (123,40.89) -- (146.5,40.89) ;
\draw  [dash pattern={on 0.84pt off 2.51pt}]  (516,40.89) -- (539.5,40.89) ;

\draw (171,33.44) node [anchor=north west][inner sep=0.75pt]  [font=\footnotesize,xscale=1.3,yscale=1.3] [align=left] {$\displaystyle 6$};
\draw (217,33.44) node [anchor=north west][inner sep=0.75pt]  [font=\footnotesize,xscale=1.3,yscale=1.3] [align=left] {$\displaystyle 4$};
\draw (261,34.44) node [anchor=north west][inner sep=0.75pt]  [font=\footnotesize,xscale=1.3,yscale=1.3] [align=left] {$\displaystyle 5$};
\draw (348,34.44) node [anchor=north west][inner sep=0.75pt]  [font=\footnotesize,xscale=1.3,yscale=1.3] [align=left] {$\displaystyle 2$};
\draw (306,33.44) node [anchor=north west][inner sep=0.75pt]  [font=\footnotesize,xscale=1.3,yscale=1.3] [align=left] {$\displaystyle 4$};
\draw (393,34.44) node [anchor=north west][inner sep=0.75pt]  [font=\footnotesize,xscale=1.3,yscale=1.3] [align=left] {$\displaystyle 2$};
\draw (480.5,34.29) node [anchor=north west][inner sep=0.75pt]  [font=\footnotesize,xscale=1.3,yscale=1.3] [align=left] {$\displaystyle 7$};
\draw (437.5,34.29) node [anchor=north west][inner sep=0.75pt]  [font=\footnotesize,xscale=1.3,yscale=1.3] [align=left] {$\displaystyle 3$};
\draw (363,77.97) node [anchor=north west][inner sep=0.75pt]  [font=\scriptsize,xscale=1.3,yscale=1.3] [align=left] {{\tiny $\displaystyle e_{j-1}$ and its segment}};

\end{tikzpicture}}}	
\caption{The two figures represent a section of the summary with each block representing an element. The number inside the block is the element's band-value. }	
\label{fig:type-1-segment}	

\end{figure}

Since $\bv(\ei{j})$ is greater than $\bv(\ei{i+1})=\alpha$, by~\Cref{obs:weighted-stability}, one can argue that $\ei{j}$ is inserted in \WQS before $\ei{i+1}$. Let ${\g}^{\prime}_j$ be the $\g$-value of $\ei{j}$ when $\ei{i+1}$ got inserted. By \Cref{inv: g-delta-weighted},
\begin{equation} \label{eq: weighted-lem To intertion time}
\g^\prime_j+\Delta_j \leq \tim{0}(\ei{i+1}) \quad (\textnormal{$\Delta$ value does not change over time)}. 
\end{equation}
Subtracting \Cref{eq: weighted-lem To intertion time} from \Cref{eq: weighted-lem To terminal} and using the bounds from~\Cref{eq:closed-form-weighted} we conclude that
\begin{equation}\label{eq: weighted-gk-To-main}
\Gs_{j-1}+ (\g_j-\g^\prime_j)>\tim{k}-\tim{0}(\ei{i+1}) \geq2^{\alpha-1}-2.
\end{equation}

 In the above equation: 
 \begin{enumerate}[label=$(\roman*)$]
 \item The term $(\g_j-\g^\prime_j)$ counts the sum of weights of the elements covered by $\ei{j}$ after $\ei{i+1}$ is inserted. \Cref{claim:weighted-coverage-time} will show that these elements are in $\Band_{\leq \alpha}$. 
 
 \item The term $\Gs_{j-1}$ counts the sum of the weights of the elements covered by $\ei{j-1}$ and $\Seg(\ei{j-1})$. By \Cref{def:segment}, $\ei{j-1}$ and all elements in $\Seg(\ei{j-1})$ have $\bv\leq \alpha$. \Cref{obs: deleted-band-inv-weighted} allows us to conclude that the sum of weights of elements counted by $\Gs_{j-1}$ are in $\Band_{\leq \alpha}$ as well. 
 
 \item Additionally, it is easy to see that for distinct $\ei{j_1}$ and $\ei{j_2}$ in $T_{\alpha}$, the segments of $\ei{j_1-1}$ and $\ei{j_2-1}$ do not overlap (as can be observed in \Cref{fig:type-1-segment}). Thus, by \Cref{obs:weighted-disjoint coverage}, the elements covered by $\ei{j_1}$ and its segment are distinct from the elements covered by and $\ei{j_2}$ and its segment.
 \end{enumerate}
Finally, from the above discussion, we conclude that the LHS of~\Cref{eq: weighted-gk-To-main}, summed over all $T_{\alpha}$, is proportional to the total weight of all the elements in $\Band_{\leq \alpha}$. Formally, 
   \begin{align*}
		\hspace{-5mm} \card{T_{\alpha}} \cdot (2^{\alpha-1}-2) &\leq \sum_{\ei{j} \in T_{\alpha}} \Gs_{j-1} + \sum_{\ei{j} \in T_{\alpha}} (\g_j-\g^\prime_j) \leq \hspace{-3mm} \sum_{x_k\in \Band_{\leq \alpha}} \hspace{-2mm} w(x_k)+ \hspace{-3mm}\sum_{x_k \in \Band_{\leq \alpha}}\hspace{-2mm} w(x_k) 
		\leq O(\ell \cdot 2^{\alpha+2}),
	\end{align*}
where the last inequality follows from \Cref{eq: weighted band-alpha}. Hence, $\card{T_{\alpha}}=\card{Y_{\alpha}}=O(\ell)$ for $\alpha\geq 3$. We have $O(\ell)$ elements for $\alpha=0,1,2$ anyway. Since there are $\Bt{k}=\log \tim{k}$ possible values of $\alpha$, the number of \To elements is $O(\ell \log \tim{k})$. 
\begin{claim}\label{claim:weighted-coverage-time}
 All the elements covered by $\ei{j}$ after $\ei{i+1}$ was inserted have $\bv$ at most $\alpha$ currently.
\end{claim}
\begin{proof}
Let us assume that there exists an element which currently has $\bv>\alpha$ but gets covered by $\ei{j}$ after $\ei{i+1}$ was inserted. All such elements are less than $\ei{j}$ and greater than $\ei{i+1}$ since they get covered by $\ei{j}$.
Now consider the smallest such element $e$. Clearly, $\ei{i+1}$ belongs to the segment of $e$ just after the insertion of $\ei{i+1}$. Since $e$ does not belong to the summary right now, it must have been deleted. This implies that its segment, which contained $\ei{i+1}$,  got deleted. This means that $\ei{i+1}$ is also deleted, which is a contradiction. \Qed{\Cref{claim:weighted-coverage-time}}

 \end{proof}

This finalizes the proof of \Cref{lem: type-1 weighted-GK}.
\end{proof}
It now remains to bound the number of \Tt elements in $\WQS$ which we do in the following lemma.
\begin{lemma}\label{lem:weighted-type-2-GK}
	After the deletion step when $k$ elements of the stream have been seen, the number of \Tt elements is $O(\ell \cdot \log \tim{k})$.
\end{lemma}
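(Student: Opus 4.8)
The plan is to reuse the argument that proved \Cref{lem:type-2-GK} for the unweighted summary essentially verbatim, replacing every count of ``number of covered elements'' by a sum of their weights and using the weighted band bound of \Cref{eq: weighted band-alpha} in place of \Cref{eq:band-alpha}. First I would establish the weighted analog of \Cref{clm:lower-g-GK}: for any \Tt element $\ei{i}$, $\Gs_i + \g_{i+1} \geq 2^{\bv(\ei{i+1})-1} - 2$. This is immediate from the failure of the deletion rule of \Cref{alg:weighted-GK}, namely $\Gs_i + \g_{i+1} + \Delta_{i+1} > \tim{k}$, combined with $\Delta_{i+1} \leq \tim{0}(\ei{i+1})$ from \Cref{eq:del-t0-weighted} and with the bound $\tim{k} - \tim{0}(\ei{i+1}) \geq 2^{\bv(\ei{i+1})-1} - 2$ implied by \Cref{eq:closed-form-weighted} (the same chain of inequalities appears in \Cref{eq: weighted-gk-To-main}).

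Next I would partition the \Tt elements into $\Bt{k}$ classes $X_\alpha$, where $X_\alpha$ is the set of \Tt elements $\ei{i}$ with $\bv(\ei{i+1}) = \alpha$; there are $\Bt{k} = O(\log \tim{k})$ such classes. For $\ei{i} \in X_\alpha$, being \Tt forces $\bv(\ei{i}) \leq \alpha$, so $\ei{i}$, every element of $\Seg(\ei{i})$, and $\ei{i+1}$ all lie in $\Band_{\leq\alpha}$. Summing the Step-1 inequality over $X_\alpha$ gives $\card{X_\alpha}\cdot(2^{\alpha-1}-2) \leq \sum_{\ei{i}\in X_\alpha}\Gs_i + \sum_{\ei{i}\in X_\alpha}\g_{i+1}$, so it suffices to bound each of the two sums by $O(\ell\cdot 2^{\alpha})$. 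The second sum is easy: $\g_{i+1} \leq \G_{i+1} = \sum_{x\in C(\ei{i+1})}\wt(x)$ by \Cref{obs:weighted-disjoint coverage}, the sets $C(\ei{i+1})$ over distinct $\ei{i}\in X_\alpha$ are pairwise disjoint, and by \Cref{obs: deleted-band-inv-weighted} each $C(\ei{i+1})$ consists only of elements of $\Band_{\leq\alpha}$ (since $\bv(\ei{i+1}) = \alpha$); hence $\sum_{\ei{i}\in X_\alpha}\g_{i+1} \leq \sum_{x\in\Band_{\leq\alpha}}\wt(x) = O(\ell\cdot 2^{\alpha+1})$ by \Cref{eq: weighted band-alpha}.

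For the $\Gs_i$ sum I would follow \Cref{clm: gs-bound-GK}: since segments of $X_\alpha$-elements can overlap, split $X_\alpha = (X_\alpha\cap\Band_\alpha)\sqcup(X_\alpha\cap\Band_{\leq\alpha-1})$. Within either part, any two distinct elements are separated by an element of $\Band_\alpha$ (namely the $\ei{i+1}$ of the smaller one), so their segments, which consist only of elements of band strictly below $\alpha$, are pairwise disjoint; together with the elements themselves these form pairwise disjoint collections, whose coverages are therefore also pairwise disjoint, and all the covered elements lie in $\Band_{\leq\alpha}$ by \Cref{obs: deleted-band-inv-weighted} and \Cref{def:segment}. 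Hence each part contributes at most $\sum_{x\in\Band_{\leq\alpha}}\wt(x)$, so $\sum_{\ei{i}\in X_\alpha}\Gs_i \leq 2\sum_{x\in\Band_{\leq\alpha}}\wt(x) = O(\ell\cdot 2^{\alpha+1})$. Combining the two estimates gives $\card{X_\alpha}\cdot(2^{\alpha-1}-2) = O(\ell\cdot 2^{\alpha+1})$, so $\card{X_\alpha} = O(\ell)$ for $\alpha>2$, while for $\alpha\leq 2$ the inequality $\sum_{x\in\Band_{\leq 2}}\wt(x) = O(\ell)$ already forces $O(\ell)$ such elements. Summing over the $O(\log\tim{k})$ values of $\alpha$ then yields the claimed $O(\ell\cdot\log\tim{k})$ bound.

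The step I expect to need the most care is the bound on $\sum_{\ei{i}\in X_\alpha}\Gs_i$: one has to check that splitting $X_\alpha$ according to the band-value of the element itself really does restore pairwise disjointness of the segments, and that $\Gs_i$ is genuinely a sum of weights of elements all contained in $\Band_{\leq\alpha}$ — which is exactly what \Cref{obs: deleted-band-inv-weighted} together with \Cref{def:segment} guarantees in the weighted setting. Everything else is the identical counting argument as in the unweighted proof, with ``cardinality of $\Band_{\leq\alpha}$'' replaced throughout by ``total weight of $\Band_{\leq\alpha}$''.
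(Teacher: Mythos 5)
Your proposal is correct and follows essentially the same route as the paper's proof: the same lower bound $\Gs_i+\g_{i+1}\geq 2^{\bv(\ei{i+1})-1}-2$, the same partition into classes $X_\alpha$ by the band-value of the successor, and the same split of $X_\alpha$ into $X_\alpha\cap\Band_\alpha$ and $X_\alpha\cap\Band_{\leq\alpha-1}$ to restore disjointness of segments. The only cosmetic difference is that you bound $\sum_{\ei{i}\in X_\alpha}\g_{i+1}$ directly via the coverage sets, whereas the paper routes both sums through the bound $\sum_{\ei{j}\in\WQS\cap\Band_{\leq\alpha}}\G_j=O(\ell\cdot 2^{\alpha+1})$ of \Cref{clm:weighted-cover-gk}; these are the same counting argument.
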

\begin{proof}
	Any \Tt element $\ei{i}$ in \WQS, has the property $\Gs_i+\g_{i+1}+\Delta_i>\tim{k}$. This will give a lower bound on $\Gs_i + g_{i+1}$ in terms of the $\bv(e_{i+1})$.

	\begin{claim}\label{clm:weighted-lower-g-GK}
		After seeing $k$ elements, for any \Tt element $\ei{i}$, 
		$ \Gs_i + \g_{i+1} \geq 2^{\bv(\ei{i+1})-1}-2$.
	\end{claim}
	\begin{proof}
		As $\ei{i}$ is a \Tt element, $\Gs_i + \g_{i+1} + \Delta_{i+1} > \tim{k}$. By~\Cref{eq:del-t0-weighted}, $\Delta_{i+1} \leq t_0(\ei{i+1})$ and therefore, 
	\begin{align*}
	 \Gs_i+\g_{i+1}> \tim{k} - t_0(\ei{i+1}) \geq 2^{\bv(\ei{i+1})-1}-2,
	\end{align*}
	where the second inequality is by~\Cref{eq:closed-form-weighted}. \Qed{\Cref{clm:weighted-lower-g-GK}}
	
\end{proof}

\Cref{clm:weighted-lower-g-GK} gives us a lower bound on the $\Gs$-value of each \Tt element $\ei{i}$ as a function of the $g$-value and band-value of the \emph{next} element $\ei{i+1}$. Therefore, we partition the \Tt elements into sets $X_0,\ldots,X_{\Bt{k}}$ such that, for any band-value $\alpha$,
\[
X_{\alpha} := \set{\ei{i} \in \WQS \mid \text{$\ei{i}$ is \Tt and $\bv(\ei{i+1}) = \alpha$}}.
\]

Moreover, for any $\ei{i} \in X_{\alpha}$, since $\ei{i}$ is a \Tt element, $\bv(\ei{i}) \leq \bv(\ei{i+1}) = \alpha$. Summing over the inequality of~\Cref{clm:weighted-lower-g-GK} for each element in $X_\alpha$, we obtain:
	\begin{align} \label{eq:weighted-lem-type-2-GK}
		\card{X_{\alpha}} \cdot (2^{\alpha-1}-2) &\leq \sum_{\ei{i} \in X_{\alpha}} \Gs_i + g_{i+1}.
	\end{align}
We next show an upper bound on the right hand side of \Cref{eq:weighted-lem-type-2-GK} which will imply the necessary bound on $|X_\alpha|$.

\begin{claim} \label{clm: weighted-gs-bound-GK}
After seeing $k$ elements, for any $\alpha \geq 0$, 
	$\sum \limits_{\ei{i} \in X_{\alpha}} \Gs_i  \hspace{3mm} \leq \hspace{3mm} 2 \hspace{-15pt}\sum \limits_{\ei{j} \in \WQS \cap \Band_{\leq \alpha}} \hspace{-15pt} \G_j.$
\end{claim}
\begin{proof}


 We partition $X_\alpha$ into two disjoint sets $X_\alpha \cap \Band_{\alpha}$ and $X_\alpha \cap \Band_{\leq \alpha-1}$ and observe that two elements from one of these two sets must have disjoint segments. Also, the elements in their segments must all be in $\Band_{\leq \alpha}$. Therefore,
\begin{align*}
\sum \limits_{\ei{i} \in X_{\alpha}} \Gs_i \;\;
& = \sum \limits_{\ei{i} \in X_{\alpha} \cap \Band_{\alpha}} \hspace{-5pt}\Gs_i  
+ \hspace{-3mm}\sum \limits_{\ei{i} \in X_{\alpha} \cap \Band_{\leq \alpha -1}} \hspace{-10pt} \Gs_i \hspace{5pt} \leq \hspace{-3mm}\sum \limits_{\ei{j} \in \text{\WQS} \cap \Band_{\leq \alpha}} \hspace{-15pt}\G_j 
	+  \sum \limits_{\ei{j} \in \text{\WQS} \cap \Band_{\leq \alpha}} \hspace{-15pt}\G_j.\\
	&= 2 \hspace{-15pt}\sum \limits_{\ei{j} \in \text{\WQS} \cap \Band_{\leq \alpha}} \hspace{-15pt}\G_j .\Qed{\Cref{clm: weighted-gs-bound-GK}}
	\end{align*}

\end{proof}
The next claim bounds the sum of $\G$-values of the elements in \WQS from $\Band_{\leq \alpha}$.
\begin{claim}\label{clm:weighted-cover-gk}
	After seeing $k$ elements, for any $\alpha\geq 0$, 
	$
	\sum_{\ei{i} \in \WQS \cap \Band_{\leq \alpha}} G_i \leq O(\ell \cdot 2^{\alpha+1}).
	$ 
\end{claim}
\begin{proof}
An element is only deleted by \Cref{alg:weighted-GK} if the condition (1) is satisfied. By \Cref{obs:weighted-stability}, this continues to be the case at any later point in the algorithm. Therefore, $C(\ei{i})$ only contains elements whose $\bv$ is at most $\bv(\ei{i})$. Therefore, 
	\begin{align*}
		\sum_{\ei{i} \in \WQS \cap \Band_{\leq \alpha}} \G_i &= \sum_{\ei{i} \in \WQS \cap \Band_{\leq \alpha}} \sum_{x_j\in C(\ei{i})} \wt(x_j)  \tag{as $\G_i = \sum_{x_j\in C(e_i)} w(x_j)$ by~\Cref{obs:weighted-disjoint coverage}} \\
		&\leq \sum_{x_j \in \Band{\leq \alpha}} w(x_j)  \tag{as $C(\ei{i})$'s are disjoint and their elements belong to $\Band{\leq\alpha}$}\\
		&=O (\ell \cdot 2^{\alpha+1}), \tag{by the bound in~\Cref{eq: weighted band-alpha}} 
	\end{align*}
	completing the argument. \Qed{\Cref{clm:weighted-cover-gk}} 
\end{proof}
\noindent
By plugging the bounds of \Cref{clm: weighted-gs-bound-GK} and \Cref{clm:weighted-cover-gk} in \Cref{eq:weighted-lem-type-2-GK} and using the fact that a $\G$ value of an element is at least its $\g$ value, we have that,
\[
	\card{X_{\alpha}} \cdot (2^{\alpha-1}-2)\leq \sum_{\ei{i} \in X_{\alpha}} \Gs_i \hspace{5pt}+ \hspace{-10pt}\sum_{\ei{j} \in \WQS \cap \Band_{\leq \alpha}} \hspace{-15pt} g_j \hspace{3pt} \leq\hspace{5pt} 2 \hspace{-15pt}\sum_{\ei{j} \in \WQS \cap \Band_{\leq \alpha}} \hspace{-15pt} \G_j \hspace{5pt}+ \hspace{-10pt}\sum_{\ei{j} \in \WQS \cap \Band_{\leq \alpha}} \hspace{-15pt} \G_j \hspace{3pt} \leq  3\cdot O(\ell \cdot 2^{\alpha+1}),
\]
which implies $\card{X_{\alpha}} = O(\ell)$ for $ 3 \leq \alpha \leq \Bt{k} $. There can  be $O(\ell)$ elements each in $X_0$, $X_1$ and $X_2$ since there are at most $O(\ell)$ elements in $\Band_{\leq 2}$. By \Cref{eq: weighted band-alpha}  we have $ \Bt{k} = O(\log \tim{k})$ and therefore that the number of \Tt elements is $O(\ell \cdot \log \tim{k})$.
 \end{proof}
 We have now shown that, after performing the deletion step after $k$ elements have been seen, the number of \To elements in \WQS is $O(\ell \cdot \log \tim{k})$ by \Cref{lem: type-1 weighted-GK} and the number of \Tt elements in \WQS is $O(\ell \cdot \log \tim{k})$ by \Cref{lem:weighted-type-2-GK}. Since each element in \WQS (other than $+\infty$) is either \To or \Tt, the total number of elements in \WQS is $O(\ell \cdot \log \tim{k})$.


This finalizes the proof of \Cref{lem:GK-space-weighted} since $\tim{n}=O(\eps \wtsum{n})$ and $\ell=O(\frac{1}{\eps})$. We conclude the discussion of the space complexity with the following remark; 
\begin{remark} [Delaying Deletions]
\label{rem: weighted-Delaying-Deletions-GK}
Suppose in \Cref{alg:weighted-GK}, instead of running the \textbf{deletion step} in Line (ii) after each element, we  run it only after inserting $c$ elements $c > 1$; then, the space complexity of the algorithm only increases by an additive term $O(c)$.
\end{remark}

Performing the deletion step after $k$ elements, the number of elements reduces to $O(\ell \cdot \log \tim{k})$ as proved earlier as long as we have been satisfying both the conditions of the deletions of \Cref{alg:weighted-GK} while performing every deletion. Thus, the extra space is only due to storing the additional $O(c)$ elements that are inserted in \WQS.

The above remark will be useful in proposing an implementation of \Cref{alg:weighted-GK} which 
has an asymptotically faster update time per element, which we show in the following.

 \subsection{An Efficient Implementation of \texorpdfstring{\Cref{alg:weighted-GK}}{Algorithm} } \label{subsec: fast-imp-weighted-GK}
In this section, we present an efficient implementation of \Cref{alg:weighted-GK}. This is similar to the implementation of the GK summary proposed in \cite{LuoWYC16}. The key idea is that the deletion
step is slow and therefore performing it after every time step is rather time inefficient.  However, not performing the deletion step for too long blows up the space. The fast implementation we present deals with this trade-off and chooses the delay between consecutive deletion steps so that both the time and space complexity of the algorithm are optimized. Formally, we show the following:
\begin{lemma}\label{lem:weighted-gk-time}
There is an implementation of \Cref{alg:weighted-GK} that takes $O\left(\log(1/\eps)+ \log \log (\eps \wtsum{n}) + \frac{\log^2(\eps \wtsum{n})}{\eps n} \right)$ worst case processing time per element.
\end{lemma}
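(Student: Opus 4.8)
The plan is to mirror the efficient implementation developed for \Cref{alg:GK} in \Cref{subsec: fast-imp-GK} (and for \Cref{alg:simple} in \Cref{subsec: fast-imp-simple}), adapting it to the weighted setting where ``time'' advances nonuniformly because each update carries a weight. I would store $\WQS$ as a balanced binary search tree keyed by element value, with each node carrying $g$, $\Delta$, $\wt$, and $t_0$; by \Cref{lem:GK-space-weighted} the tree has size $s = O((1/\eps)\log(\eps\wtsum{n}))$, so each $\textbf{Insert}$ and $\textbf{Delete}$ costs $O(\log s) = O(\log(1/\eps)+\log\log(\eps\wtsum{n}))$. Since each stream element is inserted and deleted at most once, the total cost of all $\textbf{Insert}/\textbf{Delete}$ calls over the stream is $O(n\cdot(\log(1/\eps)+\log\log(\eps\wtsum{n})))$, giving the first two terms of the claimed bound on average; this is the analog of \Cref{obs: ins del time gk}.

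Next I would describe how a single deletion step is executed in $O(s)$ time: do an in-order traversal of the BST to produce a sorted doubly-linked list, compute $\bv$ of each element in $O(1)$ via \Cref{eq:closed-form-weighted}, compute all $\Gs$ values in one stack-based left-to-right pass (exactly the ``Computing $\gs$ values'' procedure sketched in \Cref{subsec: fast-imp-GK}: push each $e_i$, popping and accumulating into $\Gs_i$ all stack-top elements with smaller band-value), and then a right-to-left pass that deletes each element together with its segment whenever both conditions $(1)$ and $(2)$ of \Cref{alg:weighted-GK} hold. To control how often this $O(s)$ work is incurred, I would invoke the delaying-deletions idea of \Cref{rem: weighted-Delaying-Deletions-GK}: after a deletion step performed when $\emm$ elements have been seen (so at time $\tim{\emm}$), wait $\ceil{\log \tim{\emm}}$ time steps before the next one. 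The space penalty is only an additive $O(c\cdot\ell)$ where $c$ is the number of chunks skipped, hence $O(\ell\log(\eps\wtsum{n})) = O((1/\eps)\log(\eps\wtsum{n}))$, preserving the space bound. Counting deletion steps by partitioning the time axis into intervals $[2^i,2^{i+1})$ as in \Cref{clm:number-of-merges } / \Cref{clm: number-of-merges-gk} gives $O(\eps\wtsum{n}/\log(\eps\wtsum{n}))$ deletion steps, each costing $O(s)=O((1/\eps)\log(\eps\wtsum{n}))$, for a total of $O(\eps\wtsum{n})$ spent deciding deletions; combined with the above this yields total runtime $O(\wtsum{n} + n\cdot(\log(1/\eps)+\log\log(\eps\wtsum{n})))$ — but amortized \emph{per stream element} (there are $n$ of them) this is $O((\wtsum{n}/n) + \log(1/\eps)+\log\log(\eps\wtsum{n}))$, which is not obviously the claimed bound. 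The resolution, and the point that needs care, is that a single \textbf{Insert} already implicitly advances time by $\wt(x)$, so the ``$\wtsum{n}$-work'' is really spread across time steps rather than stream elements, and one must re-amortize: the deletion-deciding work between two consecutive deletion steps is $O(s)$ and there are $\ceil{\log\tim{\emm}}$ time steps — i.e. total weight $\Theta(\ell\log\tim{\emm})$ — in between, so this work is $O(s/(\ell\log\tim{\emm})) = O(1)$ per unit of weight; charging it to stream elements in that window, if there are few heavy elements the $O(s)$ cost can fall on a single element, producing the extra additive term $\frac{\log^2(\eps\wtsum{n})}{\eps n}$ in the worst case when deletion work per deletion step, $O(s)=O((1/\eps)\log(\eps\wtsum{n}))$, must be absorbed and spread across as few as $\Omega(n/(\text{number of deletion steps})) = \Omega(n\log(\eps\wtsum{n})/(\eps\wtsum{n}))$ elements.

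To get the stated \emph{worst-case} per-element time I would then spread a deletion step's work across the time steps preceding the next deletion step, exactly as in \Cref{lem: worst case greedy} and \Cref{lem: worst case time gk}: buffer incoming elements in a buffer $B$ while running the $O(s\log s)$ deletion work uniformly over the $\ceil{\log\tim{\emm}}$ intervening time steps, then drain $B$; since $B$ holds $O(s)$ elements the space grows only by a constant factor. The subtlety unique to the weighted case is that a single update can carry weight up to $\poly(n)$, so one update may by itself span many ``time steps'' worth of the schedule; I would handle this by having $\textbf{Insert}(x,\wt(x))$ itself do $O(\wt(x))$ amortized units of the pending deletion-spreading work (it already must advance the clock by $\wt(x)$), which is fine time-wise since $\textbf{Insert}$ is a single BST insertion of cost $O(\log s)$ regardless of $\wt(x)$ and we are merely attaching a bounded amount of background work per unit weight — yielding worst-case $O(\log s) = O(\log(1/\eps)+\log\log(\eps\wtsum{n}))$ plus the unavoidable $\frac{\log^2(\eps\wtsum{n})}{\eps n}$ term that appears when $n$ is much smaller than $\eps\wtsum{n}$ so that the $O(s)=O((1/\eps)\log(\eps\wtsum{n}))$ deletion cost, incurred $O(\eps\wtsum{n}/\log(\eps\wtsum{n}))$ times, cannot be amortized below one deletion step's worth of work per element. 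Finally I would note (as in \Cref{clm:quantile_queries-greedy-unwt}) that $\rmin,\rmax$ can be recovered by one in-order traversal and a query answered in $O(\log s)$ time, completing \Cref{thm:weighted-GK}.

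The main obstacle I anticipate is precisely this re-amortization across weighted updates: establishing cleanly that the background deletion-spreading work attached to each $\textbf{Insert}$ never exceeds $O(\wt(x))$ \emph{and} that this background work still only costs $O(\log s)$ real time (because it is $O(\wt(x))$ BST operations? no — it must be $O(1)$ real operations whose \emph{number} is bounded, each of cost $O(\log s)$), so one has to set up the schedule so that the number of BST deletions attributed to any single $\textbf{Insert}$ is $O(1)$ except for an overflow that is exactly the $\frac{\log^2(\eps\wtsum{n})}{\eps n}$ term; getting the bookkeeping of ``time steps vs.\ stream elements vs.\ total weight'' consistent is the delicate part, and I would lean on \Cref{rem: weighted-Delaying-Deletions-GK} and the interval-partition counting argument to keep it rigorous.
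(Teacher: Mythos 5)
Your data structure, your $O(s)$-time deletion step (including the stack-based $\Gs$ computation), and your plan to spread deletion work over subsequent arrivals to get a worst-case bound all match the paper. The genuine gap is in the deletion schedule. You delay the next deletion step by $\ceil{\log \tim{\emm}}$ \emph{time steps}, i.e., by $\Theta(\ell\log\tim{\emm})$ units of \emph{weight}. Under that schedule the number of deletion steps with $\tim{k}\in[2^i,2^{i+1})$ can be as large as $\Theta(2^i/i)$, so the total deletion-deciding work is $\sum_i (2^i/i)\cdot O(\ell i)=O(\wtsum{n})$, which cannot be amortized below $\Omega(\wtsum{n}/n)$ per element --- exceeding the claimed bound whenever weights are large. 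You sense this (``which is not obviously the claimed bound''), but your resolution only re-distributes the same $O(\wtsum{n})$ total at $O(1)$ per unit of weight, so a single heavy update absorbs $\Theta(\wt(x))$ work; and the arithmetic deriving the third term from ``$O(s)$ spread over $\Omega(n\log(\eps\wtsum{n})/(\eps\wtsum{n}))$ elements'' actually evaluates to $\wtsum{n}/n$, not $\log^2(\eps\wtsum{n})/(\eps n)$.

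The fix, which is what the paper does in \Cref{implementation: GK-weighted}, is to key the schedule to the number of stream \emph{insertions} rather than to elapsed time/weight: after a deletion step performed when $k$ elements have been seen, wait for $\ceil{\ell\log\tim{k}}$ further insertions. (Note that \Cref{rem: weighted-Delaying-Deletions-GK} is stated in terms of a count $c$ of inserted elements with space penalty $O(c)$, not the unweighted $O(c\cdot\ell)$ form you quote.) Then, with $n(i)$ the number of stream elements $x_k$ having $\tim{k}\in[2^i,2^{i+1})$, the number of deletion steps in that interval is $d(i)\le n(i)/(\ell i)+1$, each costing $O(\ell i)$, so the total is $O\big(\sum_i (n(i)+\ell i)\big)=O\big(n+(1/\eps)\log^2(\eps\wtsum{n})\big)$; dividing by $n$ yields exactly the claimed third term, whose sole source is the ``$+1$'' leftover. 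The worst-case bound then follows by spreading each deletion step's $O(s\log s)$ work uniformly over the $\ceil{\ell\log\tim{k}}$ insertions preceding the next deletion step, i.e., $O(\log s)$ work per arriving element independent of its weight --- no per-unit-of-weight background work is needed, which also dissolves the bookkeeping difficulty you flag at the end.
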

\paragraph*{\textbf{Part \RNum{1}:  Storing \QS:}}
We store our summary $\WQS$ as a balanced binary search tree (BST), where each node contains an element of $\WQS$ along with its metadata. For each element $e$ we  store  $\wt(e), g(e), \Delta(e)$ and $t_0(e)$. The sorting key of the BST is the value of elements.  The \textbf{Insert} and \textbf{Delete} operations  insert elements into and delete elements from the BST respectively. 
\paragraph*{\textbf{Part \RNum{2}: Performing a Deletion Step:}} The deletion step involves the 
deletion of elements in the summary that satisfy the two conditions of \Cref{alg:weighted-GK}. 
Checking condition (ii) requires that we know the $\Gs$ values corresponding to each element of the 
summary. 
We first show how the $\Gs$ values of all elements can be computed. 

\paragraph*{Computing \texorpdfstring{$\Gs$}{G} values: } \label{subsec:comp_gs_val}
First we perform an inorder traversal of $\WQS$ and store the elements $\ei{i}$ in sorted order as a 
temporary linked list. The $\Gs$ value computation will use a stack and will make one pass over the 
list from the smallest to the largest element. We describe the computation when the traversal 
reaches the element $\ei{i}$ in the list. To obtain $\Gs_i$, we sum up the $\Gs$ values of all 
elements on the top of the stack with $\bv$ less than $\bv(\ei{i})$ and add the sum to $G_i$. All 
these elements are popped from the stack and then $\ei{i}$ along with its computed $\Gs$ value is 
pushed onto the stack. We claim that at this point $\Gs_i$ has been correctly computed. Since each 
element is pushed and popped from the stack at most once, the $\Gs$ values of all elements can be 
computed in time linear in the size of $\WQS$.

We now describe how each deletion step is performed.
\begin{tbox}
\label{tbox:deletion-step-weighted-gk}
\textbf{Algorithm. Performing a deletion step efficiently:}
\medskip
\begin{enumerate}
    \item Perform an inorder traversal of $\WQS$ (which is a BST) to obtain a temporary (doubly-linked) list of elements sorted by value. 
    \item Compute $\bv$s of all elements of \WQS using \Cref{eq:closed-form-weighted}. 
    \item Compute the $\Gs$ value of all elements using the algorithm described above.
    \item Traverse the list from larger elements to smaller ones. For each element $\ei{i}$, delete it from BST (as well as the list), if it satisfies both the deletion conditions mentioned in \Cref{alg:weighted-GK}. 
\end{enumerate}
\end{tbox}

Below is an implementation of \Cref{alg:weighted-GK} with fast amortized update time. We also describe how to modify this implementation to also get the same bound on the worst case update time.

\begin{Implementation}\label{implementation: GK-weighted}
\textbf{Efficient Implementation of}~\Cref{alg:weighted-GK}
    \medskip
    \begin{itemize}
	\item Initialize $\WQS$ to be an empty balanced binary search tree.
	
	\item $\DeleteTime \leftarrow 2$.

	\item For each arriving item $(x_\emm, w(x_\emm))$:  
	\begin{enumerate}[label=$(\roman*)$]
		\item 
		Run $\textbf{Insert}(x_\emm,w(x_\emm))$. 
	     \item If $(k = \DeleteTime)$:	\begin{itemize}
		    \item 	Execute the deletion step and update $\DeleteTime \leftarrow \DeleteTime +  \ceil{\ell \log t_k}$.
		\end{itemize}
	
	\end{enumerate}
	\end{itemize}
\end{Implementation}

\paragraph*{Space Analysis.} 
The space complexity of the above implementation is still $O(\frac{1}{\eps} \log(\eps \wtsum{n}))$. This follows from that fact that, after performing a deletion when $k$ elements have been seen, we wait for another $O(\ell \cdot \log \tim{k})$ elements only, which increases the space complexity by only a constant factor due to \Cref{rem: weighted-Delaying-Deletions-GK}. Thus, the space complexity , after $n$ insertions, remains $O(\ell \cdot \log \tim{n})=O(\frac{1}{\eps} \log(\eps \wtsum{n}))$, as $\ell=1/\eps$ and $\tim{n}=O(\eps \wtsum{n})$.

 \paragraph*{Time Analysis.}
The main purpose storing behind \WQS as a BST was to decrease the time require to perform an $\textbf{Insert}$ and $\textbf{Delete}$ operation on \WQS. This takes only $O(\log s)$, where $s$ is the summary size which is at most $O(\frac{1}{\eps} \log \eps \wtsum{n})$. Thus, we now have the following observations, which is directly implied by the fact that we perform $\textbf{Insert}$ and $\textbf{Delete}$ at most once per element. 
\begin{observation} \label{obs: ins del time GK weighted}
Over a stream of length $n$, the total time taken by the fast implementation of \Cref{implementation: GK-weighted} to perform all \textbf{Insert} and \textbf{Delete} operations is $O( n \cdot (\log (1/\eps) + \log \log (\eps \wtsum{n})))$. 
\end{observation}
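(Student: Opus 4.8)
The plan is to charge the running time to individual \textbf{Insert} and \textbf{Delete} operations and then sum. First I would note that over a stream of $n$ items, \Cref{implementation: GK-weighted} invokes \textbf{Insert} exactly once per arriving item, hence at most $n$ times in total; and since an element can only be deleted from $\WQS$ if it is currently stored there (and it is stored only after having been inserted), \textbf{Delete} is invoked at most once per element, hence also at most $n$ times. So the whole accounting reduces to bounding the per-operation cost.

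The second step is to argue that each such operation costs $O(\log s)$, where $s$ is the current number of elements in $\WQS$. Since $\WQS$ is stored as a balanced BST keyed by element value, locating the successor $\ei{i}$ of a newly arriving element $x$, inserting the node for $x$, and deleting a node all take $O(\log s)$ time. The one thing that needs checking is that the metadata bookkeeping does not cost more: by \Cref{obs: g-delta-update-weighted}, an \textbf{Insert} only sets $g(x)=1$ and $\Delta(x)=g_i+\Delta_i-1$ (reading $g_i,\Delta_i$ off the successor node) and leaves every other stored $(g,\Delta)$ value untouched, while \textbf{Delete}$(\ei{i})$ only replaces $g_{i+1}$ by $g_{i+1}+\G_i$, and $\G_i=g_i+\wt(\ei{i})-1$ is computable in $O(1)$ from the stored $g_i$ and $\wt(\ei{i})$. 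In particular, because the summary is maintained in the indirect $(g,\Delta)$ encoding and the $\rmin,\rmax$ values are never materialised during the stream, no \textbf{Insert}/\textbf{Delete} ever triggers an $\Omega(s)$-time cascade of rank updates.

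Finally I would plug in the space bound. By \Cref{lem:GK-space-weighted} together with \Cref{rem: weighted-Delaying-Deletions-GK} — which ensures that the periodic (rather than per-item) deletion schedule used by \Cref{implementation: GK-weighted} inflates the summary by only a constant factor — we have $s = O\big((1/\eps)\log(\eps\wtsum{n})\big)$ at every point in the stream. Hence $\log s = O\big(\log(1/\eps)+\log\log(\eps\wtsum{n})\big)$, and multiplying this by the $O(n)$ bound on the number of \textbf{Insert}/\textbf{Delete} calls yields the claimed total running time of $O\big(n\cdot(\log(1/\eps)+\log\log(\eps\wtsum{n}))\big)$.

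I do not expect a real obstacle in this observation; the only subtle point is the second step — confirming that neither operation secretly incurs $\Omega(s)$ work updating rank estimates — and that is exactly what the $(g,\Delta)$ representation and \Cref{obs: g-delta-update-weighted} are set up to guarantee, so the observation follows immediately from the three steps above.
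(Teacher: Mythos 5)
Your proposal is correct and follows essentially the same route as the paper: each element triggers at most one \textbf{Insert} and one \textbf{Delete}, each costing $O(\log s)$ on the balanced BST, and $s = O\big((1/\eps)\log(\eps \wtsum{n})\big)$ by the space analysis. Your extra care in checking (via \Cref{obs: g-delta-update-weighted}) that the $(g,\Delta)$ encoding prevents an $\Omega(s)$ cascade of rank updates is a point the paper leaves implicit, but it does not change the argument.
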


Note that the only time taken by \Cref{implementation: GK-weighted} \textbf{not} taken into account in~\Cref{obs: ins del time GK weighted} is the part that determines \emph{which elements} to delete, which we bound in the following.

\begin{lemma}\label{lem: time to decide elements to delete-greedy-weighted}
Over a stream of length $n$, the total time taken by  \Cref{implementation: GK-weighted}  to decide which elements need to be deleted over all the executed deletion steps is $O(n + \frac{1}{\eps} \log^2(\eps \wtsum{n}) )$. 
\end{lemma}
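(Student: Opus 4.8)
The plan is to bound the total cost of all deletion steps, where the only work to be charged here — beyond the $\textbf{Insert}$ and $\textbf{Delete}$ calls already accounted for in \Cref{obs: ins del time GK weighted} — is the computation performed \emph{inside} each deletion step to identify which elements to remove. As described in the box ``Performing a deletion step efficiently'', this consists of an inorder traversal of $\WQS$ to obtain the sorted list, computing $\bv(e)$ for every $e$ via the closed form of \Cref{eq:closed-form-weighted}, computing all $\Gs$ values with a single stack pass, and one linear pass over the list checking the two deletion conditions; each of these takes $O(s)$ time where $s$ is the size of $\WQS$ at that step. So the quantity to bound is $\sum_{\text{deletion steps}} O(s)$. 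The crucial point is that one must use the \emph{time-dependent} size bound $s = O(\ell \cdot \log \tim{k})$ at a deletion step performed after $k$ elements, rather than the uniform bound $s = O(\ell \log(\eps \wtsum{n}))$: with the uniform bound, multiplying by the number of deletion steps yields only $O(\wtsum{n})$, which is far too weak.

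I would then partition the possible values of the time-step counter into dyadic ranges $I_i := [2^i, 2^{i+1})$ for $1 \le i \le \ceil{\log \tim{n}}$, and say that a deletion step performed after $k$ elements \emph{belongs to} $I_i$ when $\tim{k} \in I_i$. Two facts drive the estimate. First, at a deletion step belonging to $I_i$ the summary size is $s = O(\ell \cdot \log \tim{k}) = O(\ell i)$: immediately after the previous deletion step (at index $k'$) the size is $O(\ell \log \tim{k'})$ by \Cref{lem:GK-space-weighted}, and at most $\ceil{\ell \log \tim{k'}}$ further elements are inserted before the current step by the $\DeleteTime$ update rule of \Cref{implementation: GK-weighted}, so $s = O(\ell \log \tim{k'}) = O(\ell \log \tim{k}) = O(\ell i)$ since $\tim{k'} \le \tim{k} < 2^{i+1}$. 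Second, two consecutive deletion steps belonging to $I_i$ are at least $\ceil{\ell \log \tim{k}} \ge \ell i$ elements apart (as $\tim{k} \ge 2^i$), and since $\tim{}$ is nondecreasing, all stream indices between them also have time-step value in $I_i$; hence, writing $m_i$ for the number of stream elements $x_j$ with $\tim{j} \in I_i$, the number of deletion steps belonging to $I_i$ is at most $1 + m_i/(\ell i)$. This counting is analogous to \Cref{clm:number-of-merges } for the greedy algorithm.

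Combining the two facts, the total decision cost over all deletion steps belonging to $I_i$ is $O\big((1 + m_i/(\ell i)) \cdot \ell i\big) = O(\ell i + m_i)$, and summing over $i$ gives
\[
\sum_{i=1}^{\ceil{\log \tim{n}}} \big(\ell i + m_i\big) \;=\; O\big(\ell \log^2 \tim{n}\big) + O\Big(\textstyle\sum_i m_i\Big) \;=\; O\Big(\tfrac{1}{\eps}\log^2(\eps \wtsum{n}) + n\Big),
\]
using $\ell = 1/\eps$, $\tim{n} = \floor{\eps \wtsum{n}} = O(\eps \wtsum{n})$, and $\sum_i m_i \le n$ (each stream element lies in exactly one range $I_i$). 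The only loose end is that $\log \tim{k}$ is meaningful only once $\tim{k} \ge 2$; before that the total stream weight is below $2/\eps$, so fewer than $2/\eps$ elements have arrived, and we may simply perform no deletion steps during that initial phase, which by \Cref{rem: weighted-Delaying-Deletions-GK} costs only $O(1/\eps)$ extra space and nothing in time. The main obstacle in the write-up is exactly the observation in the first paragraph: one must track how $s$ grows with $\tim{k}$ and pair it against the widening inter-deletion gap $\ceil{\ell \log \tim{k}}$, so that the $\Theta(\ell i)$ per-step cost and the $\Theta(1/(\ell i))$ density of steps cancel, leaving only the arithmetic-series sum $\sum_i \ell i = O(\ell \log^2 \tim{n})$ plus the linear term $\sum_i m_i \le n$.
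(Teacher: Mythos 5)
Your proposal is correct and follows essentially the same route as the paper's proof: it charges $O(s)=O(\ell\log \tim{k})$ per deletion step for the traversal, band-value, $\Gs$, and condition-checking passes, partitions the deletion steps by which dyadic range $[2^i,2^{i+1})$ contains $\tim{k}$, bounds the number of steps in that range by $1+n(i)/(\ell i)$ using the $\ceil{\ell\log\tim{k}}$ gap, and sums to $O(n+\ell\log^2\tim{n})$. The extra details you supply (the justification that $s=O(\ell i)$ at a step in range $i$, and the handling of the initial phase) are consistent with, and slightly more careful than, the paper's write-up.
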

\begin{proof}
The time taken to decide which elements need to be deleted inside one deletion step (when $k$ elements have been seen) step is $O(s)=O(\ell \cdot \log \tim{k})$. This is because creating a linked list, followed by computation of $\bv$ and $\Gs$-value of all elements can be performed in $O(s)$ time. Finally, making a linear pass over the list from the largest to the smallest element (to check if the deletion conditions hold) requires $O(s)$ time.

Next, we obtain a bound on the  number of deletion steps performed by the algorithm. Consider the deletion steps performed when $\tim{k}$ is the intervals $[2^i, 2^{i+1})$, for $1\leq i\leq \ceil{\log (\eps \wtsum{n})}$. Let $d(i)$ be the number of such deletion steps and $n(i)$ denote the number of elements $x_k$ of the stream for which $\tim{k}$ is in the range $[2^i,2^{i+1})$. After the deletion step when $k$ elements have been seen, we wait for $ \ceil{\ell \log \tim{k} }$ insertions. Therefore, there are at least $\ell \cdot i$ elements inserted between two consecutive deletion steps that happen in the considered interval. Therefore, we get the following bound on the number of deletion steps that are performed during the interval. 
 \begin{align}\label{eq:delstep-bound-weighted}
 d(i) \leq \frac{n(i)}{\ell \cdot i} + 1. 
 \end{align}
 The time spent deciding which element to delete in a deletion step (after seeing $k$ elements) is at most $O(\ell \log \tim{k}) = O(\ell \cdot i)$, when $t_k$ is in the interval $[2^i,2^{i+1})$.
 This and \Cref{eq:delstep-bound-weighted}, give the following bound on the total time spent to decide which elements to delete over all deletions steps.

 \begin{align*}
     O\left(\sum\limits_{i=1}^{\ceil{\log(\eps \wtsum{n})}} d(i) \cdot \ell i\right) &= O\left(\sum\limits_{i=1}^{\ceil{\log(\eps \wtsum{n})}} \left(n(i) + \ell i   \right) \right)\\
     &= O\left(n +  \frac{1}{\eps} \log^2(\eps \wtsum{n})  \right)
 \end{align*}
 
This finalizes the proof of the lemma.
\end{proof}

\Cref{obs: ins del time GK weighted} and \Cref{lem: time to decide elements to delete-greedy} together clearly imply that the total time taken by \Cref{implementation: GK-weighted} over a stream of length $n$ is $O\big(\; n \cdot(\log(1/\eps)+\log\log(\eps \wtsum{n}))+\frac{1}{\eps} \log^2(\eps \wtsum{n})\;\big)$. Thus, the amortized update time per element is $O\big(\log(1/\eps)+ \log \log (\eps \wtsum{n}) + \frac{\log^2(\eps \wtsum{n})}{\eps n} \big)$.

We can obtain the same bound on the worst-case update time per element using standard ideas of distributing time of inefficient operations over multiple time steps.
The idea is to process the deletion step over all the following time steps before executing the next 
deletion step. Formally, we have the following:

\begin{claim} \label{lem: worst case wt GK}
There is an implementation of \Cref{alg:weighted-GK} with a worst-case update time of 
$O\big(\log(1/\eps)+ \log \log (\eps \wtsum{n}) + \frac{\log^2(\eps \wtsum{n})}{\eps n} \big)$.
\end{claim}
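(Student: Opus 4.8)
The plan is to take \Cref{implementation: GK-weighted}, which already achieves the claimed bound in the \emph{amortized} sense, and de-amortize it by the standard trick of spreading each deletion step over the block of insertions that follows it while buffering the incoming elements — exactly as is done for the unweighted algorithms in \Cref{lem: worst case greedy} and \Cref{lem: worst case time gk}. Recall that after the deletion step performed when the $k$-th element has been seen, \Cref{implementation: GK-weighted} schedules the next deletion step after $w := \ceil{\ell \log t_k}$ further insertions. I would modify the implementation so that, over those $w$ insertions, it gradually carries out the deletion step triggered at the $k$-th element, and only begins the next one once the buffer is flushed.

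Concretely, I would split the window of $w$ insertions in half. During the first $\floor{w/2}$ insertions: (i) append each arriving element to a buffer $B$ without touching the BST; and (ii) perform an $O(1)$ fraction of the pending deletion step — that is, run the deletion-step procedure of \Cref{tbox:deletion-step-weighted-gk} (the inorder traversal, the $\bv$-computation via \Cref{eq:closed-form-weighted}, the stack-based $\Gs$-computation, and the single linear scan, all $O(s)$ together) and then issue the at most $s$ \textbf{Delete} calls, at a rate of $O(1)$ units of this work per insertion. Since $s = O(\ell \log t_k) = O(w)$ by \Cref{lem:GK-space-weighted}, this all fits within $\floor{w/2}$ insertions. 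During the second $\ceil{w/2}$ insertions I would flush $B$ into the BST at rate two per incoming element, so that $B$ is empty before the next deletion step is due (as $2\ceil{w/2}\ge w\ge|B|$). Each insertion then costs $O(1)$ BST operations plus $O(1)$ units of deletion-step work, i.e.\ $O(\log s)=O(\log(1/\eps)+\log\log(\eps\wtsum{n}))$ worst-case. Correctness is inherited from the amortized implementation: the deletion step operates on the snapshot frozen at the $k$-th element, both of its conditions are monotone over time (by \Cref{obs:weighted-stability} and because $t_j$ only increases), and the buffered insertions, replayed afterwards, change $(g,\Delta)$ values only as \Cref{obs: g-delta-update-weighted} prescribes. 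The extra space is $O(|B|)=O(w)=O(\ell\log(\eps\wtsum{n}))$, which by \Cref{rem: weighted-Delaying-Deletions-GK} does not change the asymptotic space bound.

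The one case where this simple spreading fails — and the reason the worst-case bound carries the extra $\frac{\log^2(\eps\wtsum{n})}{\eps n}$ term rather than being just $O(\log(1/\eps)+\log\log(\eps\wtsum{n}))$ — is when a deletion step's window extends past the end of the stream (for instance, a single element of enormous weight pushes $t_k$ to $\Theta(\eps\wtsum{n})$ with no further insertions to spread over). Any such ``truncated'' deletion step is triggered within the last $w\le\ceil{\ell\log(\eps\wtsum{n})}$ elements, and since consecutive deletion steps are $\ge\ell$ insertions apart, only $O(\log(\eps\wtsum{n}))$ of them occur in that final stretch; each costs $O(s)=O(\ell\log(\eps\wtsum{n}))$ to prepare, for a total of $O(\tfrac1\eps\log^2(\eps\wtsum{n}))$ of preparatory work that cannot be hidden inside a full window — this is exactly the ``$+1$'' contribution already analysed in \Cref{eq:delstep-bound-weighted}. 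I would handle this residual work (and any \textbf{Delete} calls left unissued, $O(\log s)$ each) by reserving a per-element budget of $\Theta(\frac{\log^2(\eps\wtsum{n})}{\eps n}+\log s)$ from the start and distributing the truncated deletion steps over all $n$ insertions, adjusting the reserved amount as $\wtsum{}$ grows (the stream length is never needed, only the current $\wtsum{}$); anything still outstanding when the stream ends is finished as $O(s)$ one-time pre-processing before the first query, which does not count towards per-element update time. Together with the $O(\log s)$ cost of ordinary insertions, this yields the claimed worst-case per-element time.

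I expect the de-amortization itself to be routine; the step I would be most careful about is the last one — matching the $\frac{\log^2(\eps\wtsum{n})}{\eps n}$ term in the worst case, i.e.\ making the accounting for the deletion steps whose windows are truncated by the end of the stream exactly parallel the amortized argument, since those steps are the sole obstruction to a clean $O(\log(1/\eps)+\log\log(\eps\wtsum{n}))$ bound.
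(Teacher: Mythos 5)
Your proposal is correct and follows essentially the same route as the paper, which for this claim only gestures at the standard de-amortization of spreading each deletion step over the window of insertions preceding the next one while buffering arrivals (exactly as in the unweighted \Cref{lem: worst case greedy}). You actually supply more detail than the paper does, in particular the accounting for deletion steps whose windows are truncated at the end of the stream, which is where the $\frac{\log^2(\eps \wtsum{n})}{\eps n}$ term comes from.
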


\subsection*{Acknowledgements}
We would like to thank Rajiv Gandhi for making the collaboration between the authors possible and 
for his support throughout this project.

\bibliographystyle{alpha}
\bibliography{new}

\bigskip

\appendix

\section{A Greedy algorithm for weighted streams}\label{sec:wt-greedy}
	
In this section, present a \emph{time-efficient} algorithm that greedily deletes elements from the summary \WQS while maintaining ~\Cref{inv: g-delta-weighted} and the same condition on $\bv$ of elements as ~\Cref{alg:weighted-GK}. The idea is very similar; we first insert each arriving element in \WQS, and then we execute the deletion step to reduce the size of the summary. The only difference is that we do not delete an element along with its segment; instead we delete an element greedily even if it is not possible to delete its entire segment. Formally,
\begin{Algorithm}\label{alg:weighted-simple}
A greedy algorithm for weighted streams: 
	\medskip
	
	\medskip
For each arriving item $(x_\emm, w(x_\emm))$: 
\begin{enumerate}[label=$(\roman*)$]
\addtolength{\itemindent}{3mm}
	\item Run $\textbf{Insert}(x_\emm,w(x_\emm))$. 
	\item Repeatedly run $\textbf{Delete}(\ei{i})$ for any (arbitrarily chosen) element $\ei{i}$ in \WQS satisfying:
	\[
	(1)~\bv(\ei{i}) \leq \bv(\ei{i+1})\qquad \text{\underline{and}} \qquad (2)~\G_i+\g_{i+1}+\Delta_{i+1} \leq \tim{\emm}\]

\end{enumerate}
\end{Algorithm}

	We state the space and time complexity of \Cref{alg:weighted-simple} in the following theorem.
	\begin{theorem}\label{thm: simple-weighted}
    For any $\eps > 0$ and a weighted stream of length $n$ with total weight $\wtsum{n}$,~\Cref{alg:weighted-simple} maintains an $\eps$-approximate quantile summary in $O(\frac{1}{\eps} \cdot \log^2{\!(\eps \wtsum{n})})$ space. Also, there is an implementation of ~\Cref{alg:weighted-simple} that takes $O\left(\log(1/\eps) + \log \log (\eps \wtsum{n}) + \frac{\log^3(\eps \wtsum{n})}{\eps n}\right)$ update time per element.
\end{theorem}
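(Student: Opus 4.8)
The plan is to reuse, almost verbatim, the two‑type space argument already developed for the unweighted greedy summary in \Cref{sec:greedy-space} (which gives \Cref{thm:greedy}) and combine it with the weighted bookkeeping set up in \Cref{sec: weighted-prelims}, precisely the way the analysis of \Cref{alg:weighted-GK} in \Cref{subsec: space analysis wt gk} lifted the unweighted GK analysis. Correctness is immediate: \Cref{alg:weighted-simple} deletes $\ei{i}$ only when $\G_i+\g_{i+1}+\Delta_{i+1}\leq\tim{\emm}$, and by \Cref{obs: g-delta-update-weighted} this leaves $\g_{i+1}+\Delta_{i+1}\leq\tim{\emm}$ afterwards while changing no other $(\g,\Delta)$ pair, so \Cref{inv: g-delta-weighted} is preserved and by \Cref{lem:search-weighted} the summary is $\eps$-approximate at every point. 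For the size bound, fix the instant just after a deletion step performed when $\emm$ stream elements have been seen. Every stored $\ei{i}\neq+\infty$ violates at least one deletion condition; call it a \emph{\To} element if $\bv(\ei{i})>\bv(\ei{i+1})$, and a \emph{\Tt} element if only $\G_i+\g_{i+1}+\Delta_{i+1}>\tim{\emm}$ holds. As in \Cref{lem:type-1}, consecutive \To elements have strictly decreasing band‑values and there are only $O(\log\tim{\emm})$ band‑values by \Cref{eq: weighted band-alpha}, so the \To count is at most $O(\log\tim{\emm})$ times the \Tt count.

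The core step is bounding the \Tt elements, exactly mirroring \Cref{lem:type-2}. For a \Tt element $\ei{i}$, using $\Delta_{i+1}\leq t_0(\ei{i+1})$ (\Cref{eq:del-t0-weighted}) and \Cref{eq:closed-form-weighted} gives $\G_i+\g_{i+1}>\tim{\emm}-t_0(\ei{i+1})\geq 2^{\bv(\ei{i+1})-1}-2$. Partitioning the \Tt elements into $X_\alpha:=\set{\ei{i}\in\WQS\mid \ei{i}\text{ is \Tt and }\bv(\ei{i+1})=\alpha}$ and noting that $\ei{i}$ \Tt forces $\bv(\ei{i})\leq\bv(\ei{i+1})=\alpha$, so that $\ei{i},\ei{i+1}\in\Band_{\leq\alpha}$, we obtain
\[
\card{X_\alpha}\cdot(2^{\alpha-1}-2)\;\leq\;\sum_{\ei{i}\in X_\alpha}(\G_i+\g_{i+1})\;\leq\;2\!\!\sum_{\ei{j}\in\WQS\cap\Band_{\leq\alpha}}\!\!\!\G_j\;\leq\;O(\ell\cdot 2^{\alpha+1}),
\]
the last inequality being \Cref{clm:weighted-cover-gk}, whose proof carries over unchanged because deletion rule $(1)$ together with \Cref{obs:weighted-stability} still forces elements of $\Band_{\leq\alpha}$ to cover only elements of $\Band_{\leq\alpha}$, and $\G_j$ counts exactly the total weight they carry (\Cref{obs:weighted-disjoint coverage}). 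Hence $\card{X_\alpha}=O(\ell)$ for $\alpha\geq 3$ and $O(\ell)$ in total for $\alpha\leq 2$; summing over the $O(\log\tim{\emm})$ bands yields $O(\ell\log\tim{\emm})$ \Tt elements, thus $O(\ell\log^2\tim{\emm})=O(\tfrac1\eps\log^2(\eps\wtsum{n}))$ elements overall, using $\tim{n}=O(\eps\wtsum{n})$ and $\ell=1/\eps$.

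For the time bound I would follow \Cref{subsec: fast-imp-weighted-GK}. Store $\WQS$ as a balanced BST keyed by value, with $\wt(e),\g(e),\Delta(e),t_0(e)$ at each node; each \textbf{Insert}/\textbf{Delete} then costs $O(\log s)=O(\log(1/\eps)+\log\log(\eps\wtsum{n}))$, and there are at most $n$ of each over the stream. A deletion step needs only an inorder traversal into a linked list, the $O(1)$‑per‑element band computation of \Cref{eq:closed-form-weighted}, and one right‑to‑left pass testing the two conditions — crucially, the greedy condition uses $\G_i=\g_i+\wt(\ei{i})-1$ (\Cref{eq:G-weighted}) directly, so unlike \Cref{alg:weighted-GK} no $\Gs$ computation is required — for a total of $O(s)$. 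We then delay: after a deletion step performed when $\emm$ elements have been seen we wait $\ceil{\ell\log^2\tim{\emm}}$ further insertions, which by the analogue of \Cref{rem: weighted-Delaying-Deletions-GK} adds only $O(\ell\log^2\tim{\emm})$ to the size and hence does not affect the asymptotic space. Counting deletion steps as in \Cref{lem: time to decide elements to delete-greedy-weighted}, but now with spacing $\Theta(\ell i^2)$ and per‑step cost $O(\ell i^2)$ on the block of time steps with $\tim{\emm}\in[2^i,2^{i+1})$, the total time to decide deletions is $O\big(\sum_i (n(i)+\ell i^2)\big)=O\big(n+\tfrac1\eps\log^3(\eps\wtsum{n})\big)$, where $n(i)$ is the number of stream elements with $\tim{\emm}$ in that block and $\sum_{i\leq\log(\eps\wtsum{n})}i^2=O(\log^3(\eps\wtsum{n}))$. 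Adding the $O(n(\log(1/\eps)+\log\log(\eps\wtsum{n})))$ insertion/deletion cost and dividing by $n$ gives the claimed amortized per‑element bound; the worst‑case version follows by spreading each deletion step over the insertions before the next one while buffering the incoming stream, exactly as in \Cref{lem: worst case wt GK}, at the price of a constant‑factor space blowup. (A quantile‑query time bound, if wanted, is immediate from \Cref{clm:quantile_queries-greedy-unwt} adapted to the weighted $\rmin,\rmax$ reconstruction.)

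I do not expect a real obstacle, since everything is a recombination of pieces already proved; the two points needing care are (i) the choice of delay $\ceil{\ell\log^2\tim{\emm}}$ — this $\log^2$ (the greedy summary size) rather than $\log$ (the GK size) is exactly what produces the extra additive $\tfrac{\log^3(\eps\wtsum{n})}{\eps n}$ in the update time, versus $\tfrac{\log^2(\eps\wtsum{n})}{\eps n}$ for \Cref{alg:weighted-GK} — and (ii) re‑checking that \Cref{clm:weighted-cover-gk} really does survive the switch from \Cref{alg:weighted-GK}'s delete‑the‑whole‑segment rule to the element‑by‑element deletions here, which it does because that claim only ever invokes deletion rule $(1)$ and the stability of band‑values, never the segment structure.
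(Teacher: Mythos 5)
Your proposal is correct and follows essentially the same route as the paper's own proof in \Cref{sec:wt-greedy}: the identical \To/\Tt decomposition with the $X_\alpha$ partition and the coverage bound of \Cref{clm:weighted-cover-gk}, and the same delayed-deletion BST implementation with spacing $\ceil{\ell\log^2 \tim{\emm}}$ yielding the extra $\frac{1}{\eps}\log^3(\eps \wtsum{n})$ term. Your two flagged points of care (the $\log^2$ delay and the survival of the coverage claim under element-by-element deletion) are exactly the observations the paper relies on.
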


We remark here that under the assumptions that $\eps \geq \frac{1}{\sqrt{n}}$ and when the weights 
are $\poly(n)$ bounded, the space complexity and the update time of the algorithm will respectively 
be $O((1/\eps) \log^2(\eps n))$ and $O(\log(1/\eps) + \log \log (\eps n))$. 

\Cref{alg:weighted-simple} deletes an element $\ei{i}$ only if $\G_i+\g_{i+1}+\Delta_{i+1} \leq \tim{k}$. By~\Cref{obs: g-delta-update-weighted}, one can say that $\g_{i+1}+\Delta_{i+1} \leq \tim{k}$ after the deletion, establishing \Cref{inv: g-delta-weighted}. The other $g$ and $\Delta$ values remain unchanged). As shown in \Cref{sec: weighted g delta}, maintaining \Cref{inv: g-delta-weighted} is enough for \WQS to be a valid $\eps$- approximate quantile summary. In what follows, we prove the guarantees on the space complexity of \Cref{alg:weighted-simple} stated in \Cref{thm: simple-weighted}.

\subsection{The space analysis}\label{sec: space-analysis-weighted}
We borrow the notion of \To and \Tt elements from \Cref{subsec: space analysis wt gk} with slight modification. After performing the deletion step (when $\emm$ elements of the stream have been seen), each element in \WQS satisfies $\bv(\ei{i})>\bv(\ei{i+1})$ or $\G_i+\g_{i+1}+\Delta_{i+1}>\tim{\emm}$; otherwise \Cref{alg:weighted-simple} would have deleted it. The elements that satisfy the former condition are identified as \To elements, and the rest (that only satisfy the latter condition) are identified as \Tt elements. Again, as previously, we will bound the number of \To and \Tt elements separately. 
	\begin{lemma}\label{lem: basic type-1 weighted}
	 After the deletion step when $\emm$ elements have been seen, the number of \To elements stored in \WQS is at most $O\big(\log \tim{\emm} \big)$ times the number of \Tt elements.
	\end{lemma}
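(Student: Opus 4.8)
**Proof proposal for Lemma (number of type-1 elements is $O(\log t_k)$ times the number of type-2 elements, for Algorithm~\ref{alg:weighted-simple}).**

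The plan is to mirror the proof of \Cref{lem:type-1} from the unweighted greedy setting, exploiting the fact that in \Cref{alg:weighted-simple} we delete elements one at a time (not with their segments), so the cheaper structural argument suffices. First I would look at a maximal run of \emph{consecutive} \To elements $\ei{i}, \ei{i+1}, \dots, \ei{j}$ in \WQS immediately after the deletion step. By definition of \To, each of these satisfies $\bv(\ei{p}) > \bv(\ei{p+1})$ for $i \le p \le j-1$, so along this run the band-values are \emph{strictly decreasing}: $\bv(\ei{i}) > \bv(\ei{i+1}) > \cdots > \bv(\ei{j})$. Since all band-values are nonnegative integers, and by \Cref{eq: weighted band-alpha} the total number of distinct band-values after seeing $\emm$ elements is $\Bt{\emm} = O(\log \tim{\emm})$, the length of any such run is at most $\Bt{\emm} = O(\log \tim{\emm})$.

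Next I would use this to charge \To elements to \Tt elements. Partition the sequence of stored elements (excluding $+\infty$) into maximal runs of \To elements separated by \Tt elements; between any two maximal \To-runs there is at least one \Tt element, and the element immediately following any maximal \To-run is either $+\infty$ (at most one such case) or a \Tt element. Hence the total number of \To elements is at most $(\text{number of \Tt elements} + 1)$ times the maximal run length, i.e. $O(\log \tim{\emm})$ times the number of \Tt elements, which is exactly the claimed bound (absorbing the additive $+1$ into the $O(\cdot)$, or handling the trivial case where there are no \Tt elements by noting the whole summary is then a single run of size $O(\log \tim{\emm})$).

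I do not expect a serious obstacle here: the only thing one must be careful about is that the criterion defining \To versus \Tt in \Cref{alg:weighted-simple} (condition $(1)$ $\bv(\ei{i}) \le \bv(\ei{i+1})$ versus only condition $(2)$ holding) exactly makes ``\To'' mean $\bv(\ei{i}) > \bv(\ei{i+1})$, so the strict-decrease observation along a run is immediate, and no properties specific to the weighted $\G$-values enter this particular lemma. The real work of bounding the summary size will come in the companion lemma bounding the number of \Tt elements, which is where the weighted analogues of \Cref{clm:lower-g} and \Cref{clm:cover} (using $\G$-values in place of $g$-values and \Cref{obs:weighted-disjoint coverage}, \Cref{eq: weighted band-alpha}) will be needed; that is deferred to the next lemma.
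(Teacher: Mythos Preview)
Your proposal is correct and takes essentially the same approach as the paper: both argue that a maximal run of consecutive \To elements has strictly decreasing band-values and hence length at most $\Bt{\emm}=O(\log \tim{\emm})$, so each \Tt element is preceded by at most that many \To elements. Your treatment is in fact slightly more careful than the paper's in handling the edge cases (the trailing $+\infty$ and the possibility of zero \Tt elements).
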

	\begin{proof} Consider a maximal sequence of consecutive \To elements $\ei{i},\ei{i+1},\ldots,\ei{j}$ in \WQS, and hence, $\bv(\ei{i}) > \bv(\ei{i+1}) > \ldots > \bv(\ei{j})$. The number of band values is $\Bt{\emm} = O(\log \tim{\emm})$ by \Cref{eq: weighted band-alpha}. Therefore, the length of this sequence can only be $O(\log{\tim{\emm}})$. Thus, we cannot have have only $O(\log{\tim{\emm}})$ \To elements for every \Tt element, concluding the proof.
	\end{proof}
	We now focus on bounding the number of \Tt elements in the following lemma.
	\begin{lemma}\label{lem: basic type-2 weighted}
	 After the deletion step when $\emm$ elements have been seen, the number of \Tt elements stored in \WQS is $O(\ell \log \tim{\emm})$.
	\end{lemma}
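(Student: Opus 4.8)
The plan is to mirror the space analysis of the unweighted greedy summary (\Cref{lem:type-2}), adapting each step to the weighted setting and to the $\G$-values. First I would take an arbitrary \Tt element $\ei{i}$ in \WQS after the deletion step and extract a lower bound on $\G_i + \g_{i+1}$. Since $\ei{i}$ was not deleted and it is not \To, the only obstruction was condition (2), i.e. $\G_i + \g_{i+1} + \Delta_{i+1} > \tim{\emm}$. Combining this with $\Delta_{i+1} \leq t_0(\ei{i+1})$ from~\Cref{eq:del-t0-weighted} and the band-length bound implicit in~\Cref{eq:closed-form-weighted} (it takes exponentially many time steps in $\alpha$ for an element to reach $\Band_\alpha$), I get $\G_i + \g_{i+1} > \tim{\emm} - t_0(\ei{i+1}) \geq 2^{\bv(\ei{i+1})-1} - 2$; this is the exact analogue of~\Cref{clm:lower-g}/\Cref{clm:weighted-lower-g-GK}.

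Next I would partition the \Tt elements into sets $X_\alpha := \set{\ei{i} \in \WQS \mid \text{$\ei{i}$ is \Tt and } \bv(\ei{i+1}) = \alpha}$ for $0 \leq \alpha \leq \Bt{\emm}$, and sum the above inequality over $X_\alpha$ to obtain $\card{X_\alpha} \cdot (2^{\alpha-1}-2) \leq \sum_{\ei{i} \in X_\alpha} (\G_i + \g_{i+1})$. For any $\ei{i} \in X_\alpha$, since $\ei{i}$ is \Tt we have $\bv(\ei{i}) \leq \bv(\ei{i+1}) = \alpha$, so both $\ei{i}$ and $\ei{i+1}$ lie in $\Band_{\leq \alpha}$. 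Since the $C(\ei{i})$'s are disjoint (\Cref{obs:weighted-disjoint coverage}) and, by \Cref{obs: deleted-band-inv-weighted}, elements of $\Band_{\leq \alpha}$ only cover elements of $\Band_{\leq \alpha}$, the right-hand side is at most $2 \sum_{\ei{j} \in \WQS \cap \Band_{\leq \alpha}} \G_j \leq 2 \sum_{x_j \in \Band_{\leq \alpha}} w(x_j) = O(\ell \cdot 2^{\alpha+1})$ by~\Cref{clm:weighted-cover-gk} (equivalently~\Cref{eq: weighted band-alpha}). Plugging in gives $\card{X_\alpha} = O(\ell)$ for $\alpha \geq 3$, and $O(\ell)$ elements total for $\alpha \in \{0,1,2\}$ since $\Band_{\leq 2}$ has total weight $O(\ell)$ and hence at most $O(\ell)$ elements. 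Summing over the $\Bt{\emm} = O(\log \tim{\emm})$ values of $\alpha$ yields $O(\ell \cdot \log \tim{\emm})$ \Tt elements.

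I do not expect a serious obstacle here: \Cref{alg:weighted-simple} uses the single-element condition $\G_i + \g_{i+1} + \Delta_{i+1} \leq \tim{\emm}$ rather than the segment condition $\Gs_i + \g_{i+1} + \Delta_{i+1} \leq \tim{\emm}$, so unlike the \To bound in~\Cref{lem: type-1 weighted-GK} there is no need to argue about segment disjointness or the subtle \Cref{claim:weighted-coverage-time}; the coverage sets $C(\ei{i})$ for stored elements are disjoint outright, which is all that is needed. The one point requiring a little care is the edge case $\alpha \leq 2$ (where $2^{\alpha-1}-2 \leq 0$ makes the inequality vacuous), which is handled directly by the $O(\ell)$ total-weight bound on $\Band_{\leq 2}$. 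Together with \Cref{lem: basic type-1 weighted}, which shows the number of \To elements is $O(\log \tim{\emm})$ times the number of \Tt elements, this gives a total summary size of $O(\ell \cdot \log^2 \tim{\emm}) = O(\frac{1}{\eps} \log^2(\eps \wtsum{n}))$, completing the space half of \Cref{thm: simple-weighted}.
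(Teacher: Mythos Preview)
Your proposal is correct and follows essentially the same argument as the paper: derive the lower bound $\G_i + \g_{i+1} > \tim{\emm} - t_0(\ei{i+1}) \geq 2^{\bv(\ei{i+1})-1}-2$ for each \Tt element, partition by $\bv(\ei{i+1})=\alpha$, bound the resulting sum by $2\sum_{\ei{j}\in\WQS\cap\Band_{\leq\alpha}}\G_j$ using $\g_{i+1}\leq\G_{i+1}$ and the fact that both $\ei{i},\ei{i+1}\in\Band_{\leq\alpha}$, and then invoke the coverage bound. The only cosmetic point is that \Cref{obs: deleted-band-inv-weighted} and \Cref{clm:weighted-cover-gk} are stated for \Cref{alg:weighted-GK}; the paper restates the latter as a separate claim for \Cref{alg:weighted-simple} and notes the proof carries over verbatim because both algorithms enforce the same condition (1) on $\bv$'s before deleting---you should make that remark explicit.
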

	\begin{proof}
	We first show a lower bound on $\G_{i} + g_{i+1}$ for each \Tt element $\ei{i}$ in \WQS in the following. 
	\begin{claim}\label{clm:weighted lower-g}
	For any \Tt element $\ei{i}$,
		$ \G_{i+1}+g_i \geq 2^{\bv(\ei{i+1})-1}-2$.
\end{claim}
\begin{proof}
	Since $e_i$ is a  \Tt element we have ~$\G_i + \g_{i+1} + \Delta_{i+1} > \tim{\emm}$. Furthermore, we know that $\Delta_{i+1} \leq t_0(\ei{i+1})$ by \Cref{eq:del-t0-weighted}. Therefore, 
	\begin{align*}
		\G_i+ \g_{i+1} > \tim{\emm} - t_0(\ei{i+1}) \geq 2^{\bv(\ei{i+1})-1}-2,
	\end{align*}
	where the second inequality is by \Cref{eq:closed-form-weighted}. \Qed{\Cref{clm:weighted lower-g}} 
	
	\end{proof} 
	
	We partition the \Tt elements into different sets. Formally, we consider sets $X_0,\ldots,X_{\Bt{k}}$ such that for any band-value $\alpha$:
\[
X_{\alpha} := \set{\ei{i} \in \WQS \mid \text{$\ei{i}$ is \Tt and $\bv(\ei{i+1}) = \alpha$}}.
\] 
Note that each element in $X_\alpha$ by virtue of being \Tt is in $\Band_{\leq \alpha}$. Summing over the inequality of~\Cref{clm:weighted lower-g} for each element in $X_\alpha$, we obtain:
	\begin{align} \label{eq:weighted-lem-type-2}
		\card{X_{\alpha}} \cdot (2^{\alpha-1}-2) &\leq \sum_{\ei{i} \in X_{\alpha}} \G_i + g_{i+1} \leq 2 \sum_{\ei{i} \in \WQS \cap \Band{\leq \alpha}} \G_i,
	\end{align}
where the second inequality follows form the fact that $\ei{i}$ and $\ei{i+1}$ belong to $\Band_{\leq \alpha}$ and that $G_{i+1}$ is an upper bound on $g_{i+1}$.  The final step of the proof is to show a bound on the right hand side of  \Cref{eq:weighted-lem-type-2}.
Formally, we show the following claim.

\begin{claim}\label{clm:weighted-cover}
	After seeing $k$ elements of the stream, for any $\alpha$, 
	$
	\sum_{\ei{i} \in \WQS \cap \Band_{\leq \alpha}} \G_i \leq O(\ell \cdot 2^{\alpha+1}).
	$ 
\end{claim}

This claim is identical to \Cref{clm:weighted-cover-gk}. The proof only requires that a deletion must satisfy the condition $(i)$ (the condition on $\bv$s). We do have the same condition on $\bv$s in the deletion step of \Cref{alg:weighted-simple}. Thus, the proof is identical to that of \Cref{clm:weighted-cover-gk}. By plugging the bounds in~\Cref{clm:weighted-cover} into~\Cref{eq:weighted-lem-type-2}, we obtain, 

\[
	\card{X_{\alpha}} \cdot (2^{\alpha-1}-2) \leq 2 \cdot  O(\ell \cdot 2^{\alpha+1}).
\]
Hence, $\card{X_{\alpha}} = O(\ell)$ when $\alpha>2$ (and for $\alpha \leq 2$ the total number of elements are at most $8 \cdot \ell = O(\ell)$ by \Cref{eq: weighted band-alpha} anyway). Since we have only $O(\log{\tim{\emm}})$ possible values of bands by~\Cref{eq: weighted band-alpha}, we have at most $\log{\tim{\emm}})$ sets $X_{\alpha}$. Thus, we conclude that \WQS has at most $O(\ell \cdot \log \tim{\emm})$ \Tt elements.
\end{proof}
\Cref{lem: basic type-1 weighted,lem: basic type-2 weighted} immediately imply that the total number of elements stored in \WQS overall, after $n$ insertions, is $O(\ell \log^2 \tim{n})=O(\frac{1}{\eps} \log^2(\eps \wtsum{n}))$, concluding the space complexity.

We start the discussion by the following remark on the space used by \Cref{alg:weighted-simple}.
\begin{remark}
\label{rem:weighted-greedy}
\emph{
In the space analysis, we bounded the number of \Tt elements in the summary after the deletion step by $O(\ell \cdot \log{\tim{n}}) = O((1/\eps)\cdot\log{\!(\eps \wtsum{n})})$, which is quite efficient on its own. However, in the worst case, there can be  $O(\log \tim{n})$ \To elements for every \Tt element as shown in \Cref{fig:tightness-weighted}. Thus, \Cref{alg:weighted-simple} may end up storing as many as $O(\ell \cdot \log^2 \tim{n}) = O((1/\eps)\cdot\log^2{\!(\eps \wtsum{n})})$ \To elements in the summary, leading to its sub-optimal space requirement.   
}
\end{remark}

\begin{figure}[ht!]
\centering
\hspace{-1cm}\resizebox{1\textwidth}{!}{\tikzset{every picture/.style={line width=0.75pt}} 

\begin{tikzpicture}[x=0.75pt,y=0.75pt,yscale=-1,xscale=1]

\draw [line width=1.5]    (138.21,348.43) -- (653.11,348.43) (165.21,347.43) -- (165.21,349.43)(192.21,347.43) -- (192.21,349.43)(219.21,347.43) -- (219.21,349.43)(246.21,347.43) -- (246.21,349.43)(273.21,347.43) -- (273.21,349.43)(300.21,347.43) -- (300.21,349.43)(327.21,347.43) -- (327.21,349.43)(354.21,347.43) -- (354.21,349.43)(381.21,347.43) -- (381.21,349.43)(408.21,347.43) -- (408.21,349.43)(435.21,347.43) -- (435.21,349.43)(462.21,347.43) -- (462.21,349.43)(489.21,347.43) -- (489.21,349.43)(516.21,347.43) -- (516.21,349.43)(543.21,347.43) -- (543.21,349.43)(570.21,347.43) -- (570.21,349.43)(597.21,347.43) -- (597.21,349.43)(624.21,347.43) -- (624.21,349.43)(651.21,347.43) -- (651.21,349.43) ;
\draw [shift={(657.11,348.43)}, rotate = 180] [fill={rgb, 255:red, 0; green, 0; blue, 0 }  ][line width=0.08]  [draw opacity=0] (11.61,-5.58) -- (0,0) -- (11.61,5.58) -- cycle    ;
\draw [line width=1.5]    (138.21,199.24) -- (138.21,348.43) (139.21,223.24) -- (137.21,223.24)(139.21,247.24) -- (137.21,247.24)(139.21,271.24) -- (137.21,271.24)(139.21,295.24) -- (137.21,295.24)(139.21,319.24) -- (137.21,319.24)(139.21,343.24) -- (137.21,343.24) ;
\draw [shift={(138.21,195.24)}, rotate = 90] [fill={rgb, 255:red, 0; green, 0; blue, 0 }  ][line width=0.08]  [draw opacity=0] (11.61,-5.58) -- (0,0) -- (11.61,5.58) -- cycle    ;
\draw  [fill={rgb, 255:red, 232; green, 232; blue, 232 }  ,fill opacity=1 ] (146.52,217.68) -- (162.2,217.68) -- (162.2,226.64) -- (146.52,226.64) -- cycle ;
\draw  [fill={rgb, 255:red, 232; green, 232; blue, 232 }  ,fill opacity=1 ] (169.92,239.29) -- (185.61,239.29) -- (185.61,248.25) -- (169.92,248.25) -- cycle ;
\draw  [fill={rgb, 255:red, 232; green, 232; blue, 232 }  ,fill opacity=1 ] (219.43,281.6) -- (235.12,281.6) -- (235.12,290.56) -- (219.43,290.56) -- cycle ;
\draw  [fill={rgb, 255:red, 232; green, 232; blue, 232 }  ,fill opacity=1 ] (242.84,302.3) -- (258.52,302.3) -- (258.52,311.26) -- (242.84,311.26) -- cycle ;
\draw  [fill={rgb, 255:red, 232; green, 232; blue, 232 }  ,fill opacity=1 ] (292.35,216.78) -- (308.03,216.78) -- (308.03,225.74) -- (292.35,225.74) -- cycle ;
\draw  [fill={rgb, 255:red, 232; green, 232; blue, 232 }  ,fill opacity=1 ] (315.76,238.39) -- (331.44,238.39) -- (331.44,247.35) -- (315.76,247.35) -- cycle ;
\draw  [fill={rgb, 255:red, 232; green, 232; blue, 232 }  ,fill opacity=1 ] (365.27,280.7) -- (380.95,280.7) -- (380.95,289.66) -- (365.27,289.66) -- cycle ;
\draw  [fill={rgb, 255:red, 232; green, 232; blue, 232 }  ,fill opacity=1 ] (388.67,301.4) -- (404.36,301.4) -- (404.36,310.36) -- (388.67,310.36) -- cycle ;
\draw  [fill={rgb, 255:red, 232; green, 232; blue, 232 }  ,fill opacity=1 ] (476.89,216.78) -- (492.58,216.78) -- (492.58,225.74) -- (476.89,225.74) -- cycle ;
\draw  [fill={rgb, 255:red, 232; green, 232; blue, 232 }  ,fill opacity=1 ] (500.3,238.39) -- (515.98,238.39) -- (515.98,247.35) -- (500.3,247.35) -- cycle ;
\draw  [fill={rgb, 255:red, 232; green, 232; blue, 232 }  ,fill opacity=1 ] (549.81,280.7) -- (565.49,280.7) -- (565.49,289.66) -- (549.81,289.66) -- cycle ;
\draw  [fill={rgb, 255:red, 232; green, 232; blue, 232 }  ,fill opacity=1 ] (573.21,301.4) -- (588.9,301.4) -- (588.9,310.36) -- (573.21,310.36) -- cycle ;
\draw  [fill={rgb, 255:red, 232; green, 232; blue, 232 }  ,fill opacity=1 ] (145,376.12) -- (162.02,376.12) -- (162.02,386.03) -- (145,386.03) -- cycle ;
\draw  [color={rgb, 255:red, 92; green, 92; blue, 92 }  ,draw opacity=1 ][fill={rgb, 255:red, 130; green, 130; blue, 130 }  ,fill opacity=1 ] (268.94,324.81) -- (284.63,324.81) -- (284.63,333.77) -- (268.94,333.77) -- cycle ;
\draw  [color={rgb, 255:red, 92; green, 92; blue, 92 }  ,draw opacity=1 ][fill={rgb, 255:red, 130; green, 130; blue, 130 }  ,fill opacity=1 ] (414.78,323.91) -- (430.46,323.91) -- (430.46,332.87) -- (414.78,332.87) -- cycle ;
\draw  [color={rgb, 255:red, 92; green, 92; blue, 92 }  ,draw opacity=1 ][fill={rgb, 255:red, 130; green, 130; blue, 130 }  ,fill opacity=1 ] (599.32,323.91) -- (615,323.91) -- (615,332.87) -- (599.32,332.87) -- cycle ;
\draw  [color={rgb, 255:red, 92; green, 92; blue, 92 }  ,draw opacity=1 ][fill={rgb, 255:red, 130; green, 130; blue, 130 }  ,fill opacity=1 ] (226.41,377.11) -- (243.42,377.11) -- (243.42,387.02) -- (226.41,387.02) -- cycle ;
\draw   (100.91,219.61) .. controls (96.24,219.61) and (93.91,221.94) .. (93.91,226.61) -- (93.91,268.45) .. controls (93.91,275.12) and (91.58,278.45) .. (86.91,278.45) .. controls (91.58,278.45) and (93.91,281.78) .. (93.91,288.45)(93.91,285.45) -- (93.91,330.3) .. controls (93.91,334.97) and (96.24,337.3) .. (100.91,337.3) ;
\draw    (284.53,341.3) .. controls (313.51,358.24) and (336.51,378.24) .. (402.51,384.24) ;
\draw    (504.51,391.24) .. controls (555.51,386.24) and (584.51,358.24) .. (603.51,336.24) ;
\draw    (419.51,336.24) .. controls (422.51,367.24) and (432.51,375.24) .. (443.51,382.24) ;
\draw  [dash pattern={on 0.84pt off 2.51pt}]  (483.53,387.9) .. controls (492.07,389.91) and (547.51,368.24) .. (560.51,334.24) ;
\draw  [dash pattern={on 0.84pt off 2.51pt}]  (468.51,382.24) .. controls (490.51,380.24) and (502.51,362.24) .. (509.51,334.24) ;

\draw (422.34,255.06) node [anchor=north west][inner sep=0.75pt]  [font=\huge,xscale=1.3,yscale=1.3] [align=left] {$\displaystyle \dotsc $};
\draw (192.87,247.93) node [anchor=north west][inner sep=0.75pt]  [font=\large,rotate=-45,xscale=1.3,yscale=1.3] [align=left] {$\displaystyle \dotsc $};
\draw (341.7,246.03) node [anchor=north west][inner sep=0.75pt]  [font=\large,rotate=-45,xscale=1.3,yscale=1.3] [align=left] {$\displaystyle \dotsc $};
\draw (525.24,245.03) node [anchor=north west][inner sep=0.75pt]  [font=\large,rotate=-45,xscale=1.3,yscale=1.3] [align=left] {$\displaystyle \dotsc $};
\draw (614.83,364.8) node [anchor=north west][inner sep=0.75pt]  [font=\footnotesize,xscale=1.3,yscale=1.3] [align=left] {{\footnotesize Position}};
\draw (29.8,267.07) node [anchor=north west][inner sep=0.75pt]  [font=\footnotesize,xscale=1.3,yscale=1.3] [align=left] {{\footnotesize $\displaystyle O(\log t)$}\\{\footnotesize bands}};
\draw (384.51,395.24) node [anchor=north west][inner sep=0.75pt]  [font=\footnotesize,xscale=1.3,yscale=1.3] [align=left] {\begin{minipage}[lt]{77.67844000000001pt}\setlength\topsep{0pt}
	\begin{center}
	{\footnotesize $\displaystyle O(\log t)$ type-2 \ elements}
	\end{center}
	
	\end{minipage}};
\draw (165.56,374.16) node [anchor=north west][inner sep=0.75pt]  [font=\footnotesize,xscale=1.3,yscale=1.3] [align=left] {{\footnotesize type-1}};
\draw (247.94,374.16) node [anchor=north west][inner sep=0.75pt]  [font=\footnotesize,xscale=1.3,yscale=1.3] [align=left] {{\footnotesize type-2}};
\draw (40,189.7) node [anchor=north west][inner sep=0.75pt]  [font=\footnotesize,xscale=1.3,yscale=1.3] [align=left] {{\footnotesize Band-values}};

\end{tikzpicture}}
\caption{ Each block in the figure represents an element stored in \WQS. The ranks of elements increase along the horizontal axis. The figure illustrates why \Cref{alg:weighted-simple} might end up storing $O(\ell \log^2 \tim{n})$ elements in \WQS.  By \Cref{lem: basic type-2 weighted}, there could be as many as  $O(\ell \cdot \log \tim{n})$ \Tt elements in \WQS. Each of these \Tt elements could be preceded by a sequence of $O(\log \tim{n})$ \To elements (since there are $O(\log \tim{n})$ bands). }
\label{fig:tightness-weighted}
\end{figure}

As we say in~\Cref{rem:weighted-greedy}, one source of sub-optimality of~\Cref{alg:weighted-simple} was the large number of \To elements stored in the summary compared to the \Tt ones. A way to improve this is to \emph{actively} try to decrease  the number of stored \To elements. Roughly speaking, this is done by deleting \Tt elements from the summary only if it does 
not contribute to creating a long sequence of \To elements (e.g., as in~\Cref{fig:tightness-weighted}). Note that our \Cref{alg:weighted-GK} was precisely doing this by only deleting an element only if its entire segment can be deleted along with it.

We also note that even if we do not execute the deletion step after every insertion, the size of the summary still reduces to $O((1/\eps) \log^2 (\eps \wtsum{n}))$ after performing a deletion step by the above analysis. Therefore, we have the following remark.

\begin{remark} [Delaying Deletions]
\label{rem: weighted-Delaying-Deletions-simple}
Suppose in \Cref{alg:weighted-simple}, instead of running the \textbf{deletion step} in Line (ii) after each element, we  run it only after inserting $c$ elements $c > 1$; then, the space complexity of the algorithm only increases by an additive term $O(c)$.
\end{remark}
We now give a \emph{fast} implementation of \Cref{alg:weighted-simple} in the following.

\subsection{An Efficient Implementation of \texorpdfstring{\Cref{alg:weighted-simple}}{the algorithm} }\label{subsec: fast-imp-weighted-simple}

In this section, we present a faster implementation of \Cref{alg:weighted-simple} which is based on the (previously seen) idea of delaying the deletion steps.
\paragraph*{\textbf{Part \RNum{1}:  Storing \QS:}}
We store our summary $\WQS$ as a balanced binary search tree (BST), where each node contains an element of $\WQS$ along with its metadata. For each element $e$ we store $\wt(e), g(e), \Delta(e)$ and $t_0(e)$. The sorting key of the BST is the value of elements.  The \textbf{Insert} and \textbf{Delete} operations insert elements into and delete elements from the BST respectively. 

\paragraph*{\textbf{Part \RNum{2}: Performing a Deletion Step:}} We perform this step as follows:
\begin{tbox}
\label{tbox:deletion-step-weighted-greedy}

\begin{enumerate}
    \item Perform an inorder traversal of $\WQS$ (which is a BST) to obtain a temporary (doubly-linked) list of elements sorted by value. 
    \item 
     Compute $\bv$ of all elements of \WQS using \Cref{eq:closed-form-weighted}. 
    \item Traverse the list from larger elements to smaller ones. For each element $\ei{i}$, delete it from BST (as well as the list), if it satisfies both the deletion conditions mentioned in \Cref{alg:weighted-simple}. 
\end{enumerate}
\end{tbox}
 Note that after one pass of the list, no more deletions arise since we traverse the list from larger elements to smaller ones, and a deletion can only increase the $g$ value of the \emph{next} element.

Now, we describe an implementation of \Cref{alg:weighted-simple} with fast amortized update time. We then show how it can be extended so that it also has a fast worst-case update time.

\begin{Implementation}\label{implementation: greedy-weighted}
\textbf{Efficient Implementation of}~\Cref{alg:weighted-simple}
    \medskip
    \begin{itemize}
	\item Initialize $\WQS$ to be an empty balanced binary search tree.
	
	\item $\DeleteTime \leftarrow 2$.

	\item For each arriving item $(x_\emm, w(x_\emm))$:  
	\begin{enumerate}[label=$(\roman*)$]
		\item 
		Run $\textbf{Insert}(x_\emm,w(x_\emm))$. 
	     \item If $(k = \DeleteTime)$:	\begin{itemize}
		    \item 	Execute the deletion step and update $\DeleteTime \leftarrow \DeleteTime +  \ell \ceil{\log^2t_k}$.
		\end{itemize}
	
	\end{enumerate}
	\end{itemize}
\end{Implementation}

\paragraph*{Space Analysis.} 
The space complexity of the implementation remains $O(\frac{1}{\eps} \log^2(\eps \wtsum{n}))$. This is simply because after the deletion step when $k$ elements have been inserted, the next deletion is only performed after another $\ceil {\ell \log^2 t_k}$ insertions. This, by \Cref{rem: weighted-Delaying-Deletions-simple}, implies that the space used only increases by an additive term of $O(\ell \log ^2 t_k)=O(\frac{1}{\eps} \log^2(\eps \wtsum{n}))$, as $t_k = O(\eps \wtsum{n})$ and $\ell = O(1/\eps)$.
 
 \paragraph*{Time Analysis.}
Since \WQS is stored as a BST, performing an $\textbf{Insert}$ or $\textbf{Delete}$ operation on \WQS takes $O(\log s)$, where $s$ is the number of elements stored in \WQS. By the above space analysis , $s = O(\frac{1}{\eps} \log^2(\eps \wtsum{n}))$. This leads to the following observation:

\begin{observation} \label{obs: ins del time greedy weighted}
Over a stream of length $n$, the total time taken by the fast \Cref{implementation: greedy-weighted} to perform all \textbf{Insert} and \textbf{Delete} operations is $O( n \cdot (\log (1/\eps) + \log \log (\eps \wtsum{n})))$. 
\end{observation}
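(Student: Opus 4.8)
The plan is to show that the claim follows immediately from three facts that are already in hand: that $\WQS$ is stored as a balanced binary search tree keyed by element value, that the space analysis of \Cref{implementation: greedy-weighted} just given bounds its size by $s = O(\tfrac{1}{\eps}\log^2(\eps\wtsum{n}))$ at every point of the stream, and that over the whole run each stream element triggers exactly one \textbf{Insert} and is removed by at most one \textbf{Delete}.

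First I would argue that a single \textbf{Insert}$(x,\wt(x))$ or \textbf{Delete}$(\ei{i})$ costs $O(\log s)$ time. Locating the successor of $x$, inserting a node, or removing a node each take $O(\log s)$ time since the tree is balanced. The point that deserves care is that the metadata updates are only $O(1)$: by \Cref{obs: g-delta-update-weighted}, \textbf{Insert} merely sets $g(x) = 1$ and $\Delta(x) = g_i + \Delta_i - 1$ by reading the $(g,\Delta)$ pair of the located successor $\ei{i}$, and \textbf{Delete} merely sets $g_{i+1} \gets g_{i+1} + \G_i = g_{i+1} + (g_i + \wt(\ei{i}) - 1)$ using the values stored at the two relevant nodes; neither operation touches the $(g,\Delta)$ data of any other element. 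In particular, storing $(g,\Delta)$ rather than $(\rmin,\rmax)$ is exactly what avoids the $\Theta(s)$ rescan suggested by the raw description of the update in \Cref{sec: weighted-prelims}.

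Next I would substitute the size bound: $\log s = \log\bigl(O(\tfrac{1}{\eps}\log^2(\eps\wtsum{n}))\bigr) = O\bigl(\log(1/\eps) + \log\log(\eps\wtsum{n})\bigr)$, so each \textbf{Insert} and each \textbf{Delete} runs in $O(\log(1/\eps)+\log\log(\eps\wtsum{n}))$ time. Finally, line (i) of \Cref{implementation: greedy-weighted} performs exactly one \textbf{Insert} per stream element, and an element, once inserted, is deleted at most once over all deletion steps, so there are at most $n$ \textbf{Insert} calls and at most $n$ \textbf{Delete} calls in total; multiplying gives the claimed $O\bigl(n\cdot(\log(1/\eps)+\log\log(\eps\wtsum{n}))\bigr)$. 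There is no real obstacle here — the argument mirrors verbatim \Cref{obs: ins del time greedy} for the unweighted greedy implementation and \Cref{obs: ins del time GK weighted} for the weighted GK implementation — and I would only stress that this bound deliberately excludes the time spent inside deletion steps deciding which elements to delete, which is accounted for separately.
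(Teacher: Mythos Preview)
Your proposal is correct and follows exactly the same approach as the paper, which proves the observation in a single sentence: each element is inserted and deleted at most once, and each such operation on the balanced BST costs $O(\log s) = O(\log(1/\eps)+\log\log(\eps\wtsum{n}))$. Your version is simply more detailed, in particular spelling out why the $(g,\Delta)$ metadata updates are $O(1)$ per operation, which the paper takes for granted.
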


The observation follows from the fact that each element is inserted and deleted at most once from \WQS and each insertion or deletion takes $O(\log s) = O\big( \log (1/\eps)+ \log \log (\eps \wtsum{n}) \big)$ time. The only time taken by \Cref{implementation: greedy-weighted} \textbf{not} taken into account in~\Cref{obs: ins del time greedy weighted} is the part that determines \emph{which elements} to delete, which we will bound below.

\begin{lemma}\label{lem: time to decide elements to delete- wt greedy}
Over a stream of length $n$, the total time taken by  \Cref{implementation: greedy-weighted}  to decide which elements need to be deleted over all the executed deletion steps is $O(n + \frac{1}{\eps} \log^3(\eps \wtsum{n}) )$. 
\end{lemma}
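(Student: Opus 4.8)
The plan is to follow the same template as the proof of \Cref{lem: time to decide elements to delete-greedy-weighted}, adjusting only for two differences specific to the greedy algorithm: the summary maintained by \Cref{alg:weighted-simple} has size $O(\ell \log^2 \tim{k})$ rather than $O(\ell \log \tim{k})$, and \Cref{implementation: greedy-weighted} delays the next deletion step by $\ell \ceil{\log^2 t_k}$ insertions rather than $\ceil{\ell \log t_k}$.

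First I would establish that the time spent \emph{inside} a single deletion step, when $k$ elements have been seen, is $O(s) = O(\ell \log^2 \tim{k})$: building the sorted doubly-linked list by an inorder traversal of the BST, computing $\bv$ of every stored element via \Cref{eq:closed-form-weighted} (each in $O(1)$ time), and making one backward linear pass to check the two deletion conditions of \Cref{alg:weighted-simple}, all cost $O(s)$ time; and $s = O(\ell \log^2 \tim{k})$ by the space analysis of \Cref{alg:weighted-simple} together with \Cref{rem: weighted-Delaying-Deletions-simple}, since the delay between deletion steps only inflates the summary by an additive $O(\ell \log^2 \tim{k})$.

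Next I would bound the number of deletion steps. Partition the time axis into intervals $[2^i, 2^{i+1})$ for $1 \le i \le \ceil{\log(\eps \wtsum{n})}$. Let $d(i)$ be the number of deletion steps executed while $\tim{k}$ lies in this interval, and $n(i)$ the number of stream elements $x_k$ with $\tim{k}$ in this interval. Since a deletion step at time $t_k$ is followed by at least $\ell \ceil{\log^2 t_k} \ge \ell i^2$ further insertions before the next deletion step, and only $n(i)$ elements arrive during the interval, we get $d(i) \le n(i)/(\ell i^2) + 1$. Combining this with the $O(\ell i^2)$ per-step cost from the previous paragraph, the total time to decide deletions is
\[
O\!\left(\sum_{i=1}^{\ceil{\log(\eps \wtsum{n})}} d(i)\cdot \ell i^2\right) = O\!\left(\sum_{i=1}^{\ceil{\log(\eps \wtsum{n})}} \big(n(i) + \ell i^2\big)\right) = O\!\left(n + \frac{1}{\eps}\log^3(\eps \wtsum{n})\right),
\]
using $\sum_i n(i) = n$ and $\sum_{i=1}^{m} i^2 = O(m^3)$ with $m = \ceil{\log(\eps\wtsum{n})}$.

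I do not expect a genuine obstacle here; the one place to be careful is matching the per-step cost to the summary size that actually holds at that step — i.e. invoking \Cref{rem: weighted-Delaying-Deletions-simple} correctly so that the delay does not secretly blow up $s$ beyond $O(\ell \log^2 \tim{k})$ — and checking that the two $i^2$ factors (one from the delay length, one from the per-step cost) compound to give exactly a cube when summed over the $O(\log(\eps\wtsum{n}))$ intervals, matching the claimed $\log^3$ bound.
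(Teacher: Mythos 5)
Your proposal is correct and follows essentially the same route as the paper's proof: the same geometric partition of time into intervals $[2^i,2^{i+1})$, the same bound $d(i)\le n(i)/(\ell i^2)+1$ on the number of deletion steps per interval, the same $O(\ell i^2)$ per-step cost from the summary size, and the same final summation yielding $O(n+\frac{1}{\eps}\log^3(\eps \wtsum{n}))$.
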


\begin{proof}
We first divide the elements $x_k$ of $\swt$ into $\ceil{\log(\eps \wtsum{n})}$ different groups depending on the value of $t_k$: all elements $e_k$ for which $t_k \in [2^i, 2^{i+1})$ belong to the same group. Let $n(i)$ denote the number of elements of the stream $e_k$ for which $\tim{k}\in [2^i,2^{i+1})$. Similarly, we let $d(i)$ be the number of deletion steps performed when $\tim{k} \in [2^i, 2^{i+1})$. Between any two consecutive deletions, there are at least $\ell i^2$ elements of the stream inserted since we wait for $ \ell \ceil{ \log^2 t_k }$ insertions after performing the deletion step at time step $t_k$. Therefore, we get the following bound on the number of deletion steps in the interval $[2^i,2^{i+1})$:
 \begin{align}\label{eq:delstep-bound}
 d(i) \leq \frac{n(i)}{\ell \cdot i^2} + 1. 
 \end{align}
 An important thing to notice is that the time taken to decide which elements need to be deleted during a deletion step when $t_k \in [2^i, 2^{i+1})$ is proportional to the size of the summary: $O(s)=O(\ell \log^2 \tim{k})=O(\ell \cdot i^2)$. This is because: creating a linked list, followed by computation of $\bv$ of all elements can be performed in $O(s)$ time. Finally, the pass made over the list from the largest to the smallest element (to check if the deletion conditions hold) also requires $O(s)$ time.

 This and \Cref{eq:delstep-bound}, give the following bound on the total time spent to decide which elements to delete over all deletions steps.

 \begin{align*}
     O\left(\sum\limits_{i=1}^{\ceil{\log(\eps \wtsum{n})}} d(i) \cdot \ell i^2\right) &= O\left(\sum\limits_{i=1}^{\ceil{\log(\eps \wtsum{n})}} \left(n(i) + \ell i^2 \right) \right)\\
     &=O\left( n + \frac{1}{\eps} \log^3(\eps \wtsum{n})  \right)
 \end{align*}
 
This concludes the proof of the lemma.
\end{proof}
\Cref{obs: ins del time greedy weighted} and \Cref{lem: time to decide elements to delete- wt greedy} clearly imply that the total time taken by \Cref{implementation: greedy-weighted} over a stream of length $n$ is $O\big(\; n \cdot(\log(1/\eps)+\log\log(\eps \wtsum{n}))+\frac{1}{\eps} \log^3(\eps \wtsum{n})\;\big)$. Thus, the amortized update time per element is $O\big(\log(1/\eps)+\log\log(\eps n) \big)$ when $\wtsum{n} = \poly (n)$ and $\eps \geq \log^3(n)/n$.

\paragraph*{Worst-case update time.}We now look at the worst-case update time for any element. We might have to delete $O(s)$ elements from \QS in the worst case when a deletion step is performed which would take time $O(s \log s)$. To reduce the worst-case update time, we propose a minor modification to \Cref{implementation: greedy-weighted}. We notice that a deletion step is next called after $\ceil{\log^2 t}$ time steps. What we do to reduce the worst-case time is spread the time it takes to perform a deletion step over all elements before the next deletion step. Formally, we have the following:
\begin{claim} \label{lem: worst case wt greedy}
There is an implementation of \Cref{alg:weighted-simple} with $O(\log(1/\eps) + \log\log(\eps n))$ worst case update time per element when $W_n$ is $\poly(n)$ and $\eps \geq \log^3 (n)/n.$
\end{claim}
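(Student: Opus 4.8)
The plan is to take the amortized-time implementation from \Cref{implementation: greedy-weighted} and apply the standard "spread out the expensive operation" trick, exactly as was done in \Cref{lem: worst case greedy} and \Cref{lem: worst case time gk} for the unweighted algorithms. Recall that in \Cref{implementation: greedy-weighted} we perform a deletion step when $k$ elements have arrived and $t_k$ has reached $\DeleteTime$, and then wait for another $\ell\ceil{\log^2 t_k}$ insertions before the next deletion step. The only operation whose cost is not already $O(\log s)$ per element is the deletion step itself, which takes $O(s\log s)$ time in the worst case (computing $\bv$ of all elements, a linear pass to decide which elements to delete, and then up to $O(s)$ actual \textbf{Delete} operations on the BST, each costing $O(\log s)$).

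First I would observe that since a deletion step is only triggered every $\ell\ceil{\log^2 t_k}$ insertions, we have a window of $\Theta(\ell\log^2 t_k)$ insertion events over which to amortize the $O(s\log s) = O(\ell\log^2 t_k\cdot\log s)$ cost of the deletion step; spreading it evenly gives $O(\log s)$ extra work per insertion. Concretely, after a deletion step completes at time when $k$ elements have been seen, I would split the next $\ell\ceil{\log^2 t_k}$ arriving elements into two equal halves of size $\ell\ceil{\log^2 t_k}/2$ each. During the first half, the newly arriving elements are placed in a buffer $B$ (not yet inserted into $\WQS$), and the work of the next deletion step — which operates on the snapshot of $\WQS$ at the moment the window started — is carried out in $O(\log s)$-sized pieces per arrival; since the deletion step touches $O(s) = O(\ell\log^2 t_k)$ elements and we have $\ell\ceil{\log^2 t_k}/2$ slots, each slot does $O(1)$ BST operations, i.e. $O(\log s)$ time. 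During the second half, we "catch up" by inserting two buffered elements per arrival (plus the current arrival's buffered element), so that $B$ is emptied exactly by the end of the window, again $O(\log s)$ worst-case per arrival. One must check that the semantics are preserved: a buffered element does not affect which elements the deletion step (operating on the earlier snapshot) chooses, and by \Cref{rem: weighted-Delaying-Deletions-simple} delaying insertions and deletions by $O(s)$ steps increases the space by only an additive $O(s) = O(\ell\log^2 t_k) = O((1/\eps)\log^2(\eps\wtsum{n}))$ term, which does not change the asymptotic space bound. The buffer $B$ has size $O(s)$, contributing only a constant factor to space.

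Putting this together: every arrival now costs $O(\log s)$ for its own (possibly buffered) \textbf{Insert}, plus $O(\log s)$ for its share of the pending deletion step, plus $O(\log s)$ for the catch-up insertions — a total of $O(\log s)$ worst-case time. Since by the space analysis $s = O((1/\eps)\log^2(\eps\wtsum{n}))$, we have $\log s = O(\log(1/\eps) + \log\log(\eps\wtsum{n}))$, and the claimed bound follows after specializing $\wtsum{n} = \poly(n)$ and $\eps\geq \log^3(n)/n$ (which ensures $\log^2(\eps\wtsum{n}) = O(\eps n)$ so that the amortized term $\frac{\log^3(\eps\wtsum{n})}{\eps n}$ vanishes into $O(\log(1/\eps)+\log\log(\eps n))$ and the per-element worst-case cost is genuinely $O(\log s)$).

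I do not anticipate a serious obstacle here — this is a routine de-amortization argument identical in structure to \Cref{lem: worst case greedy}. The one point that needs a little care is making the window arithmetic exact: the next deletion step is scheduled $\ell\ceil{\log^2 t_k}$ insertions ahead, and we must verify that $2\cdot(\ell\ceil{\log^2 t_k}/2) = \ell\ceil{\log^2 t_k}$ buffered elements are flushed by the time the next deletion step begins (handling the parity of $\ell\ceil{\log^2 t_k}$ by rounding), and that the snapshot used by the spread-out deletion step is the state of $\WQS$ at the window's start, so that the conditions $(1)$ and $(2)$ of \Cref{alg:weighted-simple} are evaluated consistently. Since a deletion step in \Cref{alg:weighted-simple} reduces the summary to $O(\ell\log^2 t_k)$ elements regardless of the exact moment it is performed (as noted before \Cref{rem: weighted-Delaying-Deletions-simple}), correctness of the invariant and the space bound are unaffected. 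This completes the proof of \Cref{lem: worst case wt greedy} and hence of \Cref{thm: simple-weighted}.
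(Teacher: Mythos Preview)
Your proposal is correct and takes essentially the same approach as the paper: the paper simply says to apply the standard technique from \Cref{lem: worst case greedy}, spreading the deletion step uniformly over the insertions before the next deletion step, noting that the parameters are the same in terms of $s$ and that space increases by only a constant factor. Your write-up is in fact more detailed than the paper's own (two-sentence) proof.
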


We use the same standard technique used in the proof of~\Cref{lem: worst case greedy}. The strategy is to spread the deletion step uniformly over all the following time steps before executing the next deletion step. 
The same argument works because the parameters are the same in terms of $s$. 
This increases the space only by a constant factor thus we conclude the proof of \Cref{thm: simple-weighted}.

\end{document}